\newlength{\actualtopmargin}
\newlength{\actualsidemargin}
\theoremstyle{plain}
  \newtheorem{theorem}{Theorem}
  \newtheorem{lemma}[theorem]{Lemma}
  \newtheorem{corollary}[theorem]{Corollary}
  \newtheorem{proposition}[theorem]{Proposition}
\theoremstyle{definition}
  \newtheorem{definition}[theorem]{Definition}
\theoremstyle{remark}
  \newtheorem*{remark}{Remark}
\theoremstyle{plain}
  \newtheorem*{theorem*}{Theorem}
  \newtheorem*{lemma*}{Lemma}
  \newtheorem*{corollary*}{Corollary}
  \newtheorem*{proposition*}{Proposition}
  \newtheorem*{claim*}{Claim}
\newenvironment{step}
  {
    \begin{enumerate}

  }
  {\end{enumerate}}
\newenvironment{algorithm*}[1]
  {
    \begin{center}
      \hrulefill\\
      \textbf{#1}\\
  }
  {
    \vspace{-\baselineskip}
    \hrulefill
    \end{center}
  }
\newenvironment{protocol*}[1]
  {
    \begin{center}
      \hrulefill\\
      \textbf{#1}\\
  }
  {
    \vspace{-\baselineskip}
    \hrulefill
    \end{center}
  }
\newlength{\itemwidth}
\newlength{\descriptionwidth}
\newenvironment{promiseproblem*}[4]
  {
    \begin{center}
      \hrulefill\\
      \textbf{\textsc{#1}}\\
      \settowidth{\itemwidth}{\textbf{Yes Instances:}}
      \setlength{\descriptionwidth}{\textwidth}
      \addtolength{\descriptionwidth}{-\itemwidth}
      \addtolength{\descriptionwidth}{-\labelsep}
      \begin{description}
        \item[\parbox{\itemwidth}{Input:}]
          \parbox[t]{\descriptionwidth}{#2}
        \item[\parbox{\itemwidth}{Yes Instances:}]
          \parbox[t]{\descriptionwidth}{#3}
        \item[\parbox{\itemwidth}{No Instances:}]
          \parbox[t]{\descriptionwidth}{#4\\}
      \end{description}
  }
  {
    \vspace{-1.1\baselineskip}
    \hrulefill
    \end{center}
  }
\newcommand{\bbC}{\mathbb{C}}
\newcommand{\bbN}{\mathbb{N}}
\newcommand{\bbR}{\mathbb{R}}
\newcommand{\bbZ}{\mathbb{Z}}
\newcommand{\bfU}{\mathbf{U}}
\newcommand{\calC}{\mathcal{C}}
\newcommand{\classfont}{\mathrm}
\newcommand{\problemfont}{\textsc}
\newcommand{\spacefont}{\mathcal}
\newcommand{\registerfont}{\mathsf}
\newcommand{\spaceH}{\spacefont{H}}
\newcommand{\regC}{\registerfont{C}}
\newcommand{\regO}{\registerfont{O}}
\newcommand{\regQ}{\registerfont{Q}}
\newcommand{\regR}{\registerfont{R}}
\newcommand{\regX}{\registerfont{X}}
\newcommand{\regY}{\registerfont{Y}}
\newcommand{\classP}{\classfont{P}}
\newcommand{\co}{\classfont{co}\textrm{-}}
\newcommand{\NP}{\classfont{NP}}
\newcommand{\BPP}{\classfont{BPP}}
\newcommand{\BQP}{\classfont{BQP}}
\newcommand{\PP}{\classfont{PP}}
\newcommand{\SBP}{\classfont{SBP}}
\newcommand{\SBQP}{\classfont{SBQP}}
\newcommand{\postBPP}{\classfont{PostBPP}}
\newcommand{\postBQP}{\classfont{PostBQP}}
\newcommand{\MA}{\classfont{MA}}
\newcommand{\AM}{\classfont{AM}}
\newcommand{\NQP}{\classfont{NQP}}
\newcommand{\CequalP}{{\classfont{C}_=\classfont{P}}}
\newcommand{\coCequalP}{{\co\CequalP}}
\newcommand{\PH}{\classfont{PH}}
\newcommand{\AC}{\classfont{AC}}
\newcommand{\QfP}[1]{{\classfont{Q}_{[#1]}\classfont{P}}}
\newcommand{\BQfP}[1]{{\classfont{BQ}_{[#1]}\classfont{P}}}
\newcommand{\NQfP}[1]{{\classfont{NQ}_{[#1]}\classfont{P}}}
\newcommand{\SBQfP}[1]{{\classfont{SBQ}_{[#1]}\classfont{P}}}
\newcommand{\QOfP}[1]{{\classfont{Q}_{#1}\classfont{P}}}
\newcommand{\BQOfP}[1]{{\classfont{BQ}_{#1}\classfont{P}}}
\newcommand{\QlogP}{\QOfP{\textlog}}
\newcommand{\BQlogP}{\BQOfP{\textlog}}
\newcommand{\QoneP}{\QfP{1}}
\newcommand{\BQoneP}{\BQfP{1}}
\newcommand{\NQoneP}{\NQfP{1}}
\newcommand{\SBQoneP}{\SBQfP{1}}
\newcommand{\QtwoP}{\QfP{2}}
\newcommand{\BP}{\classfont{BP}}
\newcommand{\NOT}{\mathrm{NOT}}
\newcommand{\CNOT}{\mathrm{CNOT}}
\newcommand{\INCR}[1]{{\op{U}_{+1}(\Integers_{#1})}}
\newcommand{\THRESHOLD}[2]{{\op{U}_{\geq {#1}}(\Integers_{#2})}}
\newcommand{\threshold}[1]{{\op{f}_{\geq {#1}}}}
\newcommand{\TrEst}{\problemfont{TrEst}}
\newcommand{\op}[1]{\operatorname{\mathnormal{#1}}}
\newcommand{\tensor}{\otimes}
\newcommand{\intersection}{\cap}
\newcommand{\Expect}{\operatorname{E}}
\newcommand{\probability}[1]{\Pr [ #1 ]}
\newcommand{\prob}{\probability}
\newcommand{\biggprobability}[1]{\Pr \biggl[ #1 \biggr]}
\newcommand{\biggprob}{\biggprobability}
\newcommand{\expectation}[1]{\Expect [ #1 ]}
\newcommand{\expect}{\expectation}
\newcommand{\Biggexpectation}[1]{\Expect \Biggl[ #1 \Biggr]}
\newcommand{\Biggexpect}{\Biggexpectation}
\newcommand{\bra}[1]{\langle #1 \rvert}
\newcommand{\ket}[1]{\lvert #1 \rangle}
\newcommand{\bigket}[1]{\bigl\lvert #1 \bigr\rangle}
\newcommand{\ketbra}[1]{\lvert #1 \rangle \langle #1 \rvert}
\newcommand{\conjugate}[1]{{#1^\dagger}}
\newcommand{\tr}{\operatorname{tr}}
\newcommand{\bignorm}[1]{\bigl\lVert #1 \bigr\rVert}
\newcommand{\abs}[1]{\lvert #1 \rvert}
\newcommand{\bigabs}[1]{\bigl\lvert #1 \bigr\rvert}
\newcommand{\ceil}[1]{\lceil #1 \rceil}
\newcommand{\ceilL}[1]{\left\lceil #1 \right\rceil}
\newcommand{\lrangle}[1]{\langle #1 \rangle}
\newcommand{\function}[3]{{#1 \colon #2 \to #3}}
\newcommand{\set}[2]{{\{ #1 \colon #2 \}}}
\newcommand{\Complex}{\bbC}
\newcommand{\Real}{\bbR}
\newcommand{\Natural}{\bbN}
\newcommand{\Integers}{\bbZ}
\newcommand{\Nonnegative}{{\Integers^+}}
\newcommand{\Binary}{{\{ 0, 1 \}}}
\newcommand{\Unitary}{\bfU}
\newcommand{\controlled}{\operatorname{\Lambda}}
\newcommand{\textlog}{\mathrm{log}}
\newcommand{\acc}{\mathrm{acc}}
\newcommand{\init}{\mathrm{init}}
\newcommand{\final}{\mathrm{final}}
\newcommand{\yes}{\mathrm{yes}}
\newcommand{\no}{\mathrm{no}}
\newcommand{\DQC}[1]{{\textrm{DQC}#1}}
\newcommand{\metermark}
{%
  \setlength{\unitlength}{1pt}
  \begin{picture}(12.44, 6.58)(  0,   -0.62)
    \hspace{-2.575pt}%
    \begin{xy}
      {\ar @{-} <6.22pt, -0.62pt>;p+<4pt, 7.2pt>},
      {<6.22pt, -6.22pt> *\cir <8.8pt>{ur_dr}}
    \end{xy}
  \end{picture}
}
\newcommand{\smallmetermark}
{%
  \setlength{\unitlength}{1pt}
  \begin{picture}(11.08, 5.86)(  0,   -0.56)
    \hspace{-2.292pt}%
    \begin{xy}
      {\ar @{-} <5.54pt, -0.55pt>;p+<3.56pt, 6.41pt>},
      {<5.54pt, -5.54pt> *\cir <7.83pt>{ur_dr}}
    \end{xy}
  \end{picture}
}
\newcommand{\ignore}[1]{}
\begin{document}

\sloppy


\title{
  \Large{
    \textbf{
         Power of Quantum Computation with Few Clean Qubits
    }
  }
}

\author{
  Keisuke Fujii\footnotemark[1]\\
  \and
  Hirotada Kobayashi\footnotemark[2]\\
  \and
  Tomoyuki Morimae\footnotemark[3]\\
  \and
  Harumichi Nishimura\footnotemark[4]\\
  \and
  Shuhei Tamate\footnotemark[2]\\
  \and
  Seiichiro Tani\footnotemark[5]
}

\date{}

\maketitle
\thispagestyle{empty}
\pagestyle{plain}
\setcounter{page}{0}

\renewcommand{\thefootnote}{\fnsymbol{footnote}}

\vspace{-5mm}

\begin{center}
  \large{
    \footnotemark[1]%
    The Hakubi Center for Advanced Research,\\
    Kyoto University,
    Kyoto, Japan\\
    [2.5mm]
    \footnotemark[2]%
    Principles of Informatics Research Division,\\
    National Institute of Informatics,
    Tokyo, Japan\\
    [2.5mm]
    \footnotemark[3]%
    Advanced Scientific Research Leaders Development Unit,\\
    Gunma University,
    Kiryu, Gunma, Japan\\
    [2.5mm]
    \footnotemark[4]%
    Department of Computer Science and Mathematical Informatics,\\
    Graduate School of Information Science,\\
    Nagoya University,
    Nagoya, Aichi, Japan\\
    [2.5mm]
    \footnotemark[5]%
    NTT Communication Science Laboratories,\\
    NTT Corporation,
    Atsugi, Kanagawa, Japan
  }\\
  [5mm]
  \large{23 September 2015}\\
  [8mm]
\end{center}

\renewcommand{\thefootnote}{\arabic{footnote}}




\begin{abstract}
  This paper investigates the power of polynomial-time quantum computation
  in which only a very limited number of qubits are initially clean in the $\ket{0}$~state,
  and all the remaining qubits are initially in the totally mixed state.
  No initializations of qubits are allowed during the computation, nor intermediate measurements.
  The main results of this paper are unexpectedly strong error-reducible properties of such quantum computations.
  It is proved that any problem solvable by a polynomial-time quantum computation 
  with one-sided bounded error that uses logarithmically many clean qubits
  can also be solvable
  with exponentially small one-sided error using just two clean qubits,
  and with polynomially small one-sided error using just one clean qubit
  (which in particular implies the solvability with any small constant one-sided error). 
  It is further proved in the case of two-sided bounded error
  that any problem solvable by such a computation
  with a constant gap between completeness and soundness using logarithmically many clean qubits
  can also be solvable with exponentially small two-sided error using just two clean qubits.
  If only one clean qubit is available,
  the problem is again still solvable with exponentially small error 
  in one of the completeness and soundness and polynomially small error in the other. 
  As an immediate consequence of the above result for the two-sided-error case,
  it follows that the \problemfont{Trace Estimation} problem
  defined with \emph{fixed} constant threshold parameters
  is complete for $\BQlogP$ and $\BQoneP$,
  the classes of problems solvable by polynomial-time quantum computations
  with completeness~$2/3$ and soundness~$1/3$
  using logarithmically many clean qubits and just one clean qubit, respectively.  
  The techniques used for proving the error-reduction results
  may be of independent interest in themselves,
  and one of the technical tools can also be used to show
  the hardness of weak classical simulations of one-clean-qubit computations
  (i.e., $\DQC{1}$ computations).
\end{abstract}

\clearpage


\section{Introduction}
\label{Section: introduction}


\subsection{Background}
\label{Subsection: background}

An inherent nature of randomized and quantum computing
is that the outcome of a computation is probabilistic and may not always be correct.
Error reduction, or success-probability amplification,
is thus one of the most fundamental issues in randomized and quantum computing.
Computation error can be efficiently reduced to be negligibly small
in many standard computation models
via a simple repetition followed by an OR-type, or an AND-type, or a threshold-value decision,
depending on whether the error can happen in the original computation
only for yes-instances, or only for no-instances, or for both.
Typical examples are polynomial-time randomized and quantum computations with bounded error,
and in particular,
the error can be made exponentially small in $\BPP$ and $\BQP$ both in completeness and in soundness,
which provides a reasonable ground for the well-used definitions of $\BPP$ and $\BQP$
that employ bounds~$2/3$~and~$1/3$ for completeness and soundness, respectively.
In many other computation models, however, it is unclear whether the error can be reduced efficiently
by the simple repetition-based method mentioned above,
and more generally, whether error reduction itself is possible.
Such a situation often occurs
when a computation model involves communications with some untrusted parties,
like interactive proof systems.
For instance, the simple repetition-based method does work for quantum interactive proofs
in the one-sided error case of perfect completeness,
but its proof is highly nontrivial~\cite{KitWat00STOC, Gut09PhD}.
Moreover, a negative evidence is known in the two-sided error case
that the error may not be reduced efficiently via the above-mentioned simple method
of repetition with threshold-value decision~\cite{MolWat12RSPA},
although error reduction itself is anyway possible in this case,
as any two-sided-error quantum interactive proof system can be made to have perfect completeness
by adding more communication turns~\cite{KitWat00STOC, KobLeGNis15SIComp}
(and the number of communication turns then can be reduced to three).
Another situation where error reduction becomes nontrivial (and sometimes impossible)
appears when a computation model can use only very limited computational resources,
like space-bounded computations.
If the resources are too limited,
it is simply impossible to repeat the original computation sufficiently many times,
which becomes a quite enormous obstacle to error reduction
in the case of space-bounded quantum computations
when initializations of qubits are disallowed after the computation starts.
Indeed, it is known impossible in the case of one-way quantum finite state automata
to reduce computation error smaller than some constant~\cite{AmbFre98FOCS}.
Also, it is unclear whether error reduction is possible or not
in various logarithmic-space quantum computations.
For computations of one-sided bounded error performed by logarithmic-space quantum Turing machines,
Watrous~\cite{Wat01JCSS} presented a nontrivial method that reduces the error to be exponentially small.
Other than this result,
error-reduction techniques have not been developed much for space-bounded quantum computations.

Another well-studied model of quantum computing with limited computational resources is 
the \emph{deterministic quantum computation with one quantum bit~($\DQC{\mathit{1}}$)},
often mentioned as the \emph{one-clean-qubit model},
which was introduced by Knill~and~Laflamme~\cite{KniLaf98PRL},
originally motivated by nuclear magnetic resonance (NMR) quantum information processing.
A $\DQC{1}$ computation over $w$~qubits starts with the initial state of
the totally mixed state except for a single clean qubit,
namely,
${\ketbra{0} \tensor \bigl( \frac{I}{2} \bigr)^{\tensor (w - 1)}}$.
After applying a polynomial-size unitary quantum circuit to this state,
only a single output qubit is measured in the computational basis
at the end of the computation in order to read out the computation result.
The $\DQC{1}$ model may be viewed as a variant of space-bounded quantum computations in a sense,
and is believed not to have full computational power of the standard polynomial-time quantum computation.
Indeed, it is known strictly less powerful than the standard polynomial-time quantum computation
under some reasonable assumptions~\cite{AmbSchVaz06JACM}.
Moreover, since any quantum computation
over the totally mixed state~${\bigl( \frac{I}{2} \bigr)^{\tensor w}}$
is trivially simulatable by a classical computation,
the $\DQC{1}$ model looks easy to classically simulate at first glance.
Surprisingly, however,
the model turned out to be able to efficiently solve several problems
for which no efficient classical algorithms are known,
such as calculations of
the spectral density~\cite{KniLaf98PRL},
an integrability tester~\cite{PouLafMilPaz03PRA},
the fidelity decay~\cite{PouBluLafOll04PRL},
Jones and HOMFLY polynomials~\cite{ShoJor08QIC, JorWoc09QIC},
and an invariant of 3-manifolds~\cite{JorAla11TQC}.
More precisely, 
$\DQC{1}$ computations can solve the decisional versions of these problems
with two-sided bounded error.
Computation error can be quite large in such computations,
and in fact, the gap between completeness and soundness is allowed to be polynomially small.
The only method known for amplifying success probability of these computations
is to sequentially repeat an attempt of the computation polynomially many times,
but this requires the clean qubit to be initialized every time after finishing one attempt,
and moreover, the result of each attempt must be recorded to classical work space
prepared outside of the $\DQC{1}$ model.
It is definitely more desirable if computation error can be reduced without such initializations.
The situation is similar
even when the number of clean qubits is allowed to be logarithmically many with respect to the input length.
It is also known that any quantum computation of two-sided bounded error
that uses logarithmically many clean qubits
can be simulated by a quantum computation still of two-sided bounded error
that uses just one clean qubit,
but the known method for this simulation considerably increases the computational error,
and the gap between completeness and soundness becomes polynomially small.


\subsection{Main Results}

This paper develops methods of reducing
computation error in quantum computations with few clean qubits,
including the $\DQC{1}$ model.
As will be presented below, the methods proposed in this paper are unexpectedly powerful
and provide an almost fully satisfying solution in the cases of one-sided bounded error,
both to the reducibility of computation error and to the reducibility of the number of clean qubits. 
In the case of two-sided bounded error,
the methods in this paper are applicable
only when there is a constant gap between completeness and soundness in the original computation,
but still significantly improve the situation of quantum computations with few clean qubits.

Let ${\QlogP(c,s)}$, ${\QoneP(c,s)}$, and ${\QtwoP(c,s)}$ denote the class of problems
solvable by a polynomial-time quantum computation with completeness~$c$ and soundness~$s$
that uses logarithmically many clean qubits,
one clean qubit, and two clean qubits, respectively.
The rigorous definitions of these complexity classes will be found in Subsection~\ref{Subsection: DQCk computation and complexity classes}.

First, in the case of one-sided bounded error,
it is proved that
any problem solvable by a polynomial-time quantum computation with one-sided bounded error
that uses logarithmically many clean qubits
can also be solvable by that with exponentially small one-sided error using just two clean qubits. 

\begin{theorem}
  For any polynomially bounded function~$\function{p}{\Nonnegative}{\Natural}$
  and any polynomial-time computable function~$\function{s}{\Nonnegative}{[0,1]}$
  satisfying ${1 - s \geq \frac{1}{q}}$ for some polynomially bounded function~$\function{q}{\Nonnegative}{\Natural}$,
  \[
    \QlogP(1, s) \subseteq \QtwoP(1, 2^{-p}).
  \]
  \vspace{-0.341\baselineskip}
  \label{Theorem: error reduction for perfect-completeness case}
\end{theorem}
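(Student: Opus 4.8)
The plan is to decouple the two difficulties hidden in the statement — that $s$ is only polynomially bounded away from $1$, and that $O(\log n)$ clean qubits must be compressed down to two — and to attack them by two reductions that each preserve \emph{exact} perfect completeness. First, a clean‑qubit‑economy step: trade clean qubits for soundness, bringing the computation down to a constant number of clean qubits at the cost of worsening $s$ to $1-\frac1{q'}$ for some polynomially bounded $q'$. Second, a \emph{recycled} parallel‑repetition step: amplify soundness $1-\frac1{q'}$ down to $2^{-p}$ while still using only two clean qubits, by reusing those two qubits across $t=\Theta(pq')$ successive runs against \emph{fresh} maximally mixed blocks and coherently taking the AND of the outcomes. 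Perfect completeness is maintained throughout because, on yes‑instances, each run is forced to accept and can be uncomputed, so the clean qubit is returned to $\ket{0}$ and the block to the totally mixed state, leaving the next run in exactly its intended initial configuration.

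\textbf{The two reductions in more detail.} For the economy step, note that a ``pseudo‑clean'' qubit which is actually in the totally mixed state equals $\ket{0}$ with probability $\tfrac12$; conditioned on that event the original circuit runs on its intended input, and conditioned on the complementary event one coherently steers the designated output qubit to the accept value. Since an incorrect branch only ever \emph{accepts} (never rejects), perfect completeness is untouched, while the soundness gap $1-s$ shrinks by a factor $2$ per removed clean qubit; removing $O(\log n)-2$ of them costs only a $\mathrm{poly}(n)$ factor. For the amplification step, start from a constant‑clean‑qubit computation $U$ with perfect completeness and soundness $1-\frac1{q'}$, arranged so the measured qubit is (one of) the clean qubit(s); perfect completeness then says that on yes‑instances $U$ fixes the ``output $=\ket{0}$'' subspace, hence returns the whole initial state unchanged. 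Run $U$ a total of $t=\Theta(pq')$ times, reusing the clean register each time on a new totally mixed block, while a second clean qubit holds a flag recording ``every run so far has accepted'' and, once unset, halts all further runs. On yes‑instances every round is literally the identity, the flag never unsets, and the computation accepts with certainty — perfect completeness exactly. On no‑instances the $t$ chained runs behave essentially like an AND of $t$ trials each accepting with probability at most $1-\frac1{q'}$, so the acceptance probability is at most $(1-\tfrac1{q'})^{t}\le 2^{-p}$, giving $\QlogP(1,s)\subseteq\QtwoP(1,2^{-p})$.

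\textbf{Main obstacle and wrap‑up.} The technical heart is implementing the coherent ``AND‑and‑halt'' with only two clean qubits, no intermediate measurements, and no re‑initialization: a naive reversible AND of $t$ outcomes would consume $\Omega(t)$ fresh clean ancillas, and a merely unconditional (XOR‑style) accumulation amplifies the soundness only to $\approx\tfrac12$ — the familiar barrier for a single clean qubit. The way around it is to lean on perfect completeness so that a run which accepts leaves the inner clean qubit back at $\ket{0}$ with \emph{no} uncomputation required on the good branch, and to interleave controlled‑$U$ and controlled‑$U^{\dagger}$ keyed on the flag qubit so that a run which rejects commits the flag in a way that later rounds cannot undo. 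Verifying that this gadget is a legal polynomial‑size circuit, and bounding the no‑instance acceptance probability — the $t$ rounds are not literally independent, so a Jordan‑type / spectral argument is needed rather than a naive product bound — is where the real work lies; the remaining ingredients (polynomially many qubits, the choice $t=\Theta(pq')$ with $q'$ absorbing the economy‑step loss, and checking that perfect completeness survives each step) are routine. The same two‑phase scheme, carried out with only one clean qubit, is expected to yield the polynomially small error version, the flag qubit there being what is missing.
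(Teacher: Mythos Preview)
Your amplification step has a gap that cannot be filled as stated. The ``AND-and-halt'' flag update you describe requires a reversible gate on $(f,q,a_1,\dots,a_m)$ whose first output bit equals $f\vee q$ for \emph{every} value of the dirty ancillas~$a_i$. But then the $3\cdot 2^m$ inputs with $f\vee q=1$ would have to map bijectively into the $2^{m+1}$ outputs with first bit~$1$, which is impossible. So no in-place OR (equivalently, no in-place AND) can be computed using only maximally mixed ancillas, however many you allow. Your controlled-$U$/controlled-$U^{\dagger}$ interleaving does not escape this: once the flag is set and the working qubit is left at~$1$, every reversible two-qubit update either lets a later round flip the flag back (CNOT, swap, and all their compositions do), or is exactly the non-injective OR map just ruled out. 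The Jordan-type analysis you invoke cannot salvage a circuit that does not exist --- the obstruction is in the gadget, not in the soundness bound. (A version of the same obstruction also blocks your economy step from reaching \emph{one} clean qubit by the ``force-accept if the pseudo-clean bit was~$1$'' trick, and one clean qubit is what your amplification step actually needs for the inner computation.)

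The paper avoids AND entirely. After a phase-flip-based \textsc{One-Clean-Qubit Simulation Procedure} that reaches one clean qubit while preserving perfect completeness, it first runs a \textsc{Randomness Amplification Procedure}: reuse that single qubit across many runs on fresh mixed blocks, so that on no-instances the acceptance probability is driven into $[\tfrac12,\tfrac12+\varepsilon]$ (this is precisely the ``XOR converges to $\tfrac12$'' phenomenon you flag as a barrier, here turned into a feature). Then, in the \textsc{Two-Clean-Qubit Stability Checking Procedure}, it applies the base circuit $8N$ times against fresh mixed blocks and \emph{increments a modular counter} each time the working qubit reads~$1$. Increment is a permutation, so every low-order counter bit can be dirty; only the top two bits need to be clean. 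On yes-instances the working qubit is always~$0$ and the counter never moves; on no-instances the number of increments concentrates near~$4N$ by Hoeffding, and two clean top bits let one pre-shift the counter so that no initial value is ``bad,'' giving soundness~$2^{-\Theta(N)}$. Replacing the irreversible AND by the reversible increment --- made sufficient by the preceding randomness-amplification step --- is exactly the idea missing from your plan.
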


If only one clean qubit is available,
the problem is still solvable with polynomially small one-sided error
(which in particular implies the solvability with any small constant one-sided error).

\begin{theorem}
  For any polynomially bounded function~$\function{p}{\Nonnegative}{\Natural}$
  and any polynomial-time computable function~$\function{s}{\Nonnegative}{[0,1]}$
  satisfying ${1 - s \geq \frac{1}{q}}$ for some polynomially bounded function~$\function{q}{\Nonnegative}{\Natural}$,
  \[
    \QlogP(1, s) \subseteq \QoneP \Bigl( 1, \, \frac{1}{p} \Bigr).
  \]
  \vspace{-0.341\baselineskip}
  \label{Theorem: co-RQlogP is in co-RQ1P}
\end{theorem}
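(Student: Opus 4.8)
The plan is to reach a two-clean-qubit computation via Theorem~\ref{Theorem: error reduction for perfect-completeness case}, and then to pay for deleting the second clean qubit with a polynomial rather than exponential blow-up of the soundness. By Theorem~\ref{Theorem: error reduction for perfect-completeness case}, for every polynomial $r$ we have ${\QlogP(1,s) \subseteq \QtwoP(1, 2^{-r})}$, so it suffices to show ${\QtwoP(1, 2^{-r}) \subseteq \QoneP(1, 1/p)}$ for a suitable choice of $r$. So let $L$ be recognized by a polynomial-size unitary $C$ acting on two clean qubits $\regA_1, \regA_2$ together with a maximally mixed register $\regR$ of ${m = \poly}$ qubits, measuring $\regA_1$, with perfect completeness and soundness at most $2^{-r}$. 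Perfect completeness says precisely that $C$ maps the clean subspace ${\ket{00}_{\regA_1\regA_2} \tensor \spaceH_{\regR}}$ into the accepting subspace ${\ket{1}_{\regA_1} \tensor \spaceH_{\regA_2\regR}}$, and soundness says that on a no-instance the image of the clean subspace lies, on average over $\regR$, almost entirely in the rejecting subspace.

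The one-clean-qubit computation keeps $\regA_1$ as the single clean qubit and replaces $\regA_2$ by a fresh maximally mixed qubit. That qubit is an even mixture of $\ket{0}$ and $\ket{1}$: on the $\ket{0}$ part the computation coincides with the two-clean-qubit one, so perfect completeness of $C$ already forces acceptance of every yes-instance there; the difficulty is the $\ket{1}$ part, which may neither be made to reject (this would break completeness outright, not merely weaken it) nor simply be made to accept (this would push soundness up to $1/2$). I would resolve this recursively: introduce ${t+1}$ fresh maximally mixed simulation qubits ${\regB^{(0)}, \dots, \regB^{(t)}}$ and build a circuit $D_t$ whose behaviour on inputs with ${\regA_1 = \ket{0}}$ is --- letting $i$ be the least index with ${\regB^{(i)} = \ket{0}}$ --- to run $C$ with ${\regA_2 \leftarrow \regB^{(i)}}$, and, only in the event that all of ${\regB^{(0)}, \dots, \regB^{(t)}}$ equal $\ket{1}$, to force $\regA_1$ into the accepting value. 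On a yes-instance every branch accepts (either a genuine clean-input run of $C$ accepts by perfect completeness, or all the $\regB^{(i)}$ are $\ket{1}$ and acceptance is forced), so $D_t$ has \emph{exact} perfect completeness; on a no-instance $D_t$ accepts only if all ${t+1}$ simulation qubits equal $\ket{1}$ (probability $2^{-(t+1)}$) or the single executed run of $C$ accepts (probability at most $2^{-r}$), so its soundness is at most ${2^{-(t+1)} + 2^{-r}}$.

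Taking ${t = \ceil{\log_2 p}}$ and, through Theorem~\ref{Theorem: error reduction for perfect-completeness case}, ${r = \ceil{\log_2 p} + 1}$, the soundness of $D_t$ is at most $1/p$; and $D_t$ has size ${O(t)}$ times that of $C$ plus ${O(t)}$ extra gates, hence polynomial, and uses one clean qubit together with ${m + t + 1 = \poly}$ maximally mixed qubits. This places $L$ in ${\QoneP(1, 1/p)}$, completing the argument.

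The main obstacle is the middle step, and specifically the fact that a one-clean-qubit computation has perfect completeness \emph{only if} its circuit maps the entire clean subspace exactly \emph{onto} the accepting subspace on every yes-instance. Because the simulated second clean qubit genuinely equals $\ket{1}$ half of the time and qubits cannot be re-initialized, the ``$\ket{1}$'' branch is unavoidable; the recursion tames it by halving its probability mass at each of the ${t+1}$ levels and discarding only the final residual mass $2^{-(t+1)}$ into unconditional acceptance --- which is exactly why the soundness bottoms out at ${\approx 2^{-(t+1)}}$ and one gets polynomially, not exponentially, small error. The remaining delicate point is that the nested ``run $C$ controlled on ${\regB^{(i)} = \ket{0}}$'' steps must be realised by a genuine unitary even though $C$ acts on $\regB^{(i)}$; this forces a normalization of $C$ --- read off from the construction of Theorem~\ref{Theorem: error reduction for perfect-completeness case}, or grafted on afterwards --- that restores the simulated clean qubit to $\ket{0}$ on the computational branches that matter, which is what keeps the ``run $C$'' and ``pass control onward'' branches of $D_t$ orthogonal.
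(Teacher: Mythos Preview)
Your reduction to $\QtwoP(1, 2^{-r})$ via Theorem~\ref{Theorem: error reduction for perfect-completeness case} and the recursion idea are natural, but the construction has a genuine gap precisely where you flag it. The operation ``controlled on $\regB^{(i)}=\ket{0}$, apply $C$ to $(\regA_1, \regB^{(i)}, \regR)$'' is a well-defined unitary only if $C$ maps the $\regA_2=\ket{0}$ subspace into itself; otherwise the image of the $\regB^{(i)}=\ket{0}$ branch leaks into the $\regB^{(i)}=\ket{1}$ subspace and collides with the ``pass control onward'' branch. Your proposed fix --- normalise $C$ so that it restores $\regA_2$ to $\ket{0}$ --- is circular. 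Any $C'$ that (i)~preserves the $\regA_2=\ket{0}$ subspace, (ii)~keeps perfect completeness, and (iii)~keeps soundness at most $2^{-r}$ on no-instances, has a restriction $V = C'|_{\regA_2=\ket{0}}$ that is \emph{already} a one-clean-qubit circuit with perfect completeness and soundness $2^{-r}$; the recursion is then superfluous and you would have proved strictly more than the theorem (exponentially small one-clean-qubit soundness, which the paper explicitly leaves open). If you demand only (i) and (ii), the run of $C'$ inside $D_t$ on a no-instance need not reject with probability $\geq 1-2^{-r}$, and your soundness bound ${2^{-(t+1)}+2^{-r}}$ collapses. Nor does the actual circuit produced by Theorem~\ref{Theorem: error reduction for perfect-completeness case} satisfy~(i): its second clean qubit is $\regC^{(2)}$, the second-most-significant counter bit in the \textsc{Two-Clean-Qubit Stability Checking Procedure}, and on accepting runs the counter ends anywhere in $\{N,\dots,3N-1\}$, so $\regC^{(2)}$ is not reset to $\ket{0}$.

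The paper's proof never passes through two clean qubits. It applies the \textsc{One-Clean-Qubit Simulation Procedure} directly to the $\QlogP(1,s)$ circuit (perfect completeness preserved, soundness $1-1/\mathrm{poly}$), then the \textsc{Randomness Amplification Procedure} to push the no-instance acceptance probability into $\bigl[\tfrac12,\tfrac12+\varepsilon\bigr]$, and finally the \textsc{One-Clean-Qubit Stability Checking Procedure}: a modular counter whose single clean bit is the most significant, incremented whenever one of $2N$ sequential runs of the previous circuit outputs~$1$. A yes-instance leaves the counter fixed modulo $2N$; a no-instance drives it into $[N,2N-1]$ with high probability, giving soundness $O(N^{-1/3}+\varepsilon)$. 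The inverse-polynomial floor comes from a few ``bad'' initial counter values near the endpoints --- structurally the same $2^{-(t+1)}$-type loss you identify --- but nothing in the construction controls a unitary on a qubit that unitary itself modifies.
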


The above two theorems are for the one-sided error case of perfect completeness,
and similar statements hold even for the case of perfect soundness,
by considering the complement of the problem.

\begin{corollary}
  For any polynomially bounded function~$\function{p}{\Nonnegative}{\Natural}$
  and any polynomial-time computable function~$\function{c}{\Nonnegative}{[0,1]}$
  satisfying ${c \geq \frac{1}{q}}$ for some polynomially bounded function~$\function{q}{\Nonnegative}{\Natural}$,
  the following two properties hold:
  \begin{itemize}
  \item[(i)]
    ${
      \QlogP(c, 0) \subseteq \QtwoP(1 - 2^{-p}, 0)
    }$,
  \item[(ii)]
    ${
      \displaystyle{\QlogP(c, 0) \subseteq \QoneP \Bigl( 1 - \frac{1}{p}, \, 0 \Bigr)}
    }$.
  \end{itemize}
  \label{Corollary: error reduction for perfect-soundness cases}
\end{corollary}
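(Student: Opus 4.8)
The plan is to derive both parts of Corollary~\ref{Corollary: error reduction for perfect-soundness cases} directly from Theorems~\ref{Theorem: error reduction for perfect-completeness case} and~\ref{Theorem: co-RQlogP is in co-RQ1P} by passing to complements. The key observation is that appending a single $\NOT$ gate to the designated output qubit just before the final measurement is a unitary operation that does not change the number of clean qubits used, increases the circuit size by only one gate, and exactly exchanges the acceptance and rejection probabilities on every input. Hence, for any promise problem~$A$ and any completeness and soundness parameters~$c$ and~$s$, the computation witnessing ${A \in \QlogP(c, s)}$ turns, after this modification, into one witnessing ${\bar{A} \in \QlogP(1 - s, 1 - c)}$, where $\bar{A}$ is obtained from~$A$ by swapping its yes- and no-instances; the same reasoning applies verbatim with $\QtwoP$ or $\QoneP$ in place of $\QlogP$.

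Given this complementation principle, part~(i) follows in three steps. First, starting from a problem ${A \in \QlogP(c, 0)}$ with ${c \geq \frac{1}{q}}$ for some polynomially bounded~$q$, complementation gives ${\bar{A} \in \QlogP(1, 1 - c)}$. Second, writing ${s := 1 - c}$, the hypothesis ${c \geq \frac{1}{q}}$ reads precisely as ${1 - s \geq \frac{1}{q}}$, and since~$c$ (hence~$s$) is polynomial-time computable, Theorem~\ref{Theorem: error reduction for perfect-completeness case} applies and yields ${\bar{A} \in \QtwoP(1, 2^{-p})}$. Third, complementing once more gives ${A \in \QtwoP(1 - 2^{-p}, 0)}$, as claimed. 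Part~(ii) is obtained identically, with Theorem~\ref{Theorem: co-RQlogP is in co-RQ1P} used in the second step: from ${\bar{A} \in \QlogP(1, 1 - c)}$ one gets ${\bar{A} \in \QoneP(1, \frac{1}{p})}$, and complementing back yields ${A \in \QoneP(1 - \frac{1}{p}, 0)}$.

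There is essentially no technical obstacle here. The only point requiring a small amount of care is to check, against the formal definitions of the $\DQC{k}$ model and the classes~$\QlogP$, $\QtwoP$, and~$\QoneP$ in Subsection~\ref{Subsection: DQCk computation and complexity classes}, that negating the output qubit keeps us within the model and is ``free'' in all the relevant resources---the number of clean qubits and the polynomial-time uniformity of the circuit family---and that a constant $c \geq \frac{1}{q}$ is preserved under the substitution ${s := 1-c}$ as the required lower bound on ${1-s}$. Once that bookkeeping is confirmed, the corollary is immediate: two applications of complementation sandwiching a single invocation of the appropriate error-reduction theorem.
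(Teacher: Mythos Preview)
Your proposal is correct and follows essentially the same approach as the paper's proof: pass to the complement problem~$\bar{A}$, invoke Theorems~\ref{Theorem: error reduction for perfect-completeness case}~and~\ref{Theorem: co-RQlogP is in co-RQ1P}, and complement back. Your explicit justification of complementation via appending a $\NOT$ gate to the output qubit is a welcome addition; the paper simply asserts the closure under complements without spelling out this detail.
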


\clearpage 

In the case of two-sided bounded error,
similar statements are proved on the condition that
there is a constant gap between completeness and soundness in the original computation.
Namely, it is proved that any problem solvable by a polynomial-time quantum computation
that uses logarithmically many clean qubits and has a constant gap between completeness and soundness
can also be solvable by that with exponentially small two-sided error using just two clean qubits.

\begin{theorem}
  For any polynomially bounded function~$\function{p}{\Nonnegative}{\Natural}$
  and any constants~$c$~and~$s$ in $\Real$
  satisfying ${0 < s < c < 1}$,
  \[
    \QlogP(c, s) \subseteq \QtwoP(1 - 2^{-p}, 2^{-p}).
  \]
  \vspace{-0.341\baselineskip}
  \label{Theorem: error reduction for two-sided-error case}
\end{theorem}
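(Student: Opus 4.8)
The plan is to prove Theorem~\ref{Theorem: error reduction for two-sided-error case} by reducing it to the one-sided, perfect-completeness result of Theorem~\ref{Theorem: error reduction for perfect-completeness case} and its perfect-soundness mirror in Corollary~\ref{Corollary: error reduction for perfect-soundness cases}, after first exploiting the constant-gap hypothesis to push the input computation as close to one-sided as we wish while staying inside $\QlogP$.

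\emph{Step 1 (constant-factor amplification inside $\QlogP$).} Starting from $A \in \QlogP(c,s)$ with $c,s$ fixed constants, I would run $t$ mutually independent copies of the original circuit on disjoint blocks of the clean register and of the totally mixed register, coherently compute the majority of their $t$ outcome qubits by a reversible majority circuit (using $O(t)$ extra clean ancillae that are uncomputed up to the output), and measure the majority qubit. Since $t$ is a \emph{constant} depending only on $c$, $s$, and a target $\delta$, this uses only $t\cdot O(\log n)+O(1)=O(\log n)$ clean qubits, so the result is again in $\QlogP$, and a Chernoff bound gives $A\in\QlogP(1-\delta,\delta)$ for any prescribed constant $\delta>0$. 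This is the only place the \emph{constant}-gap hypothesis is used: with a merely $1/\poly$ gap one would need $t=\omega(1)$ copies, hence $\omega(\log n)$ clean qubits, and leave $\QlogP$.

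\emph{Step 2 (a robust version of the two-clean-qubit amplification).} Now $A\in\QlogP(1-\delta,\delta)$ for a constant $\delta$ of our choosing. I would revisit the construction behind Theorem~\ref{Theorem: error reduction for perfect-completeness case}, which turns a $\QlogP$ circuit $Q$ on $w$ qubits with $k=O(\log n)$ clean qubits into a two-clean-qubit circuit with error $2^{-p}$, and observe that it uses perfect completeness only through the fact that $\Pi_\rej\,Q\,(\ket{0^k}\tensor\ket x)=0$ holds \emph{for every} computational-basis state $\ket x$ of the mixed register. In the present setting we only know $\Expect_{x}\bignorm{\Pi_\rej\,Q\,(\ket{0^k}\tensor\ket x)}^{2}\leq\delta$, which by a Markov-type estimate still forces all but an $O(\delta)$-fraction of the $\ket x$'s to have rejection probability below $\frac{1}{2}$; symmetrically, the perfect-soundness construction from Corollary~\ref{Corollary: error reduction for perfect-soundness cases}, applied to the complement $\overline{A}$, handles the other tail. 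The remaining task is to track how the $O(\delta)$ slack on each tail propagates through the amplification, show that it is not inflated by the polynomially many amplification stages, and conclude --- for $\delta$ a small enough constant --- that both the completeness error and the soundness error end up at $2^{-p}$, i.e.\ $A\in\QtwoP(1-2^{-p},2^{-p})$.

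\emph{Main obstacle.} The crux is exactly the robustness analysis in Step~2. The perfect-completeness argument is delicate precisely because amplification is inherently two-sided, so it can be run safely only against a tail that has probability \emph{exactly} zero; one must check that introducing a small constant amount of leakage on that tail --- and on the symmetric tail --- is not blown up, and this is simultaneously what forces the constant-gap hypothesis and the two-sided (rather than one-sided) conclusion. It is also what pins the number of clean qubits at two: even on perfectly one-sided input, the one-clean-qubit construction behind Theorem~\ref{Theorem: co-RQlogP is in co-RQ1P} reaches only $1/\poly$ error, so its robust counterpart would yield only the weaker asymmetric guarantee (exponentially small error on one of completeness and soundness, polynomially small on the other) that one expects for a single clean qubit. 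A minor additional point is the bookkeeping that Steps~1 and~2 together still fit into exactly two clean qubits: Step~1 genuinely enlarges the clean register, so the two-qubit budget must be re-established by the Step~2 construction rather than merely preserved through it.
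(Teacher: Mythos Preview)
Your Step~1 is correct and matches the paper's opening move. The gap is in Step~2: the robustness you hope for does not hold, and the slack \emph{is} blown up --- not by the Randomness Amplification stage alone, but by the Stability Checking stage that follows it.

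Trace the yes-instance acceptance probability through the Theorem~\ref{Theorem: error reduction for perfect-completeness case} pipeline when completeness is only $1-\delta$ with $\delta$ a constant. After the \textsc{One-Clean-Qubit Simulation Procedure} (Proposition~\ref{Proposition: acceptance probability of One-Clean-Qubit Simulation Procedure}) the completeness error becomes roughly $2\delta/2^k$. The \textsc{Randomness Amplification Procedure} must be run for $N=\Theta(2^k)$ rounds to drive the no-instance probability down to $\tfrac12+\tfrac1{16}$; but by Proposition~\ref{Proposition: acceptance probability of Randomness Amplification Procedure} this simultaneously sends the yes-instance probability to $\tfrac12+\tfrac12(1-O(\delta/2^k))^{\Theta(2^k)}\approx\tfrac12+\tfrac12 e^{-\Theta(\delta)}$, a constant strictly less than~$1$. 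Now the \textsc{Two-Clean-Qubit Stability Checking Procedure} runs $\Theta(p(n))$ rounds, and Proposition~\ref{Proposition: lower bound of acceptance probability of Two-Clean-Qubit Stability Checking Procedure} only guarantees completeness $\geq(\text{const}<1)^{\Theta(p(n))}$, which is exponentially small rather than $1-2^{-p}$. The Markov argument about individual mixed-register states $\ket{x}$ does not help here: each amplification round draws a \emph{fresh} totally mixed register, so what governs the dynamics is the averaged acceptance probability, not the per-$\ket{x}$ distribution.

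The paper's fix is an additional procedure you do not have, the \textsc{OR-Type Repetition Procedure} (Proposition~\ref{Proposition: acceptance probability of OR-Type Repetition Procedure}). The full route is: first collapse to one clean qubit with a constant gap (Lemma~\ref{Lemma: BQlogP with constant gap is in BQ1P with constant gap}); then, because each attempt now needs only one clean qubit, run $O(\log n)$ parallel copies to get $1/\poly$ two-sided error inside $\QlogP$; then apply OR-type repetition for polynomially many rounds, \emph{reusing} the clean register by uncomputing after each attempt and counting a failed uncomputation as an accept. This last trick is what achieves completeness $1-2^{-q}$ for a polynomial $q$ of your choice while staying in $\QlogP$ (Lemma~\ref{Lemma: BQ1P with constant gap is in BQlogP with almost perfect completeness}) --- something your constant-$t$ majority cannot reach. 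Only with completeness already exponentially close to~$1$ does the three-stage Theorem~\ref{Theorem: error reduction for perfect-completeness case} machinery survive: the degradations at each stage are now multiplicative by $\poly(n)$ factors against an input error of $2^{-q}$, which is harmless for $q$ chosen large enough.
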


If only one clean qubit is available,
the problem is again still solvable
with exponentially small error in one of the completeness and soundness
and polynomially small error in the other.

\begin{theorem}
  For any polynomially bounded function~$\function{p}{\Nonnegative}{\Natural}$
  and any constants~$c$~and~$s$ in $\Real$
  satisfying ${0 < s < c < 1}$,
  \[
    \QlogP(c, s)
    \subseteq
    \QoneP \Bigl( 1 - 2^{-p}, \, \frac{1}{p} \Bigr)
    \!\: \intersection \!\:
    \QoneP \Bigl( 1 - \frac{1}{p}, \, 2^{-p} \Bigr).
  \]
  \vspace{-0.341\baselineskip}
  \label{Theorem: BQlogP with constant gap is in BQ1P}
\end{theorem}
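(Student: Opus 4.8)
\noindent\emph{Proof plan.} The first move is to collapse the statement to a single inclusion using complementation. If $L \in \QlogP(c,s)$ then $\overline L \in \QlogP(1-s,\,1-c)$, and $0 < 1-c < 1-s < 1$ is again a pair of constants; moreover $L \in \QoneP\!\bigl(1-\tfrac1p,\,2^{-p}\bigr)$ if and only if $\overline L \in \QoneP\!\bigl(1-2^{-p},\,\tfrac1p\bigr)$. Hence it suffices to prove $\QlogP(c,s) \subseteq \QoneP\!\bigl(1-2^{-p},\,\tfrac1p\bigr)$ for every pair of constants $0 < s < c < 1$; intersecting this inclusion with the one obtained by applying it to the complement gives the theorem.

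So fix $L \in \QlogP(c,s)$ and let $r$ be a polynomially bounded function, to be chosen large at the very end. By Theorem~\ref{Theorem: error reduction for two-sided-error case}, $L \in \QtwoP(1-2^{-r},\,2^{-r})$; fix such a two-clean-qubit circuit family with clean qubits $\regA,\regB$, output qubit $\regO$, and acceptance probability $P_x \ge 1-2^{-r}$ on yes-instances and $P_x \le 2^{-r}$ on no-instances. The target is a one-clean-qubit circuit whose acceptance probability on input $x$ is (essentially) $\tfrac12(1+P_x)$: such a circuit has completeness $\ge 1-2^{-r-1}$ and soundness $\le \tfrac12 + 2^{-r-1}$, a gap close to $\tfrac12$, and it then remains only to amplify the soundness down to $\tfrac1p$ at the one-clean-qubit level while keeping completeness above $1-2^{-p}$. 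The latter amplification is carried out by the one-clean-qubit error-reduction technique developed in the proof of Theorem~\ref{Theorem: co-RQlogP is in co-RQ1P}: every step above costs only a polynomial factor, so choosing $r$ a sufficiently large polynomial absorbs all of these losses.

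To realize acceptance $\tfrac12(1+P_x)$ with a single clean qubit, keep $\regA$ as the one clean qubit and let $\regB$ be absorbed into the maximally mixed register. Conditioned on the initial value of $\regB$ being $0$ (probability $\tfrac12$), running the original unitary exactly reproduces the two-clean-qubit computation, so it accepts with probability $P_x$; conditioned on the initial value of $\regB$ being $1$ (probability $\tfrac12$), we want the computation to accept with certainty, which yields total acceptance $\tfrac12 P_x + \tfrac12 = \tfrac12(1+P_x)$. \textbf{The step I expect to be the main obstacle is implementing ``accept whenever $\regB$ started in $\ket1$'' with no spare clean ancilla:} we cannot simply copy the initial value of $\regB$ onto a fresh qubit and consult it at the end, since every qubit besides $\regA$ is maximally mixed and the circuit will in any case have scrambled $\regB$. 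The intended resolution is to run the original unitary inside a Hadamard-test-type gadget driven coherently by $\regA$ — or to reuse the uncompute-and-flag construction from the proof of Theorem~\ref{Theorem: co-RQlogP is in co-RQ1P} — arranged so that the branch corresponding to initial $\regB=\ket1$ is routed into the accepting subspace while the $\regB=\ket0$ branch is left intact, and so that any bookkeeping qubits the gadget needs may be drawn from the mixed register. (Symmetrically, routing that branch into the \emph{rejecting} subspace produces a one-clean-qubit circuit with completeness $\approx\tfrac12$ and soundness $\le 2^{-r-1}$, giving the complementary target directly after amplifying completeness; either route works.)

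Assembling the pieces: with $r$ chosen large enough that the one-clean-qubit soundness-amplification step brings soundness below $\tfrac1p$ while the completeness deficit — which starts at $2^{-r-1}$ and is only multiplied by a fixed polynomial throughout — stays below $2^{-p}$, we obtain $\QlogP(c,s)\subseteq\QoneP\!\bigl(1-2^{-p},\,\tfrac1p\bigr)$, and hence the full statement. Besides the bad-branch gadget, the one other point demanding care is that the amplification machinery of Theorem~\ref{Theorem: co-RQlogP is in co-RQ1P} is stated for \emph{perfect} completeness, whereas here completeness is only exponentially close to $1$; one must track the resulting error term of size $2^{-r}\cdot\poly$ through the construction and confirm it is killed by the final choice of $r$.
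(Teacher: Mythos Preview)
Your complementation reduction is exactly what the paper does, and invoking Theorem~\ref{Theorem: error reduction for two-sided-error case} as a black box is legitimate since its proof does not use the present statement. But the step you yourself flag as the main obstacle is a genuine gap, and the resolutions you sketch do not deliver the formula you claim. You want to keep $\regA$---the output qubit of the two-clean-qubit circuit $Q$---as the sole clean qubit and route the branch ``$\regB$ started in $\ket{1}$'' to accept, obtaining acceptance exactly $\tfrac12(1+P_x)$. The obstruction is that any Hadamard-test gadget for $Q$ requires a control qubit that $Q$ does not touch, and $\regA$ is one of $Q$'s working qubits. The ``uncompute-and-flag'' construction you cite (the \textsc{One-Clean-Qubit Simulation Procedure}) instead introduces a \emph{fresh} clean qubit $\regO$ and puts \emph{both} formerly-clean qubits into the mixed register; applied with $k=2$ it yields acceptance between $\tfrac34+\tfrac14 P_x^2$ and $\tfrac34+\tfrac14 P_x$, so no-instance soundness lands near $\tfrac34$, not near $\tfrac12$. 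I see no way to realize your $\tfrac12(1+P_x)$ routing with a single clean qubit that $Q$ itself acts on.

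That said, once the gadget is replaced by the \textsc{One-Clean-Qubit Simulation Procedure} at $k=2$, your route does go through: insert the \textsc{Randomness Amplification Procedure} to pull soundness from $\approx\tfrac34$ toward $\tfrac12$, then apply the \textsc{One-Clean-Qubit Stability Checking Procedure}, tracking the near-perfect-completeness error as you describe. This is correct but strictly longer than the paper's route, which does \emph{not} pass through Theorem~\ref{Theorem: error reduction for two-sided-error case}. The paper instead proves two preparatory lemmas---first $\QlogP(c,s)\subseteq\QoneP\bigl(\tfrac34,\tfrac58\bigr)$, then, via an \textsc{OR-Type Repetition Procedure} that reuses clean qubits by uncomputing after each attempt, $\QoneP\bigl(\tfrac34,\tfrac58\bigr)\subseteq\QlogP\bigl(1-2^{-q},\tfrac12\bigr)$---and applies the three building blocks once from there. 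Your detour through Theorem~\ref{Theorem: error reduction for two-sided-error case} implicitly runs those same lemmas plus the \emph{two}-clean-qubit stability check, only to undo the second clean qubit and rerun the final three steps.
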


The ideas for the proofs of these statements will be overviewed in Section~\ref{Section: overview of error reduction results}.
The techniques developed in the proofs may be of independent interest in themselves,
and one of the technical tools can be used to show
the hardness of weak classical simulations of $\DQC{1}$ computations,
as will be summarized below.


\subsection{Further Results}

\paragraph{Completeness results for \problemfont{Trace Estimation} problem} 

Define the complexity classes $\BQlogP$ and $\BQoneP$
by ${\BQlogP =\QlogP \bigl(\frac{2}{3}, \, \frac{1}{3} \bigr)}$
and ${\BQoneP =\QoneP \bigl(\frac{2}{3}, \, \frac{1}{3} \bigr)}$,
respectively.
An immediate consequence of Theorem~\ref{Theorem: BQlogP with constant gap is in BQ1P}
is that the \problemfont{Trace Estimation} problem is complete for $\BQlogP$ and $\BQoneP$
under polynomial-time many-one reduction,
even when the problem is defined with \emph{fixed} constant parameters
that specify the bounds on normalized traces in the yes-instance and no-instance cases.

Given a description of a quantum circuit that specifies a unitary transformation~$U$,
the \problemfont{Trace Estimation} problem specified with
two parameters~$a$~and~$b$ satisfying ${-1 \leq b < a \leq 1}$
is the problem of deciding
whether the real part of the normalized trace of $U$ is at least $a$ or it is at most $b$.

\begin{promiseproblem*}
  {Trace Estimation Problem: $\boldsymbol{\TrEst(a,b)}$}
  {
    A description of a quantum circuit~$Q$ that implements a unitary transformation~$U$ over $n$~qubits.
  }
  {
    ${\frac{1}{2^n} \Re(\tr U) \geq a}$.
  }
  {
    ${\frac{1}{2^n} \Re(\tr U) \leq b}$.
  }
\end{promiseproblem*}

The paper by Knill~and~Laflamme~\cite{KniLaf98PRL} that introduced the $\DQC{1}$ model
already pointed out that this problem is closely related to the $\DQC{1}$ computation.
This point was further clarified in the succeeding literature
(see Refs.~\cite{She06arXiv, She09PhD, ShoJor08QIC}, for instance).
More precisely,
consider a variant of the \problemfont{Trace Estimation} problem
where the two parameters~$a$~and~$b$ may depend on the input length
(i.e., the length of the description of $Q$).
It is known that
this version of the \problemfont{Trace Estimation} problem,
for any $a$ and $b$ such that the gap~${a - b}$ is bounded from below by an inverse-polynomial
with respect to the input length,
can be solved by a $\DQC{1}$ computation with \emph{some} two-sided bounded error
where the completeness and soundness parameters~$c$~and~$s$ depend on $a$ and $b$.
It is also known that,
for any two nonnegative parameters~$a$~and~$b$
such that the gap~${a - b}$ is bounded from below by an inverse-polynomial
with respect to the input length,
the corresponding version of the \problemfont{Trace Estimation} problem
is hard for the complexity class~${\QoneP(c,s)}$
for \emph{some} completeness parameter~$c$ and soundness parameter~$s$ that depend on $a$ and $b$.
Hence,
the \problemfont{Trace Estimation} problem essentially characterizes
the power of the $\DQC{1}$ computation,
except the following subtle point.
One thing to be pointed out in the existing arguments above is that,
when the parameters~$a$~and~$b$ are fixed for the \problemfont{Trace Estimation} problem,
the completeness~$c$ and soundness~$s$ with which the problem is in ${\QoneP(c,s)}$
are \emph{different} from
the completeness~$c'$ and soundness~$s'$ with which the problem is hard for ${\QoneP(c',s')}$.
Namely,
given two nonnegative parameters~$a$~and~$b$ of the problem,
the computation solves the problem
with completeness~${c = (1 + a) / 2}$ and soundness~${s = (1 + b) / 2}$,
while the problem is hard for the class
with completeness~${c' = a/4}$ and soundness~${s' = b/4}$.
Therefore, the existing arguments are slightly short
for proving $\BQoneP$-completeness of the \problemfont{Trace Estimation} problem
with fixed parameters~$a$~and~$b$
(and ${\QoneP(c, s)}$-completeness of that
for fixed completeness and soundness parameters~$c$~and~$s$, in general).

In contrast, with Theorem~\ref{Theorem: BQlogP with constant gap is in BQ1P} in hand,
it is immediate to show that
the \problemfont{Trace Estimation} problem is complete for $\BQlogP$ and for $\BQoneP$
for any constants~$a$~and~$b$ satisfying ${0 < b < a < 1}$.

\begin{theorem}
  For any constants~$a$~and~$b$ in $\Real$ satisfying ${0 < b < a < 1}$,
  ${\TrEst(a, b)}$ is complete for $\BQlogP$ and for $\BQoneP$
  under polynomial-time many-one reduction.
  \label{Theorem: completeness of the trace estimation problem}
\end{theorem}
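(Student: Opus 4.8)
The plan is to obtain both directions as consequences of Theorem~\ref{Theorem: BQlogP with constant gap is in BQ1P} combined with the two facts about $\TrEst$ and one-clean-qubit computation recalled in the discussion preceding the statement: (a)~for any $a,b$ with $-1\le b<a\le 1$ one has $\TrEst(a,b)\in\QoneP\bigl(\tfrac{1+a}{2},\tfrac{1+b}{2}\bigr)$, witnessed by the standard one-clean-qubit trace-estimation procedure (a Hadamard test on the clean qubit), whose acceptance probability is $\tfrac12\bigl(1+\tfrac1{2^n}\Re(\tr U)\bigr)$; and (b)~for nonnegative $a,b$ with $a-b$ bounded below by an inverse polynomial, $\TrEst(a,b)$ is hard, under polynomial-time many-one reductions, for $\QoneP\bigl(\tfrac a4,\tfrac b4\bigr)$.

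\emph{Containment.} Since $a$ and $b$ are constants with $0<b<a<1$, the quantities $\tfrac{1+a}{2}$ and $\tfrac{1+b}{2}$ are constants with $0<\tfrac{1+b}{2}<\tfrac{1+a}{2}<1$, so by (a) $\TrEst(a,b)\in\QlogP(c',s')$ for the constants $c'=\tfrac{1+a}{2}$, $s'=\tfrac{1+b}{2}$. Applying Theorem~\ref{Theorem: BQlogP with constant gap is in BQ1P} with the (polynomially bounded) constant function $p\equiv 3$ gives $\TrEst(a,b)\in\QoneP(\tfrac23,\tfrac18)\subseteq\QoneP(\tfrac23,\tfrac13)=\BQoneP$, the last inclusion being the trivial monotonicity of these classes in their error bounds. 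Since one clean qubit is a special case of logarithmically many clean qubits, $\BQoneP\subseteq\BQlogP$, and hence $\TrEst(a,b)$ lies in both classes.

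\emph{Hardness.} It is enough to prove $\BQlogP$-hardness, since $\BQoneP\subseteq\BQlogP$ forces $\BQlogP$-hardness to imply $\BQoneP$-hardness. Fix $L\in\BQlogP=\QlogP(\tfrac23,\tfrac13)$ and set $p_0=\max\{2,\lceil\log_2(4/b)\rceil\}$, a constant depending only on $b$. Applying Theorem~\ref{Theorem: BQlogP with constant gap is in BQ1P} with $p\equiv p_0$ (using the second term of the intersection there) gives $L\in\QoneP(1-\tfrac1{p_0},\,2^{-p_0})$. By the choice of $p_0$ we have $1-\tfrac1{p_0}\ge\tfrac12>\tfrac14>\tfrac a4$ and $2^{-p_0}\le\tfrac b4$, so by monotonicity $L\in\QoneP(\tfrac a4,\tfrac b4)$, and by (b) $L$ reduces to $\TrEst(a,b)$ under a polynomial-time many-one reduction. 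Composing the (uniformly, efficiently computable) circuit family for $L$, the transformation supplied by Theorem~\ref{Theorem: BQlogP with constant gap is in BQ1P}, and the circuit produced in (b) shows this whole chain is a single polynomial-time many-one reduction. Hence $\TrEst(a,b)$ is $\BQlogP$-hard, and, together with the containment, complete for $\BQlogP$ and for $\BQoneP$.

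The real work is done entirely by Theorem~\ref{Theorem: BQlogP with constant gap is in BQ1P}. The reason $\TrEst(a,b)$ with \emph{fixed} parameters is complete rather than only ``essentially complete'' (as in the existing arguments) is that the known $\QoneP$-hardness in (b) pins the target soundness at the fixed constant $b/4$, and the classical route to soundness that small --- sequential repetition --- is precisely what is unavailable here, since it requires reinitialization of the clean qubit. Theorem~\ref{Theorem: BQlogP with constant gap is in BQ1P} removes that obstruction, pushing the soundness down to $2^{-p}$ with a single clean qubit, after which only monotonicity and composition of reductions remain. So there is no substantial remaining difficulty; the only points requiring attention are getting the direction of the reduction in (b) right and checking the routine fact that the composed map is polynomial-time computable.
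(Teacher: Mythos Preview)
Your proof is correct and follows essentially the same approach as the paper's: both use the two known facts about $\TrEst$ (which the paper states as Lemmas~\ref{Lemma: TrEst(a,b) is in QoneP} and~\ref{Lemma: QoneP-hardness of TrEst(a,b)}) together with Theorem~\ref{Theorem: BQlogP with constant gap is in BQ1P} to bridge the gap between the parameters $(\tfrac{1+a}{2},\tfrac{1+b}{2})$ and $(\tfrac{a}{4},\tfrac{b}{4})$. The only cosmetic difference is that the paper first establishes $\BQoneP$-completeness and then invokes $\BQlogP=\BQoneP$, whereas you show membership in $\BQoneP$ and hardness for $\BQlogP$ separately; your explicit choice of constants ($p\equiv 3$ and $p\equiv p_0$) makes the same argument more concrete.
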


\paragraph{Hardness of weak classical simulations of $\boldsymbol{\DQC{1}}$ computation}

Recently, quite a few number of studies
focused on the hardness of \emph{weak} classical simulations
of restricted models of quantum computing under some reasonable assumptions~\cite{TerDiV04QIC, BreJozShe11RSPA, AarArk13ToC, NiVan13QIC, JozVan14QIC, MorFujFit14PRL, TakYamTan14QIC, Bro15PRA, TakTanYamTan15COCOON}.
Namely, a plausible assumption in complexity theory
leads to the impossibility of efficient sampling by a classical computer
according to an output probability distribution generatable with a quantum computing model.
Among them are the IQP model~\cite{BreJozShe11RSPA} and the Boson sampling~\cite{AarArk13ToC},
both of which are proved hard for classical computers to simulate within multiplicative error,
unless the polynomial-time hierarchy collapses to the third level
(in fact, the main result of Ref.~\cite{AarArk13ToC} is a much more meaningful hardness result
on the weak simulatability of the Boson sampling within \emph{polynomially small additive error},
but which needs a much stronger complexity assumption than the collapse of polynomial-time hierarchy).

An interesting question to ask is whether a similar result holds even for the $\DQC{1}$ model.
Very recently, Morimae, Fujii, and Fitzsimons~\cite{MorFujFit14PRL} approached to answering the question.
They focused on the $\DQC{1}_m$-type computation, the generalization of the $\DQC{1}$ model
that allows $m$~output~qubits to be measured at the end of the computation,
and proved that a $\DQC{1}_m$-type computation with ${m \geq 3}$
cannot be simulated within multiplicative error
unless the polynomial-time hierarchy collapses to the third level.
Their proof essentially shows that any $\postBQP$ circuit
can be simulated by a $\DQC{1}_3$-type computation,
where $\postBQP$ is the complexity class
corresponding to bounded-error quantum polynomial-time computations with postselection,
which is known equivalent to $\PP$~\cite{Aar05RSPA}.
By an argument similar to that in Ref.~\cite{BreJozShe11RSPA},
it follows that $\PP$ is in $\postBPP$ (the version of $\BPP$ with postselection),
if the $\DQC{1}_3$-type computation is classically simulatable within multiplicative error.
Together with Toda's theorem~\cite{Tod91SIComp},
this implies the collapse of the polynomial-time hierarchy to the third level.

One obvious drawback of the existing argument above
is an inevitable postselection measurement inherent to the definition of $\postBQP$.
This becomes a quite essential obstacle when trying to extend this argument to the $\DQC{1}$ model,
where only one qubit is allowed to be measured.

To deal with the $\DQC{1}$ model,
this paper takes a different approach
by considering the complexity class~$\NQP$ introduced in Ref.~\cite{AdlDeMHua97SIComp}
or the class~$\SBQP$ introduced in Ref.~\cite{Kup15ToC}.
Let $\NQoneP$ and $\SBQoneP$ be the variants of $\NQP$ and $\SBQP$, respectively,
in which the quantum computation performed is restricted to the $\DQC{1}$ computation
(the precise definitions of $\NQoneP$ and $\SBQoneP$ will be found in Subsection~\ref{Subsection: DQCk computation and complexity classes}).
From one of the technical tools used for proving the main results of this paper
(the \textsc{One-Clean-Qubit Simulation Procedure} in Subsection~\ref{Subsection: One-Clean-Qubit Simulation Procedure}),
it is almost immediate to show the following theorem that states that
the restriction to the $\DQC{1}$ computation
does not change the complexity classes~$\NQP$ and $\SBQP$.

\begin{theorem}
  ${\NQP = \NQoneP}$ and ${\SBQP = \SBQoneP}$.
  \label{Theorem: NQP = NQ[1]P and SBQP = SBQ[1]P}
\end{theorem}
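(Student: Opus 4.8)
The plan is to prove the two equalities by establishing the two nontrivial inclusions $\NQP\subseteq\NQoneP$ and $\SBQP\subseteq\SBQoneP$ with the help of the \textsc{One-Clean-Qubit Simulation Procedure}, the reverse inclusions being routine.

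I would first dispatch $\NQoneP\subseteq\NQP$ and $\SBQoneP\subseteq\SBQP$ by purification. Given a $\DQC{1}$ computation over $w$~qubits --- a polynomial-size unitary applied to ${\ketbra{0}\tensor\bigl(\frac{I}{2}\bigr)^{\tensor(w-1)}}$ followed by a single-qubit computational-basis measurement --- build an ordinary polynomial-time quantum computation on $w+2(w-1)$ clean qubits by preparing $w-1$ Bell pairs, designating one qubit of each pair as a ``would-be-mixed'' qubit (and never touching its partner again), keeping the remaining qubit clean, and then applying the same unitary and measuring the same output qubit. The reduced state of the designated qubits is exactly $\bigl(\frac{I}{2}\bigr)^{\tensor(w-1)}$ and the partners are untouched, so the acceptance probability is literally unchanged; since the definitions of $\NQP$ and $\SBQP$ constrain only the acceptance probability, both inclusions follow.

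For the forward inclusions I would apply the \textsc{One-Clean-Qubit Simulation Procedure} to the polynomial-time quantum computation witnessing membership in $\NQP$ or $\SBQP$. The procedure returns a $\DQC{1}$ computation whose acceptance probability is a fixed, efficiently computable, strictly increasing function of the original acceptance probability $p$ that vanishes precisely when $p$ does --- in effect a positive, inverse-exponentially small scalar multiple of $p$. (The mechanism behind this is to use the single clean qubit as a flag: a multiply-controlled NOT records whether the initially totally mixed work register happens to sit in the all-zero computational basis state, the simulated circuit is run conditioned on that flag, and the flag is then recomputed; every ``garbage'' basis value of the mixed register leaves the clean qubit in $\ket{0}$ and hence contributes zero to the accepting measurement outcome, while the unique good branch contributes the circuit's acceptance probability, diluted by a factor $2^{-\poly}$.) Given such a transformation, membership transfers immediately: for $\NQP$, strict positivity of the acceptance probability is preserved in both directions, so the resulting $\DQC{1}$ computation places the language in $\NQoneP$; for $\SBQP$, scaling the two acceptance-probability bounds --- and therefore the separating threshold --- by the same positive efficiently computable factor keeps the required multiplicative gap, placing the language in $\SBQoneP$.

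The single point I expect to need care is the $\SBQP$ inclusion: one must confirm that the acceptance-probability transformation delivered by the procedure is close enough to linear --- a genuine scalar multiple, or at worst something like $p\mapsto\bigl(1-(1-p)^2\bigr)/2^{\poly}$ --- that the factor-two separation between the completeness and soundness thresholds is not lost, since, in contrast with $\BPP$- or $\BQP$-type classes, one cannot simply repeat a $\DQC{1}$ computation to re-amplify a mildly degraded gap (there being only one clean qubit available). This comes down to reading off the exact guarantee of the \textsc{One-Clean-Qubit Simulation Procedure}; with that in hand, both equalities are immediate.
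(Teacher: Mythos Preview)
Your purification argument for the reverse inclusions is fine. The forward direction, however, rests on a mistaken picture of what the \textsc{One-Clean-Qubit Simulation Procedure} actually delivers. By Proposition~\ref{Proposition: acceptance probability of One-Clean-Qubit Simulation Procedure}, the procedure sends the original acceptance probability~${p = p_\acc(Q,k)}$ to a value satisfying
\[
  1 - 2^{-k}\bigl(1 - p^2\bigr) \leq p_\acc(R,1) \leq 1 - 2^{-k}(1-p),
\]
so it is the \emph{rejection} probability~${1-p}$ that gets scaled down by roughly~$2^{-k}$, not $p$ itself. In particular ${p=0}$ is sent to ${p_\acc(R,1)=1-2^{-k}}$, not to $0$; the procedure preserves perfect \emph{completeness} (${p=1}$ iff ${p_\acc(R,1)=1}$), not perfect soundness. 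Your parenthetical description of the mechanism is likewise off: the actual procedure is a Hadamard/controlled-$Z$/Hadamard phase-kickback on the clean qubit, not a multiply-controlled-$\NOT$ flag, and under the paper's accept-on-$\ket{0}$ convention the ``garbage'' branches of the mixed register leave the clean qubit in~$\ket{0}$---the \emph{accepting} outcome---so they contribute probability~$1$ to acceptance, not~$0$.

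Because of this, the paper does not argue ${\NQP\subseteq\NQoneP}$ or ${\SBQP\subseteq\SBQoneP}$ directly; it proves the complementary inclusions ${\co\NQP\subseteq\co\NQoneP}$ and ${\co\SBQP\subseteq\co\SBQoneP}$ instead. For ${\co\NQP}$ this is immediate from the bounds above. For ${\co\SBQP}$ there is the further subtlety you anticipate in your last paragraph: the ${p\mapsto p^2}$ slack in the lower bound can destroy a bare factor-$2$ gap. The fix is not that the transformation is ``close enough to linear''; rather, one first invokes the amplification property of $\SBQP$ (in the standard model, where it is available) to widen the gap to completeness at least~${1-2^{-p-2}}$ versus soundness at most~${1-2^{-p}}$---a factor-$4$ separation in rejection probability---and only then applies the simulation procedure, after which a factor-$2$ separation survives in the one-clean-qubit rejection probability.
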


If any $\DQC{1}$ computation were classically simulatable within multiplicative error,
however,
the class~$\NQoneP$ would be included in $\NP$ and the class~$\SBQoneP$ would be included in $\SBP$,
where $\SBP$ is a classical version of $\SBQP$ in short, introduced in Ref.~\cite{BohGlaMei06JCSS}.
Similarly,
if any $\DQC{1}$ computation were classically simulatable within exponentially small additive error,
both $\NQoneP$ and $\SBQoneP$ would be included in $\SBP$.
Combined with Theorem~\ref{Theorem: NQP = NQ[1]P and SBQP = SBQ[1]P},
any of the inclusions~%
${\NQoneP \subseteq \NP}$,
${\SBQoneP \subseteq \SBP}$,
and ${\NQoneP \subseteq \SBP}$
further implies an implausible consequence that ${\PH = \AM}$,
which in particular implies the collapse of the polynomial-time hierarchy to the second level.
Accordingly, the following theorem holds.

\begin{theorem}
  The $\DQC{\mathit{1}}$ model is not classically simulatable
  either within multiplicative error or exponentially small additive error,
  unless ${\PH = \AM}$.
  \label{Theorem: hardness of classically simulating DQC1, informal}
\end{theorem}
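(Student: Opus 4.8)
The plan is to derive the theorem by combining Theorem~\ref{Theorem: NQP = NQ[1]P and SBQP = SBQ[1]P} with two elementary observations about what a classical simulation of the $\DQC{1}$ model provides, and then to invoke the structural facts about $\NQP$, $\SBQP$, $\SBP$, and $\AM$ recalled just before the statement. First I would fix the two simulation notions precisely: a \emph{weak classical simulation within multiplicative error} is a classical polynomial-time randomized procedure whose output distribution $q_x$ on the measured bit satisfies $\abs{q_x(b) - p_x(b)} \le \epsilon\, p_x(b)$ for the $\DQC{1}$ acceptance distribution $p_x$, with $\epsilon$ a fixed constant below $1/3$; a \emph{weak classical simulation within exponentially small additive error} instead guarantees $\abs{q_x(b) - p_x(b)} \le 2^{-r(\abs{x})}$ for an arbitrary polynomial $r$ of our choosing. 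In either case, since the procedure runs in classical polynomial time, the predicate ``some run of the procedure outputs $1$'' lies in $\NP$.

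Next I would read off the needed inclusions. A multiplicative error never turns a zero probability into a positive one or conversely, so $q_x(1) > 0 \iff p_x(1) > 0$; hence the $\NP$ predicate above decides any $\NQoneP$ language, giving $\NQoneP \subseteq \NP$. A factor $1 \pm \epsilon$ with $\epsilon < 1/3$ preserves the factor-two promise gap defining $\SBQoneP$ (one checks $(1 - \epsilon)\cdot 2 > 1 + \epsilon$), so the simulator is an $\SBP$ machine and $\SBQoneP \subseteq \SBP$. For the additive-error case I would first normalize: an $\NQoneP$ computation may be taken to have acceptance probability either $0$ or at least $2^{-t(\abs{x})}$ for some polynomial $t$ (a nonzero acceptance amplitude over a fixed universal gate set has modulus bounded below by $2^{-\poly(\abs{x})}$), and an $\SBQoneP$ computation has its completeness--soundness gap bounded below by $2^{-\poly}$ by definition; choosing the simulation error $2^{-r}$ with $r$ exceeding the relevant polynomial then keeps yes- and no-instances separated by a factor of two around a $2^{-\poly}$-scale threshold, which is exactly an $\SBP$ algorithm, so $\NQoneP \subseteq \SBP$ and $\SBQoneP \subseteq \SBP$.

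Finally I would apply Theorem~\ref{Theorem: NQP = NQ[1]P and SBQP = SBQ[1]P} to rewrite these inclusions as $\NQP \subseteq \NP$, $\NQP \subseteq \SBP$, and $\SBQP \subseteq \SBP$, and then invoke the consequence --- standard complexity theory, as already indicated in the discussion preceding the statement --- that each of these collapses the polynomial hierarchy to $\AM$. This last step rests on the known relations $\NQP = \coCequalP$ and $\SBP \subseteq \AM$ together with the hardness of $\coCequalP$ for the polynomial hierarchy, which combined yield $\PH = \AM$ (cf.\ the Boppana--H{\aa}stad--Zachos collapse $\coNP \subseteq \AM \Rightarrow \PH = \AM$); since $\AM \subseteq \Pi_2^{\mathrm{p}}$, this in particular collapses $\PH$ to its second level, as asserted.

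I expect the \textbf{main obstacle} to be the normalization used in the additive-error case: one must rescale the acceptance probability of a $\DQC{1}$ computation without spending more than one clean qubit, which is precisely where the \textsc{One-Clean-Qubit Simulation Procedure} of Subsection~\ref{Subsection: One-Clean-Qubit Simulation Procedure} and the $\DQC{1}$-insensitivity of $\NQP$ and $\SBQP$ (Theorem~\ref{Theorem: NQP = NQ[1]P and SBQP = SBQ[1]P}) enter; the remaining error-bound bookkeeping and the final hierarchy collapse are routine given the cited results.
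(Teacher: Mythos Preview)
Your proposal is correct and takes essentially the same route as the paper, which packages the argument as Lemma~\ref{Lemma: consequences of weak simulation of DQC1} (a weak simulation gives $\NQoneP \subseteq \NP$ in the multiplicative case or $\NQoneP \subseteq \SBP$ in the additive case, then lift via Theorem~\ref{Theorem: NQP = NQ[1]P and SBQP = SBQ[1]P}) followed by Lemmas~\ref{Lemma: PH is in AM if NQP is in NP}--\ref{Lemma: PH is in AM if NQP is in SBP} (either inclusion collapses $\PH$ to $\AM$ via the Toda--Ogiwara containment $\PH \subseteq \widehat{\BP} \cdot \coCequalP$ together with $\SBP \subseteq \AM$ and idempotence of $\widehat{\BP}$). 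One small correction on the last step: the Boppana--H{\aa}stad--Zachos implication you cite is not the operative mechanism, since from $\coCequalP \subseteq \NP$ or $\coCequalP \subseteq \AM$ one does not directly obtain $\coNP \subseteq \AM$; your phrase ``hardness of $\coCequalP$ for the polynomial hierarchy'' already names the correct ingredient, so simply replace the BHZ reference with the $\widehat{\BP}$-operator argument.
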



\subsection{Organization of the Paper}

Section~\ref{Section: overview of error reduction results}
overviews the proofs of the error reduction results,
the main results of this paper.
Section~\ref{Section: preliminaries}
summarizes the notions and properties that are used throughout this paper.
Section~\ref{Section: building blocks}
provides detailed descriptions and rigorous analyses of three key technical tools used in this paper.
Section~\ref{Section: error reduction for one-sided-error cases}
then rigorously proves Theorems~\ref{Theorem: error reduction for perfect-completeness case}~and~\ref{Theorem: co-RQlogP is in co-RQ1P}
and Corollary~\ref{Corollary: error reduction for perfect-soundness cases},
the error reduction results in the cases of one-sided bounded error.
Section~\ref{Section: error reduction for two-sided-error cases}
treats the two-sided-bounded-error cases,
by first providing one more technical tool,
and then rigorously proving Theorems~\ref{Theorem: error reduction for two-sided-error case}~and~\ref{Theorem: BQlogP with constant gap is in BQ1P}.
The completeness results
(Theorem~\ref{Theorem: completeness of the trace estimation problem})
and the results related to the hardness of weak classical simulations of the $\DQC{1}$ model
(Theorems~\ref{Theorem: NQP = NQ[1]P and SBQP = SBQ[1]P}~and~\ref{Theorem: hardness of classically simulating DQC1, informal})
are proved in
Sections~\ref{Section: completeness results}~and~\ref{Section: hardness of weak simulation}, respectively.
Finally, Section~\ref{Section: conclusion}
concludes the paper with some open problems.


\section{Overview of Error Reduction Results}
\label{Section: overview of error reduction results}

This section presents an overview of the proofs for the error reduction results.
First, Subsection~\ref{Subsection: overview of error reduction for one-sided-error case}
provides high-level descriptions of the proofs of
Theorems~\ref{Theorem: error reduction for perfect-completeness case}~and~\ref{Theorem: co-RQlogP is in co-RQ1P},
the theorems for the one-sided error case of perfect completeness.
Compared with the two-sided-error case,
the proof construction is relatively simpler in the perfect-completeness case,
but already involves most of key technical ingredients of this paper.
Subsection~\ref{Subsection: overview of error reduction for two-sided-error case}
then explains the further idea that proves
Theorems~\ref{Theorem: error reduction for two-sided-error case}~and~\ref{Theorem: BQlogP with constant gap is in BQ1P},
the theorems for the two-sided-error case.


\subsection{Proof Ideas of Theorems~\ref{Theorem: error reduction for perfect-completeness case}~and~\ref{Theorem: co-RQlogP is in co-RQ1P}}
\label{Subsection: overview of error reduction for one-sided-error case}

Let ${A = (A_\yes, A_\no)}$ be any problem in ${\QlogP(1, s)}$,
where the function~$s$ defining the soundness is bounded away from one by an inverse-polynomial,
and consider a polynomial-time uniformly generated family of quantum circuits
that puts $A$ in ${\QlogP(1, s)}$.
Let $Q_x$ denote the quantum circuit from this family when the input is $x$,
where $Q_x$ acts over ${\op{w}(\abs{x})}$~qubits for some polynomially bounded function~$w$,
and is supposed to be applied to the initial state~%
${
  (\ketbra{0})^{\tensor \op{k}(\abs{x})}
  \tensor
  \bigl( \frac{I}{2} \bigr)^{\tensor (\op{w}(\abs{x}) - \op{k}(\abs{x}))}
}$
that contains exactly ${\op{k}(\abs{x})}$ clean qubits,
for some logarithmically bounded function~$k$.

Theorems~\ref{Theorem: error reduction for perfect-completeness case}~and~\ref{Theorem: co-RQlogP is in co-RQ1P}
are proved by constructing circuits with desirable properties from the original circuit~$Q_x$.
The construction is essentially the same for both of the two theorems
and consists of three stages of transformations of circuits:
The first stage reduces the number of necessary clean qubits to just one,
while keeping perfect completeness
and soundness still bounded away from one by an inverse-polynomial.
The second stage then makes the acceptance probability of no-instances arbitrarily close to $1/2$,
still using just one clean qubit and keeping perfect completeness.
Here, it not only makes the soundness
(i.e., the upper bound of the acceptance probability of no-instances)
close to $1/2$,
but also makes the acceptance probability of no-instances \emph{at least} $1/2$.
Finally, in the case of Theorem~\ref{Theorem: co-RQlogP is in co-RQ1P},
the third stage further reduces soundness error to be polynomially small
with the use of just one clean qubit,
while preserving the perfect completeness property.
If one more clean qubit is available,
the third stage can make soundness exponentially small
with keeping perfect completeness,
which leads to Theorem~\ref{Theorem: error reduction for perfect-completeness case}.
The analyses of the third stage effectively use the fact
that the acceptance probability of no-instances is close to $1/2$
after the transformation of the second stage.
The rest of this subsection sketches the ideas that realize each of these three stages.

\paragraph{\textsc{One-Clean-Qubit Simulation Procedure}}

The first stage uses a procedure called the \textsc{One-Clean-Qubit Simulation Procedure}.
Given the quantum circuit~$Q_x$ with a specification of the number~${\op{k}(\abs{x})}$ of clean qubits,
this procedure results in a quantum circuit~$R_x$ such that
the input state to $R_x$ is supposed to contain just one clean qubit,
and when applied to the one-clean-qubit initial state,
the acceptance probability of $R_x$ is still one if $x$ is in $A_\yes$,
while it is at most ${1 - \op{\delta}(\abs{x})}$ if $x$ is in $A_\no$,
where $\delta$ is an inverse-polynomial function determined by ${\delta = 2^{- k} (1 - s)}$.
It is stressed that the \textsc{One-Clean-Qubit Simulation Procedure} preserves perfect completeness,
which is in stark contrast to the straightforward method of one-clean-qubit simulation.

Consider the ${\op{k}(\abs{x})}$-clean-qubit computation performed with $Q_x$.
Let $\regQ$ denote the quantum register
consisting of the ${\op{k}(\abs{x})}$~initially clean qubits,
and let $\regR$ denote the quantum register
consisting of the remaining ${\op{w}(\abs{x}) - \op{k}(\abs{x})}$~qubits
that are initially in the totally mixed state.
Further let $\regQ^{(1)}$ denote the single-qubit quantum register consisting of the first qubit of $\regQ$,
which corresponds to the output qubit of $Q_x$.
In the one-clean-qubit simulation of $Q_x$ by $R_x$,
the ${\op{k}(\abs{x})}$~qubits in $\regQ$ are supposed to be in the totally mixed state initially
and $R_x$ tries to simulate $Q_x$ only when $\regQ$ initially contains the clean all-zero state.
To do so, $R_x$ uses another quantum register~$\regO$ consisting of just a single qubit,
and this qubit in $\regO$ is the only qubit that is supposed to be initially clean.

For ease of explanations,
assume for a while that all the qubits in $\regQ$ are also initially clean even in the case of $R_x$.
The key idea in the construction of $R_x$
is the following simulation of $Q_x$ that makes use of the phase-flip transformation:
The simulation first applies the Hadamard transformation~$H$ to the qubit in $\regO$
and then flips the phase if and only if the content of $\regO$ is $1$
\emph{and} the simulation of $Q_x$ results in rejection
(which is realized by performing $Q_x$ to ${(\regQ, \regR)}$
and then applying the controlled-$Z$ transformation to ${(\regO, \regQ^{(1)})}$,
where the content~$1$ in $\regQ^{(1)}$
is assumed to correspond to the rejection in the original computation by $Q_x$).
The simulation further performs the inverse of $Q_x$ to ${(\regQ, \regR)}$
and again applies the Hadamard transformation~$H$ to $\regO$.
At the end of the simulation,
the qubit in $\regO$ is measured in the computational basis,
with the measurement result $0$ corresponding to acceptance.
The point is that this phase-flip-based construction provides a quite ``faithful'' simulation of $Q_x$,
meaning that the rejection probability of the simulation
is polynomially related to the rejection probability of the original computation of $Q_x$
(and in particular, the simulation never rejects when the original computation never rejects,
i.e., the simulation preserves the perfect completeness property).

As mentioned before,
all the qubits in $\regQ$ are supposed to be in the totally mixed state initially
in the one-clean-qubit simulation of $Q_x$ by $R_x$,
and $R_x$ tries to simulate $Q_x$ only when $\regQ$ initially contains the clean all-zero state.
To achieve this,
each of the applications of the Hadamard transformation
is replaced by an application of the controlled-Hadamard transformation
so that the Hadamard transformation is applied only when all the qubits in $\regQ$ are in state~$\ket{0}$.
By considering the one-clean-qubit computations with the circuit family induced by $R_x$,
the perfect completeness property is preserved
and soundness is still bounded away from one by an inverse-polynomial
(although the rejection probability becomes smaller for no-instances
by a multiplicative factor of $2^{- k}$,
where notice that $2^{- k}$ is an inverse-polynomial as $k$ is a logarithmically bounded function).
The construction of $R_x$ is summarized in Figure~\ref{Figure: simplified description of One-Clean-Qubit Simulation Procedure}.
A precise description and a detailed analysis of the \textsc{One-Clean-Qubit Simulation Procedure}
will be presented in Subsection~\ref{Subsection: One-Clean-Qubit Simulation Procedure}.

\begin{figure}[t!]
  \begin{algorithm*}{\textsc{One-Clean-Qubit Simulation Procedure} --- Simplified Description}
    \begin{step}
    \item
      Prepare a single-qubit register~$\regO$,
      a ${\op{k}(\abs{x})}$-qubit register~$\regQ$,
      and a ${(\op{w}(\abs{x}) - \op{k}(\abs{x}))}$-qubit register~$\regR$,
      where the qubit in $\regO$ is supposed to be initially in state~$\ket{0}$,
      while all the qubits in $\regQ$ and $\regR$ are supposed to be initially in the totally mixed state~${I/2}$.\\
      Apply $H$ to $\regO$
      if all the qubits in $\regQ$ are in state $\ket{0}$.
    \item
      Apply $Q_x$ to ${(\regQ, \regR)}$.
    \item
      Apply the phase-flip (i.e., multiply the phase by $-1$)
      if the content of ${(\regO, \regQ^{(1)})}$ is $11$,
      where $\regQ^{(1)}$ denotes the single-qubit register consisting of the first qubit of $\regQ$.
    \item
      Apply $\conjugate{Q_x}$ to ${(\regQ, \regR)}$.
    \item
      Apply $H$ to $\regO$
      if all the qubits in $\regQ$ are in state $\ket{0}$.
      Measure the qubit in $\regO$ in the computational basis.
      Accept if this results in $\ket{0}$, and reject otherwise.
    \end{step}
  \end{algorithm*}
  \caption{
    The \textsc{One-Clean-Qubit Simulation Procedure}
    induced by a quantum circuit~$Q_x$
    with the specification of the number~${\op{k}(\abs{x})}$ of clean qubits
    used in the computation of $Q_x$ to be simulated
    (a slightly simplified description).
  }
  \label{Figure: simplified description of One-Clean-Qubit Simulation Procedure}
\end{figure}

\paragraph{\textsc{Randomness Amplification Procedure}}

The second stage uses the procedure called the \textsc{Randomness Amplification Procedure}.
Given the circuit~$R_x$ constructed in the first stage,
this procedure results in a quantum circuit~$R'_x$ such that
the input state to $R'_x$ is still supposed to contain just one clean qubit,
and when applied to the one-clean-qubit initial state,
the acceptance probability of $R'_x$ is still one if $x$ is in $A_\yes$,
while it is in the interval~${\bigl[\frac{1}{2}, \frac{1}{2} + \op{\varepsilon}(\abs{x}) \bigr]}$
if $x$ is in $A_\no$
for some sufficiently small function~$\varepsilon$.

Consider the one-clean-qubit computation performed with $R_x$.
Let $\regO$ denote the single-qubit quantum register consisting of the initially clean qubit,
which is also the output qubit of $R_x$.
Let $\regR$ denote the quantum register
consisting of all the remaining qubits that are initially in the totally mixed state
(by the construction of $R_x$, $\regR$ consists of ${\op{w}(\abs{x})}$~qubits).

Suppose that the qubit in $\regO$ is measured in the computational basis
after $R_x$ is applied to the one-clean-qubit initial state~%
${\ketbra{0} \tensor \bigl( \frac{I}{2} \bigr)^{\tensor \op{w}(\abs{x})}}$
in ${(\regO, \regR)}$.
Obviously from the property of $R_x$,
the measurement results in $0$ with probability exactly equal to
the acceptance probability~$p_\acc$ of the one-clean-qubit computation with $R_x$.
Now suppose that $R_x$ is applied to a slightly different initial state~%
${\ketbra{1} \tensor \bigl( \frac{I}{2} \bigr)^{\tensor \op{w}(\abs{x})}}$
in ${(\regO, \regR)}$,
where $\regO$ initially contains $\ket{1}$ instead of $\ket{0}$
and all the qubits in $\regR$ are again initially in the totally mixed state.
The key property here to be proved is that,
in this case,
the measurement over the qubit in $\regO$ in the computational basis
results in $1$ again with probability exactly~$p_\acc$,
the acceptance probability of the one-clean-qubit computation with $R_x$.
This implies that,
after the application of $R_x$ to ${(\regO, \regR)}$
with all the qubits in $\regR$ being in the totally mixed state,
the content of $\regO$ remains the same with probability exactly~$p_\acc$,
and is flipped with probability exactly~${1 - p_\acc}$,
the rejection probability of the original one-clean-qubit computation with $R_x$,
regardless of the initial content of $\regO$.

The above observation leads to the following construction of the circuit~$R'_x$.
The construction of $R'_x$ is basically a sequential repetition of the original circuit~$R_x$.
The number~$N$ of repetitions is polynomially many with respect to the input length~$\abs{x}$,
and the point is that the register~$\regO$ is reused for each repetition,
and only the qubits in $\regR$ are refreshed after each repetition
(by preparing $N$~registers~${\regR_1, \dotsc, \regR_N}$,
each of which consists of ${\op{w}(\abs{x})}$~qubits, the same number of qubits as $\regR$,
all of which are initially in the totally mixed state).
After each repetition the qubit in $\regO$ is measured in the computational basis
(in the actual construction, this step is exactly simulated without any measurement
--- a single-qubit totally mixed state is prepared as a fresh ancilla qubit for each repetition
so that the content of $\regO$ is copied to this ancilla qubit using the $\CNOT$ transformation,
and this ancilla qubit is never touched after this $\CNOT$ application).
Now, no matter which measurement result is obtained at the $j$th repetition
for every~$j$ in ${\{1, \dotsc, N\}}$,
the register~$\regO$ is reused as it is,
and the circuit~$R_x$ is simply applied to ${(\regO, \regR_{j + 1})}$ at the ${(j + 1)}$st repetition.
After the $N$~repetitions,
the qubit in $\regO$ is measured in the computational basis,
which is the output of $R'_x$
(the output~$0$ corresponds to acceptance).

The point is that
at each repetition, the content of $\regO$ is flipped with probability
exactly equal to the rejection probability of the original one-clean-qubit computation of $R_x$.
Taking into account that $\regO$ is initially in state~$\ket{0}$,
the computation of $R'_x$ results in acceptance
if and only if the content of $\regO$ is flipped even number of times during the $N$~repetitions.
An analysis on Bernoulli trials then shows that,
when the acceptance probability of the original one-clean-qubit computation of $R_x$
was in the interval~${\bigl[ \frac{1}{2}, 1 \bigr)}$,
the acceptance probability of the one-clean-qubit computation of $R'_x$ is at least $1/2$
and converges linearly to $1/2$ with respect to the repetition number.
On the other hand, when the acceptance probability of the original $R_x$ was one,
the content of $\regO$ is never flipped during the computation of $R'_x$,
and thus the acceptance probability of $R'_x$ remains one.
Figure~\ref{Figure: simplified description of Randomness Amplification Procedure}
summarizes the construction of $R'_x$.
A precise description and a detailed analysis of the \textsc{Randomness Amplification Procedure}
will be presented in Subsection~\ref{Subsection: Randomness Amplification Procedure}.

\begin{figure}[t!]
  \begin{algorithm*}{\textsc{Randomness Amplification Procedure} --- Simplified Description}
    \begin{step}
    \item
      Prepare a single-qubit register~$\regO$,
      where the qubit in $\regO$ is supposed to be initially in state~$\ket{0}$.
      Prepare a ${\op{w}(\abs{x})}$-qubit register~$\regR_j$
      for each $j$ in ${\{1, \dotsc, N\}}$,
      where all the qubits in $\regR_j$ are supposed to be initially in the totally mixed state~${I/2}$.
    \item
      For ${j = 1}$ to $N$, perform the following:
      \begin{step}
      \item
        Apply $R_x$ to ${(\regO, \regR_j)}$.
      \item
        Measure the qubit in $\regO$ in the computational basis.
      \end{step}
    \item
      Accept if the qubit in $\regO$ is in state~$\ket{0}$, and reject otherwise.
    \end{step}
  \end{algorithm*}
  \caption{The \textsc{Randomness Amplification Procedure} (a slightly simplified description).}
  \label{Figure: simplified description of Randomness Amplification Procedure}
\end{figure}

\paragraph{\textsc{Stability Checking Procedures}}

In the case of Theorem~\ref{Theorem: co-RQlogP is in co-RQ1P},
the third stage uses the procedure called the \textsc{One-Clean-Qubit Stability Checking Procedure}.
Given the circuit~$R'_x$ constructed in the second stage,
this procedure results in a quantum circuit~$R''_x$ such that
the input state to $R''_x$ is still supposed to contain just one clean qubit,
and when applied to the one-clean-qubit initial state,
the acceptance probability of $R''_x$ is still one if $x$ is in $A_\yes$,
while it is ${1/\op{p}(\abs{x})}$ if $x$ is in $A_\no$
for a polynomially bounded function~$p$ predetermined arbitrarily.

Consider the one-clean-qubit computation performed with $R'_x$.
Let $\regQ$ denote the single-qubit quantum register consisting of the initially clean qubit,
which is also the output qubit of $R'_x$.
Let $\regR$ denote the quantum register
consisting of all the remaining qubits that are initially in the totally mixed state,
and let ${\op{w}'(\abs{x})}$ denote the number of qubits in $\regR$.

Again the key observation is that,
after the application of $R'_x$ to ${(\regQ, \regR)}$
with all the qubits in $\regR$ being in the totally mixed state
(followed by the measurement over the qubit in $\regQ$ in the computational basis),
the content of $\regQ$ is flipped with probability
exactly equal to the rejection probability of the original one-clean-qubit computation with $R'_x$,
regardless of the initial content of $\regQ$.

This leads to the following construction of the circuit~$R''_x$.
The construction of $R''_x$ is again basically a sequential repetition of the original circuit~$R'_x$,
but this time the qubit in $\regQ$ is also supposed to be initially in the totally mixed state.
The circuit~$R'_x$ is repeatedly applied ${2N}$~times,
where $N$ is a power of two and is polynomially many with respect to the input length~$\abs{x}$,
and again the register~$\regQ$ is reused for each repetition,
and only the qubits in $\regR$ are refreshed after each repetition
(by preparing ${2N}$~registers~${\regR_1, \dotsc, \regR_{2N}}$,
each of which consists of ${\op{w}'(\abs{x})}$~qubits,
all of which are initially in the totally mixed state).
The key idea for the construction of $R''_x$ is
to use a counter that counts the number of attempts
such that the measurement over the qubit in $\regQ$ results in $\ket{1}$ after the application of $R'_x$
(again each measurement is simulated by a $\CNOT$ application
using an ancilla qubit of a totally mixed state).
Notice that the content of $\regQ$ is never flipped regardless of the initial content of $\regQ$,
if the original acceptance probability is one in the one-clean-qubit computation with $R'_x$.
Hence, in this case the counter value either stationarily remains its initial value
or is increased exactly by ${2N}$, the number of repetitions.
On the other hand,
if the original acceptance probability is close to $1/2$ in the one-clean-qubit computation with $R'_x$,
the content of $\regQ$ is flipped with probability close to $1/2$ after each application of $R'_x$
regardless of the initial content of $\regQ$.
This means that, after each application of $R'_x$,
the measurement over the qubit in $\regQ$ results in $\ket{1}$
with probability close to $1/2$ regardless of the initial content of $\regQ$,
and thus, the increment of the counter value must be distributed around ${\frac{1}{2} \cdot 2N = N}$
with very high probability.
Now, if the counter value is taken modulo ${2N}$
and if the unique initially clean qubit is prepared for the most significant bit of the counter
(which picks the initial counter value from the set~${\{0, \dotsc, N - 1\}}$ uniformly at random),
the computational-basis measurement over this most significant qubit of the counter
always results in $\ket{0}$ if $x$ is in $A_\yes$,
while it results in $\ket{1}$ with very high probability if $x$ is in $A_\no$
(which can be made at least ${1 - \frac{1}{\op{p}(\abs{x})}}$
for an arbitrarily chosen polynomially bounded function~$p$,
by taking an appropriately large number~${2N}$ of the repetition).
Figure~\ref{Figure: simplified description of One-Clean-Qubit Stability Checking Procedure}
summarizes the construction of $R''_x$.

\begin{figure}[t!]
  \begin{algorithm*}{\textsc{One-Clean-Qubit Stability Checking Procedure} --- Simplified Description}
    \begin{step}
    \item
      Given a positive integer~$N$ that is a power of two,
      prepare a counter~$C$ whose value is taken modulo ${2N}$.
      Choose an integer~$r$ from ${\{0, \dotsc, N - 1\}}$ uniformly at random,
      and initialize a counter~$C$ to $r$
      (which sets the most significant bit of $C$ to $0$).
      Prepare a single-qubit register~$\regQ$
      and a ${\op{w}'(\abs{x})}$-qubit register~$\regR_j$ for each $j$ in ${\{1, \dotsc, 2N\}}$,
      where all the qubits in $\regQ$ and $\regR_j$
      are supposed to be initially in the totally mixed state~${I/2}$.
    \item
      For ${j = 1}$ to ${2N}$, perform the following:
      \begin{step}
      \item
        Apply $R'_x$ to ${(\regQ, \regR_j)}$.
      \item
        Measure the qubit in $\regQ$ in the computational basis.
        If this results in $\ket{1}$, increase the counter~$C$ by one.
      \end{step}
    \item
      Reject if ${N \leq C \leq 2N - 1}$, and accept otherwise
      (i.e., accept iff the most significant bit of $C$ is $0$).
    \end{step}
  \end{algorithm*}
  \caption{The \textsc{One-Clean-Qubit Stability Checking Procedure} (a simplified description).}
  \label{Figure: simplified description of One-Clean-Qubit Stability Checking Procedure}
\end{figure}

One drawback of the construction of $R''_x$ above
via the \textsc{One-Clean-Qubit Stability Checking Procedure}
is that, in the case of no-instances,
there inevitably exist some ``bad'' initial counter values in ${\{0, \dotsc, N - 1\}}$
with which $R''_x$ is forced to accept with unallowably high probability.
For instance, if the initial counter value is~$0$,
$R''_x$ is forced to accept when the increment of the counter is less than $N$,
which happens with probability at least a constant.
This is the essential reason why the current approach achieves
only a polynomially small soundness
in the one-clean-qubit case in Theorem~\ref{Theorem: co-RQlogP is in co-RQ1P},
as the number of possible initial counter values can be at most polynomially many
(otherwise the number of repetitions must be super-polynomially many) 
and even just one ``bad'' initial value is problematic
to go beyond polynomially small soundness.
In contrast, if not just one but two clean qubits are available,
one can remove the possibility of ``bad'' initial counter values,
which results in the \textsc{Two-Clean-Qubit Stability Checking Procedure}.
This time, the circuit~$R'_x$ is repeatedly applied ${8N}$~times,
and the counter value is taken modulo ${8N}$.
The two initially clean qubits are prepared for the most and second-most significant bits of the counter,
which results in picking the initial counter value
from the set~${\{0, \dotsc, 2N - 1\}}$ uniformly at random.
Now the point is that the counter value can be increased by $N$ before the repetition
so that the actual initial value of the counter is in ${\{N, \dotsc, 3N - 1\}}$,
which discards the tail sets~${\{0, \dotsc, N - 1\}}$ and ${\{3N, \dotsc, 4N - 1\}}$
of the set~${\{0, \dotsc, 4N - 1\}}$.
As the size of the tail sets discarded is sufficiently large,
there no longer exists any ``bad'' initial counter value,
which leads to the exponentially small soundness
in the two-clean-qubit case in Theorem~\ref{Theorem: error reduction for perfect-completeness case}.

Precise descriptions and detailed analyses of the \textsc{One-Clean-Qubit Stability Checking Procedure}
and \textsc{Two-Clean-Qubit Stability Checking Procedure}
will be presented in Subsection~\ref{Subsection: Stability Checking Procedures}.


\subsection{Proof Ideas of Theorems~\ref{Theorem: error reduction for two-sided-error case}~and~\ref{Theorem: BQlogP with constant gap is in BQ1P}}
\label{Subsection: overview of error reduction for two-sided-error case}

The results for the two-sided error case need more complicated arguments
and is proved in eight stages of transformations in total,
which are split into three parts.

The first part consists of three stages,
and proves that any problem solvable with constant completeness and soundness
using logarithmically many clean qubits
is also solvable with constant completeness and soundness
using just one clean qubit.
At the first stage of the first part,
by a standard repetition with a threshold-value decision,
one first reduces errors to be sufficiently small constants,
say, completeness~$15/16$ and soundness~$1/16$.
For this, if the starting computation has a constant gap between completeness and soundness,
one requires only a constant number of repetitions,
and thus, the resulting computation still requires only logarithmically many clean qubits.
The second stage of the first part then reduces the number of clean qubits to just one.
The procedure in this stage is exactly the \textsc{One-Clean-Qubit Simulation Procedure}
developed in the first stage of the one-sided error case.
The gap between completeness and soundness becomes only an inverse-polynomial by this transformation,
but the point is that the gap is still sufficiently larger
(i.e. a constant times larger)
than the completeness error.
Now the third stage of the first part transforms the computation resulting from the second stage
to the computation that still uses only one clean qubit and has constant completeness and soundness.
The procedure in this stage is exactly the \textsc{Randomness Amplification Procedure},
developed in the second stage of the one-sided error case,
and it makes use of the difference of the rates of convergence to $1/2$
of the acceptance probability between the yes- and no-instance cases.
The precise statement corresponding to the first part is found
as Lemma~\ref{Lemma: BQlogP with constant gap is in BQ1P with constant gap}
in Subsection~\ref{Subsection: Proofs of Theorems for Error-Reduction for Two-Sided-Error Cases}.

The second part consists of two stages,
and proves that
any problem solvable with constant completeness and soundness
using just one clean qubit
is also solvable with almost-perfect (i.e., exponentially close to one) completeness
and soundness below $1/2$
using just logarithmically many clean qubits.
At the first stage of the second part,
one reduces both of the completeness and soundness errors to be polynomially small,
again by a standard repetition with a threshold-value decision.
Note that the computation resulting from the first part requires only one clean qubit.
Thus, even when repeated logarithmically many times,
the resulting computation uses just logarithmically many clean qubits,
and achieves polynomially small errors.
The second stage of the second part then
repeatedly attempts the computation resulting from the first stage polynomially many times,
and accepts if at least one of the attempts results in acceptance
(i.e., takes OR of the attempts).
A straightforward repetition requires polynomially many clean qubits,
and to avoid this problem,
after each repetition one tries to recover the clean qubits for reuse
by applying the inverse of the computation
(the failure of this recovery step is counted as an ``acceptance'' when taking the OR).
This results in a computation that still requires only logarithmically many clean qubits,
and has completeness exponentially close to one,
while soundness is still below $1/2$.
The precise statement corresponding to the second part is found
as Lemma~\ref{Lemma: BQ1P with constant gap is in BQlogP with almost perfect completeness}
in Subsection~\ref{Subsection: Proofs of Theorems for Error-Reduction for Two-Sided-Error Cases}.

Now the third part is essentially the same as the three-stage transformation of the one-sided error case.
From the computation resulting from the second part,
the first stage of the third part decreases the number of clean qubits to just one,
via the \textsc{One-Clean-Qubit Simulation Procedure}.
The completeness of the resulting computation is still exponentially close to one
and its soundness is bounded away from one by an inverse-polynomial.
The second stage of the third part
then applies the \textsc{Randomness Amplification Procedure}
to make the acceptance probability of no-instances arbitrarily close to $1/2$,
while keeping completeness exponentially close to one.
Finally, the third stage of the third part proves that
one can further decrease soundness error to be polynomially small using just one qubit
via the \textsc{One-Clean-Qubit Stability Checking Procedure},
or to be exponentially small using just two qubits
via the \textsc{Two-Clean-Qubit Stability Checking Procedure},
while keeping completeness exponentially close to one.

By considering the complement problem,
the above argument can also prove
the case of exponentially small soundness error in Theorem~\ref{Theorem: BQlogP with constant gap is in BQ1P}.


\section{Preliminaries}
\label{Section: preliminaries}

Throughout this paper,
let $\Natural$ and $\Nonnegative$ denote
the sets of positive and nonnegative integers, respectively,
and let ${\Sigma = \Binary}$ denote the binary alphabet set. 
A function~$\function{f}{\Nonnegative}{\Natural}$ is \emph{polynomially bounded}
if there exists a polynomial-time deterministic Turing machine
that outputs ${1^{f(n)}}$ on input~$1^n$.
A function~$\function{f}{\Nonnegative}{\Natural}$ is \emph{logarithmically bounded}
if $f$ is polynomial-time computable and ${f(n)}$ is in ${O(\log n)}$.
A function~$\function{f}{\Nonnegative}{[0,1]}$ is \emph{negligible}
if, for every polynomially bounded function~$\function{g}{\Nonnegative}{\Natural}$,
it holds that ${f(n) < 1/g(n)}$ for all but finitely many values of~$n$.


\subsection{Useful Properties on Random Variables}
\label{Subsection: useful properties of random variables}

This subsection presents two lemmas
on properties of the random variable that follows the binomial distribution,
which are used in this paper.

The first lemma is a special case of the Hoeffding inequality.

\begin{lemma}
  For any $n$ in $\Natural$ and $p$ in ${[0,1]}$,
  let $X$ be a random variable over ${\{0, \dotsc, n\}}$
  that follows the binomial distribution~${B(n,p)}$.
  Then, for any $\delta$ in ${(0, 1)}$,
  \[
    \biggprob{ \frac{X}{n} \geq p + \delta }
    <
    e^{- 2 \delta^2 n}
    \quad
    \text{and}
    \quad
    \biggprob{ \frac{X}{n} \leq p - \delta }
    <
    e^{- 2 \delta^2 n}.
  \]
  \label{Lemma: Hoeffding bounds}
\end{lemma}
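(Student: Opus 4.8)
The plan is to prove this by the standard Chernoff--Hoeffding exponential moment method. Write $X = X_1 + \dotsb + X_n$, where $X_1, \dotsc, X_n$ are independent Bernoulli random variables, each equal to $1$ with probability $p$ and to $0$ with probability $1 - p$. First I would dispose of the degenerate cases $p \in \{0, 1\}$: if $p = 0$ then $X = 0$ almost surely, so ${\Pr[X/n \geq p + \delta] = \Pr[X/n \leq p - \delta] = 0 < e^{-2\delta^2 n}}$, and symmetrically if $p = 1$. So from now on assume $0 < p < 1$.

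For the upper tail, fix any real $t > 0$. Since $x \mapsto e^{tx}$ is increasing, Markov's inequality applied to the nonnegative random variable $e^{tX}$ gives
\[
  \Pr\!\left[ \frac{X}{n} \geq p + \delta \right]
  = \Pr\!\left[ e^{tX} \geq e^{t n (p + \delta)} \right]
  \leq e^{- t n (p + \delta)} \, \Expect\!\left[ e^{tX} \right]
  = e^{- t n (p + \delta)} \prod_{i=1}^{n} \Expect\!\left[ e^{t X_i} \right],
\]
using independence in the last equality. The key ingredient is a bound on the moment generating function of the centered variable $Y_i = X_i - p$, which has mean $0$ and takes values in the length-one interval $[-p, 1-p]$, namely $\Expect[e^{t Y_i}] \leq e^{t^2/8}$ for all real $t$ (Hoeffding's lemma; it follows from convexity of $y \mapsto e^{ty}$, which reduces the claim to $\psi(t) := -pt + \ln(1 - p + p e^t) \leq t^2/8$, and then from $\psi(0) = \psi'(0) = 0$ together with $\psi''(t) = u(1-u) \leq 1/4$ where $u = p e^t / (1 - p + p e^t)$, via the integral form of the Taylor remainder). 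Hence $\Expect[e^{t X_i}] = e^{tp}\,\Expect[e^{t Y_i}] \leq e^{tp + t^2/8}$, and substituting back,
\[
  \Pr\!\left[ \frac{X}{n} \geq p + \delta \right]
  \leq e^{- t n (p + \delta)} \cdot e^{n (tp + t^2/8)}
  = e^{- t n \delta + n t^2 / 8}.
\]
Choosing $t = 4\delta > 0$ makes the exponent $- 4 n \delta^2 + 2 n \delta^2 = - 2 n \delta^2$, yielding ${\Pr[X/n \geq p + \delta] \leq e^{-2\delta^2 n}}$. To obtain the strict inequality asserted in the statement, observe that for $0 < p < 1$ the function $u(r)$ above is strictly increasing, so $\psi''$ is not identically $1/4$ on any interval $(0, t)$; hence Hoeffding's lemma is strict for $t = 4\delta \neq 0$, and this strictness propagates through the chain above.

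The lower tail follows by symmetry with no extra work: $n - X$ follows $B(n, 1 - p)$, and $\{ X/n \leq p - \delta \} = \{ (n - X)/n \geq (1 - p) + \delta \}$, so applying the upper-tail bound already established (with $p$ replaced by $1 - p$ and the same $\delta \in (0,1)$) gives ${\Pr[X/n \leq p - \delta] < e^{-2\delta^2 n}}$. I expect the only genuine obstacle to be the proof of Hoeffding's lemma, i.e. the subgaussian estimate $\Expect[e^{tY}] \leq e^{t^2/8}$ for a mean-zero random variable supported in an interval of length one; the rest is a routine Markov-bound-and-optimize computation. If one is content to cite that lemma, the proof becomes entirely mechanical.
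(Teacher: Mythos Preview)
Your proof is correct and is the standard textbook derivation of the Hoeffding bound via the exponential-moment method and Hoeffding's lemma. The paper, however, does not prove this lemma at all: it simply introduces it with the sentence ``The first lemma is a special case of the Hoeffding inequality'' and states it without argument, treating it as a well-known concentration result to be used as a black box elsewhere in the paper. So there is nothing to compare---you have supplied a full proof where the paper supplies none. Your handling of the strict inequality (via strictness in Hoeffding's lemma when $t \neq 0$ and $0 < p < 1$, plus the separate treatment of $p \in \{0,1\}$) is a nice touch that the paper does not even mention; most references state the bound with $\leq$ rather than $<$.
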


The second lemma is on the probability that a random variable takes an even number
when it follows the binomial distribution.

\begin{lemma}
  For any $n$ in $\Natural$ and $p$ in ${[0,1]}$,
  let $X$ be a random variable over ${\{0, \dotsc, n\}}$
  that follows the binomial distribution~${B(n,p)}$.
  Then,
  \[
    \prob{ \textnormal{$X$ is even} }
    =
    \frac{1}{2} + \frac{1}{2} (1 - 2p)^n.
  \]
  \label{Lemma: coin-flipping lemma}
\end{lemma}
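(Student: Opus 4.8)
The plan is to compute the even-index sum of the binomial probabilities directly via the standard parity-extraction trick applied to the probability generating function. Define
\[
  f(t) = \sum_{k=0}^{n} \binom{n}{k} p^k (1-p)^{n-k} t^k = \bigl( pt + (1-p) \bigr)^n,
\]
which is exactly $\expect{t^X}$. Then $\prob{\text{$X$ is even}} = \sum_{k \text{ even}} \binom{n}{k} p^k (1-p)^{n-k}$, and since the map $t \mapsto \tfrac{1}{2}\bigl(t^k + (-t)^k\bigr)$ annihilates odd powers and fixes even ones, this sum equals $\tfrac{1}{2}\bigl( f(1) + f(-1) \bigr)$. Plugging in gives $f(1) = 1$ and $f(-1) = (p \cdot (-1) + (1-p))^n = (1 - 2p)^n$, so $\prob{\text{$X$ is even}} = \tfrac{1}{2} + \tfrac{1}{2}(1-2p)^n$, as claimed.

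As a cross-check (and an alternative that avoids generating functions entirely), I would also be prepared to give the one-line inductive argument. Writing $X$ as a sum of $n$ independent Bernoulli$(p)$ trials and letting $q_n$ denote the probability that the partial sum of the first $n$ trials is even, conditioning on the last trial yields the recurrence $q_n = (1-p) q_{n-1} + p(1 - q_{n-1}) = p + (1-2p) q_{n-1}$ with $q_0 = 1$. Subtracting $\tfrac{1}{2}$ turns this into $q_n - \tfrac{1}{2} = (1-2p)\bigl( q_{n-1} - \tfrac{1}{2} \bigr)$, whence $q_n - \tfrac{1}{2} = (1-2p)^n\bigl( q_0 - \tfrac{1}{2} \bigr) = \tfrac{1}{2}(1-2p)^n$.

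There is no real obstacle here: the statement is an elementary identity, and the only thing to be slightly careful about is the bookkeeping in the parity-extraction step (making sure the $(-t)^k$ substitution is applied to $f$ as a polynomial in $t$ before evaluating at $t=1$). I would present the generating-function computation as the main proof, since it is the shortest and makes the appearance of the factor $(1-2p)^n$ transparent, and relegate the inductive derivation to at most a remark.
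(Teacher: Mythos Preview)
Your proof is correct and is essentially the same argument as the paper's. The paper introduces independent $\pm 1$-valued variables $Y_j$ (taking $-1$ with probability $p$) and computes $\expect{\prod_j Y_j} = (1-2p)^n$; this is exactly your $f(-1) = \expect{(-1)^X}$, since $\prod_j Y_j = (-1)^X$, and both proofs then use $\prob{\text{$X$ is even}} = \tfrac{1}{2}\bigl(1 + \expect{(-1)^X}\bigr)$.
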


\begin{proof}
  For each $j$ in ${\{1, \dotsc, n\}}$,
  let $Y_j$ be an independent random variable over ${\{-1, 1\}}$
  that takes $-1$ with probability~$p$,
  and let $Z$ be a random variable over $\Binary$ defined by
  \[
    Z = \frac{1}{2} + \frac{1}{2} \prod_{j=1}^n Y_j.
  \]
  Notice that $Z$ is $1$ if and only if there are even number of indices~$j$
  such that $Y_j$ is $-1$.
  Hence,
  \[
    \prob{ \textnormal{$X$ is even} }
    =
    \expect{Z}
    =
    \frac{1}{2} + \frac{1}{2} \Biggexpect{\prod_{j=1}^n Y_j}
    =
    \frac{1}{2} + \frac{1}{2} \prod_{j=1}^n \expect{Y_j}
    =
    \frac{1}{2} + \frac{1}{2} (1 - 2p)^n,
  \]
  where the third equality uses the fact that each $Y_j$ is an independent random variable,
  and the claim follows.
\end{proof}


\subsection{Quantum Fundamentals}
\label{Subsection: quantum fundamentals}

We assume the reader is familiar with the quantum formalism,
including pure and mixed quantum states, density operators,
and measurements, as well as the quantum circuit model
(see Refs.~\cite{NieChu00Book, KitSheVya02Book, Wil13Book}, for instance).
This subsection summarizes some notations and properties that are used in this paper.

For every positive integer~$n$,
let ${\Complex(\Sigma^n)}$ denote the $2^n$-dimensional complex Hilbert space
whose standard basis vectors are indexed by the elements in $\Sigma^n$.
In this paper, all Hilbert spaces are complex and have dimension a power of two.
For a Hilbert space~$\spaceH$,
let $I_\spaceH$ denote the identity operator over $\spaceH$,
and let ${\Unitary(\spaceH)}$ denote the set of unitary operators over $\spaceH$.
As usual, let
\[
  X
  =
  \begin{pmatrix}
    0 & 1\\
    1 & 0
  \end{pmatrix},
  \quad
  Z
  =
  \begin{pmatrix}
    1 & 0\\
    0 & -1
  \end{pmatrix}
\]
denote the Pauli operators,
and let
\[
  H
  =
  \frac{1}{\sqrt{2}}
  \begin{pmatrix}
    1 & 1\\
    1 & -1
  \end{pmatrix},
  \quad
  S
  =
  \begin{pmatrix}
    1 & 0\\
    0 & i
  \end{pmatrix},
  \quad
  T
  =
  \begin{pmatrix}
    1 & 0\\
    0 & e^{i \frac{\pi}{4}}
  \end{pmatrix}
\]
denote the Hadamard, $i$-phase-shift, and $T$~operators, respectively
(the $T$~operator corresponds to the ${\pi/8}$-gate).
Notice that ${S = T^2}$ and ${Z = S^2 = T^4}$.
For convenience, we may identify a unitary operator with the unitary transformation it induces.
In particular, for a unitary operator~$U$,
the induced unitary transformation is also denoted by $U$.

For a Hilbert space~$\spaceH$ and a unitary operator~$U$ in ${\Unitary(\spaceH)}$,
let ${\controlled(U)}$ denote
the \emph{controlled-$U$} operator in ${\Unitary \bigl( \Complex(\Sigma) \tensor \spaceH \bigr)}$
defined by
\[
  \controlled(U) = \ketbra{0} \tensor I_\spaceH + \ketbra{1} \tensor U.
\]
For any positive integer~${n \geq 2}$,
the \emph{$n$-controlled-$U$} operator in ${\Unitary \bigl( \Complex(\Sigma^n) \tensor \spaceH \bigr)}$
is recursively defined by
\[
  \controlled^n(U)
  =
  \controlled \bigl( \controlled^{n - 1}(U) \bigr)
  =
  \ketbra{0} \tensor I_{\Complex(\Sigma^{n - 1}) \tensor \spaceH} + \ketbra{1} \tensor \controlled^{n - 1}(U),
\]
where ${\controlled^1(U)}$ may be interpreted as the controlled-$U$ operator~${\controlled(U)}$.
In the case where $U$ is a unitary transformation over a single qubit,
notice that the last qubit is the target qubit
for each of the unitary transformations~${\controlled(U)}$ and ${\controlled^n(U)}$
in the notation above.
For notational convenience,
for each positive integer~$n$
and for each integer~$j$ in ${\{1, \dotsc, n + 1\}}$,
let ${\controlled^n_j(U)}$ denote the case of the $n$-controlled-$U$ operator
in which the corresponding transformation uses the $j$th qubit as the target qubit.
The operator~${\controlled^n(U)}$ corresponds to ${\controlled^n_{n + 1}(U)}$ in this notation.
The operator~${\controlled^1_1(U)}$ may be simply denoted by ${\controlled_1(U)}$.


\subsection{Quantum Circuits}
\label{Subsection: quantum circuits}

A quantum circuit is specified by a series of quantum gates
with designation of qubits to which each quantum gate is applied.
It is assumed that any quantum circuit is composed of gates
in some reasonable, universal, finite set of quantum gates.
A \emph{description} of a quantum circuit is a string in $\Sigma^\ast$
that encodes the specification of the quantum circuit.
The encoding must be a ``reasonable'' one,
i.e., the number of gates in a circuit encoded
is not more than the length of the description of that circuit,
and each gate of the circuit is specifiable
by a deterministic procedure in time polynomial
with respect to the length of the description.

A family~${\{ Q_x \}_{x \in \Sigma^\ast}}$ of quantum circuits is
\emph{polynomial-time uniformly generated}
if there exists a polynomial-time deterministic Turing machine
that, on every input~$x$ in $\Sigma^\ast$,
outputs a description of $Q_x$.
For convenience,
we may identify a circuit~$Q_x$ with the unitary operator it induces.

\begin{remark}
  Notice that the input~$x$ is ``hard-coded'' in the generated circuit~$Q_x$
  in the above definition of polynomial-time uniformly generated family of quantum circuits,
  i.e., each circuit~$Q_x$ depends on the input~$x$ itself.
  The choice of this ``hard-coded'' definition is just for ease of explanations,
  and all the results in this paper do remain valid even with a more standard definition
  of the polynomial-time uniformity
  where each circuit generated depends only on the input length~$\abs{x}$,
  and the input~$x$ is given to the circuit as a read-only input.
  In fact, all the results still remain valid
  even when using the \emph{logarithmic-space} uniformly generated family of quantum circuits
  to define the complexity classes in Subsection~\ref{Subsection: DQCk computation and complexity classes}
  (by suitably replacing polynomial-time computable functions by logarithmic-space computable functions
  in some statements
  and by changing the definitions of polynomially and logarithmically bounded functions
  by using logarithmic-space deterministic Turing machines and logarithmic-space computable functions).
  See Subsection~\ref{Subsection: remarks on uniformity}
  for further discussions on the uniformity of quantum circuits.
\end{remark}

This paper assumes a gate set
that includes the Hadamard, $T$, and $\CNOT$ gates.
Note that this assumption is satisfied by many standard gate sets,
and is very reasonable and not restrictive.
In particular, the gate set proposed in Ref.~\cite{BoyMorPulRoyVat00IPL}
exactly consists of these three gates.
Some useful transformations are in order
that are exactly implementable with such a gate set using or not using ancilla qubits:

\paragraph{Transformations corresponding to Clifford-group operators}

First note that, with a gate set satisfying the assumption above,
any transformation corresponding to a Clifford-group operator is exactly implementable
without using any ancilla qubits.

In particular,
the phase-flip transformation~$Z$ is nothing but $T^4$, and is easily realized.
Thus, the $\NOT$ transformation (the operator~$X$) is also easily realizable,
for ${X = HZH}$.

As ${Z = HXH}$,
the controlled-$Z$ transformation~${\controlled(Z)}$ is also realizable
by using the decomposition
\[
  \controlled(Z) = (I \tensor H) \controlled(X) (I \tensor H),
\]
where ${\controlled(X)}$ is nothing but the $\CNOT$ transformation.

\paragraph{Generalized Toffoli transformations}

Using some ancilla qubits,
any generalized Toffoli transformation
(i.e., the $n$-controlled-NOT transformation~$\controlled^n(X)$ for any positive integer~$n$),
is also exactly implementable with the gate set discussed,
according to the constructions in Ref.~\cite{Bar+95PRA}.
In the construction in Lemma~7.2 of Ref.~\cite{Bar+95PRA},
the number of necessary ancilla qubits
grows linearly with respect to the number of control qubits,
and the construction in Corollary~7.4 of Ref.~\cite{Bar+95PRA}
uses only two ancilla qubits when ${n \geq 5}$.
One very helpful property is
that no initializations are required for all these ancilla qubits
(and thus, all of them can actually be re-used when applying other generalized Toffoli transformations).
In particular, even totally mixed states may be used for these ancilla qubits,
and hence,
with the above-mentioned gate set,
generalized Toffoli transformations may be assumed available freely
when constructing quantum circuits
that are used for $k$-clean-qubit computations
defined formally in Subsection~\ref{Subsection: DQCk computation and complexity classes}.
See Lemma~7.2 and Corollary~7.4 of Ref.~\cite{Bar+95PRA} for details.

\paragraph{Controlled-Hadamard transformations}

Recall that the Hadamard operator~$H$ is decomposed as
\[
  H = \conjugate{S} H \conjugate{T} X T H S.
\]
This implies that
the $n$-controlled-Hadamard transformation~${\controlled^n(H)}$ for any positive integer~$n$
is decomposed as
\[
  \controlled^n(H)
  =
  \bigl( I^{\tensor n} \tensor \conjugate{S} \bigr)
  \bigl( I^{\tensor n} \tensor H \bigr)
  \bigl( I^{\tensor n} \tensor \conjugate{T} \bigr)
  \controlled^n(X)
  \bigl( I^{\tensor n} \tensor T \bigr)
  \bigl( I^{\tensor n} \tensor H \bigr)
  \bigl( I^{\tensor n} \tensor S \bigr).
\]
As ${S = T^2}$, ${\conjugate{T} = T^7}$, and ${\conjugate{S} = T^6}$,
the ${\controlled^n(H)}$~transformation for each $n$
is exactly implementable by using
two Hadamard gates,
sixteen $T$ gates,
and one generalized Toffoli transformation
(i.e., one $n$-controlled-NOT transformation~${\controlled^n(X)}$).
Clearly, the only necessary ancilla qubits
are those used for realizing the ${\controlled^n(X)}$~transformation
in this implementation.
Hence, provided that generalized Toffoli transformations may be assumed available freely,
for all $n$, $n$-controlled-Hadamard transformations may also be assumed available freely.

\paragraph{Increment transformations}

For any positive integer~$n$,
let $\INCR{2^n}$ denote the increment transformation over~$\Integers_{2^n}$,
which is the unitary transformation acting over $n$~qubits defined by
\[
  \INCR{2^n} \colon \ket{j} \mapsto \bigket{(j+1) \bmod 2^n},
  \quad
  \forall j \in \Integers_{2^n}
\]
Note that 
\[
  \INCR{2} = X,
\]
and for each positive integer~${n \geq 2}$,
\[
  \INCR{2^n} = \bigl( I \tensor \INCR{2^{n - 1}} \bigr) \controlled^{n - 1}_1(X),
\]
where recall that ${\controlled^{n - 1}_1(X)}$
corresponds to the ${(n - 1)}$-controlled-NOT transformation
with the first qubit being the target.
Hence, for each positive integer~${n \geq 2}$,
\[
  \INCR{2^n}
  =
  \bigl( I^{\tensor (n - 1)} \tensor X \bigr)
  \bigl( I^{\tensor (n - 2)} \tensor \controlled_1(X) \bigr)
  \dotsm
  \bigl( I \tensor \controlled^{n - 2}_1(X) \bigr)
  \controlled^{n - 1}_1(X),
\]
and thus,
each $\INCR{2^n}$~transformation
is exactly implementable by combining $\NOT$, $\CNOT$, and generalized Toffoli transformations only.
Note that
ancilla qubits are required only for realizing the generalized Toffoli transformations
in this implementation.
Accordingly, provided that generalized Toffoli transformations may be assumed available freely,
the increment transformation over $\Integers_{2^n}$ may also be assumed available freely,
for each positive integer~$n$,
and so may the controlled-$\INCR{2^n}$~transformation~${\controlled \bigl( \INCR{2^n} \bigr)}$,
by its construction.

\paragraph{Threshold-check transformations}

For any integers~$t$~and~$z$,
let $\function{\threshold{t}}{\Integers}{\Binary}$ denote the function defined by
\[
  \threshold{t}(z)
  =
  \begin{cases}
    0 & \text{if ${z < t}$},\\
    1 & \text{if ${z \geq t}$}.
  \end{cases}
\]
For any positive integer~$n$ and for an integer~$t$ in $\Integers_{2^n}$,
The threshold-check transformation~$\THRESHOLD{t}{2^n}$ over~$\Integers_{2^n}$
is the unitary transformation acting over ${n + 1}$~qubits defined by
\[
  \THRESHOLD{t}{2^n}
  \colon
  \ket{b} \tensor \ket{j} \mapsto \bigket{b \oplus \threshold{t}(j)} \tensor \ket{j},
  \quad
  \forall b \in \Binary,
  \
  \forall j \in \Integers_{2^n}.
\]
The threshold-check transformation~$\THRESHOLD{t}{2^n}$ over~$\Integers_{2^n}$
is easily implemented as follows,
combining the increment transformations~$\INCR{2^{n + 1}}$~and~$\INCR{2^n}$
over $\Integers_{2^{n + 1}}$~and~$\Integers_{2^n}$, respectively:
\[
  \THRESHOLD{t}{2^n}
  =
  \bigl( I \tensor \INCR{2^n} \bigr)^t \bigl( \INCR{2^{n + 1}} \bigr)^{2^n - t}.
\]
Accordingly, provided that generalized Toffoli transformations may be assumed available freely,
the threshold-check transformation~$\THRESHOLD{t}{2^n}$ over $\Integers_{2^n}$
may also be assumed available freely,
for each positive integer~$n$.


\subsection{$\boldsymbol{k}$-Clean-Qubit Computation and Complexity Classes}
\label{Subsection: DQCk computation and complexity classes}

For any positive integer~$k$,
a \emph{quantum computation with $k$~clean~qubits},
or simply a \emph{$k$-clean-qubit computation},
is a computation performed by a unitary quantum circuit~$Q$ acting over $w$~qubits,
where $w$ is a positive integer satisfying ${w \geq k}$.
It is assumed that one of the qubits to which the circuit~$Q$ is applied
is designated as the output qubit.
The $k$-clean-qubit computation specified by the circuit~$Q$ proceeds as follows.
For simplicity,
we identify the quantum circuit~$Q$ with the unitary operator it induces.
The initial state of the computation is the $w$-qubit state
\[
  \rho_\init^{(w,k)} = (\ketbra{0})^{\tensor k} \tensor \biggl( \frac{I}{2} \biggr)^{\tensor (w - k)}.
\]
The circuit~$Q$ is applied to this initial state,
which generates the $w$-qubit state
\[
  \rho_\final
  =
  Q \rho_\init^{(w,k)} \conjugate{Q}.
\]
Now the designated output qubit is measured in the computational basis,
where the outcome~$\ket{0}$ is interpreted as ``accept'',
and the outcome~$\ket{1}$ is interpreted as ``reject''.
Such a computation may also be called a \emph{quantum computation of $\DQC{k}$ type},
or simply a \emph{$\DQC{k}$ computation},
in analogy to the $\DQC{1}$ computation.

The complexity classes~${\QfP{f}(c,s)}$~and~${\QOfP{f}(c,s)}$ are defined as follows.

\begin{definition}
  Given a function~$\function{f}{\Nonnegative}{\Natural}$
  and functions~$\function{c,s}{\Nonnegative}{[0,1]}$ satisfying ${c > s}$,
  a promise problem~${A = (A_\yes, A_\no)}$ is in ${\QfP{f}(c,s)}$
  iff there exists a polynomial-time uniformly generated family~${\{Q_x\}_{x \in \Sigma^\ast}}$ of quantum circuits
  such that,
  for every $x$ in $\Sigma^\ast$,
  $Q_x$ acts over ${w(\abs{x})}$~qubits
  for some polynomially bounded function~$\function{w}{\Nonnegative}{\Natural}$
  satisfying ${w \geq f}$
  and has the following properties:
  \begin{description}
  \item[(Completeness)]
    if $x$ is in $A_\yes$,
    the ${f(\abs{x})}$-clean-qubit computation induced by $Q_x$
    results in acceptance with probability at least ${c(\abs{x})}$,
  \item[(Soundness)]
    if $x$ is in $A_\no$,
    the ${f(\abs{x})}$-clean-qubit computation induced by $Q_x$
    results in acceptance with probability at most ${s(\abs{x})}$.
  \end{description}
  \label{Definition: QfP(c,s)}
\end{definition}

\begin{definition}
  Given a function~$\function{f}{\Nonnegative}{\Natural}$
  and functions~$\function{c,s}{\Nonnegative}{[0,1]}$ satisfying ${c > s}$,
  a promise problem~${A = (A_\yes, A_\no)}$ is in ${\QOfP{f}(c,s)}$
  iff $A$ is in ${\QfP{g}(c,s)}$ for some function~$\function{g}{\Nonnegative}{\Natural}$ satisfying ${g \in O(f)}$.
  \label{Definition: QOfP(c,s)}
\end{definition}

Using these definitions,
the complexity classes~$\BQfP{f}$~and~$\BQOfP{f}$ are defined as follows.

\begin{definition}
  Given a function~$\function{f}{\Nonnegative}{\Natural}$,
  a promise problem~${A = (A_\yes, A_\no)}$ is in $\BQfP{f}$
  iff $A$ is in ${\QfP{f} \bigl(\frac{2}{3}, \, \frac{1}{3} \bigr)}$.
  \label{Definition: BQfP}
\end{definition}

\begin{definition}
  Given a function~$\function{f}{\Nonnegative}{\Natural}$,
  a promise problem~${A = (A_\yes, A_\no)}$ is in $\BQOfP{f}$
  iff $A$ is in ${\QOfP{f} \bigl(\frac{2}{3}, \, \frac{1}{3} \bigr)}$.
  \label{Definition: BQOfP}
\end{definition}

Some remarks are in order on the definitions of $\BQfP{f}$ and $\BQOfP{f}$.

By Theorem~\ref{Theorem: error reduction for two-sided-error case} to be proved,
for any logarithmically bounded function~$f$,
the above definition of $\BQOfP{f}$ is equivalent to
a more conservative definition where the class consists of problems that are in
${\QOfP{f}(1 - \varepsilon, \varepsilon)}$
for some negligible function~$\function{\varepsilon}{\Nonnegative}{[0,1]}$.
Similarly, for any logarithmically bounded function~${f \geq 2}$,
the class~$\BQfP{f}$ above is equivalent to the class of problems that are in
${\QfP{f}(1 - \varepsilon, \varepsilon)}$
for some negligible function~$\function{\varepsilon}{\Nonnegative}{[0,1]}$.
Another candidate of definitions is to use $\BQfP{f}'$~and~$\BQOfP{f}'$,
which are the unions of ${\QfP{f}(c, s)}$ and ${\QOfP{f}(c, s)}$, respectively,
over all functions~$\function{c,s}{\Nonnegative}{[0,1]}$ satisfying ${c - s \geq \frac{1}{p}}$
for some polynomially bounded function~$\function{p}{\Nonnegative}{\Natural}$.
As the computation error can be reduced by Theorem~\ref{Theorem: error reduction for two-sided-error case}
only when there is a constant gap between completeness and soundness in the starting computation,
it remains open whether these $\BQfP{f}'$~and~$\BQOfP{f}'$
are equal to $\BQfP{f}$~and~$\BQOfP{f}$ defined above.

Finally, the complexity classes~$\NQoneP$~and~$\SBQoneP$ are defined as follows,
which are the one-clean-qubit-computation analogues of $\NQP$ and $\SBQP$, respectively.

\begin{definition}
  A promise problem~${A = (A_\yes, A_\no)}$ is in $\NQoneP$
  iff $A$ is in ${\QoneP(c, 0)}$
  for some positive-valued function~$\function{c}{\Nonnegative}{(0,1]}$.
  \label{Definition: NQoneP}
\end{definition}

\begin{definition}
  Given a function~$\function{f}{\Nonnegative}{\Natural}$,
  a promise problem~${A = (A_\yes, A_\no)}$ is in $\SBQoneP$
  iff $A$ is in ${\QoneP(2^{-p}, 2^{-p-1})}$
  for some polynomially bounded function~$\function{p}{\Nonnegative}{\Natural}$.
  \label{Definition: SBQoneP}
\end{definition}

\begin{remark}
  As will be proved in Subsection~\ref{Subsection: proofs of hardness of weak simulatability},
  the class~$\SBQoneP$ has the following amplification property similar to $\SBQP$ and $\SBP$:
  a problem~${A = (A_\yes, A_\no)}$ is in $\SBQoneP$
  iff
  for any polynomially bounded function~$\function{p}{\Nonnegative}{\Natural}$,
  there exists a polynomially bounded function~$\function{q}{\Nonnegative}{\Natural}$
  such that $A$ is in ${\QoneP \bigl( 2^{-q} \cdot (1 - 2^{-p}), \, 2^{-q} \cdot 2^{-p} \bigr)}$.
\end{remark}


\subsection{Remarks on Uniformity of Quantum Circuits}
\label{Subsection: remarks on uniformity}

The definition of polynomial-time uniformly generated family of quantum circuits
in Subsection~\ref{Subsection: quantum circuits}
allows the input~$x$ to be ``hard-coded'' in the generated circuit~$Q_x$,
i.e., each circuit~$Q_x$ is allowed to depend on the input~$x$ itself.
As mentioned before, a more standard definition of the polynomial-time uniformity
is the one in which each circuit generated depends only on the input length~$\abs{x}$,
and the input~$x$ is given to the circuit as a read-only input.
It is stressed that
all the results in this paper remain valid
even when this more standard ``non-hard-coded'' definition is adopted
for the polynomial-time uniformity of quantum circuits.
It is further stressed that
all the results in this paper still remain valid
even when the complexity classes in Subsection~\ref{Subsection: DQCk computation and complexity classes}
are defined with
the \emph{logarithmic-space} uniformly generated family of quantum circuits,
by suitably replacing polynomial-time computable functions by logarithmic-space computable functions
in the statements of
Theorems~\ref{Theorem: error reduction for perfect-completeness case}~and~\ref{Theorem: co-RQlogP is in co-RQ1P}
and Corollary~\ref{Corollary: error reduction for perfect-soundness cases},
and by changing the definitions of polynomially and logarithmically bounded functions
by using logarithmic-space deterministic Turing machines and logarithmic-space computable functions.
In this case,
the completeness results of Theorem~\ref{Theorem: completeness of the trace estimation problem}
hold even under logarithmic-space many-one reduction.

One thing to be mentioned is, however, that
some complexity classes
defined in Subsection~\ref{Subsection: DQCk computation and complexity classes},
such as ${\QlogP(c,s)}$, ${\QoneP(c,s)}$, $\BQlogP$, and $\BQoneP$,
may depend on the definition of the uniformity of quantum circuits.
Indeed, with the ``hard-coded'' definition of polynomial-time uniformity,
the class~$\classP$ is trivially contained in each class
defined in Subsection~\ref{Subsection: DQCk computation and complexity classes},
whereas it becomes unclear whether $\classP$ is included in the bounded-error classes
such as $\BQlogP$ and $\BQoneP$
when the classes are defined
with the standard ``non-hard-coded'' definition of polynomial-time uniformity
and with logarithmic-space uniformity.
There even exists an oracle relative to which
$\classP$ is not included in $\BQoneP$,
when $\BQoneP$ is defined with logarithmic-space uniformity~\cite{She06arXiv, She09PhD}.
In this regard,
when using the standard ``non-hard-coded'' definition of polynomial-time uniformity,
polynomial-time many-one reduction
may be too powerful for the models discussed
in that the reduction itself already has computational power enough to solve any problem in $\classP$,
while it is unclear whether the models for which the completeness results are discussed
have such computational power,
although Theorem~\ref{Theorem: completeness of the trace estimation problem} itself
is mathematically valid even in this case.

In short,
which uniformity of either ``hard-coded'' polynomial-time,
or standard ``non-hard-coded'' polynomial-time,
or logarithmic-space
is used to define these complexity classes
\emph{does not} affect the properties proved in the present paper,
but may affect how large these complexity classes themselves are.


\section{Building Blocks}
\label{Section: building blocks}

First, some notations are summarized that are used throughout this section.

Consider any quantum circuit~$Q$ acting over $w$~qubits.
For any positive integer~${k \leq w}$,
let ${p_\acc(Q, k)}$ denote
the acceptance probability of the $k$-clean-qubit computation induced by $Q$.
More precisely,
for any positive integer~${k \leq w}$,
let $\rho_\init^{(w,k)}$ be the $w$-qubit initial state defined by
\[
  \rho_\init^{(w,k)} = (\ketbra{0})^{\tensor k} \tensor \biggl( \frac{I}{2} \biggr)^{\tensor (w - k)},
\]
and let $\Pi_\acc$ be the projection operator defined by
\[
  \Pi_\acc = \ketbra{0} \tensor I^{\tensor (w - 1)}.
\]
Now, for any positive integer~${k \leq w}$,
the acceptance probability~${p_\acc(Q, k)}$ is defined by
\[
  p_\acc(Q, k) = \tr \Pi_\acc Q \rho_\init^{(w,k)} \conjugate{Q}.
\]


\subsection{Simulating $\boldsymbol{k}$-Clean-Qubit Computation by One-Clean-Qubit Computation}
\label{Subsection: One-Clean-Qubit Simulation Procedure}

This subsection presents a procedure,
called the \textsc{One-Clean-Qubit Simulation Procedure},
that constructs another quantum circuit~$R$ from $Q$
such that the $k$-clean-qubit computation induced by $Q$
can be simulated by the one-clean-qubit computation induced by $R$.
More formally,
given a description of a quantum circuit~$Q$
(which also specifies the number~$w$ of qubits $Q$ acts over)
and an integer~$k$ that specifies the number of clean qubits
used in the original computation to be simulated,
the \textsc{One-Clean-Qubit Simulation Procedure}
corresponds to a quantum circuit~$R$
such that the original $k$-clean-qubit computation induced by $Q$
is simulated by the one-clean-qubit computation induced by $R$.

The circuit~$R$ acts over ${(w + 1)}$~qubits,
which are divided into three quantum registers:
a single-qubit register~$\regO$,
a $k$-qubit register~$\regQ$,
and a ${(w - k)}$-qubit register~$\regR$.
It is supposed that the qubit in $\regO$ is initially in state~$\ket{0}$,
and all the qubits in $\regQ$ and $\regR$ are initially in the totally mixed state~${I/2}$.
Let $\regQ^{(1)}$ denote the single-qubit register consisting of the first qubit of $\regQ$,
which corresponds to the output qubit of $Q$.
First, the circuit~$R$ applies the Hadamard transformation~$H$ to $\regO$
if all the qubits in $\regQ$ are in state $\ket{0}$.
Next, $R$ applies $Q$ to ${(\regQ, \regR)}$,
where the qubits in $\regQ$
correspond to the $k$~clean qubits of the original $k$-clean-qubit computation induced by $Q$.
Now $R$ flips the phase
if and only if ${(\regO, \regQ^{(1)})}$ contains $11$
(i.e., $R$ applies the controlled-$Z$ transformation~${\controlled(Z)}$ to ${(\regO, \regQ^{(1)})}$).
$R$ then applies $\conjugate{Q}$ to ${(\regQ, \regR)}$,
and further applies $H$ to $\regO$ if all the qubits in $\regQ$ are in state $\ket{0}$.
Finally, the qubit in $\regO$ is measured in the computational basis,
and $R$ outputs the measurement result.

For the actual construction,
the first conditional application of $H$ to $\regO$ when all the qubits in $\regQ$ are in state~$\ket{0}$
is essentially realized by
first applying the $k$-controlled-$H$ transformation~${\controlled^k(H)}$ to ${(\regO, \regQ)}$
using the qubit in $\regO$ as the target,
and then applying the $\NOT$ transformation~$X$ to each of the qubits in $\regQ$.
Similarly, the second conditional application of $H$ to $\regO$
is essentially realized by
first applying $X$ to each of the qubits in $\regQ$,
and then applying ${\controlled^k(H)}$ to ${(\regO, \regQ)}$ using the qubit in $\regO$ as the target.
Figures~\ref{Figure: One-Clean-Qubit Simulation Procedure}~and~\ref{Figure: quantum circuit for One-Clean-Qubit Simulation Procedure}
summarize the construction of $R$.

\begin{figure}[t!]
  \begin{algorithm*}{\textsc{One-Clean-Qubit Simulation Procedure}}
    \begin{step}
    \item
      Prepare a single-qubit register~$\regO$
      a $k$-qubit register~$\regQ$,
      and a ${(w - k)}$-qubit register~$\regR$,
      where the qubit in $\regO$ is supposed to be initially in state~$\ket{0}$,
      while all the qubits in $\regQ$ and $\regR$ are supposed to be initially in the totally mixed state~${I/2}$.\\
      Apply the $k$-controlled-$H$ transformation~${\controlled^k(H)}$ to ${(\regO, \regQ)}$
      using the qubit in $\regO$ as the target,
      and then apply the $\NOT$ transformation~$X$ to each of the qubits in $\regQ$
      (applying these transformations has essentially the same effect as
      conditionally applying $H$ to $\regO$ if all the qubits in $\regQ$ are in state $\ket{0}$).
    \item
      Apply $Q$ to ${(\regQ, \regR)}$.
    \item
      Apply the phase-flip (i.e., multiply the phase by $-1$)
      if the content of ${(\regO, \regQ^{(1)})}$ is $11$,
      where $\regQ^{(1)}$ denotes the single-qubit register consisting of the first qubit of $\regQ$.
    \item
      Apply $\conjugate{Q}$ to ${(\regQ, \regR)}$.
    \item
      Apply $X$ to each of the qubits in $\regQ$,
      and then apply ${\controlled^k(H)}$ to ${(\regO, \regQ)}$
      using the qubit in $\regO$ as the target
      (applying these transformations has essentially the same effect as
      conditionally applying $H$ to $\regO$ if all the qubits in $\regQ$ are in state $\ket{0}$).
      Measure the qubit in $\regO$ in the computational basis.
      Accept if this results in $\ket{0}$, and reject otherwise.
    \end{step}
  \end{algorithm*}
  \caption{
    The \textsc{One-Clean-Qubit Simulation Procedure}
    induced by a quantum circuit~$Q$
    with the specification of the number~$k$ of clean qubits
    used in the computation of $Q$ to be simulated.
  }
  \label{Figure: One-Clean-Qubit Simulation Procedure}
\end{figure}

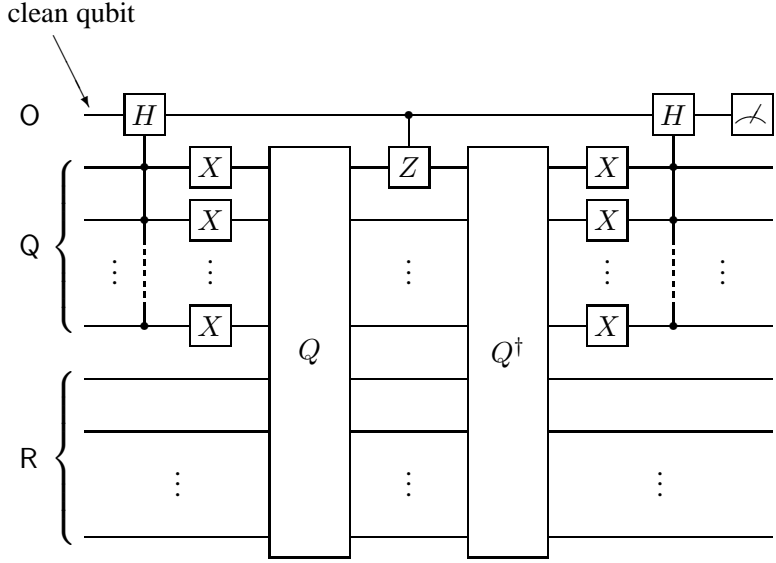
\begin{figure}[t!]
  \vspace{\baselineskip}
  \begin{center}
    \setlength{\unitlength}{1pt}
    \begin{picture}(290, 215)(  0,   0)
      \put(  0, 200){\makebox( 50,  15){clean qubit}}
      \put( 18, 200){\vector(1,-2){13.5}}
      \put(  5, 160){\makebox( 15,  20){$\regO$ \hfill}}
      \put(  5,  80){\makebox( 15,  80){$\regQ$ \hfill}}
      \put( 15,  85){\makebox( 15,  70){${\left\{ \rule{0mm}{35 \unitlength} \right.}$}}
      \put(  5,   0){\makebox( 15,  80){$\regR$ \hfill}}
      \put( 15,   5){\makebox( 15,  70){${\left\{ \rule{0mm}{35 \unitlength} \right.}$}}
      \put( 30, 170){\line(1,0){15}}
      \put( 30, 150){\line(1,0){22.5}}
      \put( 30, 130){\line(1,0){22.5}}
      \put( 30,  93.25){\makebox( 22.5,  40){$\vdots$}}
      \put( 30,  90){\line(1,0){22.5}}
      \put( 30,  70){\line(1,0){70}}
      \put( 30,  50){\line(1,0){70}}
      \put( 30,  13.25){\makebox( 70,  40){$\vdots$}}
      \put( 30,  10){\line(1,0){70}}
      \put( 45, 162.5){\framebox( 15,  15){$H$}}
      \put( 52.5, 121){\line(0,1){41.5}}
      \multiput( 52.5, 101)(0,4){5}{\line(0,1){2}}
      \put( 52.5,  90){\line(0,1){9}}
      \put( 52.5, 150){\circle*{3.2}}
      \put( 52.5, 130){\circle*{3.2}}
      \put( 52.5,  90){\circle*{3.2}}
      \put( 52.5, 150){\line(1,0){17.5}}
      \put( 52.5, 130){\line(1,0){17.5}}
      \put( 52.5,  90){\line(1,0){17.5}}
      \put( 60, 170){\line(1,0){92.5}}
      \put( 70, 142.5){\framebox( 15,  15){$X$}}
      \put( 70, 122.5){\framebox( 15,  15){$X$}}
      \put( 70, 100.75){\makebox( 15,  25){$\vdots$}}
      \put( 70,  82.5){\framebox( 15,  15){$X$}}
      \put( 85, 150){\line(1,0){15}}
      \put( 85, 130){\line(1,0){15}}
      \put( 85,  90){\line(1,0){15}}
      \put(100,   2.5){\framebox( 30,  155){$Q$}}
      \put(130, 150){\line(1,0){15}}
      \put(130, 130){\line(1,0){45}}
      \put(130,  93.25){\makebox( 45,  40){$\vdots$}}
      \put(130,  90){\line(1,0){45}}
      \put(130,  70){\line(1,0){45}}
      \put(130,  50){\line(1,0){45}}
      \put(130,  13.25){\makebox( 45,  40){$\vdots$}}
      \put(130,  10){\line(1,0){45}}
      \put(145, 142.5){\framebox( 15,  15){$Z$}}
      \put(152.5, 170){\circle*{3.2}}
      \put(152.5, 157.5){\line(0,1){12.5}}
      \put(152.5, 170){\line(1,0){92.5}}
      \put(160, 150){\line(1,0){15}}
      \put(175,   2.5){\framebox( 30,  155){$\conjugate{Q}$}}
      \put(205, 150){\line(1,0){15}}
      \put(205, 130){\line(1,0){15}}
      \put(205,  90){\line(1,0){15}}
      \put(205,  70){\line(1,0){85}}
      \put(205,  50){\line(1,0){85}}
      \put(205,  13.25){\makebox( 85,  40){$\vdots$}}
      \put(205,  10){\line(1,0){85}}
      \put(220, 142.5){\framebox( 15,  15){$X$}}
      \put(220, 122.5){\framebox( 15,  15){$X$}}
      \put(220, 100.75){\makebox( 15,  25){$\vdots$}}
      \put(220,  82.5){\framebox( 15,  15){$X$}}
      \put(235, 150){\line(1,0){17.5}}
      \put(235, 130){\line(1,0){17.5}}
      \put(235,  90){\line(1,0){17.5}}
      \put(245, 162.5){\framebox( 15,  15){$H$}}
      \put(252.5, 121){\line(0,1){41.5}}
      \multiput(252.5, 101)(0,4){5}{\line(0,1){2}}
      \put(252.5,  90){\line(0,1){9}}
      \put(252.5, 150){\circle*{3.2}}
      \put(252.5, 130){\circle*{3.2}}
      \put(252.5,  90){\circle*{3.2}}
      \put(252.5, 150){\line(1,0){37.5}}
      \put(252.5, 130){\line(1,0){37.5}}
      \put(252.5,  93.25){\makebox( 37.5,  40){$\vdots$}}
      \put(252.5,  90){\line(1,0){37.5}}
      \put(260, 170){\line(1,0){15}}
      \put(275, 162.5){\framebox( 15,  15){\metermark}}
    \end{picture}
  \end{center}
  \caption{
    The quantum circuit for the \textsc{One-Clean-Qubit Simulation Procedure}
    induced by a quantum circuit~$Q$ with a specification of the quantum register~$\regQ$,
    where the qubit in $\regO$ is supposed to be the only clean qubit at the beginning of the computation,
    and is also the output qubit of the computation.
    The combinations of the controlled-Hadamard transformation and the $\NOT$ transformations~$X$
    essentially correspond to applying the Hadamard transformation
    if all the qubits in $\regQ$ are in state~$\ket{0}$.
  }
  \label{Figure: quantum circuit for One-Clean-Qubit Simulation Procedure}
\end{figure}

\begin{proposition}
  For any quantum circuit~$Q$ and any positive integer~$k$,
  let $R$ be the quantum circuit
  corresponding to the \textsc{One-Clean-Qubit Simulation Procedure} induced by $Q$ and $k$.
  For the acceptance probability~${p_\acc(Q, k)}$
  of the $k$-clean-qubit computation induced by $Q$
  and the acceptance probability~${p_\acc(R, 1)}$
  of the one-clean-qubit computation induced by $R$,
  it holds that
  \[
    1 - 2^{-k} \bigl( 1 - (\op{p_\acc}(Q, k))^2 \bigr)
    \leq
    \op{p_\acc}(R, 1)
    \leq
    1 - 2^{-k} (1 - \op{p_\acc}(Q, k)).
  \]
  \label{Proposition: acceptance probability of One-Clean-Qubit Simulation Procedure}
\end{proposition}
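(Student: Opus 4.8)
The plan is to compute $p_\acc(R,1)$ head-on, by expanding the totally mixed parts of the one-clean-qubit input and pushing computational-basis states through the five steps of $R$. Before doing so I would streamline the bookkeeping: the combination of a $k$-controlled-$H$ onto $\regO$ with the $\NOT$'s on $\regQ$ in Steps~1 and~5 is, up to conjugation by $X^{\tensor k}$ on $\regQ$, the operation $\tilde{S}$ that applies $H$ to $\regO$ exactly when $\regQ$ holds $\ket{0^k}$; since $X^{\tensor k}$ on $\regQ$ commutes with both $\Pi_\acc = \ketbra{0}_{\regO} \tensor I$ and with the input state $\ketbra{0}_{\regO} \tensor \bigl(\tfrac{I}{2}\bigr)^{\tensor k} \tensor \bigl(\tfrac{I}{2}\bigr)^{\tensor(w-k)}$, we may assume without loss of generality that Steps~1 and~5 each perform $\tilde{S}$. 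Writing the input as $2^{-w} \sum_{a,r} \ketbra{\psi_{a,r}}$ with $\ket{\psi_{a,r}} = \ket{0}_{\regO} \ket{a}_{\regQ} \ket{r}_{\regR}$ over $a \in \Binary^k$ and $r \in \Binary^{w-k}$, it then suffices to evaluate $\norm{\Pi_\acc R \ket{\psi_{a,r}}}^2$ for each $(a,r)$ and average.

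The easy branches come first: if $a \neq 0^k$, neither conditional Hadamard fires, so $\regO$ stays in $\ket{0}$, the controlled-$Z$ of Step~3 acts trivially, the $Q$ and $\conjugate{Q}$ of Steps~2 and~4 cancel, and $R\ket{\psi_{a,r}} = \ket{\psi_{a,r}}$; these $(2^k-1)2^{w-k}$ branches each contribute $1$. For $a = 0^k$, Step~1 puts $\regO$ into $\ket{+}$ and leaves $(\regQ,\regR)$ in $\ket{0^k}_{\regQ} \ket{r}_{\regR}$ --- precisely the $k$-clean-qubit input of $Q$. Writing $\ket{\phi_r} = Q\ket{0^k}_{\regQ}\ket{r}_{\regR} = \ket{0}_{\regQ^{(1)}}\ket{A_r} + \ket{1}_{\regQ^{(1)}}\ket{B_r}$ (the first qubit of $\regQ$ being the output qubit), a short computation shows that the state after Step~3 is $\tfrac{1}{\sqrt{2}}\bigl(\ket{0}_{\regO}\ket{\phi_r} + \ket{1}_{\regO} Z_{\regQ^{(1)}}\ket{\phi_r}\bigr)$ and after Step~4 is $\tfrac{1}{\sqrt{2}}\bigl(\ket{0}_{\regO}\ket{0^k}_{\regQ}\ket{r}_{\regR} + \ket{1}_{\regO}\ket{\chi_r}\bigr)$ with $\ket{\chi_r} = \conjugate{Q} Z_{\regQ^{(1)}} Q \ket{0^k}_{\regQ}\ket{r}_{\regR}$. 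Applying the final $\tilde{S}$ and keeping only the $\ket{0}_{\regO}$ part (all that $\Pi_\acc$ sees), I would obtain $\Pi_\acc R \ket{\psi_{0^k,r}} = \tfrac{1}{2}\ket{0}_{\regO}\ket{0^k}_{\regQ}\bigl(\ket{r} + \ket{g_r}\bigr)_{\regR}$ where $\ket{g_r} = \bigl(\bra{0^k}_{\regQ} \tensor I_{\regR}\bigr)\ket{\chi_r}$, hence $\norm{\Pi_\acc R \ket{\psi_{0^k,r}}}^2 = \tfrac{1}{4}\bigl(1 + 2\,\Re\braket{r}{g_r} + \norm{g_r}^2\bigr)$.

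The remaining work is to connect $\braket{r}{g_r}$ and $\norm{g_r}^2$ to $q := p_\acc(Q,k)$. Since $\braket{r}{g_r} = \braket{0^k,r}{\chi_r} = \bra{\phi_r} Z_{\regQ^{(1)}} \ket{\phi_r} = \norm{A_r}^2 - \norm{B_r}^2 = 2\norm{A_r}^2 - 1$, this quantity is real, and because $2^{-(w-k)}\sum_r \norm{A_r}^2 = q$ we get $2^{-(w-k)}\sum_r \braket{r}{g_r} = 2q - 1$. Also $\norm{g_r}^2 \leq \norm{\chi_r}^2 = 1$ and, by Cauchy--Schwarz, $\norm{g_r}^2 \geq \abs{\braket{r}{g_r}}^2 = (2\norm{A_r}^2 - 1)^2$. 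Averaging $\norm{\Pi_\acc R \ket{\psi_{a,r}}}^2$ over all $(a,r)$ then gives
\[
  p_\acc(R,1)
  =
  1 - 2^{-k} + 2^{-k}\cdot 2^{-(w-k)}\sum_r \tfrac{1}{4}\bigl(1 + 2\braket{r}{g_r} + \norm{g_r}^2\bigr).
\]
Substituting $\norm{g_r}^2 \leq 1$ makes the average of the bracket at most $q$, giving the upper bound $1 - 2^{-k}(1-q)$; substituting $\norm{g_r}^2 \geq (2\norm{A_r}^2-1)^2$ together with the power-mean inequality $2^{-(w-k)}\sum_r (2\norm{A_r}^2-1)^2 \geq (2q-1)^2$ collapses the average of the bracket to exactly $q^2$, giving the lower bound $1 - 2^{-k}(1-q^2)$.

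I expect the only delicate point to be the explicit state-tracking through the two conditional Hadamards --- in particular isolating the $\ket{0}_{\regO}$-component after Step~5 and noticing that its cross term is governed by $\bra{\phi_r} Z_{\regQ^{(1)}} \ket{\phi_r}$, which is exactly what ties the whole expression to $p_\acc(Q,k)$; everything else is routine algebra plus the single power-mean estimate. The $X^{\tensor k}$-conjugation remark at the start is a minor but worthwhile simplification, since it lets the analysis be carried out with the clean ``controlled on $\ket{0^k}$'' operation rather than with the literal controlled-$H$-then-$\NOT$ sequence.
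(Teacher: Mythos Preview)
Your proposal is correct and follows essentially the same route as the paper. The paper organizes the $a=0^k$ branch around the single quantity $\bignorm{\Delta_0 \conjugate{Q}\Pi_0 Q\ket{\psi_r}}^2$ (where $\Delta_0=(\ketbra{0})^{\tensor k}\tensor I$), which equals your $\tfrac14\bigl\lVert\ket{r}+\ket{g_r}\bigr\rVert^2$ via the identity $\conjugate{Q}\Pi_0 Q\ket{\psi_r}=\tfrac12(\ket{\chi_r}+\ket{\psi_r})$; its upper bound $\bignorm{\Delta_0\conjugate{Q}\Pi_0 Q\ket{\psi_r}}^2\le\bignorm{\Pi_0 Q\ket{\psi_r}}^2$ and lower bound $\bignorm{\Delta_0\conjugate{Q}\Pi_0 Q\ket{\psi_r}}\ge\bigabs{\bra{\psi_r}\conjugate{Q}\Pi_0 Q\ket{\psi_r}}=\lVert A_r\rVert^2$ (followed by a power-mean step) are exactly your $\lVert g_r\rVert^2\le 1$ and $\lVert g_r\rVert^2\ge\lvert\braket{r}{g_r}\rvert^2$ rewritten in that notation.
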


\begin{proof}
  To analyze the acceptance probability~${p_\acc(R, 1)}$
  of the one-clean-qubit computation induced by $R$,
  suppose that the content of $\regQ$ is initially $q$ in $\Sigma^k$,
  and the content of $\regR$ is initially $r$ in $\Sigma^{(w - k)}$.

  First consider the case where $q$ is not all-zero.
  In this case, nothing is applied to the qubit in $\regO$ in Step~1,
  and thus the phase-flip is never performed in Step~3.
  Therefore, the application of $\conjugate{Q}$ in Step~4
  cancels out the application of $Q$ in Step~2,
  and thus the content of $\regQ$ remains $q$ in Step~5, which is not all-zero.
  Hence, nothing is applied to the qubit in $\regO$ in Step~5, either,
  and $R$ outputs $0$ and accepts.

  Now consider the case where $q$ is all-zero.
  In this case, by letting ${\ket{\psi_r} = \ket{0}^{\tensor k} \tensor \ket{r}}$
  the state in ${(\regO, \regQ, \regR)}$ just after Step~2 is given by
  \[
    \frac{1}{\sqrt{2}} (\ket{0} + \ket{1}) \tensor (Q \ket{\psi_r}).
  \]
  For each $j$ in $\Binary$,
  let $\Pi_j$ be the projection operator acting over $w$~qubits defined by
  ${\Pi_j = \ketbra{j} \tensor I^{\tensor (w - 1)}}$
  (notice that $\Pi_0$ is nothing but $\Pi_\acc$).
  Then the state in ${(\regO, \regQ, \regR)}$ just after Step~4 is given by
  \[
    \begin{split}
      \hspace{5mm}
      &
      \hspace{-5mm}
      \frac{1}{\sqrt{2}} \ket{0} \tensor \ket{\psi_r}
      +
      \frac{1}{\sqrt{2}} \ket{1} \tensor \bigl[ \conjugate{Q} (\Pi_0 - \Pi_1) Q \ket{\psi_r} \bigr]
      \\
      &
      =
      \frac{1}{\sqrt{2}} \ket{0} \tensor \ket{\psi_r}
      +
      \frac{1}{\sqrt{2}}
        \ket{1}
        \tensor
        \bigl[ \conjugate{Q} \bigl( 2 \Pi_0 - I^{\tensor w} \bigr) Q \ket{\psi_r} \bigr]
      \\
      &
      =
      \frac{1}{\sqrt{2}} (\ket{0} - \ket{1}) \tensor \ket{\psi_r}
      +
      \sqrt{2} \, \ket{1} \tensor \bigl( \conjugate{Q} \Pi_0 Q \ket{\psi_r} \bigr).
    \end{split}      
  \]
  Further define the projection operators~$\Delta_0$~and~$\Delta_1$ acting over $w$~qubits by
  ${\Delta_0 = (\ketbra{0})^{\tensor k} \tensor I^{\tensor (w - k)}}$
  and
  ${\Delta_1 = I^{\tensor w} - \Delta_0}$.
  Then the state in ${(\regO, \regQ, \regR)}$ just after Step~5 is given by
  \[
    \ket{1} \tensor \ket{\psi_r}
    +
    (\ket{0} - \ket{1}) \tensor \bigl( \Delta_0 \conjugate{Q} \Pi_0 Q \ket{\psi_r} \bigr)
    +
    \sqrt{2} \, \ket{1} \tensor \bigl( \Delta_1 \conjugate{Q} \Pi_0 Q \ket{\psi_r} \bigr).
  \]
  Hence, the probability that $R$ outputs $0$ is given by
  ${\bignorm{\Delta_0 \conjugate{Q} \Pi_0 Q \ket{\psi_r}}^2}$.

  It follows that the overall acceptance probability~${p_\acc(R, 1)}$
  of the one-clean-qubit computation induced by $R$ is given by
  \begin{equation}
    \op{p_\acc}(R, 1)
    =
    (1 - 2^{-k})
    +
    2^{-k}
    \cdot
    2^{-(w - k)} \sum_{r \in \Sigma^{(w - k)}} \bignorm{\Delta_0 \conjugate{Q} \Pi_0 Q \ket{\psi_r}}^2.
    \label{Equation: acceptance probability of One-Clean-Qubit Simulation Procedure}
  \end{equation}

  First notice that
  ${
    \Delta_0
    =
    (\ketbra{0})^{\tensor k} \tensor \sum_{r \in \Sigma^{(w - k)}} \ketbra{r}
    =
    \sum_{r \in \Sigma^{(w - k)}} \ketbra{\psi_r}
  }$,
  and thus, it holds that
  \[
    \bignorm{\Delta_0 \conjugate{Q} \Pi_0 Q \ket{\psi_r}}
    \geq
    \bignorm{\ketbra{\psi_r} \conjugate{Q} \Pi_0 Q \ket{\psi_r}}
    =
    \bigabs{\bra{\psi_r} \conjugate{Q} \Pi_0 Q \ket{\psi_r}}
    =
    \bignorm{\Pi_0 Q \ket{\psi_r}}^2,
  \]
  which is exactly the acceptance probability of the circuit~$Q$
  when the input state to it was ${\ket{\psi_r} = \ket{0}^{\tensor k} \tensor \ket{r}}$.
  As the acceptance probability~${p_\acc(Q, k)}$
  of the $k$-clean-qubit computation induced by $Q$
  is nothing but the expected value of ${\bignorm{\Pi_0 Q \ket{\psi_r}}^2}$ over $r$ in $\Sigma^{(w - k)}$,
  it follows that
  \[
    \begin{split}
      \hspace{5mm}
      &
      \hspace{-5mm}
      2^{-(w - k)} \sum_{r \in \Sigma^{(w - k)}} \bignorm{\Delta_0 \conjugate{Q} \Pi_0 Q \ket{\psi_r}}^2
      \\
      &
      \geq
      \biggl(
        2^{-(w - k)} \sum_{r \in \Sigma^{(w - k)}} \bignorm{\Delta_0 \conjugate{Q} \Pi_0 Q \ket{\psi_r}}
      \biggr)^2
      \geq
      \biggl(
        2^{-(w - k)} \sum_{r \in \Sigma^{(w - k)}} \bignorm{\Pi_0 Q \ket{\psi_r}}^2
      \biggr)^2
      =
      (\op{p_\acc}(Q, k))^2.
    \end{split}
  \]
  Combined with the equation~(\ref{Equation: acceptance probability of One-Clean-Qubit Simulation Procedure}),
  this implies that
  \[
    \op{p_\acc}(R, 1)
    \geq
    (1 - 2^{-k}) + 2^{-k} (\op{p_\acc}(Q, k))^2
    =
    1 - 2^{-k} \bigl( 1 - (\op{p_\acc}(Q, k))^2 \bigr),
  \]
  which provides the first inequality.

  Now notice that ${\bignorm{\Delta_0 \conjugate{Q} \Pi_0 Q \ket{\psi_r}}^2}$
  is at most ${\bignorm{\Pi_0 Q \ket{\psi_r}}^2}$,
  which is again exactly the acceptance probability of the circuit~$Q$
  when the input state to it was ${\ket{\psi_r} = \ket{0}^{\tensor k} \tensor \ket{r}}$.
  Again using the fact that
  ${p_\acc(Q, k)}$
  is nothing but the expected value of ${\bignorm{\Pi_0 Q \ket{\psi_r}}^2}$ over $r$ in $\Sigma^{(w - k)}$,
  it holds that
  \[
    2^{-(w - k)}
    \sum_{r \in \Sigma^{(w - k)}}
      \bignorm{\Delta_0 \conjugate{Q} \Pi_0 Q \ket{\psi_r}}^2
    \leq
    2^{-(w - k)}
    \sum_{r \in \Sigma^{(w - k)}}
      \bignorm{\Pi_0 Q \ket{\psi_r}}^2
    =
    \op{p_\acc}(Q, k).
  \]
  Combined with the equation~(\ref{Equation: acceptance probability of One-Clean-Qubit Simulation Procedure}),
  this implies that
  \[
    \op{p_\acc}(R, 1)
    \leq
    (1 - 2^{-k}) + 2^{-k} \op{p_\acc}(Q, k)
    =
    1 - 2^{-k} (1 - \op{p_\acc}(Q, k)),
  \]
  and the second inequality follows.
\end{proof}


\subsection{Amplifying Randomness of One-Clean-Qubit Computation}
\label{Subsection: Randomness Amplification Procedure}

This subsection presents a procedure,
called the \textsc{Randomness Amplification Procedure},
that constructs another quantum circuit~$R^{(N)}$ from $Q$ when a positive integer~$N$ is specified.
The circuit~$R^{(N)}$ is designed
so that the sequence~${\bigl\{ p_\acc \bigl( R^{(N)}, 1 \bigr) \bigr\}_{N \in \Natural}}$
of the acceptance probability of the one-clean-qubit computation induced by $R^{(N)}$
converges linearly to $1/2$
with a rate related to
the acceptance probability~${p_\acc(Q, 1)}$ of the one-clean-qubit computation induced by $Q$,
if ${0 < p_\acc(Q, 1) < 1}$.

For each $N$ in $\Natural$,
the circuit~$R^{(N)}$ acts over ${Nw  + 1}$~qubits,
which are divided into ${2N + 1}$~quantum registers:
a single-qubit register~$\regO$,
and a ${(w - 1)}$-qubit register~$\regR_j$ and a single-qubit register~$\regX_j$
for each $j$ in ${\{1, \dotsc, N\}}$.
It is supposed that the qubit in $\regO$ is initially in state~$\ket{0}$,
and all the qubits in $\regR_j$ and $\regX_j$ are initially in the totally mixed state~${I/2}$,
for all $j$ in ${\{1, \dotsc, N\}}$.
For ${j = 1}$ to $N$,
the circuit~$R^{(N)}$ repeats the process of
first applying $Q$ to ${(\regO, \regR_j)}$
and then applying the $\CNOT$ transformation to ${(\regO, \regX_j)}$
using the qubit in $\regO$ as the control,
where each application of the $\CNOT$ transformation
essentially has the same effect as measuring the qubit in $\regO$ in the computational basis
every time after $Q$ is applied.
Finally, the qubit in $\regO$ is measured in the computational basis,
and $R^{(N)}$ outputs the measurement result.
Figures~\ref{Figure: Randomness Amplification Procedure}~and~\ref{Figure: quantum circuit for Randomness Amplification Procedure}
summarize the construction of $R^{(N)}$.
(Strictly speaking, when ${j = N}$ at Step~2 of Figure~\ref{Figure: Randomness Amplification Procedure},
it is redundant to apply $\CNOT$ at Step~2.2,
as the qubit in $\regO$ is anyway measured at Step~3 in the computational basis.
This point is reflected in Figure~\ref{Figure: quantum circuit for Randomness Amplification Procedure}.)

\begin{figure}[t!]
  \begin{algorithm*}{\textsc{Randomness Amplification Procedure}}
    \begin{step}
    \item
      Prepare a single-qubit register~$\regO$,
      where the qubit in $\regO$ is supposed to be initially in state~$\ket{0}$.
      Prepare a ${(w - 1)}$-qubit register~$\regR_j$
      and a single-qubit register~$\regX_j$,
      for each $j$ in ${\{1, \dotsc, N\}}$,
      where all the qubits in $\regR_j$ and $\regX_j$ are supposed to be initially in the totally mixed state~${I/2}$.
    \item
      For ${j = 1}$ to $N$, perform the following:
      \begin{step}
      \item
        Apply $Q$ to ${(\regO, \regR_j)}$.
      \item
        Apply the $\CNOT$ transformation to ${(\regO, \regX_j)}$
        with the qubit in $\regO$ being the control.
      \end{step}
    \item
      Measure the qubit in $\regO$ in the computational basis.
      Accept if this results in $\ket{0}$, and reject otherwise.
    \end{step}
  \end{algorithm*}
  \caption{The \textsc{Randomness Amplification Procedure}.}
  \label{Figure: Randomness Amplification Procedure}
\end{figure}

\begin{figure}[t!]
  \vspace{\baselineskip}
  \begin{center}
    \setlength{\unitlength}{1pt}
    \begin{picture}(400, 450)(  0,   0)
      \put(  0, 305)
        {%
          \begin{picture}(400, 145)(  0,   0)
            \put(  0, 130){\makebox( 50,  15){clean qubit}}
            \put( 18, 130){\vector(1,-2){13.5}}
            \put(  0,  90){\makebox( 15,  20){$\regO$ \hfill}}
            \put(  0,  10){\makebox( 15,  80){$\regR_1$ \hfill}}
            \put( 15,  15){\makebox( 15,  70){${\left\{ \rule{0mm}{35 \unitlength} \right.}$}}
            \put(  0, -10){\makebox( 15,  20){$\regX_1$ \hfill}}
            \put( 30, 100){\line(1,0){15}}
            \put( 30,  80){\line(1,0){15}}
            \put( 30,  60){\line(1,0){15}}
            \put( 30,  23.25){\makebox( 15,  40){$\vdots$}}
            \put( 30,  20){\line(1,0){15}}
            \put( 30,   0){\line(1,0){59}}
            \put( 45,   0)
              {%
                \begin{picture}( 47.5, 107.5)(  0,   0)
                  \put(  0,  12.5){\framebox( 30,  95){$Q$}}
                  \put( 30, 100){\line(1,0){17.5}}
                  \put( 47.5, 100){\circle*{3}}
                  \put( 47.5, 100){\line(0,-1){96.5}}
                  {
                    \linethickness{0.265pt}
                    \put( 47.5,   0){\circle{7}}
                    \put( 44,   0){\line(1,0){7}}
                    \put( 47.5,   3.5){\line(0,-1){7}}
                  }
                \end{picture}
              }
            \put( 75,  80){\line(1,0){140}}
            \put( 75,  60){\line(1,0){140}}
            \put( 75,  23.25){\makebox(140,  40){$\vdots$}}
            \put( 75,  20){\line(1,0){140}}
            \put( 92.5, 100){\line(1,0){17.5}}
            \put( 96,   0){\line(1,0){119}}
            {%
              \begin{xy}
                \put(110, 100){\ar@{-} <20pt, -120pt>}
              \end{xy}
            }
            \multiput(217,  80)(4,0){27}{\line(1,0){2}}
            \multiput(217,  60)(4,0){27}{\line(1,0){2}}
            \put(215,  23.25){\makebox(110,  40){$\vdots$}}
            \multiput(217,  20)(4,0){27}{\line(1,0){2}}
            \multiput(217,   0)(4,0){27}{\line(1,0){2}}
            \put(325,  80){\line(1,0){75}}
            \put(325,  60){\line(1,0){75}}
            \put(325,  23.25){\makebox( 75,  40){$\vdots$}}
            \put(325,  20){\line(1,0){75}}
            \put(325,   0){\line(1,0){75}}
          \end{picture}
        }
      \put(  0, 185)
        {%
          \begin{picture}(400, 100)(  0,   0)
            \put(  0,  10){\makebox( 15,  80){$\regR_2$ \hfill}}
            \put( 15,  15){\makebox( 15,  70){${\left\{ \rule{0mm}{35 \unitlength} \right.}$}}
            \put(  0, -10){\makebox( 15,  20){$\regX_2$ \hfill}}
            \put( 30,  80){\line(1,0){110}}
            \put( 30,  60){\line(1,0){110}}
            \put( 30,  23.25){\makebox(110,  40){$\vdots$}}
            \put( 30,  20){\line(1,0){110}}
            \put( 30,   0){\line(1,0){154}}
            \put(130, 100){\line(1,0){10}}
            \put(140,   0)
              {%
                \begin{picture}( 47.5, 107.5)(  0,   0)
                  \put(  0,  12.5){\framebox( 30,  95){$Q$}}
                  \put( 30, 100){\line(1,0){17.5}}
                  \put( 47.5, 100){\circle*{3}}
                  \put( 47.5, 100){\line(0,-1){96.5}}
                  {
                    \linethickness{0.265pt}
                    \put( 47.5,   0){\circle{7}}
                    \put( 44,   0){\line(1,0){7}}
                    \put( 47.5,   3.5){\line(0,-1){7}}
                  }
                \end{picture}
              }
            \put(170,  80){\line(1,0){45}}
            \put(170,  60){\line(1,0){45}}
            \put(170,  23.25){\makebox( 45,  40){$\vdots$}}
            \put(170,  20){\line(1,0){45}}
            \put(187.5, 100){\line(1,0){17.5}}
            \put(191,   0){\line(1,0){24}}
            {%
              \begin{xy}
                \put(205, 100){\ar@{-} <10pt, -60pt>}
                \multiput(215.33,  38.02)(0.66,-3.96){15}{\ar@{-} <0.33pt, -1.98pt>}
              \end{xy}
            }
            \multiput(217,  80)(4,0){27}{\line(1,0){2}}
            \multiput(217,  60)(4,0){27}{\line(1,0){2}}
            \put(215,  23.25){\makebox(110,  40){$\vdots$}}
            \multiput(217,  20)(4,0){27}{\line(1,0){2}}
            \multiput(217,   0)(4,0){27}{\line(1,0){2}}
            \put(325,  80){\line(1,0){75}}
            \put(325,  60){\line(1,0){75}}
            \put(325,  23.25){\makebox(75,  40){$\vdots$}}
            \put(325,  20){\line(1,0){75}}
            \put(325,   0){\line(1,0){75}}
          \end{picture}
        }
      \put(  2,  43.25){\makebox( 15, 145){$\vdots$ \hfill}}
      \put( 30,  73.25){\makebox(185, 115){$\vdots$}}
      \put(215,  73.25){\makebox(110, 115){$\vdots$}}
      \put(325, 100.75){\makebox( 75,  87.5){$\vdots$}}
      \put(  0,  10)
        {%
          \begin{picture}(400,  80)(  0,   0)
            \put(  0,   0){\makebox( 18,  60){$\regR_N$ \hfill}}
            \put( 15,  -5){\makebox( 15,  70){${\left\{ \rule{0mm}{35 \unitlength} \right.}$}}
            \put( 30,  60){\line(1,0){185}}
            \put( 30,  40){\line(1,0){185}}
            \put( 30,   3.25){\makebox(185,  40){$\vdots$}}
            \put( 30,   0){\line(1,0){185}}
            \multiput(217,  60)(4,0){27}{\line(1,0){2}}
            \multiput(217,  40)(4,0){27}{\line(1,0){2}}
            \put(215,   3.25){\makebox(110,  40){$\vdots$}}
            \multiput(217,   0)(4,0){27}{\line(1,0){2}}
            {%
              \begin{xy}
                \multiput(326.02, 103.88)(0.66,-3.96){3}{\ar@{-} <0.33pt, -1.98pt>}
                \put(328,  92){\ar@{-} <2pt, -12pt>}
              \end{xy}
            }
            \put(325,  60){\line(1,0){15}}
            \put(325,  40){\line(1,0){15}}
            \put(325,   0){\line(1,0){15}}
            \put(330,  80){\line(1,0){10}}
            \put(340,  -7.5){\framebox( 30,  95){$Q$}}
            \put(370,  80){\line(1,0){15}}
            \put(370,  60){\line(1,0){30}}
            \put(370,  40){\line(1,0){30}}
            \put(370,   3.25){\makebox( 30,  40){$\vdots$}}
            \put(370,   0){\line(1,0){30}}
            \put(385,  72.5){\framebox( 15,  15){\metermark}}
          \end{picture}
        }
    \end{picture}
  \end{center}
  \caption{
    The quantum circuit for the \textsc{Randomness Amplification Procedure}
    induced by a quantum circuit~$Q$,
    where the qubit in $\regO$ is supposed to be the only clean qubit at the beginning of the computation,
    and is also the output qubit of the computation.
  }
  \label{Figure: quantum circuit for Randomness Amplification Procedure}
\end{figure}
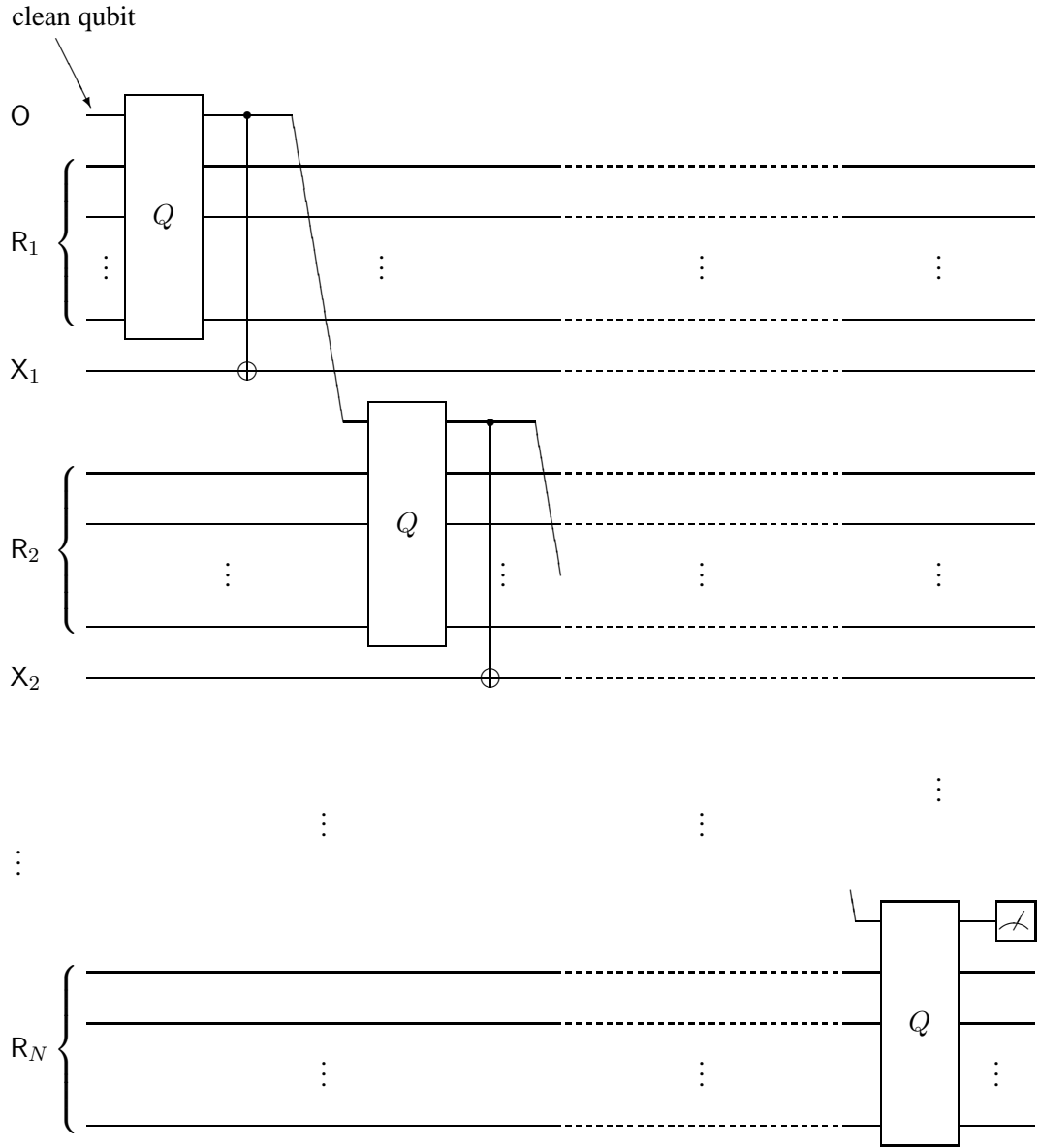

\begin{proposition}
  For any quantum circuit~$Q$ and any positive integer~$N$,
  let $R^{(N)}$ be the quantum circuit
  corresponding to the \textsc{Randomness Amplification Procedure} induced by $Q$ and $N$.
  For the acceptance probability~${p_\acc(Q, 1)}$
  of the one-clean-qubit computation induced by $Q$
  and the acceptance probability~${p_\acc \bigl( R^{(N)}, 1 \bigr)}$
  of the one-clean-qubit computation induced by $R^{(N)}$,
  it holds that
  \[
    \op{p_\acc} \bigl( R^{(N)}, 1 \bigr)
    =
    \frac{1}{2} + \frac{1}{2} (2 \op{p_\acc}(Q, 1) - 1)^N.
  \]
  \label{Proposition: acceptance probability of Randomness Amplification Procedure}
\end{proposition}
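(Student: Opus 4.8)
The plan is to show that, inside the one-clean-qubit computation induced by $R^{(N)}$, the content of the register~$\regO$ performs a classical symmetric random walk on $\Binary$: in each of the $N$ rounds it flips with probability exactly $1 - p_\acc(Q, 1)$ and is left unchanged with probability exactly $p_\acc(Q, 1)$, and the rounds are mutually independent. Once this is in hand, $R^{(N)}$ accepts if and only if an even number of flips occurred among $N$ independent Bernoulli trials of success probability $q := 1 - p_\acc(Q, 1)$, so Lemma~\ref{Lemma: coin-flipping lemma} (applied with $p$ replaced by $q$) immediately gives $p_\acc \bigl( R^{(N)}, 1 \bigr) = \frac{1}{2} + \frac{1}{2}(1 - 2q)^N = \frac{1}{2} + \frac{1}{2}(2 p_\acc(Q, 1) - 1)^N$, which is the claimed identity.

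The substance of the argument is the behaviour of a single round, which I would isolate as follows. Write $\Pi_c = \ketbra{c} \tensor I^{\tensor (w - 1)}$ for $c$ in $\Binary$, so that $\Pi_0$ is the acceptance projector for $R^{(N)}$ and $p_\acc(Q, 1) = 2^{-(w - 1)} \tr \bigl[ \Pi_0 Q (\ketbra{0} \tensor I^{\tensor (w - 1)}) \conjugate{Q} \bigr]$. First I would note that applying the $\CNOT$ from $\regO$ into a fresh totally mixed single-qubit register~$\regX_j$ that is never touched again has, after $\regX_j$ (and $\regR_j$) is traced out, exactly the effect of dephasing $\regO$ in the computational basis; hence the state of $\regO$ entering each round is diagonal, and it suffices to track its two diagonal entries. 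For a computational-basis input $\ket{b}$ in $\regO$ together with the fresh totally mixed register~$\regR_j$, the probability that the value of $\regO$ right after $Q$ equals $c$ is the average over $r$ in $\Sigma^{(w - 1)}$ of $\bignorm{\Pi_c Q (\ket{b} \tensor \ket{r})}^2$, namely $2^{-(w - 1)} \tr \bigl[ \Pi_c Q (\ketbra{b} \tensor I^{\tensor (w - 1)}) \conjugate{Q} \bigr]$. For $b = c = 0$ this is $p_\acc(Q, 1)$ by definition; the key point is that the $b = c = 1$ quantity equals it as well. This follows from $\Pi_0 + \Pi_1 = I^{\tensor w}$ and $\ketbra{0} + \ketbra{1} = I$ by a short computation using cyclicity of the trace and $\conjugate{Q} Q = I^{\tensor w}$: expanding $\Pi_1 = I^{\tensor w} - \Pi_0$ and $\ketbra{1} = I - \ketbra{0}$ and cancelling the two copies of $\tr[\ketbra{1} \tensor I^{\tensor (w - 1)}] = 2^{w - 1}$ leaves exactly $\tr \bigl[ \Pi_0 Q (\ketbra{0} \tensor I^{\tensor (w - 1)}) \conjugate{Q} \bigr]$. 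Thus, independently of the content $b$ of $\regO$ entering a round, $\regO$ keeps the value $b$ with probability $p_\acc(Q, 1)$ and flips with probability $q$.

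Finally I would assemble the rounds: since $\regR_j$ and $\regX_j$ are fresh for every $j$, the single-round map on the (diagonal) state of $\regO$ is applied $N$ times with independent randomness, so the content of $\regO$ after all $N$ rounds, started from $\ket{0}$, is distributed as the parity of a $B(N, q)$ random variable, and $R^{(N)}$ accepts precisely when this parity is $0$. The formula then comes out of Lemma~\ref{Lemma: coin-flipping lemma}.

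I expect the main obstacle to be the identity that the per-round flip probability is the same whether $\regO$ enters as $\ket{0}$ or as $\ket{1}$ — equivalently, that the post-$Q$ measurement returns $1$ on input $\ket{1}$ with the same probability that it returns $0$ on input $\ket{0}$. This is the one place where unitarity of $Q$ and completeness of the two projector families genuinely enter, and it must be done with care; everything else (the dephasing effect of the $\CNOT$-into-a-mixed-qubit trick, linearity of the round map on diagonal states, independence across rounds, and the invocation of Lemma~\ref{Lemma: coin-flipping lemma}) is routine bookkeeping.
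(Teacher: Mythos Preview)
Your proposal is correct and follows essentially the same approach as the paper: both establish the key identity $\tr \Pi_1 Q \rho_1 \conjugate{Q} = \tr \Pi_0 Q \rho_0 \conjugate{Q} = p_\acc(Q,1)$ via the completeness relations $\Pi_0 + \Pi_1 = I^{\tensor w}$ and $\ketbra{0} + \ketbra{1} = I$, conclude that each round flips $\regO$ with probability $1 - p_\acc(Q,1)$ regardless of its incoming value, and finish with Lemma~\ref{Lemma: coin-flipping lemma}. Your treatment is slightly more explicit than the paper's about why the $\CNOT$ into a fresh mixed ancilla effects a computational-basis dephasing of $\regO$, but this is the same argument, just spelled out.
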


\begin{proof}
  For each $j$ in $\Binary$,
  let $\Pi_j$ be the projection operator acting over $w$~qubits defined by
  ${\Pi_j = \ketbra{j} \tensor I^{\tensor (w - 1)}}$,
  and let $\rho_j$ be the quantum state of $w$~qubits defined by
  ${\rho_j = \ketbra{j} \tensor \bigl( \frac{I}{2} \bigr)^{\tensor (w - 1)}}$.
  Note that ${\Pi_0 = \Pi_\acc}$ and ${\rho_0 = \rho_\init^{(w,1)}}$,
  and thus,
  the acceptance probability~${p_\acc(Q, 1)}$
  of the original one-clean-qubit computation induced by $Q$
  is given by
  \[
    \op{p_\acc}(Q, 1) = \tr \Pi_0 Q \rho_0 \conjugate{Q}.
  \]
  By noticing that
  ${\Pi_1 = I^{\tensor w} - \Pi_0}$
  and
  \[
    \tr \Pi_0 Q \rho_0 \conjugate{Q} + \tr \Pi_0 Q \rho_1 \conjugate{Q}
    =
    \frac{1}{2^{w - 1}} \tr \Pi_0 Q \bigl( I^{\tensor w} \bigr) \conjugate{Q}
    =
    \frac{1}{2^{w - 1}} \tr \Pi_0
    =
    1,
  \]
  it also holds that
  \[
    \tr \Pi_1 Q \rho_1 \conjugate{Q}
    =
    1 - \tr \Pi_0 Q \rho_1 \conjugate{Q}
    =
    1 - \bigl( 1 - \tr \Pi_0 Q \rho_0 \conjugate{Q} \bigr)
    =
    \tr \Pi_0 Q \rho_0 \conjugate{Q}
    =
    \op{p_\acc}(Q, 1).
  \]
  This implies that,
  after each repetition of Step~2 of the \textsc{Randomness Amplification Procedure},
  the content of $\regO$ remains unchanged with probability~${p_\acc(Q, 1)}$,
  and is flipped with probability~${1 - \op{p_\acc}(Q, 1)}$
  (by viewing that $\regO$ is ``measured'' in the computational basis
  as a result of the application of $\CNOT$ at Step~2.2).
  The content of $\regO$ is $0$ when entering Step~3
  if and only if the content of $\regO$ is flipped even number of times during Step~2.
  By Lemma~\ref{Lemma: coin-flipping lemma}, this happens with probability exactly
  ${\frac{1}{2} + \frac{1}{2} (2 \op{p_\acc}(Q, 1) - 1)^N}$,
  which gives the acceptance probability~${p_\acc \bigl( R^{(N)}, 1 \bigr)}$
  of the one-clean-qubit computation induced by $R^{(N)}$.
\end{proof}


\subsection{Checking Stability of One-Clean-Qubit Computation}
\label{Subsection: Stability Checking Procedures}

This subsection considers two procedures of checking
whether a given one-clean-qubit computation
has sufficiently high acceptance probability or not.
For a positive integer~$N$ specified,
the first procedure,
called the \textsc{One-Clean-Qubit Stability Checking Procedure},
constructs another quantum circuit~$R_1^{(N)}$ from $Q$
so that the acceptance probability~${p_\acc \bigl( R_1^{(N)}, 1 \bigr)}$
of the one-clean-qubit computation induced by $R_1^{(N)}$
is polynomially small with respect to $N$
when the acceptance probability was sufficiently close to $1/2$
in the original one-clean-qubit computation induced by $Q$.
The second procedure,
called the \textsc{Two-Clean-Qubit Stability Checking Procedure},
is a modification of the first procedure
so that, using two clean qubits,
the acceptance probability~${p_\acc \bigl( R_2^{(N)}, 2 \bigr)}$
of the two-clean-qubit computation induced by the circuit~$R_2^{(N)}$ constructed from $Q$
now becomes exponentially small with respect to $N$
when the acceptance probability was sufficiently close to $1/2$
in the original one-clean-qubit computation induced by $Q$.


\subsubsection{Stability Checking Using One Clean Qubit}
\label{Subsubsection: One-Clean-Qubit Stability Checking Procedure}

First consider a slightly simplified version of
the \textsc{One-Clean-Qubit Stability Checking Procedure}
described below.

For each $N$ in $\Natural$ satisfying ${N \geq 2}$,
the procedure to be constructed prepares ${2N + 1}$~quantum registers:
a single-qubit register~$\regQ$,
and a ${(w - 1)}$-qubit register~$\regR_j$ for each $j$ in ${\{1, \dotsc, 2N\}}$.
It is supposed
that all the qubits in $\regQ$ and $\regR_j$ are initially in the totally mixed state~${I/2}$,
for all $j$ in ${\{1, \dotsc, 2N\}}$.
The procedure also prepares an integer variable~$C$ that serves as a counter,
but the value of $C$ is not necessarily initialized to $0$ at the beginning of the computation,
and the initial value~$r$ of $C$ is chosen from the set~${\{0, \dotsc, N - 1\}}$ uniformly at random.
For ${j = 1}$ to ${2N}$,
the procedure repeats the process of
first applying $Q$ to ${(\regQ, \regR_j)}$
and then measuring the qubit in $\regQ$ in the computational basis.
Everytime this measurement results in $\ket{1}$,
the value of the counter~$C$ is increased by one.
After this repetition of ${2N}$~times,
the procedure checks the value of $C$,
and rejects if it is between $N$ and ${2N - 1}$.

For the actual construction of the above simplified procedure,
the positive integer~$N$ is chosen to be a power of two
so that the most significant bit is $0$ for any integer in the interval~${[0, N - 1]}$
and is $1$ for any integer in the interval~${[N, 2N - 1]}$,
when expressed as a binary string using ${\log N + 1}$~bits.
The actual procedure also introduces
a ${(\log N + 1)}$-qubit register~$\regC$
and a single-qubit register~$\regX_j$ for each $j$ in ${\{1, \dotsc, 2N\}}$.
Only the first qubit of $\regC$ is supposed to be initially in state~$\ket{0}$,
and all the other qubits used in the actual procedure
are supposed to be initially in the totally mixed state~${I/2}$.
The content of $\regC$ serves as a counter~$C$ of the simplified procedure.
The condition that only the first qubit in $\regC$ is initially in state~$\ket{0}$
and all the other qubits in $\regC$ are initially in the totally mixed state~${I/2}$
exactly corresponds to the process in the simplified procedure
of randomly picking the initial value of the counter~$C$ from the set~${\{0, \dotsc, N - 1\}}$.
The conditional increment of the counter value
is realized by the controlled-$\INCR{2N}$ transformation~${\controlled \bigl( \INCR{2N} \bigr)}$.
The final decision of acceptance and rejection of the simplified procedure
can be done by measuring the first qubit of $\regC$ in the computational basis.
Each register~$\regX_j$ is used
to simulate the measurement at each repetition round~$j$.
More precisely,
for each repetition round~$j$,
the measurement of the qubit in $\regQ$ in the computational basis
is replaced by the application of a $\CNOT$ transformation to ${(\regQ, \regX_j)}$.
Figures~\ref{Figure: One-Clean-Qubit Stability Checking Procedure}~and~\ref{Figure: quantum circuit for One-Clean-Qubit Stability Checking Procedure}
summarize the actual construction of the \textsc{One-Clean-Qubit Stability Checking Procedure}.

\begin{figure}[t!]
  \begin{algorithm*}{\textsc{One-Clean-Qubit Stability Checking Procedure}}
    \begin{step}
    \item
      Given a positive integer~$N$ that is a power of two,
      let ${l = \log N + 1}$.
      Prepare an $l$-qubit register~$\regC$,
      a single-qubit register~$\regQ$,
      and each ${(w - 1)}$-qubit register~$\regR_j$ for each $j$ in ${\{1, \dotsc, 2N\}}$.
      For each $j$ in ${\{1, \dotsc, l\}}$,
      let $\regC^{(j)}$ denote the single-qubit register
      corresponding to the $j$th qubit of $\regC$.
      The qubit in $\regC^{(1)}$ is supposed to be initially in state~$\ket{0}$,
      while the qubit in $\regC^{(j)}$ for each $j$ in ${\{2, \dotsc, l\}}$,
      the qubit in $\regQ$,
      and all the qubits in $\regR_{j'}$ for each $j'$ in ${\{1, \dotsc, 2N\}}$
      are supposed to be initially in the totally mixed state~${I/2}$.\\
      Prepare a single-qubit register~$\regX_j$ for each $j$ in ${\{1, \dotsc, 2N\}}$,
      where the qubit in each $\regX_j$ is supposed to be initially in the totally mixed state~${I/2}$.
    \item
      For ${j = 1}$ to ${2N}$, perform the following:
      \begin{step}
      \item
        Apply $Q$ to ${(\regQ, \regR_j)}$.
      \item
        Apply the $\CNOT$ transformation to ${(\regQ, \regX_j)}$
        with the qubit in $\regQ$ being the control.
        Apply the controlled-$\INCR{2N}$ transformation~${\controlled \bigl( \INCR{2N} \bigr)}$
        to ${(\regQ, \regC)}$
        with the qubit in $\regQ$ being the control.
      \end{step}
    \item
      Measure the qubit in $\regC^{(1)}$ in the computational basis.
      Accept if this results in $\ket{0}$, and reject otherwise.
    \end{step}
  \end{algorithm*}
  \caption{The \textsc{One-Clean-Qubit Stability Checking Procedure}.}
  \label{Figure: One-Clean-Qubit Stability Checking Procedure}
\end{figure}

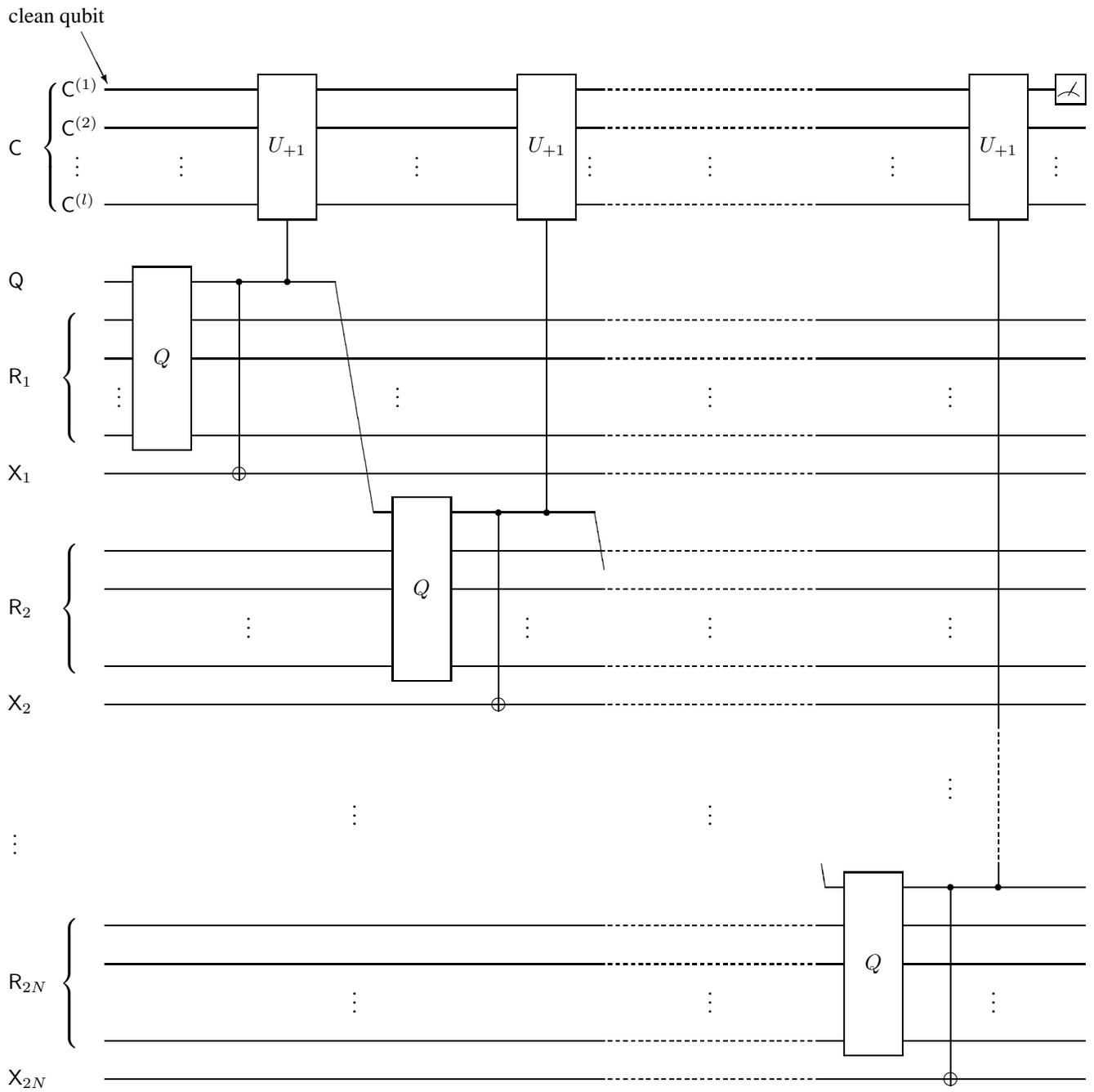
\begin{figure}[t!]
  \vspace{\baselineskip}
  \begin{center}
    \small
    \setlength{\unitlength}{0.89pt}
    \begin{picture}(560, 570)(  0,   0)
      \put(  0, 325)
        {%
          \begin{picture}(560, 245)(  0,   0)
            \put(  0, 230){\makebox( 50,  15){clean qubit}}
            \put( 38, 230){\vector(1,-2){13.5}}
            \put(  0, 130){\makebox( 15,  80){$\regC$ \hfill}}
            \put( 15, 135){\makebox( 15,  70){${\left\{ \rule{0mm}{36 \unitlength} \right.}$}}
            \put( 27.5, 191.25){\makebox( 25,  20){$\regC^{(1)}$ \hfill}}
            \put( 27.5, 171.25){\makebox( 25,  20){$\regC^{(2)}$ \hfill}}
            \put( 34.5, 143.25){\makebox( 25,  40){$\vdots$ \hfill}}
            \put( 27.5, 131.25){\makebox( 25,  20){$\regC^{(l)}$ \hfill}}
            \put(  0,  90){\makebox( 15,  20){$\regQ$ \hfill}}
            \put(  0,  10){\makebox( 15,  80){$\regR_1$ \hfill}}
            \put( 25,  15){\makebox( 15,  70){${\left\{ \rule{0mm}{36 \unitlength} \right.}$}}
            \put(  0, -10){\makebox( 15,  20){$\regX_1$ \hfill}}
            \put( 50, 200){\line(1,0){80}}
            \put( 50, 180){\line(1,0){80}}
            \put( 50, 143.25){\makebox( 80,  40){$\vdots$}}
            \put( 50, 140){\line(1,0){80}}
            \put( 50, 100){\line(1,0){15}}
            \put( 50,  80){\line(1,0){15}}
            \put( 50,  60){\line(1,0){15}}
            \put( 50,  23.25){\makebox( 15,  40){$\vdots$}}
            \put( 50,  20){\line(1,0){15}}
            \put( 50,   0){\line(1,0){66.5}}
            \put( 65,   0)
              {%
                \begin{picture}( 81.5, 187.5)(  0,   0)
                  \put(  0,  12.5){\framebox( 30,  95){$Q$}}
                  \put( 30, 100){\line(1,0){25}}
                  \put( 55, 100){\circle*{3}}
                  \put( 55, 100){\line(0,-1){96.5}}
                  {
                    \linethickness{0.265pt}
                    \put( 55,   0){\circle{7}}
                    \put( 51.5,   0){\line(1,0){7}}
                    \put( 55,   3.5){\line(0,-1){7}}
                  }
                  \put( 55, 100){\line(1,0){25}}
                  \put( 80, 100){\circle*{3}}
                  \put( 80, 100){\line(0,1){32.5}}
                  \put( 65, 132.5){\framebox( 30,  75){$U_{+1}$}}
                \end{picture}
              }
            \put( 95,  80){\line(1,0){215}}
            \put( 95,  60){\line(1,0){215}}
            \put( 95,  23.25){\makebox(215,  40){$\vdots$}}
            \put( 95,  20){\line(1,0){215}}
            \put(145, 100){\line(1,0){25}}
            \put(123.5,   0){\line(1,0){186.5}}
            \put(160, 200){\line(1,0){105}}
            \put(160, 180){\line(1,0){105}}
            \put(160, 143.25){\makebox(105,  40){$\vdots$}}
            \put(160, 140){\line(1,0){105}}
            {%
              \begin{xy}
                \put(170, 100){\ar@{-} <17.8pt, -106.8pt>}
              \end{xy}
            }
            \multiput(312,  80)(4,0){27}{\line(1,0){2}}
            \multiput(312,  60)(4,0){27}{\line(1,0){2}}
            \put(310,  23.25){\makebox(110,  40){$\vdots$}}
            \multiput(312,  20)(4,0){27}{\line(1,0){2}}
            \multiput(312,   0)(4,0){27}{\line(1,0){2}}
            \put(420,  80){\line(1,0){140}}
            \put(420,  60){\line(1,0){140}}
            \put(420,  23.25){\makebox(140,  40){$\vdots$}}
            \put(420,  20){\line(1,0){140}}
            \put(420,   0){\line(1,0){140}}
          \end{picture}
        }
      \put(  0, 205)
        {%
          \begin{picture}(560, 365)(  0,   0)
            \put(  0,  10){\makebox( 15,  80){$\regR_2$ \hfill}}
            \put( 25,  15){\makebox( 15,  70){${\left\{ \rule{0mm}{36 \unitlength} \right.}$}}
            \put(  0, -10){\makebox( 15,  20){$\regX_2$ \hfill}}
            \put( 50,  80){\line(1,0){150}}
            \put( 50,  60){\line(1,0){150}}
            \put( 50,  23.25){\makebox(150,  40){$\vdots$}}
            \put( 50,  20){\line(1,0){150}}
            \put( 50,   0){\line(1,0){201.5}}
            \put(190, 100){\line(1,0){10}}
            \put(200,   0)
              {%
                \begin{picture}( 81.5, 307.5)(  0,   0)
                  \put(  0,  12.5){\framebox( 30,  95){$Q$}}
                  \put( 30, 100){\line(1,0){25}}
                  \put( 55, 100){\circle*{3}}
                  \put( 55, 100){\line(0,-1){96.5}}
                  {
                    \linethickness{0.265pt}
                    \put( 55,   0){\circle{7}}
                    \put( 51.5,   0){\line(1,0){7}}
                    \put( 55,   3.5){\line(0,-1){7}}
                  }
                  \put( 55, 100){\line(1,0){25}}
                  \put( 80, 100){\circle*{3}}
                  \put( 80, 100){\line(0,1){152.5}}
                  \put( 65, 252.5){\framebox( 30,  75){$U_{+1}$}}
                \end{picture}
              }
            \put(230,  80){\line(1,0){80}}
            \put(230,  60){\line(1,0){80}}
            \put(230,  23.25){\makebox( 80,  40){$\vdots$}}
            \put(230,  20){\line(1,0){80}}
            \put(258.5,   0){\line(1,0){51.5}}
            \put(280, 100){\line(1,0){25}}
            \put(295, 320){\line(1,0){15}}
            \put(295, 300){\line(1,0){15}}
            \put(295, 263.25){\makebox( 15,  40){$\vdots$}}
            \put(295, 260){\line(1,0){15}}
            {%
              \begin{xy}
                \put(305, 100){\ar@{-} <4.45pt, -26.7pt>}
                \multiput(310.33,  68.02)(0.66,-3.96){23}{\ar@{-} <0.2937pt, -1.7622pt>}
              \end{xy}
            }
            \multiput(312,  80)(4,0){27}{\line(1,0){2}}
            \multiput(312,  60)(4,0){27}{\line(1,0){2}}
            \put(310,  23.25){\makebox(110,  40){$\vdots$}}
            \multiput(312,  20)(4,0){27}{\line(1,0){2}}
            \multiput(312,   0)(4,0){27}{\line(1,0){2}}
            \put(420,  80){\line(1,0){140}}
            \put(420,  60){\line(1,0){140}}
            \put(420,  23.25){\makebox(140,  40){$\vdots$}}
            \put(420,  20){\line(1,0){140}}
            \put(420,   0){\line(1,0){140}}
            \multiput(312, 320)(4,0){27}{\line(1,0){2}}
            \multiput(312, 300)(4,0){27}{\line(1,0){2}}
            \put(310, 263.25){\makebox(110,  40){$\vdots$}}
            \multiput(312, 260)(4,0){27}{\line(1,0){2}}
            \put(420, 320){\line(1,0){80}}
            \put(420, 300){\line(1,0){80}}
            \put(420, 263.25){\makebox(80,  40){$\vdots$}}
            \put(420, 260){\line(1,0){80}}
          \end{picture}
        }
      \put(  2,  63.25){\makebox( 15, 145){$\vdots$ \hfill}}
      \put( 50,  93.25){\makebox(260, 115){$\vdots$}}
      \put(310,  93.25){\makebox(110, 115){$\vdots$}}
      \put(420, 120.75){\makebox(140,  87.5){$\vdots$}}
      \put(  0,  10)
        {%
          \begin{picture}(560, 560)(  0,   0)
            \put(  0,  10){\makebox( 25,  80){$\regR_{2N}$ \hfill}}
            \put( 25,  15){\makebox( 15,  70){${\left\{ \rule{0mm}{36 \unitlength} \right.}$}}
            \put(  0, -10){\makebox( 25,  20){$\regX_{2N}$ \hfill}}
            \put( 50,  80){\line(1,0){260}}
            \put( 50,  60){\line(1,0){260}}
            \put( 50,  23.25){\makebox(260,  40){$\vdots$}}
            \put( 50,  20){\line(1,0){260}}
            \put( 50,   0){\line(1,0){260}}
            \multiput(312,  80)(4,0){27}{\line(1,0){2}}
            \multiput(312,  60)(4,0){27}{\line(1,0){2}}
            \put(310,  23.25){\makebox(110,  40){$\vdots$}}
            \multiput(312,  20)(4,0){27}{\line(1,0){2}}
            \multiput(312,   0)(4,0){27}{\line(1,0){2}}
            {%
              \begin{xy}
                \multiput(421.02, 123.88)(0.66,-3.96){3}{\ar@{-} <0.2937pt, -1.7622pt>}
                \put(423, 112){\ar@{-} <1.78pt, -10.68pt>}
              \end{xy}
            }
            \put(420,  80){\line(1,0){15}}
            \put(420,  60){\line(1,0){15}}
            \put(420,  20){\line(1,0){15}}
            \put(420,   0){\line(1,0){66.5}}
            \put(425, 100){\line(1,0){10}}
            \put(435,   0)
              {%
                \begin{picture}( 81.5, 307.5)(  0,   0)
                  \put(  0,  12.5){\framebox( 30,  95){$Q$}}
                  \put( 30, 100){\line(1,0){25}}
                  \put( 55, 100){\circle*{3}}
                  \put( 55, 100){\line(0,-1){96.5}}
                  {
                    \linethickness{0.265pt}
                    \put( 55,   0){\circle{7}}
                    \put( 51.5,   0){\line(1,0){7}}
                    \put( 55,   3.5){\line(0,-1){7}}
                  }
                  \put( 55, 100){\line(1,0){25}}
                  \put( 80, 100){\circle*{3}}
                  \put( 80, 100){\line(0,1){12.5}}
                  \multiput(80, 114.5)(0,4){17}{\line(0,1){2}}
                  \put( 80, 182.5){\line(0,1){265}}
                  \put( 65, 447.5){\framebox( 30,  75){$U_{+1}$}}
                \end{picture}
              }
            \put(465,  80){\line(1,0){95}}
            \put(465,  60){\line(1,0){95}}
            \put(465,  23.25){\makebox( 95,  40){$\vdots$}}
            \put(465,  20){\line(1,0){95}}
            \put(493.5,   0){\line(1,0){66.5}}
            \put(515, 100){\line(1,0){45}}
            \put(530, 515){\line(1,0){15}}
            \put(530, 495){\line(1,0){30}}
            \put(530, 458.25){\makebox( 30,  40){$\vdots$}}
            \put(530, 455){\line(1,0){30}}
            \put(545, 507.5){\framebox( 15,  15){\smallmetermark}}
          \end{picture}
        }
    \end{picture}
  \end{center}
  \caption{
    The quantum circuit for the \textsc{One-Clean-Qubit Stability Checking Procedure}
    induced by a quantum circuit~$Q$ and a positive integer~$N$,
    where $U_{+1}$ is a shorthand of the increment transformation~$\INCR{2N}$.
    The qubit in $\regC^{(1)}$ is supposed to be the only clean qubit at the beginning of the computation,
    and is also the output qubit of the computation.
    \vspace{5mm}
  }
  \label{Figure: quantum circuit for One-Clean-Qubit Stability Checking Procedure}
\end{figure}

First analyze the lower bound of the acceptance probability
of the one-clean-qubit computation induced by the quantum circuit
resulting from the \textsc{One-Clean-Qubit Stability Checking Procedure}.

\begin{proposition}
  For any quantum circuit~$Q$ and any positive integer~$N$ that is a power of two,
  let $R_1^{(N)}$ be the quantum circuit
  corresponding to the \textsc{One-Clean-Qubit Stability Checking Procedure} induced by $Q$ and~$N$.
  For the acceptance probability~${p_\acc(Q, 1)}$
  of the one-clean-qubit computation induced by $Q$
  and the acceptance probability~${p_\acc \bigl( R_1^{(N)}, 1 \bigr)}$
  of the one-clean-qubit computation induced by $R_1^{(N)}$,
  it holds that
  \[
    \op{p_\acc} \bigl( R_1^{(N)}, 1 \bigr)
    \geq
    (\op{p_\acc}(Q, 1))^{2N - 1}.
  \]
  \label{Proposition: lower bound of acceptance probability of One-Clean-Qubit Stability Checking Procedure}
\end{proposition}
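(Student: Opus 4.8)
The plan is to reduce the quantum circuit~$R_1^{(N)}$ to an elementary classical random process and then exhibit enough accepting outcomes to force the bound. The first step is to recall the key identity established in the analysis of the \textsc{Randomness Amplification Procedure}: since the single-qubit register~$\regQ$ plays the role of the output qubit of $Q$, the relations $\tr \Pi_0 Q \rho_0 \conjugate{Q} = \tr \Pi_1 Q \rho_1 \conjugate{Q} = \op{p_\acc}(Q, 1)$ (with $\rho_b = \ketbra{b} \tensor (I/2)^{\tensor (w - 1)}$ and $\Pi_b = \ketbra{b} \tensor I^{\tensor (w-1)}$) show that, after applying $Q$ to $(\regQ, \regR_j)$ with all qubits of $\regR_j$ in the totally mixed state and then ``measuring'' the qubit of $\regQ$ in the computational basis, the content of $\regQ$ is unchanged with probability exactly $q := \op{p_\acc}(Q, 1)$ and is flipped with probability $1 - q$, \emph{irrespective} of the pre-application content of $\regQ$. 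Because each round uses a fresh register~$\regR_j$, and the controlled-$\INCR{2N}$ transformations act on $\regC$ only (with $\regQ$ as an unmodified control), the computation of $R_1^{(N)}$ is faithfully described classically: the sequence $b_0, b_1, \dotsc, b_{2N}$ of computational-basis values of $\regQ$ --- with $b_0$ uniform on $\Binary$ (the totally mixed initial $\regQ$) and each $b_j$ obtained from $b_{j-1}$ by the above flip rule, independently across rounds --- is a Markov chain, and after round~$j$ the counter holds $\bigl( r + \sum_{i=1}^j b_i \bigr) \bmod 2N$, where $r$ is uniform on $\{0, \dotsc, N-1\}$ since only the most significant qubit of $\regC$ is clean. (This is the standard equivalence of a computational-basis measurement with a $\CNOT$ into a discarded totally mixed ancilla, as already noted for the simplified procedure.)

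The second step is to read off the accepting condition in this classical picture and find two favourable events. With $S := \sum_{j=1}^{2N} b_j$, the procedure accepts iff the most significant bit of $(r + S) \bmod 2N$ is $0$, i.e.\ iff $(r + S) \bmod 2N \in \{0, \dotsc, N-1\}$. I would then simply observe the two \emph{disjoint} events $S = 0$ and $S = 2N$: in the first the final counter equals $r \in \{0, \dotsc, N-1\}$, and in the second it equals $(r + 2N) \bmod 2N = r \in \{0, \dotsc, N-1\}$ as well --- the point being that there are exactly $2N$ rounds and the counter is reduced modulo $2N$, so the ``all-ones'' trajectory wraps around to the same value as the ``all-zeros'' trajectory. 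Hence $\op{p_\acc}\bigl(R_1^{(N)}, 1\bigr) \geq \prob{S = 0} + \prob{S = 2N}$.

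The final step is a one-line computation from the Markov chain, conditioning on $b_0$. For $S = 0$ one needs $b_1 = \dotsb = b_{2N} = 0$: from $b_0 = 0$ all $2N$ transitions are ``stay'', contributing $q^{2N}$; from $b_0 = 1$ one has a ``flip'' followed by $2N - 1$ ``stays'', contributing $(1-q) q^{2N-1}$; so $\prob{S = 0} = \tfrac12 q^{2N} + \tfrac12 (1-q) q^{2N-1} = \tfrac12 q^{2N-1}$. Symmetrically $\prob{S = 2N} = \tfrac12 (1-q) q^{2N-1} + \tfrac12 q^{2N} = \tfrac12 q^{2N-1}$. Adding, $\op{p_\acc}\bigl(R_1^{(N)}, 1\bigr) \geq q^{2N-1} = (\op{p_\acc}(Q,1))^{2N-1}$, which is the claim.

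I do not anticipate a real obstacle. The only points requiring care are (i) justifying the passage to the classical Markov chain --- in particular that the flip probability is $q$ regardless of the current content of $\regQ$, which is exactly what the identities from the \textsc{Randomness Amplification Procedure} supply, together with the deferred-measurement equivalence for the $\CNOT$-into-discarded-ancilla step; and (ii) the modular bookkeeping showing that both $S = 0$ and $S = 2N$ leave the counter in the accepting half $\{0, \dotsc, N-1\}$. Everything else is elementary arithmetic.
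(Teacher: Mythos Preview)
Your proposal is correct and follows essentially the same approach as the paper's own proof: both reduce to the same classical Markov chain on the content of $\regQ$, identify the two disjoint events ``counter never incremented'' ($S=0$) and ``counter incremented all $2N$ times'' ($S=2N$), compute each to have probability $\tfrac{1}{2}q^{2N-1}$ by conditioning on the initial content of $\regQ$, and sum. Your presentation spells out the Markov-chain structure and the modular bookkeeping a bit more explicitly, but the substance is identical.
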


\vspace{-\baselineskip} 
\begin{proof}
  As in Subsection~\ref{Subsection: Randomness Amplification Procedure},
  for each $j$ in $\Binary$,
  let $\Pi_j$ be the projection operator acting over $w$~qubits defined by
  ${\Pi_j = \ketbra{j} \tensor I^{\tensor (w - 1)}}$,
  and let $\rho_j$ be the quantum state of $w$~qubits defined by
  ${\rho_j = \ketbra{j} \tensor \bigl( \frac{I}{2} \bigr)^{\tensor (w - 1)}}$.
  The acceptance probability~${p_\acc(Q, 1)}$
  of the original one-clean-qubit computation induced by $Q$
  is given by
  \[
    \op{p_\acc}(Q, 1) = \tr \Pi_0 Q \rho_0 \conjugate{Q},
  \]
  and it holds that
  \[
    \tr \Pi_1 Q \rho_1 \conjugate{Q} = \op{p_\acc}(Q, 1).
  \]
  This implies that,
  for each repetition round during Step~2,
  the counter value stored in $\regC$ is increased by one with probability~${1 - p_\acc(Q, 1)}$
  if the content of $\regQ$ was $0$ when entering Step~2.1,
  while it is increased by one with probability~${p_\acc(Q, 1)}$
  if the content of $\regQ$ was $1$ when entering Step~2.1.

  Notice that, at the $j$th repetition round in Step~2 for ${j \geq 2}$,
  the content of $\regQ$ is $1$ when entering Step~2.1
  if and only if the previous repetition round has increased the counter value in $\regC$.
  Hence, taking into account that the content of $\regQ$ is initially $0$ or $1$ with equal probability,
  after all the repetition rounds of Step~2,
  the counter value in $\regC$ is never increased with probability
  \[
    \frac{1}{2} (\op{p_\acc}(Q, 1))^{2N}
    +
    \frac{1}{2} (1 - \op{p_\acc}(Q, 1)) \, (\op{p_\acc}(Q, 1))^{2N - 1}
    =
    \frac{1}{2} (\op{p_\acc}(Q, 1))^{2N - 1},
  \]
  while it is increased by ${2N}$ with probability
  \[
    \frac{1}{2} (1 - \op{p_\acc}(Q, 1)) \, (\op{p_\acc}(Q, 1))^{2N - 1}
    +
    \frac{1}{2} (\op{p_\acc}(Q, 1))^{2N}
    =
    \frac{1}{2} (\op{p_\acc}(Q, 1))^{2N - 1}.
  \]
  The content of $\regC$ stationarily remains its initial value~${r \leq N - 1}$ for the first case,
  and it comes back to the initial value~${r \leq N - 1}$ for the second case.
  As Step~3 results in acceptance at least in these two cases,
  it follows that
  \[
    \op{p_\acc} \bigl( R_1^{(N)}, 1 \bigr) 
    \geq
    \frac{1}{2} (\op{p_\acc}(Q, 1))^{2N - 1}
    +
    \frac{1}{2} (\op{p_\acc}(Q, 1))^{2N - 1}
    =
    (\op{p_\acc}(Q, 1))^{2N - 1},
  \]
  as claimed.
\end{proof}

Proposition~\ref{Proposition: lower bound of acceptance probability of One-Clean-Qubit Stability Checking Procedure}
in particular implies that,
if the original acceptance probability~${p_\acc(Q, 1)}$ is one,
the acceptance probability~${p_\acc \bigl( R_1^{(N)}, 1 \bigr)}$ is also one
for the circuit~$R_1^{(N)}$
corresponding to the \textsc{One-Clean-Qubit Stability Checking Procedure} induced by $Q$ and $N$.

The next proposition provides an upper bound of the acceptance probability
of the one-clean-qubit computation induced by the quantum circuit
resulting from the \textsc{One-Clean-Qubit Stability Checking Procedure},
assuming that the original acceptance probability~${p_\acc(Q, 1)}$ is close to ${1/2}$.

\begin{proposition}
  For any quantum circuit~$Q$ and any positive integer~$N$
  that is a power of two and at least ${2^6 = 64}$,
  let $R_1^{(N)}$ be the quantum circuit
  corresponding to the \textsc{One-Clean-Qubit Stability Checking Procedure} induced by $Q$ and $N$.
  If the acceptance probability~${p_\acc(Q, 1)}$
  of the one-clean-qubit computation induced by $Q$
  satisfies that
  ${\frac{1}{2} - \varepsilon \leq p_\acc(Q, 1) \leq \frac{1}{2} + \varepsilon}$
  for some $\varepsilon$ in ${\bigl[ 0, \frac{1}{8} \bigr]}$
  such that ${3 N^{- \frac{1}{3}} + 4 \varepsilon \leq 1}$,
  it holds for the acceptance probability~${p_\acc \bigl( R_1^{(N)}, 1 \bigr)}$
  of the one-clean-qubit computation induced by $R_1^{(N)}$
  that
  \[
    \op{p_\acc} \bigl( R_1^{(N)}, 1 \bigr)
    <
    3 N^{- \frac{1}{3}} + 4 \varepsilon.
  \]
  \label{Proposition: upper bound of acceptance probability of One-Clean-Qubit Stability Checking Procedure}
\end{proposition}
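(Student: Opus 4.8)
The plan is to reduce the statement to a second-moment estimate for the total counter increment, whose (weak) correlations can be computed in closed form with the coin-flipping lemma (Lemma~\ref{Lemma: coin-flipping lemma}). First I would express $p_\acc\bigl(R_1^{(N)}, 1\bigr)$ as a moment of that increment. Write $p = p_\acc(Q, 1)$, and, exactly as in the proof of Proposition~\ref{Proposition: lower bound of acceptance probability of One-Clean-Qubit Stability Checking Procedure} (with $\Pi_j = \ketbra{j} \tensor I^{\tensor (w-1)}$ and $\rho_j = \ketbra{j} \tensor \bigl( \tfrac{I}{2} \bigr)^{\tensor (w-1)}$), let $b_j \in \Binary$ denote the content of $\regQ$ when entering Step~2.1 of the $j$th round. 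Since $\tr \Pi_1 Q \rho_0 \conjugate{Q} = \tr \Pi_0 Q \rho_1 \conjugate{Q} = 1 - p$, the bit in $\regQ$ flips with probability exactly $1 - p$ at each round independently of the past, and the value measured at round $j$ (the one triggering an increment) equals $b_{j+1}$. Hence $b_{j+1} = b_j \oplus F_j$ for mutually independent bits $F_1, \dots, F_{2N}$ with $\prob{F_j = 1} = 1 - p$, together with an independent unbiased bit $b_1$, and the total increment is $S = \sum_{j=1}^{2N} X_j$ with $X_j = b_1 \oplus F_1 \oplus \dots \oplus F_j \in \Binary$. Since the initial counter value $r$ is uniform over $\{0, \dots, N-1\}$ and the counter is reduced modulo $2N$, the computation accepts iff $(r + S) \bmod 2N \in \{0, \dots, N-1\}$; averaging over $r$ and splitting on the value of $S \in \{0, \dots, 2N\}$ (according to whether the window $\{r + S : 0 \le r \le N-1\}$ wraps around modulo $2N$) shows that, conditioned on any value of $S$, acceptance occurs with probability exactly $\bigabs{S - N}/N$, so $p_\acc\bigl(R_1^{(N)}, 1\bigr) = \frac{1}{N}\bigexpect{\bigabs{S - N}}$.

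Next I would estimate the first two moments of $S$. Because $b_1$ is unbiased and independent of the $F_k$'s, every $X_j$ is unbiased, so $\expect{S} = N$; and for $i < j$, writing $X_j = X_i \oplus (F_{i+1} \oplus \dots \oplus F_j)$ with the two terms independent, the number of ones among $F_{i+1}, \dots, F_j$ is $B(j - i, 1 - p)$-distributed, so Lemma~\ref{Lemma: coin-flipping lemma} gives $\expect{X_i X_j} = \tfrac14 + \tfrac14 (2p-1)^{j-i}$, i.e.\ $\operatorname{Cov}(X_i, X_j) = \tfrac14 (2p-1)^{j-i}$. Summing the resulting geometric series and using $\abs{2p-1} = 2\,\abs{p - \tfrac12} \le 2\varepsilon \le \tfrac14$ yields $\var{S} \le \tfrac{N}{2} + N\tfrac{\abs{2p-1}}{1 - \abs{2p-1}} \le \tfrac{N}{2} + \tfrac{8N\varepsilon}{3}$. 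Finally, since $\bigabs{S - N}/N \le 1$ always and is at most $N^{-1/3}$ on the event $\{\bigabs{S - N} < N^{2/3}\}$, Chebyshev's inequality with threshold $N^{2/3}$ gives
\[
  p_\acc\bigl(R_1^{(N)}, 1\bigr)
  \;\le\;
  N^{-1/3} + \bigprob{\bigabs{S - N} \ge N^{2/3}}
  \;\le\;
  N^{-1/3} + \frac{\var{S}}{N^{4/3}},
\]
and plugging in the variance bound together with $N \ge 64$ (so $N^{-1/3} \le \tfrac14$) yields $p_\acc\bigl(R_1^{(N)}, 1\bigr) \le \tfrac{3}{2} N^{-1/3} + \tfrac{2}{3}\varepsilon < 3 N^{-1/3} + 4\varepsilon$.

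The main obstacle is the moment computation: the $X_j$ are genuinely dependent (consecutive ones coincide with probability $p$, which is only \emph{approximately} $\tfrac12$), so Hoeffding's inequality (Lemma~\ref{Lemma: Hoeffding bounds}) cannot be applied to $S$ directly. The observation that unlocks the argument is that each $X_j$ is a running exclusive-or of independent bits: this makes the pairwise correlations both explicitly computable via the coin-flipping lemma and geometrically small, so that $\var{S}$ differs from the independent-case value $\tfrac{N}{2}$ only by an $O(N\varepsilon)$ term — and that residual term is exactly what produces the additive $\varepsilon$-part of the final bound, while settling for a second-moment (rather than exponential) tail estimate is what limits the convergence rate to $N^{-1/3}$. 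The remaining work (the moment identity) is routine bookkeeping, the only subtlety being the modular wraparound of the counter; I note in passing that $p_\acc\bigl(R_1^{(N)}, 1\bigr) = \tfrac{1}{N}\expect{\abs{S - N}} \le \tfrac{1}{N}\sqrt{\var{S}}$ in fact already gives an $O(N^{-1/2})$ bound, but the weaker $N^{-1/3}$ estimate stated here is all that is needed in the sequel.
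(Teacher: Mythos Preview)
Your proof is correct and takes a genuinely different route from the paper's. The paper never writes down the identity $p_\acc\bigl(R_1^{(N)},1\bigr)=\tfrac{1}{N}\,\mathbb{E}\bigl[\abs{S-N}\bigr]$; instead it fixes a threshold~$\delta$, splits the initial counter values~$r$ into a ``bad'' set of size about $4(\varepsilon+\delta)N$ (on which it simply bounds the conditional acceptance probability by~$1$) and a ``good'' set (on which the counter lands in $[N,2N-1]$ except with probability $<2e^{-4\delta^2 N}$), and then applies the Hoeffding bound of Lemma~\ref{Lemma: Hoeffding bounds} to the total increment, optimizing with $\delta=\tfrac{1}{2}N^{-1/3}$. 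Your objection that Lemma~\ref{Lemma: Hoeffding bounds} is stated only for the binomial case is well taken: the paper is implicitly invoking a stochastic-domination (or Azuma-type) step, since conditionally on the past each increment has success probability in $[\tfrac{1}{2}-\varepsilon,\tfrac{1}{2}+\varepsilon]$, which is enough to sandwich $S$ between two genuine binomials. Your second-moment computation via Lemma~\ref{Lemma: coin-flipping lemma} sidesteps that issue entirely and is fully self-contained; it also makes transparent why the $\varepsilon$-term arises (as the residual geometric correlation) and, as you note, actually gives the sharper rate $O(N^{-1/2})$ via $\mathbb{E}\bigl[\abs{S-N}\bigr]\le\sqrt{\operatorname{Var}(S)}$. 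The paper's approach, by contrast, keeps an exponential tail bound in hand, which is what later gets exploited in the two-clean-qubit variant (Proposition~\ref{Proposition: upper bound of acceptance probability of Two-Clean-Qubit Stability Checking Procedure}) to reach exponentially small soundness.
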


\begin{proof}
  As before,
  for each $j$ in $\Binary$,
  let $\Pi_j$ be the projection operator acting over $w$~qubits defined by
  ${\Pi_j = \ketbra{j} \tensor I^{\tensor (w - 1)}}$,
  and let $\rho_j$ be the quantum state of $w$~qubits defined by
  ${\rho_j = \ketbra{j} \tensor \bigl( \frac{I}{2} \bigr)^{\tensor (w - 1)}}$.
  The acceptance probability~${p_\acc(Q, 1)}$
  of the original one-clean-qubit computation induced by $Q$
  is given by
  \[
    \op{p_\acc}(Q, 1) = \tr \Pi_0 Q \rho_0 \conjugate{Q},
  \]
  and it holds that
  \[
    \tr \Pi_1 Q \rho_1 \conjugate{Q} = \op{p_\acc}(Q, 1).
  \]

  Notice that,
  for each repetition round during Step~2,
  the counter value in $\regC$ is increased by one with probability
  at least~${\min \bigl\{ p_\acc(Q, 1), 1 - p_\acc(Q, 1) \bigr\} \geq \frac{1}{2} - \varepsilon}$
  and at most~${\max \bigl\{ p_\acc(Q, 1), 1 - p_\acc(Q, 1) \bigr\} \leq \frac{1}{2} + \varepsilon}$
  regardless of the content of $\regQ$ being $0$ or $1$ when entering Step~2.1.
  Hence, from the Hoeffding bound (Lemma~\ref{Lemma: Hoeffding bounds}),
  the probability that the total increment of the counter value in $\regC$ is at most
  ${(1 - 2 \varepsilon - 2 \delta) N}$
  after all the ${2N}$~repetition rounds of Step~2
  is less than $e^{- 4 \delta^2 N}$,
  for any $\delta$ in ${\bigl[ 0, \frac{1}{2} - \varepsilon \bigr]}$.
  Similarly, the probability that the total increment of the counter value in $\regC$ is at least
  ${(1 + 2 \varepsilon + 2 \delta) N}$
  after all the ${2N}$~repetition rounds of Step~2
  is less than~$e^{- 4 \delta^2 N}$ also,
  for any $\delta$ in ${\bigl[ 0, \frac{1}{2} - \varepsilon \bigr]}$.
  It follows that,
  when $\delta$ is in ${\bigl( \frac{1}{2 \sqrt{N}}, \frac{1}{4} - \varepsilon + \frac{1}{4N} \bigr]}$
  and the initial counter value~$r$ satisfies that
  \[
    2 (\varepsilon + \delta) N - 1 \leq r \leq N - 2 (\varepsilon + \delta) N,
  \]
  after all the ${2N}$~repetition rounds of Step~2,
  the probability that the counter value in $\regC$ is in the interval~${[N, 2N - 1]}$
  is more than~${1 - 2 e^{- 4 \delta^2 N} > 1 - 2^{- 4 \delta^2 N + 1}}$,
  and thus, the acceptance probability at Step~3 is less than
  \[
    \Bigl[ 4 (\varepsilon + \delta) - \frac{2}{N} \Bigr] \cdot 1
    +
    \Bigl[ 1 - 4 (\varepsilon + \delta) + \frac{2}{N} \Bigr] \cdot 2^{- 4 \delta^2 N + 1}
    <
    4 (\varepsilon + \delta)
    +
    2^{- 4 \delta^2 N + 1}.
  \]
  By taking $\delta$ to be ${\frac{1}{2} N^{- \frac{1}{3}}}$
  (which is in ${\bigl( \frac{1}{2 \sqrt{N}}, \frac{1}{4} - \varepsilon + \frac{1}{4N} \bigr]}$
  as $\varepsilon$ is at most ${1/8}$ and $N$ is at least ${2^6 = 64}$),
  and using the fact that ${x \leq 2^{x - 1}}$ holds for any ${x \geq 2}$,
  it follows that
  \[
    \op{p_\acc} \bigl( R_1^{(N)}, 1 \bigr)
    <
    2 N^{- \frac{1}{3}} + 2^{- N^{\frac{1}{3}} + 1} + 4 \varepsilon
    \leq
    3 N^{- \frac{1}{3}} + 4 \varepsilon,
  \]
  which completes the proof.
\end{proof}


\subsubsection{Stability Checking Using Two Clean Qubits}
\label{Subsubsection: Two-Clean-Qubit Stability Checking Procedure}

As can be seen in the proof of Proposition~\ref{Proposition: upper bound of acceptance probability of One-Clean-Qubit Stability Checking Procedure},
one drawback of the \textsc{One-Clean-Qubit Stability Checking Procedure}
when analyzing the upper bound of its acceptance probability
is that there are some ``bad'' initial counter values
with which the procedure is forced to accept with unallowably high probability
even if the acceptance probability~${p_\acc(Q, 1)}$ of the underlying circuit~$Q$
was sufficiently close to ${1/2}$.
This is the essential reason why only a polynomially small upper bound can be proved
on the acceptance probability of the \textsc{One-Clean-Qubit Stability Checking Procedure}
in Proposition~\ref{Proposition: upper bound of acceptance probability of One-Clean-Qubit Stability Checking Procedure}.
As there are only polynomially many number of possible initial counter values,
even just one ``bad'' initial value is unacceptable
to go beyond polynomially small upper bounds.
The authors do not know how to get rid of this barrier with the use of just one clean qubit.

In contrast, if two clean qubits are available,
one can easily modify the procedure so that it has no ``bad'' initial counter value.
The idea is very simple.
This time, the register~$\regC$ uses ${\log N + 3}$~qubits
so that the counter takes values in ${\Integers_{8N} = \{0, \dotsc, 8N - 1\}}$.
The first two qubits of $\regC$ are supposed to be initially in state~$\ket{0}$,
which implies that the initial counter value is picked uniformly from the set~${\{0, \dotsc, 2N - 1\}}$.
The point is that one can increase the counter value by $N$ before the repetition starts
so that the actual initial value of the counter is in the set~${\{N, \dotsc, 3N - 1\}}$.
If the acceptance probability~${p_\acc(Q, 1)}$ of the underlying circuit~$Q$
was in the interval~${\bigl[ \frac{1}{2} - \varepsilon, \frac{1}{2} + \varepsilon \bigr]}$
for some sufficiently small $\varepsilon$,
the expected value of the counter after the repetition
must be in the interval~${[ (5 - 8 \varepsilon) N, (7 + 8 \varepsilon) N - 1 ]}$.
Hence, if $\varepsilon$ is in ${\bigl[ 0, \frac{1}{16} \bigr]}$, for instance,
the probability is exponentially small for the event
that the final counter value is not in the interval~${[4N, 8N - 1]}$.
This leads to the \textsc{Two-Clean-Qubit Stability Checking Procedure},
whose construction is summarized in Figure~\ref{Figure: Two-Clean-Qubit Stability Checking Procedure}.

\begin{figure}[t!]
  \begin{algorithm*}{\textsc{Two-Clean-Qubit Stability Checking Procedure}}
    \begin{step}
    \item
      Given a positive integer~$N$ that is a power of two,
      let ${l = \log N + 3}$.
      Prepare an $l$-qubit register~$\regC$,
      a single-qubit register~$\regQ$,
      and each ${(w - 1)}$-qubit register~$\regR_j$ for each $j$ in ${\{1, \dotsc, 8N\}}$.
      For each $j$ in ${\{1, \dotsc, l\}}$,
      let $\regC^{(j)}$ denote the single-qubit quantum register
      corresponding to the $j$th qubit of $\regC$.
      The qubits in $\regC^{(1)}$ and $\regC^{(2)}$ are supposed to be initially in state~$\ket{0}$,
      while the qubit in $\regC^{(j)}$ for each $j$ in ${\{3, \dotsc, l\}}$,
      the qubit in $\regQ$,
      and all the qubits in $\regR_{j'}$ for each $j'$ in ${\{1, \dotsc, 8N\}}$
      are supposed to be initially in the totally mixed state~${I/2}$.\\
      Prepare a single-qubit register~$\regX_j$ for each $j$ in ${\{1, \dotsc, 8N\}}$,
      where the qubit in each $\regX_j$ is supposed to be initially in the totally mixed state~${I/2}$.
    \item
      For ${j = 1}$ to $N$,
      apply the unitary transformation~$\INCR{8N}$ to $\regC$.
    \item
      For ${j = 1}$ to ${8N}$, perform the following:
      \begin{step}
      \item
        Apply $Q$ to ${(\regQ, \regR_j)}$.
      \item
        Apply the $\CNOT$ transformation to ${(\regQ, \regX_j)}$
        with the qubit in $\regQ$ being the control.
        Apply the controlled-$\INCR{8N}$ transformation~${\controlled \bigl( \INCR{8N} \bigr)}$
        to ${(\regQ, \regC)}$
        with the qubit in $\regQ$ being the control.
      \end{step}
    \item
      Measure the qubit in $\regC^{(1)}$ in the computational basis.
      Accept if this results in $\ket{0}$, and reject otherwise.
    \end{step}
  \end{algorithm*}
  \caption{The \textsc{Two-Clean-Qubit Stability Checking Procedure}.}
  \label{Figure: Two-Clean-Qubit Stability Checking Procedure}
\end{figure}

First, the following lower bound holds for the acceptance probability of the two-clean-qubit computation
induced by the quantum circuit resulting from the \textsc{Two-Clean-Qubit Stability Checking Procedure}.
In particular,
if the original acceptance probability~${p_\acc(Q, 1)}$ is one,
the acceptance probability~${p_\acc \bigl( R_2^{(N)}, 2 \bigr)}$ is also one
for the circuit~$R_2^{(N)}$
corresponding to the \textsc{Two-Clean-Qubit Stability Checking Procedure} induced by $Q$ and $N$.

\begin{proposition}
  For any quantum circuit~$Q$ and any positive integer~$N$ that is a power of two,
  let $R_2^{(N)}$ be the quantum circuit
  corresponding to the \textsc{Two-Clean-Qubit Stability Checking Procedure} induced by $Q$ and $N$.
  For the acceptance probability~${p_\acc(Q, 1)}$
  of the one-clean-qubit computation induced by $Q$
  and the acceptance probability~${p_\acc \bigl( R_2^{(N)}, 2 \bigr)}$
  of the two-clean-qubit computation induced by $R_2^{(N)}$,
  it holds that
  \[
    \op{p_\acc} \bigl( R_2^{(N)}, 2 \bigr)
    \geq
    (\op{p_\acc}(Q, 1))^{8N - 1}.
  \]
  \label{Proposition: lower bound of acceptance probability of Two-Clean-Qubit Stability Checking Procedure}
\end{proposition}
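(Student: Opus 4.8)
The plan is to adapt the proof of Proposition~\ref{Proposition: lower bound of acceptance probability of One-Clean-Qubit Stability Checking Procedure} almost verbatim; the only genuinely new point is that the pre-increment by $N$ at Step~2 of Figure~\ref{Figure: Two-Clean-Qubit Stability Checking Procedure}, together with the modulus $8N$, places the two ``frozen'' outcomes inside the accepting region. Write $p = p_\acc(Q, 1)$. As before, put $\Pi_j = \ketbra{j} \tensor I^{\tensor (w - 1)}$ and $\rho_j = \ketbra{j} \tensor \bigl( \frac{I}{2} \bigr)^{\tensor (w - 1)}$ for $j$ in $\Binary$, so that $p = \tr \Pi_0 Q \rho_0 \conjugate{Q}$, and by the same trace identity based on $\Pi_0 + \Pi_1 = I^{\tensor w}$ one gets $\tr \Pi_1 Q \rho_1 \conjugate{Q} = p$. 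This gives the per-round picture: at each round of Step~3, the counter in $\regC$ is incremented with probability $1 - p$ if $\regQ$ holds $0$ on entry to Step~3.1 and with probability $p$ if $\regQ$ holds $1$; and the $\CNOT$ at Step~3.2 (which simulates measuring $\regQ$) forces the content of $\regQ$ entering the next round to equal $1$ exactly when the current round incremented the counter. The extra clean qubit $\regC^{(2)}$ plays no role in this part of the analysis, since Step~3 touches $\regC$ only through the controlled increment.

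Next I would isolate the two events in which the counter is ``frozen'' in the sense of returning to its post-Step-2 value modulo $8N$: the event that it is never incremented across the $8N$ rounds, and the event that it is incremented at every round. Since $\regQ$ starts in the totally mixed state, a two-line case analysis on its initial bit gives that each of these events has probability $\frac{1}{2} p^{8N} + \frac{1}{2}(1 - p) p^{8N - 1} = \frac{1}{2} p^{8N - 1}$ (the roles of $p$ and $1 - p$ in the two initial-bit cases simply swap, so the value is the same in both events). In the first event the counter retains the value $r + N$, where $r \in \{0, \dotsc, 2N - 1\}$ was chosen uniformly at random (this is exactly what initializing $\regC^{(1)}$ and $\regC^{(2)}$ to $\ket{0}$ amounts to), so the value lies in $\{N, \dotsc, 3N - 1\}$; in the second event the final value is $r + N + 8N \equiv r + N \pmod{8N}$, again in $\{N, \dotsc, 3N - 1\}$. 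Since $\{N, \dotsc, 3N - 1\} \subseteq \{0, \dotsc, 4N - 1\}$, the most significant bit $\regC^{(1)}$ is $0$ in both events, so Step~4 accepts. Summing the two probabilities yields $p_\acc(R_2^{(N)}, 2) \geq p^{8N - 1}$.

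I do not anticipate a real obstacle: this is a direct transcription of the one-clean-qubit argument, and the counter arithmetic modulo $8N$ is the only place demanding care. The mild slack here --- the initial support $\{N, \dotsc, 3N - 1\}$ sits strictly inside the accepting half $\{0, \dotsc, 4N - 1\}$ --- is irrelevant for the lower bound (it is the companion upper-bound proposition that will exploit it), so the present proof needs nothing beyond checking the containment above and the per-round increment behaviour inherited from the analysis of $Q$, $\regR_j$, and $\regX_j$.
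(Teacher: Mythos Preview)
Your proposal is correct and follows exactly the approach the paper intends: the paper explicitly states that the proof of this proposition is essentially the same as that of Proposition~\ref{Proposition: lower bound of acceptance probability of One-Clean-Qubit Stability Checking Procedure} and omits it, and your adaptation (replacing $2N$ by $8N$, accounting for the pre-increment by $N$ and the modulus $8N$, and checking that $\{N,\dotsc,3N-1\}\subseteq\{0,\dotsc,4N-1\}$ so that $\regC^{(1)}$ reads $0$) is precisely that argument.
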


The proof of Proposition~\ref{Proposition: lower bound of acceptance probability of Two-Clean-Qubit Stability Checking Procedure}
is essentially the same as that of Proposition~\ref{Proposition: lower bound of acceptance probability of One-Clean-Qubit Stability Checking Procedure},
and is omitted.

The next proposition provides an upper bound of the acceptance probability
of the two-clean-qubit computation induced by the quantum circuit
resulting from the \textsc{Two-Clean-Qubit Stability Checking Procedure},
assuming that the original acceptance probability~${p_\acc(Q, 1)}$ is close to ${1/2}$.

\begin{proposition}
  For any quantum circuit~$Q$ and any positive integer~$N$
  that is a power of two and at least ${2^4 = 16}$,
  let $R_2^{(N)}$ be the quantum circuit
  corresponding to the \textsc{Two-Clean-Qubit Stability Checking Procedure} induced by $Q$ and $N$.
  If the acceptance probability~${p_\acc(Q, 1)}$
  of the one-clean-qubit computation induced by $Q$
  satisfies that
  ${\frac{1}{2} - \varepsilon \leq p_\acc(Q, 1) \leq \frac{1}{2} + \varepsilon}$
  for some $\varepsilon$ in ${\bigl[ 0, \frac{1}{16} \bigr]}$,
  it holds for the acceptance probability~${p_\acc \bigl( R_2^{(N)}, 2 \bigr)}$
  of the two-clean-qubit computation induced by $R_2^{(N)}$
  that
  \[
    \op{p_\acc} \bigl( R_2^{(N)}, 2 \bigr)
    <
    2^{- \frac{N}{16} + 1}.
  \]
  \label{Proposition: upper bound of acceptance probability of Two-Clean-Qubit Stability Checking Procedure}
\end{proposition}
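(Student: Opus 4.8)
The plan is to follow closely the proof of the one-clean-qubit version (Proposition~\ref{Proposition: upper bound of acceptance probability of One-Clean-Qubit Stability Checking Procedure}), the only real change being that the second clean qubit lets us discard the ``bad'' initial counter values entirely, so that a \emph{single} two-sided Hoeffding estimate suffices instead of having to average over initial values. As in Subsection~\ref{Subsection: Randomness Amplification Procedure}, I would write $\Pi_j = \ketbra{j} \tensor I^{\tensor(w-1)}$ and $\rho_j = \ketbra{j} \tensor \bigl(\tfrac{I}{2}\bigr)^{\tensor(w-1)}$ for $j \in \Binary$, so that $\tr \Pi_0 Q \rho_0 \conjugate{Q} = \op{p_\acc}(Q,1)$, hence $\tr \Pi_1 Q \rho_1 \conjugate{Q} = \op{p_\acc}(Q,1)$ and $\tr \Pi_1 Q \rho_0 \conjugate{Q} = 1 - \op{p_\acc}(Q,1)$. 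Consequently, at each of the $8N$ repetition rounds of Step~3, the measurement of $\regQ$ yields $\ket{1}$ (and the counter in $\regC$ is incremented) with probability lying in $\bigl[\tfrac{1}{2} - \varepsilon, \tfrac{1}{2} + \varepsilon\bigr]$, regardless of whether $\regQ$ holds $0$ or $1$ when entering Step~3.1.

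Next I would track the counter value. The two clean qubits $\regC^{(1)}, \regC^{(2)}$ force the initial value of $\regC$ to be drawn uniformly from $\{0, \dots, 2N-1\}$, and Step~2 adds $N$ with no wraparound (since $3N - 1 < 8N$), so the value of $\regC$ entering Step~3 is some $r \in \{N, \dots, 3N-1\}$. Letting $X \in \{0, \dots, 8N\}$ be the total increment accumulated during Step~3, the final value of $\regC$ is $(r + X) \bmod 8N$, and the procedure accepts iff $\regC^{(1)} = \ket{0}$, i.e., iff $(r + X) \bmod 8N \in [0, 4N - 1]$.

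The combinatorial core is a short case analysis: if $r + X < 8N$, then acceptance requires $r + X \leq 4N - 1$, whence $X \leq 3N - 1$; if $r + X \geq 8N$ (at most one wraparound, since $r + X \leq 11N - 1$), the final value is $r + X - 8N \in [0, 3N - 1]$, which always accepts, but this forces $X \geq 8N - r \geq 5N + 1$. Thus acceptance implies $X \leq 3N - 1$ or $X \geq 5N + 1$, so $\op{p_\acc}\bigl(R_2^{(N)}, 2\bigr) \leq \prob{X \leq 3N - 1} + \prob{X \geq 5N + 1}$. Since each per-round increment probability lies in $\bigl[\tfrac{1}{2} - \varepsilon, \tfrac{1}{2} + \varepsilon\bigr]$, the mean of $X$ lies between $(4 - 8\varepsilon)N$ and $(4 + 8\varepsilon)N$, hence between $\tfrac{7}{2}N$ and $\tfrac{9}{2}N$ when $\varepsilon \leq \tfrac{1}{16}$; so $3N$ (resp.\ $5N$) is at least $\tfrac{1}{16} \cdot 8N$ below (resp.\ above) that mean, and the Hoeffding bound (Lemma~\ref{Lemma: Hoeffding bounds}) with $n = 8N$ and deviation fraction $\delta \geq \tfrac{1}{16}$ gives $\prob{X \leq 3N - 1} \leq \prob{X \leq 3N} < e^{-16\delta^2 N} \leq e^{-N/16}$ and likewise $\prob{X \geq 5N + 1} \leq \prob{X \geq 5N} < e^{-N/16}$. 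Therefore $\op{p_\acc}\bigl(R_2^{(N)}, 2\bigr) < 2 e^{-N/16} < 2^{-N/16 + 1}$, using $e > 2$.

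The delicate point I expect is the case analysis: getting the bookkeeping of the single wraparound and the ``$4N$-gap'' exactly right, so that discarding the two tail sets $\{0, \dots, N-1\}$ and $\{3N, \dots, 4N-1\}$ of $\{0, \dots, 4N-1\}$ really does translate into the clean bounds $X \leq 3N - 1$ and $X \geq 5N + 1$, and then choosing the Hoeffding deviation so that the exponent comes out to exactly $N/16$. A secondary nicety, handled exactly as in the one-clean-qubit proof, is that the per-round increments are not literally a sequence of independent trials (whether $\regQ$ holds $0$ or $1$ at a round depends on the previous round's measurement); as there, what matters is only that each increment probability, conditioned on the history, lies in $\bigl[\tfrac{1}{2} - \varepsilon, \tfrac{1}{2} + \varepsilon\bigr]$, which is enough to license the Hoeffding-type concentration of $X$.
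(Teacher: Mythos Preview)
Your proof is correct and follows essentially the same approach as the paper's own proof: both reduce acceptance to a two-sided tail event for the total increment $X$ over the $8N$ rounds, invoke the Hoeffding bound (Lemma~\ref{Lemma: Hoeffding bounds}) with the per-round increment probability bounded in $\bigl[\tfrac{1}{2}-\varepsilon,\tfrac{1}{2}+\varepsilon\bigr]$, and set the deviation fraction to $\tfrac{1}{16}$ to obtain $2e^{-N/16}<2^{-N/16+1}$. Your explicit case split on whether $r+X$ wraps around modulo $8N$ is just a more concrete rendering of the paper's parametrized condition $\delta\in\bigl(\tfrac{1}{4\sqrt{N}},\tfrac{1}{8}-\varepsilon+\tfrac{1}{8N}\bigr]$ followed by the choice $\delta=\tfrac{1}{16}$.
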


\begin{proof}
  The proof is similar to that of Proposition~\ref{Proposition: upper bound of acceptance probability of One-Clean-Qubit Stability Checking Procedure}.
  As before,
  for each $j$ in $\Binary$,
  let $\Pi_j$ be the projection operator acting over $w$~qubits defined by
  ${\Pi_j = \ketbra{j} \tensor I^{\tensor (w - 1)}}$,
  and let $\rho_j$ be the quantum state of $w$~qubits defined by
  ${\rho_j = \ketbra{j} \tensor \bigl( \frac{I}{2} \bigr)^{\tensor (w - 1)}}$.
  The acceptance probability~${p_\acc(Q, 1)}$
  of the original one-clean-qubit computation induced by $Q$
  is given by
  \[
    \op{p_\acc}(Q, 1) = \tr \Pi_0 Q \rho_0 \conjugate{Q},
  \]
  and it holds that
  \[
    \tr \Pi_1 Q \rho_1 \conjugate{Q} = \op{p_\acc}(Q, 1).
  \]

  Notice that,
  for each repetition round during Step~3,
  the counter value in $\regC$ is increased by one with probability
  at least~${\min \bigl\{ p_\acc(Q, 1), 1 - p_\acc(Q, 1) \bigr\} \geq \frac{1}{2} - \varepsilon}$
  and at most~${\max \bigl\{ p_\acc(Q, 1), 1 - p_\acc(Q, 1) \bigr\} \leq \frac{1}{2} + \varepsilon}$
  regardless of the content of $\regQ$ being $0$ or $1$ when entering Step~3.1.
  Hence, from the Hoeffding bound (Lemma~\ref{Lemma: Hoeffding bounds}),
  the probability that the total increment of the counter value in $\regC$ is at most
  ${(4 - 8 \varepsilon - 8 \delta) N}$
  after all the ${8N}$~repetition rounds of Step~3
  is less than $e^{- 16 \delta^2 N}$,
  for any $\delta$ in ${\bigl[ 0, \frac{1}{2} - \varepsilon \bigr]}$.
  Similarly, the probability that the total increment of the counter value in $\regC$ is at least
  ${(4 + 8 \varepsilon + 8 \delta) N}$
  after all the ${8N}$~repetition rounds of Step~3
  is less than~$e^{- 16 \delta^2 N}$ also,
  for any $\delta$ in ${\bigl[ 0, \frac{1}{2} - \varepsilon \bigr]}$.
  As the counter value~$r$ in $\regC$ at the beginning of Step~3 satisfies that
  \[
    N \leq r \leq 3N - 1,
  \]
  when $\delta$ is in ${\bigl( \frac{1}{4 \sqrt{N}}, \frac{1}{8} - \varepsilon + \frac{1}{8N} \bigr]}$,
  it holds that,
  after all the ${8N}$~repetition rounds of Step~3,
  the probability that the counter value in $\regC$ is in the interval~${[4N, 8N - 1]}$
  is more than~${1 - 2 e^{- 16 \delta^2 N} > 1 - 2^{- 16 \delta^2 N + 1}}$,
  and thus, the acceptance probability at Step~4 is less than
  \[
    1 - \bigl( 1 - 2^{- 16 \delta^2 N + 1} \bigr)
    =
    2^{- 16 \delta^2 N + 1}.
  \]
  By taking $\delta$ to be $\frac{1}{16}$,
  it follows that
  \[
    \op{p_\acc} \bigl( R_2^{(N)}, 2 \bigr)
    <
    2^{- \frac{N}{16} + 1},
  \]
  which completes the proof.
\end{proof}


\section{Error Reduction for One-Sided-Error Cases}
\label{Section: error reduction for one-sided-error cases}

This section proves the error reduction results in the cases of one-sided bounded error.
First, Subsection~\ref{Subsection: error reduction for perfect-completeness cases}
treats the cases with perfect completeness,
namely,
Theorems~\ref{Theorem: error reduction for perfect-completeness case}~and~\ref{Theorem: co-RQlogP is in co-RQ1P}.
The cases with perfect soundness (Corollary~\ref{Corollary: error reduction for perfect-soundness cases})
then easily follow
from Theorems~\ref{Theorem: error reduction for perfect-completeness case}~and~\ref{Theorem: co-RQlogP is in co-RQ1P},
as will be found in Subsection~\ref{Subsection: error reduction for perfect-soundness cases}.


\subsection{Cases with Perfect Completeness}
\label{Subsection: error reduction for perfect-completeness cases}

\subsubsection{One-Clean-Qubit Case}

This subsection proves Theorem~\ref{Theorem: co-RQlogP is in co-RQ1P},
stating that any problem computable using logarithmically many clean qubits
with one-sided bounded error of perfect completeness
and soundness bounded away from one by an inverse-polynomial
is necessarily computable using only one clean qubit
with perfect completeness and polynomially small soundness error.

\begin{proof}[Proof of Theorem~\ref{Theorem: co-RQlogP is in co-RQ1P}]
  For any polynomial-time computable function~$\function{s}{\Nonnegative}{[0,1]}$
  satisfying ${1 - s \geq \frac{1}{q}}$ for some polynomially bounded function~$\function{q}{\Nonnegative}{\Natural}$,
  let ${A = (A_\yes, A_\no)}$ be a problem in ${\QlogP(1, s)}$.
  Then $A$ has a polynomial-time uniformly generated family~${\{Q_x\}_{x \in \Sigma^\ast}}$
  of quantum circuits
  such that, for every input~$x$,
  ${\op{p_\acc}(Q_x, \op{k}(\abs{x})) = 1}$ if $x$ is in $A_\yes$
  and
  ${\op{p_\acc}(Q_x, \op{k}(\abs{x})) \leq \op{s}(\abs{x})}$ if $x$ is in $A_\no$
  for some logarithmically bounded function~$\function{k}{\Nonnegative}{\Natural}$.
  For any polynomially bounded function~$\function{p}{\Nonnegative}{\Natural}$,
  the proof constructs a polynomial-time uniformly generated family of quantum circuits
  that puts $A$ in ${\QoneP \bigl( 1, \frac{1}{p} \bigr)}$.

  Fix an input~$x$.

  From the circuit~$Q_x$,
  one first constructs a quantum circuit~$R_x$
  according to the \textsc{One-Clean-Qubit Simulation Procedure}.
  By Proposition~\ref{Proposition: acceptance probability of One-Clean-Qubit Simulation Procedure},
  it holds that
  ${\op{p_\acc}(R_x, 1) = 1}$ if $x$ is in $A_\yes$,
  and
  \[
    1 - 2^{-\op{k}(\abs{x})}
    \leq
    \op{p_\acc}(R_x, 1)
    \leq
    1 - 2^{-\op{k}(\abs{x})} (1 - \op{s}(\abs{x}))
    \leq
    1 - \frac{1}{\op{r}_1(\abs{x})}
  \]
  if $x$ is in $A_\no$,
  where $\function{r_1}{\Nonnegative}{\Natural}$ is the polynomially bounded function
  defined by ${r_1 = 2^k q}$.

  From the circuit~$R_x$,
  one constructs a quantum circuit~$R'_x$
  according to the \textsc{Randomness Amplification Procedure}
  with the integer~${N = \op{r}_1(\abs{x}) \op{r}_2(\abs{x})}$
  for a polynomially bounded function~$\function{r_2}{\Nonnegative}{\Natural}$
  satisfying ${r_2 \geq p + 1}$.
  By Proposition~\ref{Proposition: acceptance probability of Randomness Amplification Procedure},
  it holds that
  ${\op{p_\acc}(R'_x, 1) = 1}$ if $x$ is in $A_\yes$,
  and
  \[
    \frac{1}{2}
    \leq
    \op{p_\acc}(R'_x, 1)
    \leq
    \frac{1}{2}
    +
    \frac{1}{2} \biggl( 1 - \frac{2}{\op{r}_1(\abs{x})} \biggr)^{\op{r}_1(\abs{x}) \op{r}_2(\abs{x})}
    <
    \frac{1}{2}
    +
    \frac{1}{2} e^{- 2 \op{r}_2(\abs{x})}
    <
    \frac{1}{2} + 2^{- 2 \op{r}_2(\abs{x})}
  \]
  if $x$ is in $A_\no$.
 
  Finally, from the circuit~$R'_x$,
  one constructs a quantum circuit~$R''_x$
  according to the \textsc{One-Clean-Qubit Stability Checking Procedure}
  with $N$ to be the smallest integer that is a power of two and at least ${(6 \op{p}(\abs{x}))^3}$
  (i.e., ${N = 2^{\ceilL{3 \log (6 \op{p}(\abs{x}))}} > 2^6 = 64}$).
  By Propositions~\ref{Proposition: lower bound of acceptance probability of One-Clean-Qubit Stability Checking Procedure}~and~\ref{Proposition: upper bound of acceptance probability of One-Clean-Qubit Stability Checking Procedure},
  it holds that
  ${\op{p_\acc}(R''_x, 1) = 1}$ if $x$ is in $A_\yes$,
  and
  \[
    \op{p_\acc}(R''_x, 1)
    \leq
    \frac{1}{2 \op{p}(\abs{x})}
    +
    4 \cdot 2^{- 2 \op{r}_2(\abs{x})}
    <
    \frac{1}{\op{p}(\abs{x})}
  \]
  if $x$ is in $A_\no$,
  where the last inequality follows from
  the facts that ${r_2 \geq p + 1}$
  and that the inequality~${2^n > n}$ holds.
  
  The claim follows
  with the polynomial-time uniformly generated family~${\{R''_x\}_{x \in \Sigma^\ast}}$ of quantum circuits.
\end{proof}

\subsubsection{Two-Clean-Qubit Case}

This subsection proves Theorem~\ref{Theorem: error reduction for perfect-completeness case},
stating that, using two clean qubits rather than one,
soundness can be made exponentially small.

\begin{proof}[Proof of Theorem~\ref{Theorem: error reduction for perfect-completeness case}]
  The proof is essentially the same as that of Theorem~\ref{Theorem: co-RQlogP is in co-RQ1P},
  except that the circuit~$R''_x$ is constructed
  according to the \textsc{Two-Clean-Qubit Stability Checking Procedure}
  rather than the \textsc{One-Clean-Qubit Stability Checking Procedure}.

  For any polynomial-time computable function~$\function{s}{\Nonnegative}{[0,1]}$
  satisfying ${1 - s \geq \frac{1}{q}}$ for some polynomially bounded function~$\function{q}{\Nonnegative}{\Natural}$,
  let ${A = (A_\yes, A_\no)}$ be a problem in ${\QlogP(1, s)}$.
  Then $A$ has a polynomial-time uniformly generated family~${\{Q_x\}_{x \in \Sigma^\ast}}$
  of quantum circuits
  such that, for every input~$x$,
  ${\op{p_\acc}(Q_x, \op{k}(\abs{x})) = 1}$ if $x$ is in $A_\yes$
  and
  ${\op{p_\acc}(Q_x, \op{k}(\abs{x})) \leq \op{s}(\abs{x})}$ if $x$ is in $A_\no$
  for some logarithmically bounded function~$\function{k}{\Nonnegative}{\Natural}$.
  For any polynomially bounded function~$\function{p}{\Nonnegative}{\Natural}$,
  the proof constructs a polynomial-time uniformly generated family of quantum circuits
  that puts $A$ in ${\QtwoP(1, 2^{-p})}$.

  Fix an input~$x$.

  From the circuit~$Q_x$,
  one first constructs a quantum circuit~$R_x$
  according to the \textsc{One-Clean-Qubit Simulation Procedure}.
  By Proposition~\ref{Proposition: acceptance probability of One-Clean-Qubit Simulation Procedure},
  it holds that
  ${\op{p_\acc}(R_x, 1) = 1}$ if $x$ is in $A_\yes$,
  and
  \[
    1 - 2^{-\op{k}(\abs{x})}
    \leq
    \op{p_\acc}(R_x, 1)
    \leq
    1 - 2^{-\op{k}(\abs{x})} (1 - \op{s}(\abs{x}))
    \leq
    1 - \frac{1}{\op{r}(\abs{x})}
  \]
  if $x$ is in $A_\no$,
  where $\function{r}{\Nonnegative}{\Natural}$ is the polynomially bounded function
  defined by ${r = 2^k q}$.

  From the circuit~$R_x$,
  one constructs a quantum circuit~$R'_x$
  according to the \textsc{Randomness Amplification Procedure}
  with the integer~${N \geq \alpha \op{r}(\abs{x})}$, where ${\alpha = \frac{3}{2} \ln 2 < 1.04}$.
  By Proposition~\ref{Proposition: acceptance probability of Randomness Amplification Procedure},
  it holds that
  ${\op{p_\acc}(R'_x, 1) = 1}$ if $x$ is in $A_\yes$,
  and
  \[
    \frac{1}{2}
    \leq
    \op{p_\acc}(R'_x, 1)
    \leq
    \frac{1}{2}
    +
    \frac{1}{2} \biggl( 1 - \frac{2}{\op{r}(\abs{x})} \biggr)^{\alpha \op{r}(\abs{x})}
    <
    \frac{1}{2}
    +
    \frac{1}{2} e^{- 3 \ln 2}
    =
    \frac{1}{2} + \frac{1}{16}
  \]
  if $x$ is in $A_\no$.
 
  Finally, from the circuit~$R'_x$,
  one constructs a quantum circuit~$R''_x$
  according to the \textsc{Two-Clean-Qubit Stability Checking Procedure}
  with $N$ to be the smallest integer that is a power of two and at least ${16 (\op{p}(\abs{x}) + 1)}$
  (i.e., ${N = 2^{\ceilL{\log (\op{p}(\abs{x}) + 1)} + 4}}$).
  If $x$ is in $A_\yes$,
  Proposition~\ref{Proposition: lower bound of acceptance probability of Two-Clean-Qubit Stability Checking Procedure}
  ensures that
  ${\op{p_\acc}(R''_x, 2) = 1}$.
  On the other hand,
  if $x$ is in $A_\no$,
  Proposition~\ref{Proposition: upper bound of acceptance probability of Two-Clean-Qubit Stability Checking Procedure}
  ensures that
  \[
    \op{p_\acc}(R''_x, 2)
    \leq
    2^{- \op{p}(\abs{x})}
  \]
  (notice that the bounds for the probability~${\op{p_\acc}(R'_x, 1)}$ above
  are such that
  Proposition~\ref{Proposition: upper bound of acceptance probability of Two-Clean-Qubit Stability Checking Procedure}
  can be used to show the upper bound for the probability~${\op{p_\acc}(R''_x, 2)}$).

  The claim follows
  with the polynomial-time uniformly generated family~${\{R''_x\}_{x \in \Sigma^\ast}}$ of quantum circuits.
\end{proof}


\subsection{Cases with Perfect Soundness}
\label{Subsection: error reduction for perfect-soundness cases}

Now Corollary~\ref{Corollary: error reduction for perfect-soundness cases}
is immediate from Theorems~\ref{Theorem: error reduction for perfect-completeness case}~and~\ref{Theorem: co-RQlogP is in co-RQ1P}
by considering complement problems.

\begin{proof}[Proof of Corollary~\ref{Corollary: error reduction for perfect-soundness cases}]
  For any polynomial-time computable function~$\function{c}{\Nonnegative}{[0,1]}$
  satisfying ${c \geq \frac{1}{q}}$ for some polynomially bounded function~$\function{q}{\Nonnegative}{\Natural}$,
  let ${A = (A_\yes, A_\no)}$ be a problem in ${\QlogP(c, 0)}$.
  Consider the complement problem~${\overline{A} = (A_\no, A_\yes)}$ of $A$.
  As $\overline{A}$ is in ${\QlogP(1, \delta)}$
  for the constant~${\delta = 1 - c}$
  satisfying ${1 - \delta  = c \geq \frac{1}{q}}$,
  it follows from Theorems~\ref{Theorem: error reduction for perfect-completeness case}~and~\ref{Theorem: co-RQlogP is in co-RQ1P}
  that $\overline{A}$ is in ${\QtwoP(1, 2^{-p})}$
  and also in ${\QoneP \bigl( 1, \frac{1}{p} \bigr)}$,
  for any polynomially bounded function~$\function{p}{\Nonnegative}{\Natural}$.
  This implies that $A$ is in ${\QtwoP(1 - 2^{-p}, 0)}$
  and also in ${\QoneP \bigl( 1 - \frac{1}{p}, 0 \bigr)}$,
  for any polynomially bounded function~$\function{p}{\Nonnegative}{\Natural}$,
  as desired.
\end{proof}


\section{Error Reduction for Two-Sided-Error Cases}
\label{Section: error reduction for two-sided-error cases}

This section considers the cases with two-sided bounded error.
First, Subsection~\ref{Subsection: OR-type repetition procedure}
provides one more technical tool,
called the \textsc{OR-Type Repetition Procedure},
which aims to simulate the standard error-reduction method of a repetition with an OR-type decision.
By combining this procedure with the three procedures presented in Section~\ref{Section: building blocks},
Theorems~\ref{Theorem: error reduction for two-sided-error case}~and~\ref{Theorem: BQlogP with constant gap is in BQ1P}
are proved in Subsection~\ref{Subsection: Proofs of Theorems for Error-Reduction for Two-Sided-Error Cases}.


\subsection{OR-Type Repetition Procedure}
\label{Subsection: OR-type repetition procedure}

Again consider any quantum circuit~$Q$ acting over $w$~qubits,
which is supposed to be applied to the $k$-clean-qubit initial state~%
${\rho_\init^{(w, k)} = (\ketbra{0})^{\tensor k} \tensor \bigl( \frac{I}{2} \bigr)^{\tensor (w - k)}}$.
This section presents a procedure,
called the \textsc{OR-Type Repetition Procedure},
that constructs another quantum circuit~$R^{(N)}$ from $Q$ when a positive integer~$N$ is specified.
Intuitively,
the circuit~$R^{(N)}$ is designed so that,
when some additional clean qubits are available
other than the $k$~clean qubits
that are used for the $k$-clean-qubit computation induced by $Q$,
one can perform $N$~attempts of the $k$-clean-qubit computation induced by $Q$
and accept if and only if at least one of these attempts results in acceptance
in the original computation of $Q$.

If ${Nk}$~clean qubits were allowed to use (in addition to one clean qubit as the output qubit),
the above procedure is easily constructed
by preparing $N$~copies of the initial state~$\rho_\init^{(w,k)}$,
applying $Q$ to each of these $N$~initial states,
and accepting if and only if at least one of the $N$~attempts results in acceptance.
The only problem in this construction is
that the resulting computation requires unallowably many clean qubits in its initial state,
as $N$ is desirably much larger than $k$
for the purpose of error reduction of the $k$-clean-qubit computations.

To realize this OR-type repetition idea,
consider the procedure described in Figure~\ref{Figure: simplified description of OR-Type Repetition Procedure}.
Instead of preparing $N$~copies of the initial state,
now the $k$~qubits in the register~$\regQ$ that were initially clean
are reused for each attempt.
In order to reuse the $k$~clean qubits,
the procedure tries to initialize these qubits by applying the inverse of $Q$ in Step~2.3
(and measuring each qubit in $\regQ$ in the computational basis in Step~2.4).
Of course, the initialization may not necessarily succeed every time,
and the failure of the initialization is also counted as an ``acceptance'' in that attempt.
The underlying idea of this procedure is essentially the same
as that in the method in Ref.~\cite{Wat01JCSS}
used for reducing the computation error of
one-sided bounded error logarithmic-space quantum Turing machines,
where the phase-shift transformation is applied
when the initialization attempt fails,
instead of counting it as an ``acceptance''.
It is unclear whether the phase-shift trick in Ref.~\cite{Wat01JCSS} works in the present case,
as one cannot perfectly judge in the present case whether the initialization succeeds or not,
due to the existence of polynomially many qubits that are initially in the totally mixed states.

\begin{figure}[t!]
  \begin{algorithm*}{\textsc{OR-Type Repetition Procedure} --- Simplified Description}
    \begin{step}
    \item
      Prepare a $k$-qubit register~$\regQ$
      where all the qubits in $\regQ$ are supposed to be initially in state~$\ket{0}$.
      For each $j$ in ${\{1, \dotsc, k\}}$,
      let $\regQ^{(j)}$ denote the single-qubit quantum register
      corresponding to the $j$th qubit of $\regQ$. 
      Prepare ${(w-k)}$-qubit registers~$\regR_j$,
      for $j$ in ${\{1, \dotsc, N\}}$,
      where all the qubits in $\regR_j$ are supposed to be initially in the totally mixed state~${I/2}$.
      Initialize a counter~$C$ to $0$.
    \item
      For ${j = 1}$ to $N$, perform the following:
      \begin{step}
      \item
        Apply $Q$ to ${(\regQ, \regR_j)}$.
      \item
        Measure the qubit in $\regQ^{(1)}$ in the computational basis.
        If this results in $\ket{0}$, increase the counter~$C$ by one.
      \item
        Apply $\conjugate{Q}$ to ${(\regQ, \regR_j)}$.
      \item
        Measure each qubit in $\regQ$ in the computational basis.
        If any of these results in $\ket{1}$, increase the counter~$C$ by one.
      \end{step}
    \item
      Reject if ${C = 0}$, and accept otherwise.
    \end{step}
  \end{algorithm*}
  \caption{The \textsc{OR-Type Repetition Procedure} (a simplified description).}
  \label{Figure: simplified description of OR-Type Repetition Procedure}
\end{figure}

The actual construction of the above simplified procedure uses
a single-qubit register~$\regO$ and an $l$-qubit register~$\regC$ for ${l = \ceil{\log N} + 1}$,
and further introduces a single-qubit quantum register~$\regX_j$
and a $k$-qubit register~$\regY_j$
for each $j$ in ${\{1, \dotsc, N\}}$.
The qubits in $\regO$, $\regC$, and $\regQ$ are supposed to be initially in state~$\ket{0}$,
and all the other qubits used in the actual procedure
are supposed to be initially in the totally mixed state~${I/2}$.
The qubit in $\regO$ serves as the output qubit,
and the content of $\regC$ serves as a counter~$C$ of the simplified procedure.
Each conditional increment of the counter value is realized
either by using the controlled-$\INCR{2^l}$ transformation~${\controlled \bigl( \INCR{2^l} \bigr)}$
or by combining the $\INCR{2^l}$~transformation
and the $k$-controlled-$\INCR{2^l}$ transformation~${\controlled^k \bigl( \INCR{2^l} \bigr)}$.
The decision of acceptance and rejection at Step~3 of the simplified procedure
can be simulated by combining $\NOT$ transformations and a generalized Toffoli transformation.
Each register~$\regX_j$ is used to simulate the measurement at Step~2.2
while each register~$\regY_j$ is used to simulate the measurement at Step~2.4,
for each repetition round~$j$ of Step~2.
More precisely,
for each repetition round~$j$ of Step~2,
the measurement of the qubit in $\regQ^{(1)}$ in the computational basis at Step~2.2
is replaced by the application of a $\CNOT$ transformation to ${(\regQ^{(1)}, \regX_j)}$,
while the measurements of the qubits in $\regQ$ in the computational basis at Step~2.4
is replaced by the applications of $\CNOT$ transformations to ${(\regQ, \regY_j)}$.
Figure~\ref{Figure: OR-Type Repetition Procedure}
summarizes the actual construction of the \textsc{OR-Type Repetition Procedure}.

\begin{figure}[t!]
  \begin{algorithm*}{\textsc{OR-Type Repetition Procedure}}
    \begin{step}
    \item
      Given a positive integer~$N$,
      let ${l = \ceil{\log N} + 1}$.
      Prepare a single-qubit register~$\regO$,
      an $l$-qubit register~$\regC$,
      and a $k$-qubit register~$\regQ$,
      where all the qubits in $\regO$, $\regC$, and $\regQ$
      are supposed to be initially in state~$\ket{0}$.
      For each $i$ in ${\{1, \dotsc, k\}}$,
      let $\regQ^{(i)}$ denote the single-qubit quantum register
      corresponding to the $i$th qubit of $\regQ$.
      Prepare a ${(w - k)}$-qubit register~$\regR_j$,
      a single-qubit register~$\regX_j$,
      and a $k$-qubit register~$\regY_j$,
      for each $j$ in ${\{1, \dotsc, N\}}$,
      where all the qubits in $\regR_j$, $\regX_j$, and $\regY_j$
      are supposed to be initially in the totally mixed state~${I/2}$.
      For each $i$ in ${\{1, \dotsc, k\}}$
      and $j$ in ${\{1, \dotsc, N\}}$,
      let $\regY_j^{(i)}$ denote the single-qubit quantum register
      corresponding to the $i$th qubit of $\regY_j$.
    \item
      For ${j = 1}$ to $N$, perform the following:
      \begin{step}
      \item
        Apply $Q_x$ to ${(\regQ, \regR_j)}$.
      \item
        Apply the $\CNOT$ transformation to ${(\regQ^{(1)}, \regX_j)}$
        with the qubit in $\regQ^{(1)}$ being the control.\\
        Apply $\INCR{2^l}$ to $\regC$ if the content of $\regQ^{(1)}$ is $0$
        (this can be realized by first applying $X$ to $\regQ^{(1)}$,
        then applying the controlled-$\INCR{2^l}$ transformation~${\controlled \bigl( \INCR{2^l} \bigr)}$ to ${(\regQ^{(1)}, \regC)}$
        with the qubit in $\regQ^{(1)}$ being the control,
        and further applying $X$ to $\regQ^{(1)}$).
      \item
        Apply $\conjugate{Q_x}$ to ${(\regQ, \regR_j)}$.
      \item
        For each $i$ in ${\{1, \dotsc, k\}}$,
        apply the $\CNOT$ transformation to ${\bigl( \regQ^{(i)}, \regY_j^{(i)} \bigr)}$
        with the qubit in $\regQ^{(i)}$ being the control.\\
        Apply $\INCR{2^l}$ to $\regC$ if any of the qubits in $\regQ$ contains $1$
        (this can be realized by first applying $\INCR{2^l}$ to $\regC$,
        next applying $X$ to each of the qubits in $\regQ$,
        then applying the $k$-controlled-$\conjugate{\INCR{2^l}}$ transformation~${\controlled^k \bigl( \conjugate{\INCR{2^l}} \bigr)}$
        to ${(\regQ, \regC)}$
        with the qubits in $\regQ$ being the control,
        and further applying $X$ to each of the qubits in $\regQ$).
      \end{step}
    \item
      Apply $X$ to $\regO$ if any of the qubits in $\regC$ contains $1$
      (this can be realized by first applying $X$ to each of the qubits in $\regO$ and $\regC$,
      and then applying the generalized Toffoli transformation~${\controlled^l(X)}$ to ${(\regC, \regO)}$
      with the qubit in $\regO$ being the target
      --- and further applying $X$ to each of the qubits in $\regC$
      to precisely realize the transformation in this step,
      which is unnecessary for the purpose of this procedure).\\
      Measure the qubit in $\regO$ in the computational basis.
      Accept if this results in $\ket{0}$, and reject otherwise.
    \end{step}
  \end{algorithm*}
  \caption{The \textsc{OR-Type Repetition Procedure}.}
  \label{Figure: OR-Type Repetition Procedure}
\end{figure}

\newpage 

\begin{proposition}
  For any quantum circuit~$Q$ and any positive integer~$N$,
  let $R^{(N)}$ be the quantum circuit
  corresponding to the \textsc{OR-Type Repetition Procedure} induced by $Q$ and $N$.
  For the acceptance probability~${p_\acc(Q, k)}$
  of the $k$-clean-qubit computation induced by $Q$
  and the acceptance probability~${p_\acc \bigl( R^{(N)}, k + \ceil{\log N} + 2 \bigr)}$
  of the ${(k + \ceil{\log N} + 2)}$-clean-qubit computation induced by $R^{(N)}$,
  it holds that
  \[
    1 - (1 - \op{p_\acc}(Q, k))^N
    \leq
    \op{p_\acc} \bigl( R^{(N)}, k + \ceil{\log N} + 2 \bigr)
    \leq
    1 - (1 - \op{p_\acc}(Q, k))^{2N}.
  \]
  \label{Proposition: acceptance probability of OR-Type Repetition Procedure}
\end{proposition}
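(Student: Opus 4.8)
The plan is to strip the procedure down to a purely classical random process, identify the event ``$R^{(N)}$ rejects'' with a single ``the counter is never incremented'' event whose probability factorizes over the $N$ rounds, and then bracket the per-round factor by an eigenvalue estimate.

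\emph{Simulated measurements.} First I record the elementary fact that applying a $\CNOT$ with control a qubit of some register and target a fresh ancilla qubit~$\regA$ prepared in the totally mixed state~${I/2}$, then applying any unitary not acting on~$\regA$, and finally discarding~$\regA$, has exactly the same effect as first dephasing the control qubit in the computational basis and then applying the unitary: for any state~$\rho$ on the register containing the control,
\[
  \tr_{\regA} \Bigl[ \CNOT \bigl( \rho \tensor \tfrac{I}{2} \bigr) \conjugate{\CNOT} \Bigr]
  =
  \sum_{b \in \Sigma} (\ketbra{b} \tensor I) \, \rho \, (\ketbra{b} \tensor I),
\]
where $\ketbra{b}$ acts on the control qubit. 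In $R^{(N)}$ each register $\regX_j$ (resp.\ $\regY_j$) receives exactly one $\CNOT$ and is never touched again, and only $\regO$ is measured at the end; iterating the above fact over all $\regX_j$ and $\regY_j$ therefore shows that ${\op{p_\acc} \bigl( R^{(N)}, k + \ceil{\log N} + 2 \bigr)}$ equals the acceptance probability of the process of Figure~\ref{Figure: simplified description of OR-Type Repetition Procedure} in which Step~2.2 genuinely measures $\regQ^{(1)}$ and Step~2.4 genuinely measures every qubit of $\regQ$ in the computational basis, the outcomes driving the counter~$C$ (and the controlled increments of $\regC$ then act on a classical control).

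\emph{Factorization of the rejection probability.} Since $C$ grows by at most~$2$ per round and has ${\ceil{\log N} + 1}$ bits, its value stays in ${\{0, \dotsc, 2N\}}$, so --- apart from the boundary value ${C = 2^{\ceil{\log N} + 1}}$, which can occur only when $N$ is a power of two and which I handle separately --- $R^{(N)}$ rejects exactly when $C$ is never incremented, i.e., when for every round~$j$ the Step~2.2 measurement of $\regQ^{(1)}$ returns $\ket{1}$ \emph{and} the Step~2.4 measurement of $\regQ$ returns $\ket{0}^{\tensor k}$. Conditioned on $C$ still being~$0$ after rounds ${1, \dotsc, j - 1}$, the last Step~2.4 measurement returned $\ket{0}^{\tensor k}$, so $\regQ$ holds $\ket{0}^{\tensor k}$ at the start of round~$j$, while $\regR_j$ is freshly totally mixed and $\regC = 0$ plays no role in round~$j$'s dynamics; hence the conditional probability that round~$j$ leaves $C$ unchanged equals a fixed number~$\beta$, namely the probability that, starting from ${(\regQ, \regR_j)}$ in the state~$\rho_\init^{(w,k)}$, one applies $Q$, measures $\regQ^{(1)}$ and gets $\ket{1}$, applies $\conjugate{Q}$, measures $\regQ$ and gets $\ket{0}^{\tensor k}$. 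Thus the rejection probability is exactly~$\beta^N$.

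\emph{Bracketing $\beta$.} With ${\Pi_1 = \ketbra{1} \tensor I^{\tensor (w - 1)}}$ and ${\Delta_0 = (\ketbra{0})^{\tensor k} \tensor I^{\tensor (w - k)}}$, the above description gives ${\beta = \tr \bigl[ \Delta_0 \conjugate{Q} \Pi_1 Q \, \rho_\init^{(w,k)} \, \conjugate{Q} \Pi_1 Q \bigr]}$. Using ${\rho_\init^{(w,k)} = 2^{-(w - k)} \Delta_0}$, cyclicity of the trace and ${\Delta_0^2 = \Delta_0}$, and putting ${A = \Delta_0 \conjugate{Q} \Pi_1 Q \, \Delta_0}$, I get ${\beta = 2^{-(w-k)} \tr[A^2]}$ and ${1 - \op{p_\acc}(Q, k) = \tr \bigl[ \Pi_1 Q \, \rho_\init^{(w,k)} \, \conjugate{Q} \bigr] = 2^{-(w-k)} \tr[A]}$. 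Now $A$ is positive semidefinite with ${A \leq \Delta_0 \leq I}$ (so all eigenvalues lie in ${[0,1]}$) and is supported on the ${2^{w-k}}$-dimensional range of~$\Delta_0$; hence ${\tr[A^2] \leq \tr[A]}$ (from ${\lambda^2 \leq \lambda}$ on ${[0,1]}$) and ${\tr[A^2] \geq (\tr[A])^2 / 2^{w-k}}$ (Cauchy--Schwarz over the at most ${2^{w-k}}$ nonzero eigenvalues). These translate into
\[
  (1 - \op{p_\acc}(Q, k))^2 \leq \beta \leq 1 - \op{p_\acc}(Q, k),
\]
and raising to the $N$th power, together with ${\op{p_\acc} \bigl( R^{(N)}, k + \ceil{\log N} + 2 \bigr) = 1 - \beta^N}$, yields the two claimed inequalities.

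\emph{Main obstacle.} The reduction in the first step is routine save for the twist that the copy ancillas are totally mixed instead of clean; the real content is the bracketing of~$\beta$, where the behaviour of an ``OR of between $N$ and $2N$ attempts'' must be extracted from the single operator~$A$ --- the lower bound ${\beta \geq (1 - \op{p_\acc}(Q, k))^2}$, which relies on the rank of $A$ being at most~$2^{w-k}$, is the delicate inequality. A minor nuisance is the modular wrap-around ${C = 2^{\ceil{\log N} + 1}}$ when $N$ is a power of two: one must verify that the all-increments-equal-two branch does not spoil the lower bound on the acceptance probability, which I would do either by enlarging $\regC$ by one bit (so that its modulus strictly exceeds $2N$) or by bounding that branch's probability directly.
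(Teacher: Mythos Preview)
Your proof is correct and follows essentially the same route as the paper: reduce to the simplified measured procedure, identify rejection with the event that the counter is never incremented so that the rejection probability factorizes as~$\beta^N$, and then bracket~$\beta$ between ${(1 - \op{p_\acc}(Q,k))^2}$ and ${1 - \op{p_\acc}(Q,k)}$. The only cosmetic difference is that the paper carries out the bracketing by a per-basis-vector analysis (showing ${q_r \le p_r}$ and ${q_r \ge p_r^2}$ for ${\ket{\psi_r} = \ket{0}^{\tensor k} \tensor \ket{r}}$ and then averaging over~$r$), whereas you package the same computation as ${\tr[A^2] \le \tr[A]}$ and ${\tr[A^2] \ge (\tr[A])^2/2^{w-k}}$ for ${A = \Delta_0 \conjugate{Q} \Pi_1 Q \Delta_0}$ --- these are the same inequalities. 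Your observation about the modular wrap-around when $N$ is a power of two is a point the paper silently elides; either of your fixes (one extra counter bit, or bounding the all-twos branch) suffices.
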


\begin{proof}
  For ease of explanations,
  the proof analyzes the simplified version of the \textsc{OR-Type Repetition Procedure}
  in Figure~\ref{Figure: simplified description of OR-Type Repetition Procedure},
  which is sufficient for the claim,
  as the actual construction of the \textsc{OR-Type Repetition Procedure}
  exactly simulates the simplified version.

  First notice that the counter value remains zero when entering Step~3
  if and only if
  the simulation of $Q$ results in rejection
  (i.e., the measurement in Step~2.2 results in $\ket{1}$)
  \emph{and}
  the initialization of the qubits in $\regQ$ succeeds
  (i.e., none of the measurements in Step~2.4 results in $\ket{1}$)
  for all the $N$~attempts during Step~2.
  Suppose that the qubits in ${(\regQ, \regR_j)}$ form the state~%
  ${
    \ket{\psi_r} = \ket{0}^{\tensor k} \tensor \ket{r}
  }$
  when entering Step~2.1,
  for some $r$ in $\Sigma^{(w-k)}$,
  and let $p_r$ be the probability defined by
  \[
    p_r = \bignorm{\Pi_1 Q \ket{\psi_r}}^2,
  \]
  where $\Pi_1$ is the projection operator acting over $w$~qubits defined by
  ${\Pi_1 = \ketbra{1} \tensor I^{\tensor (w - 1)}}$.
  Then, it is clear that the measurement at Step~2.2 results in $\ket{1}$ with probability exactly $p_r$.
  Hence,
  by letting $\Delta_0$ be the projection operator acting over $w$~qubits defined by
  ${\Delta_0 = (\ketbra{0})^{\tensor k} \tensor I^{\tensor (w - k)}}$,
  the joint probability that
  the measurement at Step~2.2 results in $\ket{1}$
  \emph{and}
  none of the measurements in Step~2.4 results in $\ket{1}$
  is given by
  \[
    q_r = \bignorm{\Delta_0 \conjugate{Q} \Pi_1 Q \ket{\psi_r}}^2.
  \]

  To prove the first inequality of the claim,
  notice that ${q_r = \bignorm{\Delta_0 \conjugate{Q} \Pi_1 Q \ket{\psi_r}}^2}$
  is at most ${p_r = \bignorm{\Pi_1 Q \ket{\psi_r}}^2}$.
  As the acceptance probability~${p_\acc(Q, k)}$
  of the $k$-clean-qubit computation induced by $Q$
  is nothing but the expected value of ${1 - p_r}$ over $r$ in $\Sigma^{(w - k)}$,
  it holds that
  \[
    2^{-(w - k)} \sum_{r \in \Sigma^{(w - k)}} q_r
    \
    \leq
    \
    2^{-(w - k)} \sum_{r \in \Sigma^{(w - k)}} p_r
    \
    =
    \
    1 - \op{p_\acc}(Q, k),
  \]
  and thus, the probability is at most ${1 - \op{p_\acc}(Q, k)}$
  for the event that the counter value remains zero after one iteration of Step~2,
  conditioned that the counter value was zero when starting that iteration.
  Overall, the probability that the counter value remains zero when entering Step~3 is at most
  ${
    (1 - \op{p_\acc}(Q, k))^N
  }$,
  and thus, the procedure results in acceptance with probability at least
  ${
    1 - (1 - \op{p_\acc}(Q, k))^N
  }$,
  which shows the first inequality.

  For the second inequality of the claim,
  note that
  ${
    \Delta_0
    =
    (\ketbra{0})^{\tensor k} \tensor \sum_{r \in \Sigma^{(w - k)}} \ketbra{r}
    =
    \sum_{r \in \Sigma^{(w - k)}} \ketbra{\psi_r}
  }$,
  and thus, it holds that
  \[
    q_r
    \geq
    \bignorm{\ketbra{\psi_r} \conjugate{Q} \Pi_1 Q \ket{\psi_r}}^2
    =
    \bigabs{\bra{\psi_r} \conjugate{Q} \Pi_1 Q \ket{\psi_r}}^2
    =
    \bignorm{\Pi_1 Q \ket{\psi_r}}^4
    =
    p_r^2.
  \]
  Again using the fact that
  ${p_\acc(Q, k)}$
  is nothing but the expected value of ${1 - p_r}$ over $r$ in $\Sigma^{(w - k)}$,
  it follows that
  \[
    \begin{split}
      2^{-(w - k)} \sum_{r \in \Sigma^{(w - k)}} q_r
      \
      &
      \geq
      \
      2^{-(w - k)} \sum_{r \in \Sigma^{(w - k)}} p_r^2
      \\
      &
      \geq
      \biggl(
      2^{-(w - k)} \sum_{r \in \Sigma^{(w - k)}} p_r
      \biggr)^2
      =
      \biggl(
      1
      -
      2^{-(w - k)} \sum_{r \in \Sigma^{(w - k)}} (1 - p_r)
      \biggr)^2
      =
      (1 - \op{p_\acc}(Q, k))^2,
    \end{split}
  \]
  and thus, the probability is at least ${(1 - \op{p_\acc}(Q, k))^2}$
  for the event that the counter value remains zero after one iteration of Step~2,
  conditioned that the counter value was zero when starting that iteration.
  Overall, the probability that the counter value remains zero when entering Step~3 is at least
  ${
    \bigl[ (1 - \op{p_\acc}(Q, k))^2 \bigr]^N
    =
    (1 - \op{p_\acc}(Q, k))^{2N}
  }$.
  Accordingly, the procedure results in acceptance with probability at most
  ${
    1 - (1 - \op{p_\acc}(Q, k))^{2N}
  }$,
  and the second inequality follows.
\end{proof}


\subsection{Proofs of Theorems~\ref{Theorem: error reduction for two-sided-error case}~and~\ref{Theorem: BQlogP with constant gap is in BQ1P}}
\label{Subsection: Proofs of Theorems for Error-Reduction for Two-Sided-Error Cases}

Now we are ready to prove Theorems~\ref{Theorem: error reduction for two-sided-error case}~and~\ref{Theorem: BQlogP with constant gap is in BQ1P}.
First, we prove two lemmas.

\begin{lemma}
  For any constants~$c$~and~$s$ in $\Real$ satisfying ${0 < s < c < 1}$,
  \[
    \QlogP(c, s)
    \subseteq
    \QoneP \Bigl( \frac{3}{4}, \, \frac{5}{8} \Bigr).
  \]
  \label{Lemma: BQlogP with constant gap is in BQ1P with constant gap}
\end{lemma}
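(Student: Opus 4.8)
The plan is to carry out the three-stage ``first part'' transformation sketched in Subsection~\ref{Subsection: overview of error reduction for two-sided-error case}: a constant-round repetition with a threshold decision, then the \textsc{One-Clean-Qubit Simulation Procedure}, and then the \textsc{Randomness Amplification Procedure}. First I would fix a polynomial-time uniformly generated family $\{Q_x\}_{x \in \Sigma^\ast}$ witnessing $A \in \QlogP(c,s)$, with $Q_x$ using $k(\abs{x})$ clean qubits for a logarithmically bounded $k$. Since $c-s$ is a positive constant, a constant number $m = O\bigl((c-s)^{-2}\bigr)$ of independent parallel runs of $Q_x$ followed by a threshold-check transformation against $\tfrac{c+s}{2}m$ (such transformations, controlled increments, and generalized Toffoli gates being available for free by Subsection~\ref{Subsection: quantum circuits}) produces a family $\{Q'_x\}$ with $p_\acc(Q'_x, k'(\abs{x})) \geq 15/16$ on yes-instances and $\leq 1/16$ on no-instances, where $k'(\abs{x}) = m\,k(\abs{x}) + O(\log m)$ is still logarithmically bounded and the threshold decision is justified by Lemma~\ref{Lemma: Hoeffding bounds}. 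Thus $A \in \QlogP(15/16, 1/16)$.

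Next I would apply the \textsc{One-Clean-Qubit Simulation Procedure} to $Q'_x$ with the parameter $k = k'(\abs{x})$, obtaining a circuit $R_x$. Writing $\eta = 1 - p_\acc(R_x, 1)$ and $p = p_\acc(Q'_x, k')$, Proposition~\ref{Proposition: acceptance probability of One-Clean-Qubit Simulation Procedure} gives $2^{-k'}(1-p) \leq \eta \leq 2^{-k'}(1-p^2)$, hence $\eta \leq \tfrac{31}{256}\,2^{-k'}$ on yes-instances and $\eta \geq \tfrac{240}{256}\,2^{-k'}$ on no-instances. So $\eta$ is inverse-polynomially small (as $2^{-k'}$ is, $k'$ being logarithmically bounded), and the rejection probability on no-instances exceeds that on yes-instances by a fixed constant factor larger than $7$. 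This is the decisive structural gain: although the gap has shrunk to inverse-polynomial, it is still a constant times larger than the completeness error, which is what the \textsc{Randomness Amplification Procedure} needs.

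Finally I would apply the \textsc{Randomness Amplification Procedure} to $R_x$ with a carefully chosen $N = N(\abs{x})$, obtaining $R'_x$; by Proposition~\ref{Proposition: acceptance probability of Randomness Amplification Procedure}, $p_\acc(R'_x, 1) = \tfrac12 + \tfrac12(1-2\eta)^N$. Setting $a = \tfrac{31}{256}\,2^{-k'}$ and $b = \tfrac{240}{256}\,2^{-k'}$, so $b/a > 7$, I would let $N$ be the least integer with $N \geq \tfrac{\ln 4}{2b}$; this $N$ is polynomially bounded, being $\Theta(2^{k'})$. On no-instances $\eta \geq b$, so $(1-2\eta)^N \leq e^{-2bN} \leq 1/4$, giving $p_\acc(R'_x,1) \leq 5/8$. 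On yes-instances $\eta \leq a$ and $N \leq \tfrac{\ln 4}{2b} + 1$, so using $\ln(1-2a) \geq -\tfrac{2a}{1-2a}$ together with $a/b < 1/7$ and the smallness of $a$ one checks $(1-2\eta)^N \geq (1-2a)^N \geq \exp\!\bigl(-\tfrac{2a}{1-2a}N\bigr) \geq 1/2$, giving $p_\acc(R'_x,1) \geq 3/4$. The family $\{R'_x\}$ is polynomial-size and polynomial-time uniformly generated, so $A \in \QoneP\bigl(\tfrac34, \tfrac58\bigr)$.

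The step I expect to be the main obstacle is the choice of $N$ in the last stage: one must exhibit an actual nonempty window of values for which the two convergence rates of $(1-2\eta)^N$ separate $3/4$ from $5/8$. This window is nonempty precisely because the gap ratio $b/a$ after the simulation stage is bounded below by a constant strictly larger than the target gap forces --- roughly, forcing $(1-2b)^N \leq 1/4$ makes $(1-2a)^N \approx 2^{-b/a}$, so one needs $b/a \geq 2$ --- which is exactly why Stage~1 must first push the errors down to sufficiently small constants. Everything else --- the bookkeeping of qubit and counter registers, the polynomial-time uniformity, and the elementary estimates $\ln(1-x) \leq -x$ and $\ln(1-x) \geq -x/(1-x)$ --- is routine.
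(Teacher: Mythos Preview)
Your proposal is correct and follows essentially the same three-stage approach as the paper's proof: constant-round parallel repetition with threshold to reach completeness~$15/16$ and soundness~$1/16$, then the \textsc{One-Clean-Qubit Simulation Procedure}, then the \textsc{Randomness Amplification Procedure}. The only notable difference is cosmetic: the paper picks $N = 2 \cdot 2^{k'}$ and dispatches the yes-case by Bernoulli's inequality $(1-x)^n > 1 - nx$ (giving ${\tfrac12 + \tfrac12(1 - \tfrac{1}{4q})^{2q} > \tfrac34}$ in one line) and the no-case by $(1-x)^{1/x} < e^{-1}$, whereas you choose $N$ as the least integer at least $\tfrac{\ln 4}{2b}$ and use the estimate $\ln(1-2a) \geq -\tfrac{2a}{1-2a}$ together with the ratio bound $a/b < 1/7$; both routes yield the same $3/4$ and $5/8$ thresholds.
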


\begin{proof}
  For any constants~$c$~and~$s$ in $\Real$ satisfying ${0 < s < c < 1}$,
  let ${A = (A_\yes, A_\no)}$ be a problem in ${\QlogP(c, s)}$.
  Then $A$ has a polynomial-time uniformly generated family~${\{Q_x\}_{x \in \Sigma^\ast}}$
  of quantum circuits
  such that, for every input~$x$,
  ${\op{p_\acc}(Q_x, \op{k}_1(\abs{x})) \geq c}$ if $x$ is in $A_\yes$
  and
  ${\op{p_\acc}(Q_x, \op{k}_1(\abs{x})) \leq s}$ if $x$ is in $A_\no$
  for some logarithmically bounded function~$\function{k_1}{\Nonnegative}{\Natural}$.
  The proof constructs a polynomial-time uniformly generated family of quantum circuits
  that puts $A$ in ${\QoneP \bigl( \frac{3}{4}, \:\! \frac{5}{8} \bigr)}$.

  Fix an input~$x$.

  From the circuit~$Q_x$,
  one first constructs a quantum circuit~$R_x$
  that runs $N$~attempts of $Q_x$ in parallel,
  and accepts if and only if at least $\frac{c+s}{2}$~fraction of the $N$~attempts results in acceptance.
  Notice that this can be easily implementable if $N$ is a power of two,
  by combining the increment transformation
  with the threshold-check transformation discussed in Subsection~\ref{Subsection: quantum circuits}.
  By taking a sufficiently large constant~$N$ that is a power of two,
  it holds that
  ${\op{p_\acc}(R_x, \op{k}_2(\abs{x})) \geq \frac{15}{16}}$ if $x$ is in $A_\yes$,
  and
  ${\op{p_\acc}(R_x, \op{k}_2(\abs{x})) \leq \frac{1}{16}}$ if $x$ is in $A_\no$,
  where $\function{k_2}{\Nonnegative}{\Natural}$ is the logarithmically bounded function defined by
  ${k_2 = N k_1 + \log N + 1}$.

  From the circuit~$R_x$,
  one constructs a quantum circuit~$R'_x$
  according to the \textsc{One-Clean-Qubit Simulation Procedure}
  with ${k = k_2}$.
  By Proposition~\ref{Proposition: acceptance probability of One-Clean-Qubit Simulation Procedure},
  it holds that
  ${
    \op{p_\acc}(R'_x, 1)
    \geq
    1 - \frac{31}{256 \op{q}(\abs{x})}
    >
    1 - \frac{1}{8 \op{q}(\abs{x})}
  }$
  if $x$ is in $A_\yes$,
  and
  ${
    \op{p_\acc}(R'_x, 1)
    \leq
    1 - \frac{15}{16 \op{q}(\abs{x})}
  }$
  if $x$ is in $A_\no$,
  where $\function{q}{\Nonnegative}{\Natural}$ is the polynomially bounded function
  defined by ${q = 2^{k_2}}$.

  Finally, from the circuit~$R'_x$,
  one constructs a quantum circuit~$R''_x$
  according to the \textsc{Randomness Amplification Procedure}
  with the integer~${N = 2 \op{q}(\abs{x})}$.
  By Proposition~\ref{Proposition: acceptance probability of Randomness Amplification Procedure},
  it holds that
  \[
    \op{p_\acc}(R''_x, 1)
    \geq
    \frac{1}{2}
    +
    \frac{1}{2} \biggl( 1 - \frac{1}{4 \op{q}(\abs{x})} \biggr)^{2 \op{q}(\abs{x})}
    >
    \frac{1}{2}
    +
    \frac{1}{2} \biggl( 1 - 2 \op{q}(\abs{x}) \cdot \frac{1}{4 \op{q}(\abs{x})} \biggr)
    =
    \frac{3}{4}
  \]
  if $x$ is in $A_\yes$,
  and
  \[
    \op{p_\acc}(R''_x, 1)
    \leq
    \frac{1}{2}
    +
    \frac{1}{2} \biggl( 1 - \frac{15}{8 \op{q}(\abs{x})} \biggr)^{2 \op{q}(\abs{x})}
    <
    \frac{1}{2}
    +
    \frac{1}{2} \biggl( 1 - \frac{15}{8 \op{q}(\abs{x})} \biggr)^{\frac{16}{15} \op{q}(\abs{x})}
    <
    \frac{1}{2}
    +
    \frac{1}{2 e^2}
    <
    \frac{5}{8}
  \]
  if $x$ is in $A_\no$.

  The claim follows
  with the polynomial-time uniformly generated family~${\{R''_x\}_{x \in \Sigma^\ast}}$ of quantum circuits.
\end{proof}

\begin{lemma}
  For any constants~$c$~and~$s$ in $\Real$ satisfying ${0 < s < c < 1}$
  and for any polynomially bounded function~$\function{p}{\Nonnegative}{\Natural}$,
  \[
    \QoneP(c, s)
    \subseteq
    \QlogP \Bigl( 1 - 2^{-p}, \, \frac{1}{2} \Bigr).
  \]
  \label{Lemma: BQ1P with constant gap is in BQlogP with almost perfect completeness}
\end{lemma}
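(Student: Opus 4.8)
The plan is to follow the two-stage strategy sketched in Subsection~\ref{Subsection: overview of error reduction for two-sided-error case}: first amplify the constant gap of the given one-clean-qubit computation down to a polynomially small two-sided error while still using only logarithmically many clean qubits, and then apply the \textsc{OR-Type Repetition Procedure} of Subsection~\ref{Subsection: OR-type repetition procedure} to push the completeness error down to $2^{-p}$ while giving up only enough on the soundness side to stay below $1/2$.

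Fix constants $c, s$ with $0 < s < c < 1$, a polynomially bounded $\function{p}{\Nonnegative}{\Natural}$, and a problem $A = (A_\yes, A_\no)$ in $\QoneP(c, s)$ with a witnessing polynomial-time uniformly generated family $\{Q_x\}_{x \in \Sigma^\ast}$, so that $\op{p_\acc}(Q_x, 1) \geq c$ for $x \in A_\yes$ and $\op{p_\acc}(Q_x, 1) \leq s$ for $x \in A_\no$. Choose a polynomially bounded $\function{t}{\Nonnegative}{\Natural}$ with $\op{t}(n) \geq 4 \op{p}(n) + 8$ for all $n$. For the first stage, fixing an input $x$, I would construct a circuit $R_x$ that runs $M$ independent copies of $Q_x$ in parallel --- each copy using one clean qubit and $\op{w}(\abs{x}) - 1$ totally mixed qubits --- and then, using the increment and threshold-check transformations of Subsection~\ref{Subsection: quantum circuits} applied to a fresh $(\ceil{\log M} + 1)$-qubit counter register together with one further fresh qubit holding the threshold bit, accepts iff at least a $\frac{c + s}{2}$-fraction of the copies accepts. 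Lemma~\ref{Lemma: Hoeffding bounds}, applied with deviation $\frac{c - s}{2}$ about the threshold fraction $\frac{c + s}{2}$, shows that $M = O(\log \op{t}(\abs{x})) = O(\log \abs{x})$ attempts suffice to get $\op{p_\acc}(R_x, k_1) \geq 1 - \frac{1}{\op{t}(\abs{x})}$ for $x \in A_\yes$ and $\op{p_\acc}(R_x, k_1) \leq \frac{1}{\op{t}(\abs{x})}$ for $x \in A_\no$, where $k_1$ (a logarithmically bounded function of $\abs{x}$, roughly $M + \ceil{\log M} + 2$) is the number of clean qubits. I would order the clean qubits of $R_x$ so that the threshold bit --- which is genuinely initialized to $\ket{0}$ --- is the first one, so that $R_x$ respects the convention of the \textsc{OR-Type Repetition Procedure} that the output qubit is the first clean qubit.

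For the second stage, I would apply the \textsc{OR-Type Repetition Procedure} to $R_x$ with $N = \floor{\op{t}(\abs{x}) / 4}$, yielding a polynomial-time uniformly generated family $\{R'_x\}_{x \in \Sigma^\ast}$ in which $R'_x$ acts on $k_1 + \ceil{\log N} + 2 = O(\log \abs{x})$ clean qubits. By Proposition~\ref{Proposition: acceptance probability of OR-Type Repetition Procedure}, for $x \in A_\yes$,
\[
  \op{p_\acc} \bigl( R'_x, k_1 + \ceil{\log N} + 2 \bigr)
  \geq
  1 - \biggl( \frac{1}{\op{t}(\abs{x})} \biggr)^{N}
  \geq
  1 - 2^{-N}
  \geq
  1 - 2^{-\op{p}(\abs{x})},
\]
using $\op{t}(\abs{x}) \geq 2$ and $N \geq \op{p}(\abs{x})$, and for $x \in A_\no$, by Bernoulli's inequality,
\[
  \op{p_\acc} \bigl( R'_x, k_1 + \ceil{\log N} + 2 \bigr)
  \leq
  1 - \biggl( 1 - \frac{1}{\op{t}(\abs{x})} \biggr)^{2N}
  \leq
  \frac{2N}{\op{t}(\abs{x})}
  \leq
  \frac{1}{2}.
\]
Hence $A \in \QlogP \bigl( 1 - 2^{-p}, \, \frac{1}{2} \bigr)$, as claimed.

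The genuinely delicate point is entirely inside the \textsc{OR-Type Repetition Procedure}, which is already established: because each attempt involves polynomially many qubits that start in the totally mixed state, one cannot perfectly test whether re-initialization of the $k_1$ clean qubits has succeeded, so the ``bad'' event is controlled only quadratically rather than linearly --- this is precisely why the right-hand inequality of Proposition~\ref{Proposition: acceptance probability of OR-Type Repetition Procedure} carries exponent $2N$ instead of $N$, and hence why this lemma can only promise soundness $\frac{1}{2}$ rather than something small (the asymmetry between near-perfect completeness and soundness merely below $\frac{1}{2}$ is then exploited in a later stage via the \textsc{One-Clean-Qubit Simulation Procedure} and the \textsc{Randomness Amplification Procedure}). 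The only remaining work is routine bookkeeping: checking that the chosen $M$ and $N$ make every estimate hold for all input lengths, not merely asymptotically, which is straightforward since $t$ --- and therefore $N$ --- may be taken as large a polynomial as one likes, with the finitely many short inputs handled trivially.
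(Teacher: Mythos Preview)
Your proposal is correct and follows essentially the same two-stage approach as the paper: first amplify via parallel majority vote using logarithmically many copies (and hence logarithmically many clean qubits) to drive both errors down to an inverse polynomial, then apply the \textsc{OR-Type Repetition Procedure} and invoke Proposition~\ref{Proposition: acceptance probability of OR-Type Repetition Procedure} together with Bernoulli's inequality. The only difference is cosmetic parameter tuning---the paper targets first-stage error $\tfrac{1}{4\op{p}(\abs{x})}$ and then takes $N = \op{p}(\abs{x})$, whereas you introduce an auxiliary $t \geq 4p + 8$ with first-stage error $\tfrac{1}{t}$ and $N = \lfloor t/4 \rfloor$---but the structure and the key uses of Hoeffding and Proposition~\ref{Proposition: acceptance probability of OR-Type Repetition Procedure} are identical.
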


\begin{proof}
  For any constants~$c$~and~$s$ in $\Real$ satisfying ${0 < s < c < 1}$,
  let ${A = (A_\yes, A_\no)}$ be a problem in ${\QoneP(c, s)}$.
  Then $A$ has a polynomial-time uniformly generated family~${\{Q_x\}_{x \in \Sigma^\ast}}$
  of quantum circuits
  such that, for every input~$x$,
  ${\op{p_\acc}(Q_x, 1) \geq c}$ if $x$ is in $A_\yes$
  and
  ${\op{p_\acc}(Q_x, 1) \leq s}$ if $x$ is in $A_\no$.
  For any polynomially bounded function~$\function{p}{\Nonnegative}{\Natural}$,
  the proof constructs a polynomial-time uniformly generated family of quantum circuits
  that puts $A$ in ${\QlogP \bigl( 1 - 2^{-p}, \frac{1}{2} \bigr)}$.

  Fix an input~$x$.

  From the circuit~$Q_x$,
  one first constructs a quantum circuit~$R_x$
  that runs ${\op{l}(\abs{x})}$~attempts of $Q_x$ in parallel,
  and accepts if and only if at least $\frac{c+s}{2}$~fraction of the ${\op{l}(\abs{x})}$~attempts results in acceptance,
  where $\function{l}{\Nonnegative}{\Natural}$ is some logarithmically bounded function
  such that ${\op{l}(n)}$ is a power of two for every $n$ in $\Nonnegative$.
  Again notice that this can be easily implementable
  by combining the increment transformation
  with the threshold-check transformation discussed in Subsection~\ref{Subsection: quantum circuits}.
  By taking such a function~$l$ that is at least ${\frac{2}{(c - s)^2} (\log p + 2)}$, 
  it holds that
  ${\op{p_\acc}(R_x, \op{k}(\abs{x})) \geq 1 - \frac{1}{4 \op{p}(\abs{x})}}$ if $x$ is in $A_\yes$,
  and
  ${\op{p_\acc}(R_x, \op{k}(\abs{x})) \leq \frac{1}{4 \op{p}(\abs{x})}}$ if $x$ is in $A_\no$,
  where $\function{k}{\Nonnegative}{\Natural}$ is the logarithmically bounded function defined by
  ${k = l + \log l + 1}$.

  From the circuit~$R_x$,
  one further constructs a quantum circuit~$R'_x$
  according to the \textsc{OR-Type Repetition Procedure}
  with the integer~${N = \op{p}(\abs{x})}$.
  By Proposition~\ref{Proposition: acceptance probability of OR-Type Repetition Procedure},
  it holds that
  \[
    \op{p_\acc} \bigl( R'_x, \op{k}(\abs{x}) + \ceil{\log \op{p}(\abs{x})} + 2 \bigr)
    \geq
    1 - \biggl( \frac{1}{4 \op{p}(\abs{x})} \biggr)^{\op{p}(\abs{x})}
    >
    1 - 2^{- \op{p}(\abs{x})}
  \]
  if $x$ is in $A_\yes$,
  and
  \[
    \op{p_\acc} \bigl( R'_x, \op{k}(\abs{x}) + \ceil{\log \op{p}(\abs{x})} + 2 \bigr)
    <
    1 - \biggl( 1 - \frac{1}{4 \op{p}(\abs{x})} \biggr)^{2 \op{p}(\abs{x})}
    <
    1 - \biggl( 1 - 2 \op{p}(\abs{x}) \cdot \frac{1}{4 \op{p}(\abs{x})} \biggr)
    =
    \frac{1}{2}
  \]
  if $x$ is in $A_\no$.

  The claim follows
  with the polynomial-time uniformly generated family~${\{R'_x\}_{x \in \Sigma^\ast}}$ of quantum circuits.
\end{proof}

\subsubsection{One-Clean-Qubit Case}

By using Lemmas~\ref{Lemma: BQlogP with constant gap is in BQ1P with constant gap}~and~\ref{Lemma: BQ1P with constant gap is in BQlogP with almost perfect completeness},
Theorem~\ref{Theorem: BQlogP with constant gap is in BQ1P}
is proved as follows.

\begin{proof}[Proof of Theorem~\ref{Theorem: BQlogP with constant gap is in BQ1P}]
  For any constants~$c$~and~$s$ in $\Real$ satisfying ${0 < s < c < 1}$,
  let ${A = (A_\yes, A_\no)}$ be a problem in ${\QlogP(c, s)}$.
  It is sufficient to show
  that the containment~${\QlogP(c, s) \subseteq \QoneP \bigl( 1 - 2^{-p}, \frac{1}{p} \bigr)}$ holds
  for any polynomially bounded function~$\function{p}{\Nonnegative}{\Natural}$.
  Indeed, the containment~${\QlogP(c, s) \subseteq \QoneP \bigl( 1 - \frac{1}{p}, 2^{-p} \bigr)}$
  is then proved by considering the complement problem~${\overline{A} = (A_\no, A_\yes)}$ of $A$:
  As $\overline{A}$ is in ${\QlogP(\varepsilon, \delta)}$
  for the constants~${\varepsilon = 1 - s}$ and ${\delta = 1 - c}$
  satisfying ${0 < \delta < \varepsilon < 1}$,
  the containment above to be proved implies
  that $\overline{A}$ is in ${\QoneP \bigl( 1 - 2^{-p}, \frac{1}{p} \bigr)}$,
  and thus, that $A$ is in ${\QoneP \bigl( 1 - \frac{1}{p}, 2^{-p} \bigr)}$,
  for any polynomially bounded function~$\function{p}{\Nonnegative}{\Natural}$.

  From Lemma~\ref{Lemma: BQlogP with constant gap is in BQ1P with constant gap},
  $A$ is in ${\QoneP \bigl( \frac{3}{4}, \:\! \frac{5}{8} \bigr)}$.
  Therefore, from Lemma~\ref{Lemma: BQ1P with constant gap is in BQlogP with almost perfect completeness},
  $A$ is in ${\QlogP \bigl( 1 - 2^{- q}, \frac{1}{2} \bigr)}$,
  where $\function{q}{\Nonnegative}{\Natural}$ is the polynomially bounded function
  defined by ${q = p + 4 \ceil{\log p} + 13}$.
  Hence, $A$ has a polynomial-time uniformly generated family~${\{Q_x\}_{x \in \Sigma^\ast}}$
  of quantum circuits
  such that, for every input~$x$,
  ${\op{p_\acc}(Q_x, \op{k}(\abs{x})) \geq 1 - 2^{- \op{q}(\abs{x})}}$ if $x$ is in $A_\yes$
  and
  ${\op{p_\acc}(Q_x, \op{k}(\abs{x})) \leq \frac{1}{2}}$ if $x$ is in $A_\no$,
  for some logarithmically bounded function~$\function{k}{\Nonnegative}{\Natural}$.
  The rest of the proof is essentially the same as the proof of Theorem~\ref{Theorem: co-RQlogP is in co-RQ1P}
  in Section~\ref{Section: error reduction for one-sided-error cases}.

  Fix an input~$x$.

  From the circuit~$Q_x$,
  one first constructs a quantum circuit~$R_x$
  according to the \textsc{One-Clean-Qubit Simulation Procedure}.
  By Proposition~\ref{Proposition: acceptance probability of One-Clean-Qubit Simulation Procedure},
  it holds that
  \[
    \op{p_\acc}(R_x, 1)
    \geq
    1 - 2^{- \op{k}(\abs{x})} \Bigl[ 1 - \bigl( 1 - 2^{- \op{q}(\abs{x})} \bigr)^2 \Bigr]
    >
    1 - 2^{- \op{q}(\abs{x}) - \op{k}(\abs{x}) + 1}
    =
    1 - \frac{2^{- \op{q}(\abs{x}) + 2}}{\op{r}_1(\abs{x})}
  \]
  if $x$ is in $A_\yes$,
  and
  \[
    1 - \frac{2}{\op{r}(\abs{x})}
    =
    1 - 2^{-\op{k}(\abs{x})}
    \leq
    \op{p_\acc}(R_x, 1)
    \leq
    1 - 2^{- \op{k}(\abs{x}) - 1}
    =
    1 - \frac{1}{\op{r}_1(\abs{x})}
  \]
  if $x$ is in $A_\no$,
  where $\function{r_1}{\Nonnegative}{\Natural}$ is the polynomially bounded function
  defined by ${r_1 = 2^{k + 1}}$.

  From the circuit~$R_x$,
  one constructs a quantum circuit~$R'_x$
  according to the \textsc{Randomness Amplification Procedure}
  with the integer~${N = \op{r}_1(\abs{x}) \op{r}_2(\abs{x})}$
  for a polynomially bounded function~$\function{r_2}{\Nonnegative}{\Natural}$
  such that ${r_2 = p + 1}$.
  By Proposition~\ref{Proposition: acceptance probability of Randomness Amplification Procedure},
  it holds that
  \[
    \begin{split}
      \hspace{5mm}
      &
      \hspace{-5mm}
      \op{p_\acc}(R'_x, 1)
      \geq
      \frac{1}{2}
      +
      \frac{1}{2} \biggl( 1 - \frac{2^{- \op{q}(\abs{x}) + 3}}{\op{r}_1(\abs{x})} \biggr)^{\op{r}_1(\abs{x}) \op{r}_2(\abs{x})}
      \\
      &
      >
      \frac{1}{2}
      +
      \frac{1}{2} \biggl( 1 - \op{r}_1(\abs{x}) \op{r}_2(\abs{x}) \cdot \frac{2^{- \op{q}(\abs{x}) + 3}}{\op{r}_1(\abs{x})} \biggr)
      =
      1
      -
      \op{r}_2(\abs{x}) \cdot 2^{- \op{q}(\abs{x}) + 2}
    \end{split}
  \]
  if $x$ is in $A_\yes$,
  while
  \[
    \frac{1}{2}
    \leq
    \op{p_\acc}(R'_x, 1)
    \leq
    \frac{1}{2}
    +
    \frac{1}{2} \biggl( 1 - \frac{2}{\op{r}_1(\abs{x})} \biggr)^{\op{r}_1(\abs{x}) \op{r}_2(\abs{x})}
    <
    \frac{1}{2}
    +
    \frac{1}{2} e^{- 2 \op{r}_2(\abs{x})}
    <
    \frac{1}{2} + 2^{- 2 \op{r}_2(\abs{x})}
  \]
  if $x$ is in $A_\no$.
 
  Finally, from the circuit~$R'_x$,
  one constructs a quantum circuit~$R''_x$
  according to the \textsc{One-Clean-Qubit Stability Checking Procedure}
  with $N$ to be the smallest integer that is a power of two and at least ${(6 \op{p}(\abs{x}))^3}$
  (i.e., ${N = 2^{\ceilL{3 \log (6 \op{p}(\abs{x}))}} > 2^6 = 64}$).
  If $x$ is in $A_\yes$,
  Proposition~\ref{Proposition: lower bound of acceptance probability of One-Clean-Qubit Stability Checking Procedure}
  ensures that
  \[
    \begin{split}
      \hspace{5mm}
      &
      \hspace{-5mm}
      \op{p_\acc}(R''_x, 1)
      \geq
      \Bigl( 1 - \op{r}_2(\abs{x}) \cdot 2^{- \op{q}(\abs{x}) + 2} \Bigr)^{4 \cdot (6 \op{p}(\abs{x}))^3 - 1}
      \\
      &
      >
      \Bigl( 1 - \op{r}_2(\abs{x}) \cdot 2^{- \op{q}(\abs{x}) + 2} \Bigr)^{2 \cdot (8 \op{p}(\abs{x}))^3}
      >
      1 - 2 \cdot (8 \op{p}(\abs{x}))^3 \cdot \op{r}_2(\abs{x}) \cdot 2^{- \op{q}(\abs{x}) + 2}
      \geq
      1 - 2^{- \op{p}(\abs{x})},
    \end{split}
  \]
  where the first inequality follows from the fact
  that $N$ is at most ${2 \cdot (6 \op{p}(\abs{x}))^3}$,
  the second inequality follows from the fact that ${2 \cdot (6 \op{p}(\abs{x}))^3 < (8 \op{p}(\abs{x}))^3}$,
  and the last inequality follows from the fact
  that ${\op{q}(\abs{x}) = \op{p}(\abs{x}) + 4 \ceil{\log \op{p}(\abs{x})} + 13}$,
  that ${(8 \op{p}(\abs{x}))^3 = 2^{3 \log (8 \op{p}(\abs{x}))} = 2^{3 \log \op{p}(\abs{x}) + 9}}$,
  that ${\op{r}_2(\abs{x}) = \op{p}(\abs{x}) + 1 = 2^{\log (\op{p}(\abs{x}) + 1)}}$,
  and that the inequality~${\log (n + 1) \leq \log n + 1}$ holds for any ${n \geq 1}$.
  On the other hand,
  if $x$ is in $A_\no$,
  Proposition~\ref{Proposition: upper bound of acceptance probability of One-Clean-Qubit Stability Checking Procedure},
  ensures that
  \[
    \op{p_\acc}(R''_x, 1)
    \leq
    \frac{1}{2 \op{p}(\abs{x})}
    +
    4 \cdot 2^{- 2 \op{r}_2(\abs{x})}
    <
    \frac{1}{\op{p}(\abs{x})},
  \]
  where the last inequality follows from
  the facts that ${r_2 \geq p + 1}$
  and that the inequality~${2^n > n}$ holds.
  
  The claim follows
  with the polynomial-time uniformly generated family~${\{R''_x\}_{x \in \Sigma^\ast}}$ of quantum circuits.
\end{proof}

\subsubsection{Two-Clean-Qubit Case}

Again by using Lemmas~\ref{Lemma: BQlogP with constant gap is in BQ1P with constant gap}~and~\ref{Lemma: BQ1P with constant gap is in BQlogP with almost perfect completeness},
Theorem~\ref{Theorem: error reduction for two-sided-error case}
is proved as follows.

\begin{proof}[Proof of Theorem~\ref{Theorem: error reduction for two-sided-error case}]
  The proof is essentially the same as that of Theorem~\ref{Theorem: BQlogP with constant gap is in BQ1P},
  except that the circuit~$R''_x$ is constructed
  according to the \textsc{Two-Clean-Qubit Stability Checking Procedure}
  rather than the \textsc{One-Clean-Qubit Stability Checking Procedure}.

  For any constants~$c$~and~$s$ in $\Real$ satisfying ${0 < s < c < 1}$,
  let ${A = (A_\yes, A_\no)}$ be a problem in ${\QlogP(c, s)}$.
  From Lemma~\ref{Lemma: BQlogP with constant gap is in BQ1P with constant gap},
  $A$ is in ${\QoneP \bigl( \frac{3}{4}, \:\! \frac{5}{8} \bigr)}$.
  Therefore, from Lemma~\ref{Lemma: BQ1P with constant gap is in BQlogP with almost perfect completeness},
  $A$ is in ${\QlogP \bigl( 1 - 2^{- q}, \frac{1}{2} \bigr)}$,
  where $\function{q}{\Nonnegative}{\Natural}$ is the polynomially bounded function
  defined by ${q = p + \ceil{\log p} + 12}$.
  Hence, $A$ has a polynomial-time uniformly generated family~${\{Q_x\}_{x \in \Sigma^\ast}}$
  of quantum circuits
  such that, for every input~$x$,
  ${\op{p_\acc}(Q_x, \op{k}(\abs{x})) \geq 1 - 2^{- \op{q}(\abs{x})}}$ if $x$ is in $A_\yes$
  and
  ${\op{p_\acc}(Q_x, \op{k}(\abs{x})) \leq \frac{1}{2}}$ if $x$ is in $A_\no$,
  for some logarithmically bounded function~$\function{k}{\Nonnegative}{\Natural}$.
  The rest of the proof is essentially the same as the proof of Theorem~\ref{Theorem: error reduction for perfect-completeness case}
  in Section~\ref{Section: error reduction for one-sided-error cases}.

  Fix an input~$x$.

  From the circuit~$Q_x$,
  one first constructs a quantum circuit~$R_x$
  according to the \textsc{One-Clean-Qubit Simulation Procedure}.
  As in the proof of Theorem~\ref{Theorem: BQlogP with constant gap is in BQ1P},
  by Proposition~\ref{Proposition: acceptance probability of One-Clean-Qubit Simulation Procedure},
  it holds that
  \[
    \op{p_\acc}(R_x, 1)
    >
    1 - \frac{2^{- \op{q}(\abs{x}) + 2}}{\op{r}(\abs{x})}
  \]
  if $x$ is in $A_\yes$,
  and
  \[
    1 - \frac{2}{\op{r}(\abs{x})}
    \leq
    \op{p_\acc}(R_x, 1)
    \leq
    1 - \frac{1}{\op{r}(\abs{x})}
  \]
  if $x$ is in $A_\no$,
  where $\function{r}{\Nonnegative}{\Natural}$ is the polynomially bounded function
  defined by ${r = 2^{k + 1}}$.

  From the circuit~$R_x$,
  one constructs a quantum circuit~$R'_x$
  according to the \textsc{Randomness Amplification Procedure}
  with the integer~${N \geq \alpha \op{r}(\abs{x})}$, where ${\alpha = \frac{3}{2} \ln 2 < 1.04}$.
  By Proposition~\ref{Proposition: acceptance probability of Randomness Amplification Procedure},
  it holds that
  \[
    \begin{split}
      \hspace{5mm}
      &
      \hspace{-5mm}
      \op{p_\acc}(R'_x, 1)
      \geq
      \frac{1}{2}
      +
      \frac{1}{2} \biggl( 1 - \frac{2^{- \op{q}(\abs{x}) + 3}}{\op{r}(\abs{x})} \biggr)^{\alpha \op{r}(\abs{x})}
      \\
      &
      >
      \frac{1}{2}
      +
      \frac{1}{2} \biggl( 1 - \alpha \op{r}(\abs{x}) \cdot \frac{2^{- \op{q}(\abs{x}) + 3}}{\op{r}(\abs{x})} \biggr)
      =
      1
      -
      \alpha \cdot 2^{- \op{q}(\abs{x}) + 2}
    \end{split}
  \]
  if $x$ is in $A_\yes$,
  while
  \[
    \frac{1}{2}
    \leq
    \op{p_\acc}(R'_x, 1)
    \leq
    \frac{1}{2}
    +
    \frac{1}{2} \biggl( 1 - \frac{2}{\op{r}(\abs{x})} \biggr)^{\alpha \op{r}(\abs{x})}
    <
    \frac{1}{2}
    +
    \frac{1}{2} e^{- 3 \ln 2}
    =
    \frac{1}{2} + \frac{1}{16}
  \]
  if $x$ is in $A_\no$.
 
  Finally, from the circuit~$R'_x$,
  one constructs a quantum circuit~$R''_x$
  according to the \textsc{Two-Clean-Qubit Stability Checking Procedure}
  with $N$ to be the smallest integer that is a power of two and at least ${16 (\op{p}(\abs{x}) + 1)}$
  (i.e., ${N = 2^{\ceilL{\log (\op{p}(\abs{x}) + 1)} + 4}}$).
  If $x$ is in $A_\yes$,
  Proposition~\ref{Proposition: lower bound of acceptance probability of Two-Clean-Qubit Stability Checking Procedure}
  ensures that
  \[
    \op{p_\acc}(R''_x, 2)
    \geq
    \bigl( 1 - \alpha \cdot 2^{- \op{q}(\abs{x}) + 2} \bigr)^{256 (\op{p}(\abs{x}) + 1)}
    >
    1 - 256 (\op{p}(\abs{x}) + 1) \cdot \alpha \cdot 2^{- \op{q}(\abs{x}) + 2}
    >
    1 - 2^{- \op{p}(\abs{x})},
  \]
  where the first inequality follows from the fact
  that $N$ is at most ${2 \cdot 16 (\op{p}(\abs{x}) + 1) = 32 (\op{p}(\abs{x}) + 1)}$,
  and the last inequality follows from the fact
  that ${\op{q}(\abs{x}) = \op{p}(\abs{x}) + \ceil{\log \op{p}(\abs{x})} + 12}$,
  that ${\alpha = \frac{3}{2} \ln 2 < 1.04 < 2}$,
  that ${\op{p}(\abs{x}) + 1 = 2^{\log (\op{p}(\abs{x}) + 1)}}$,
  and that the inequality~${\log (n + 1) \leq \log n + 1}$ holds for any ${n \geq 1}$.

  On the other hand,
  if $x$ is in $A_\no$,
  Proposition~\ref{Proposition: upper bound of acceptance probability of Two-Clean-Qubit Stability Checking Procedure}
  ensures that
  \[
    \op{p_\acc}(R''_x, 2)
    \leq
    2^{- \op{p}(\abs{x})}
  \]
  (notice that the bounds for the probability~${\op{p_\acc}(R'_x, 1)}$ above
  are such that
  Proposition~\ref{Proposition: upper bound of acceptance probability of Two-Clean-Qubit Stability Checking Procedure}
  can be used to show the upper bound for the probability~${\op{p_\acc}(R''_x, 2)}$).

  The claim follows
  with the polynomial-time uniformly generated family~${\{R''_x\}_{x \in \Sigma^\ast}}$ of quantum circuits.
\end{proof}


\section{Completeness Results for \problemfont{Trace Estimation} Problem}
\label{Section: completeness results}

This section proves Theorem~\ref{Theorem: completeness of the trace estimation problem},
which states that the \problemfont{Trace Estimation} problem ${\TrEst(a, b)}$
with any parameters~$a$~and~$b$ satisfying ${0 < b < a < 1}$
is complete for both $\BQlogP$ and $\BQoneP$
under polynomial-time many-one reduction.

As presented in Ref.~\cite{ShoJor08QIC},
the following two properties are known to hold.

\begin{lemma}
  For any constants~$a$~and~$b$ in $\Real$ satisfying ${-1 \leq b < a \leq 1}$,
  ${\TrEst(a, b)}$ is in ${\QoneP \bigl( \frac{1}{2} + \frac{a}{2}, \frac{1}{2} + \frac{b}{2} \bigr)}$.
  \label{Lemma: TrEst(a,b) is in QoneP}
\end{lemma}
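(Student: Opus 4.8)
The plan is to solve $\TrEst(a,b)$ by the standard one-clean-qubit trace-estimation protocol — the Hadamard test applied to a maximally mixed register — essentially as in Ref.~\cite{ShoJor08QIC}. Given a description of a quantum circuit $Q$ implementing a unitary $U$ over $n$~qubits, I would construct in polynomial time a circuit $R$ acting over $n+1$~qubits (plus some ancillas): a single-qubit register $\regO$ holding the unique clean qubit, an $n$-qubit register $\regR$ initially in the totally mixed state, and further ancilla qubits also initially totally mixed. The circuit $R$ applies $H$ to $\regO$, then the controlled transformation $\controlled(U)$ with $\regO$ as control and $\regR$ as target, then $H$ to $\regO$ again, and finally measures the qubit in $\regO$ in the computational basis, accepting on outcome~$\ket{0}$ (so $\regO$ is both the clean qubit and the output qubit).

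The first step is to check that $\controlled(U)$ is implementable within the one-clean-qubit model directly from a description of $Q$. Writing $Q$ as a product of gates from the assumed universal set, it suffices to replace each gate by its version controlled on the single qubit in $\regO$: the controlled-Hadamard transformation $\controlled(H)$ and the controlled-$\CNOT$ transformation $\controlled(\CNOT)=\controlled^2(X)$ (a Toffoli gate) are available as discussed in Subsection~\ref{Subsection: quantum circuits}, and the controlled versions of the remaining gates of the universal set are constructed similarly. The key point to record is that all the ancilla qubits required (for the generalized Toffoli sub-circuits) may be taken in the totally mixed state and never need initialization, so the resulting $R$ induces a genuine one-clean-qubit ($\DQC{1}$) computation. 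This is routine bookkeeping given the tools already set up in the preliminaries.

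The heart of the argument is a short direct computation of the acceptance probability. Decomposing the totally mixed state on $\regR$ as $\frac{1}{2^n}\sum_{j}\ketbra{j}$ and tracing the basis component $\ket{0}\tensor\ket{j}$ through $R$, after $H$, $\controlled(U)$, and $H$ the state in $(\regO,\regR)$ is $\frac12\bigl[\ket{0}\tensor(\ket{j}+U\ket{j})+\ket{1}\tensor(\ket{j}-U\ket{j})\bigr]$, so the probability of measuring $\ket{0}$ on this component is $\frac14\bignorm{\ket{j}+U\ket{j}}^2=\frac12+\frac12\Re\bra{j}U\ket{j}$. Averaging over $j$ gives an acceptance probability equal to exactly $\frac12+\frac{1}{2^{n+1}}\Re(\tr U)$. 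Hence if $\frac{1}{2^n}\Re(\tr U)\geq a$ the computation accepts with probability at least $\frac12+\frac{a}{2}$, while if $\frac{1}{2^n}\Re(\tr U)\leq b$ it accepts with probability at most $\frac12+\frac{b}{2}$; since $a>b$ these constitute valid completeness and soundness bounds, so $\TrEst(a,b)$ lies in $\QoneP\bigl(\frac12+\frac{a}{2},\,\frac12+\frac{b}{2}\bigr)$.

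I do not expect a genuine obstacle here: the statement is classical and the calculation is one line. The only mild care needed is in the first step — ensuring that building $\controlled(U)$ from the given circuit $Q$ does not covertly consume extra clean qubits — and this is precisely guaranteed by the dirty-ancilla constructions of generalized Toffoli transformations cited earlier, which permit all auxiliary qubits to be in the totally mixed state.
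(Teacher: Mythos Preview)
Your proposal is correct and is exactly the standard Hadamard-test construction for $\DQC{1}$ trace estimation; the paper does not give its own proof of this lemma but simply cites Ref.~\cite{ShoJor08QIC}, whose argument is precisely the one you reproduce. Your attention to implementing $\controlled(U)$ with only dirty ancillas is appropriate and matches the level of care the paper's preliminaries set up.
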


\begin{lemma}
  For any constants~$a$~and~$b$ in $\Real$ satisfying ${0 \leq b < a \leq 1}$,
  ${\TrEst(a, b)}$ is hard for
  ${\QoneP \bigl( \frac{a}{4}, \frac{b}{4} \bigr)}$
  under polynomial-time many-one reduction.
  \label{Lemma: QoneP-hardness of TrEst(a,b)}
\end{lemma}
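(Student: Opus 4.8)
The plan is to exhibit, for any constants $a,b$ with $0\le b<a\le1$ and any problem $A=(A_\yes,A_\no)$ in $\QoneP\bigl(\frac a4,\frac b4\bigr)$, a polynomial-time many-one reduction sending each input $x$ to a description of a unitary circuit $U_x$ whose normalized real trace separates the two cases, i.e.\ $\frac1{2^n}\Re\tr U_x\ge a$ when $x\in A_\yes$ and $\frac1{2^n}\Re\tr U_x\le b$ when $x\in A_\no$. Fix a polynomial-time uniformly generated family $\{Q_x\}$ witnessing membership, so that $Q_x$ acts over $w=w(\abs{x})$ qubits split as $\regO$ (the single clean output qubit) and $\regR$ (the $w-1$ initially totally mixed qubits), with $p_\acc(Q_x,1)\ge\frac a4$ on yes-instances and $p_\acc(Q_x,1)\le\frac b4$ on no-instances. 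The reduction is essentially the \emph{reverse} of the Hadamard test underlying the membership direction (Lemma~\ref{Lemma: TrEst(a,b) is in QoneP}): instead of reading an unknown trace off an acceptance probability, I turn a known acceptance probability into the trace of an explicitly constructed circuit.

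I first record the engine of the reduction. Writing $\Pi_\acc=\ketbra{0}_{\regO}\tensor I^{\tensor(w-1)}$ and using $\rho_\init^{(w,1)}=\frac1{2^{w-1}}\Pi_\acc$, one has $p_\acc(Q_x,1)=\frac1{2^{w-1}}\tr\bigl[\Pi_\acc Q_x\Pi_\acc\conjugate{Q_x}\bigr]$. Expanding each occurrence of $\Pi_\acc$ as $\frac12(I+Z_\regO)$, where $Z_\regO$ is the Pauli $Z$ on $\regO$, and using $\tr Z_\regO=0$ together with $\tr[Q_x Z_\regO\conjugate{Q_x}]=\tr Z_\regO=0$, the two single-$Z$ cross terms drop and one obtains the clean identity $\frac1{2^w}\Re\tr\bigl(Z_\regO Q_x Z_\regO\conjugate{Q_x}\bigr)=2\,p_\acc(Q_x,1)-1$. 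Here $G:=Z_\regO Q_x Z_\regO\conjugate{Q_x}$ is a genuine, polynomial-size, efficiently describable unitary, and its trace is real since cyclicity gives $\overline{\tr G}=\tr\conjugate{G}=\tr G$. This alone already reduces $A$ to a trace-estimation problem, but with the ``wrong'' thresholds $2\cdot\frac a4-1$ and $2\cdot\frac b4-1$, which are negative and miss $a,b$.

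To land on the thresholds $a$ and $b$ exactly, the aim is to replace the single reflection $Z_\regO$ — responsible for both the slope~$2$ and the additive shift — by a gadget in which the clean-qubit \emph{initialization} and the output \emph{measurement} are reflected on two \emph{distinct} qubits coupled through $Q_x$ (with one fresh ancilla). When the two projectors sit on different qubits, the $\frac12(I\pm Z)$ expansion should retain the mixed $ZZ$-type cross term, so that the surviving contributions combine to double the effective slope, targeting the affine relation $\frac1{2^n}\Re\tr U_x=4\,p_\acc(Q_x,1)$ on the range $p_\acc\le\frac14$. The slope $4$ is forced: the unique affine map carrying $\bigl(\frac a4,\frac b4\bigr)$ to $(a,b)$ is $p\mapsto4p$, passing through the origin. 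Granting this identity, the case analysis is immediate: $p_\acc\ge\frac a4$ forces $\frac1{2^n}\Re\tr U_x\ge a$, while $p_\acc\le\frac b4$ forces $\frac1{2^n}\Re\tr U_x\le b$; monotonicity of the gadget (and the cap of the normalized trace at $1$) then covers yes-instances with $p_\acc>\frac14$, for which only $\frac1{2^n}\Re\tr U_x\ge a$ with $a<1$ is required.

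The main obstacle is precisely the construction and verification of this last gadget. One must present an explicit, efficiently describable $U_x$ built from $Q_x$, $\conjugate{Q_x}$, and ancillas, and prove its normalized real trace is \emph{linear through the origin with slope~$4$}, rather than the slope-$2$, shifted quantity of the naive version. The delicate points are: (i) decoupling the output reflection from the initialization reflection without collapsing the cross terms — a careless copy (e.g.\ a single $\CNOT$ from $\regO$ onto the ancilla followed by measuring the ancilla) makes the trace identically $2^{w-1}$, hence useless; (ii) checking that the resulting operator remains unitary with real trace; and (iii) reconciling the dimension bookkeeping, since a normalized trace cannot exceed~$1$, so the linear relation can hold only for $p_\acc\le\frac14$, and larger acceptance probabilities must be handled separately by monotonicity. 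Once this gadget is verified, polynomial-time uniformity of $\{U_x\}$ is inherited from that of $\{Q_x\}$, and the reduction is complete.
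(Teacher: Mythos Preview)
The paper does not prove this lemma at all --- it is quoted from Shor and Jordan~\cite{ShoJor08QIC} as a known fact, so there is no ``paper's own proof'' to compare against. That said, your attempt has a genuine gap that makes it incomplete on its own terms.

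Your derivation of the identity
\[
  \frac{1}{2^w}\,\Re\tr\bigl(Z_{\regO} Q_x Z_{\regO} \conjugate{Q_x}\bigr) = 2\,\op{p_\acc}(Q_x,1) - 1
\]
is correct and is exactly the kind of phase-kickback calculation underlying the One-Clean-Qubit Simulation Procedure in Section~\ref{Subsection: One-Clean-Qubit Simulation Procedure}. The problem is everything after that. You recognise that you need a unitary whose normalized trace is $4\,\op{p_\acc}(Q_x,1)$, you describe the obstacle clearly, and then you \emph{assume} the existence of a gadget realising it (``Granting this identity \ldots''). But this gadget \emph{is} the entire content of the reduction --- the rest is arithmetic --- and you neither construct it nor give a verifiable sketch. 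Your heuristic of ``reflecting on two distinct qubits coupled through $Q_x$'' is too vague to evaluate; indeed, the natural two-ancilla version of your idea, $U=\text{CNOT}_{\regO\to a_1}(Q\otimes I)\,\text{CNOT}_{\regO\to a_2}(\conjugate{Q}\otimes I)$, gives normalized trace $\op{p_\acc}/2$, not $4\,\op{p_\acc}$, and the one-ancilla version gives exactly $\op{p_\acc}$. Getting slope~$4$ from a single unitary trace is not a routine modification of the slope-$2$ or slope-$1$ constructions.

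There is also a consistency issue you half-acknowledge: an exact linear relation $\tfrac{1}{2^n}\Re\tr U_x = 4\,\op{p_\acc}$ is impossible for $\op{p_\acc}>1/4$ since the left side is bounded by~$1$. You wave at ``monotonicity of the gadget'' to cover yes-instances with large $\op{p_\acc}$, but without an actual construction there is nothing to be monotone. In short, the proposal identifies the correct target and the correct difficulty, but stops precisely at the point where the work begins.
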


Theorem~\ref{Theorem: completeness of the trace estimation problem} is then easily proved as follows,
by combining Theorem~\ref{Theorem: BQlogP with constant gap is in BQ1P} and these two lemmas.

\begin{proof}[Proof of Theorem~\ref{Theorem: completeness of the trace estimation problem}]
  We show the $\BQoneP$-completeness of ${\TrEst(a, b)}$.
  The $\BQlogP$-completeness is then trivial,
  as ${\BQlogP = \BQoneP}$ due to Theorem~\ref{Theorem: BQlogP with constant gap is in BQ1P}.
  In what follows,
  fix the parameters~$a$~and~$b$, ${0 < b < a < 1}$,
  of ${\TrEst(a, b)}$.

  For the membership that ${\TrEst(a, b)}$ is in $\BQoneP$,
  notice that the class~${\QoneP \bigl( \frac{1}{2} + \frac{a}{2}, \frac{1}{2} + \frac{b}{2} \bigr)}$
  is included in the class~${\QoneP(c, s)}$
  for any constants~$c$~and~$s$ satisfying that
  ${c \geq \max \bigl\{ \frac{2}{3}, \frac{1}{2} + \frac{a}{2} \bigr\}}$
  and
  ${s \leq \min \bigl\{ \frac{1}{3}, \frac{1}{2} + \frac{b}{2} \bigr\}}$,
  due to Theorem~\ref{Theorem: BQlogP with constant gap is in BQ1P}.
  Therefore, the class~${\QoneP \bigl( \frac{1}{2} + \frac{a}{2}, \frac{1}{2} + \frac{b}{2} \bigr)}$
  is included in $\BQoneP$ also.
  Hence, the membership is immediate,
  as Lemma~\ref{Lemma: TrEst(a,b) is in QoneP} ensures that
  ${\TrEst(a, b)}$ is in ${\QoneP \bigl( \frac{1}{2} + \frac{a}{2}, \frac{1}{2} + \frac{b}{2} \bigr)}$.

  Now for the $\BQoneP$-hardness of ${\TrEst(a, b)}$,
  notice that $\BQoneP$ is included in the class~${\QoneP(c, s)}$
  for any constants~$c$~and~$s$ satisfying that
  ${c \geq \max \bigl\{ \frac{2}{3}, \frac{a}{4} \bigr\}}$
  and
  ${s \leq \min \bigl\{ \frac{1}{3}, \frac{b}{4} \bigr\}}$,
  due to Theorem~\ref{Theorem: BQlogP with constant gap is in BQ1P}.
  Therefore, $\BQoneP$ is included
  in the class~${\QoneP \bigl( \frac{a}{4}, \frac{b}{4} \bigr)}$ also.
  From Lemma~\ref{Lemma: QoneP-hardness of TrEst(a,b)},
  ${\TrEst(a, b)}$ is hard for
  ${\QoneP \bigl( \frac{a}{4}, \frac{b}{4} \bigr)}$
  under polynomial-time many-one reduction,
  and thus,
  ${\TrEst(a, b)}$ is hard for $\BQoneP$ also,
  under polynomial-time many-one reduction.
\end{proof}

\begin{remark}
  As mentioned in Subsection~\ref{Subsection: remarks on uniformity},
  the completeness results of Theorem~\ref{Theorem: completeness of the trace estimation problem}
  hold even under logarithmic-space many-one reduction,
  if the classes~$\BQlogP$ and $\BQoneP$ are defined
  with logarithmic-space uniformly generated family of quantum circuits.
\end{remark}


\section{Hardness of Weak Classical Simulations of $\boldsymbol{\DQC{1}}$ Computation}
\label{Section: hardness of weak simulation}

This section deals with the hardness of weakly simulating a $\DQC{1}$ computation.
First, Subsection~\ref{Subsection: weak simulatability}
reviews the notions of weak simulatability that are discussed in this paper.
Subsection~\ref{Subsection: proofs of hardness of weak simulatability} then proves
Theorems~\ref{Theorem: NQP = NQ[1]P and SBQP = SBQ[1]P}~and~\ref{Theorem: hardness of classically simulating DQC1, informal}.


\subsection{Weak Simulatability}
\label{Subsection: weak simulatability}

Following conventions, this paper uses the following notions of simulatability.

Consider any family~${\{ Q_x \}_{x \in \Sigma^\ast}}$ of quantum circuits.
For each circuit~$Q_x$, suppose that $m$~output qubits are measured in the computational basis
after the application of $Q_x$ to a certain prescribed initial state
(which will be clear from the context).
Let $\function{P_x}{\Sigma^m}{[0,1]}$ be the probability distribution
derived from the output of $Q_x$
(i.e., ${P_x(y_1, \dotsc, y_m)}$
is the probability of obtaining the measurement result~${(y_1, \dotsc, y_m)}$ in $\Sigma^m$
when $Q_x$ is applied to the prescribed initial state).

The family~${\{Q_x\}_{x \in \Sigma^\ast}}$ is
\emph{weakly simulatable with multiplicative error~${c \geq 1}$}
if there exists a family~${\{P'_x\}_{x \in \Sigma^\ast}}$ of probability distributions
that can be sampled classically in polynomial time
such that, for any $x$ in $\Sigma^\ast$ and any ${(y_1, \dotsc, y_m)}$ in $\Sigma^m$,
\begin{equation}
  \frac{1}{c} \op{P}_x(y_1, \dotsc, y_m)
  \leq
  \op{P}'_x(y_1, \dotsc, y_m)
  \leq
  c \op{P}_x(y_1, \ldots, y_m).
  \label{Equation: multiplicative approximation}
\end{equation}

Similarly, the family~${\{Q_x\}_{x \in \Sigma^\ast}}$ is
\emph{weakly simulatable with exponentially small additive error}
if, for any polynomially bounded function~$q$,
there exists a family~${\{P'_x\}_{x \in \Sigma^\ast}}$ of probability distributions
that can be sampled classically in polynomial time
such that, for any $x$ in $\Sigma^\ast$ and any ${(y_1, \dotsc, y_m)}$ in $\Sigma^m$,
\[
  \bigabs{\op{P}_x(y_1, \dotsc, y_m) - \op{P}'_x(y_1, \dotsc, y_m)} \leq 2^{-\op{q}(\abs{x})}.
\]

A few remarks are in order regarding the notions of weak simulatablity above.

First, the notion of weak simulatablity with multiplicative error
was first defined in Ref.~\cite{TerDiV04QIC} in a slightly different form.
The definition taken in this paper is found in Refs.~\cite{BreJozShe11RSPA, MorFujFit14PRL},
for instance.
The version in Ref.~\cite{TerDiV04QIC} uses the bound~%
${
  \bigabs{\op{P}_x(y_1, \dotsc, y_m) - \op{P}'_x(y_1, \dotsc, y_m)}
  \leq
  \varepsilon \op{P}_x(y_1, \dotsc, y_m)
}$
instead of the bounds~(\ref{Equation: multiplicative approximation}),
and these two versions are essentially equivalent.
The results in this paper hold for any $\varepsilon$ in ${[0,1)}$
when using the version in Ref.~\cite{TerDiV04QIC}.

The notion of weak simulatability with exponentially small additive error
was introduced in Ref.~\cite{TakYamTan14QIC},
and was used also in Ref.~\cite{TakTanYamTan15COCOON}.

As the notion of weak simulatablity with multiplicative error
is often used when discussing the classical simulatability of quantum models~\cite{TerDiV04QIC, BreJozShe11RSPA, AarArk13ToC, NiVan13QIC, JozVan14QIC, MorFujFit14PRL, Bro15PRA},
the hardness result on the $\DQC{1}$ model under this notion
certainly makes it possible to discuss the power of the $\DQC{1}$ model
along the line of these existing studies.
As discussed in Refs.~\cite{BreJozShe11RSPA, AarArk13ToC}, however,
a much more reasonable notion is the weak simulatability with \emph{polynomially small additive error}
in total variation distance.
Proving or disproving classical simulatability under this notion
is one of the most important open problems in most of quantum computation models
including the $\DQC{1}$ model.


\subsection{Proofs of Theorems~\ref{Theorem: NQP = NQ[1]P and SBQP = SBQ[1]P}~and~\ref{Theorem: hardness of classically simulating DQC1, informal}}
\label{Subsection: proofs of hardness of weak simulatability}

First, 
Theorem~\ref{Theorem: NQP = NQ[1]P and SBQP = SBQ[1]P},
stating that the restriction to the $\DQC{1}$ computation
does not change the complexity classes~$\NQP$ and $\SBQP$,
can be easily proved by using the \textsc{One-Clean-Qubit Simulation Procedure}
presented in Subsection~\ref{Subsection: One-Clean-Qubit Simulation Procedure}.

\begin{proof}[Proof of Theorem~\ref{Theorem: NQP = NQ[1]P and SBQP = SBQ[1]P}]
  It suffices to show that ${\co\NQP \subseteq \co\NQoneP}$ and ${\co\SBQP \subseteq \co\SBQoneP}$.

  We first show that ${\co\NQP \subseteq \co\NQoneP}$.

  Consider any problem ${A = (A_\yes, A_\no)}$ is in ${\co\NQP}$,
  and let ${\{Q_x\}_{x \in \Sigma^\ast}}$
  be a polynomial-time uniformly generated family of quantum circuits that witnesses this fact.
  For each $x$ in $\Sigma^\ast$,
  the circuit~$Q_x$ acts over ${\op{w}(\abs{x})}$~qubits,
  for some polynomially bounded function~$\function{w}{\Nonnegative}{\Natural}$.
  By the definition of $\NQP$,
  for every input~$x$,
  the acceptance probability~${p_\acc \bigl( Q_x, \op{w}(\abs{x}) \bigr)}$ of the circuit~$Q_x$
  is one if $x$ is in $A_\yes$,
  while it is less than one if $x$ is in $A_\no$.

  Fix an input~$x$,
  and consider the quantum circuit~$R_x$
  corresponding to the \textsc{One-Clean-Qubit Simulation Procedure} induced by $Q_x$ and ${\op{w}(\abs{x})}$.
  By the properties of $Q_x$,
  Proposition~\ref{Proposition: acceptance probability of One-Clean-Qubit Simulation Procedure}
  ensures that
  the acceptance probability~${p_\acc(R_x, 1)}$ of the one-clean-qubit computation induced by $R_x$
  is one if $x$ is in $A_\yes$,
  while it is less than one if $x$ is in $A_\no$,
  which implies that $A$ is in ${\co\NQoneP}$.

  Now we show that ${\co\SBQP \subseteq \co\SBQoneP}$.

  Consider any problem ${A = (A_\yes, A_\no)}$ is in ${\co\SBQP}$.
  Then, by the amplification property of $\SBQP$ presented in Ref.~\cite{Kup15ToC},
  there exists a polynomial-time uniformly generated family~${\{Q_x\}_{x \in \Sigma^\ast}}$
  of quantum circuits
  such that, for every input~$x$,
  $Q_x$ acts over ${\op{w}(\abs{x})}$~qubits,
  for some polynomially bounded function~$\function{w}{\Nonnegative}{\Natural}$,
  and the acceptance probability~${p_\acc \bigl( Q_x, \op{w}(\abs{x}) \bigr)}$ of the circuit~$Q_x$
  is at least ${1 - 2^{- \op{p}(\abs{x}) - 2}}$ if $x$ is in $A_\yes$,
  while it is at most ${1 - 2^{- \op{p}(\abs{x})}}$ if $x$ is in $A_\no$,
  for some polynomially bounded function~$\function{p}{\Nonnegative}{\Natural}$.
  
  Fix an input~$x$,
  and consider the quantum circuit~$R_x$
  corresponding to the \textsc{One-Clean-Qubit Simulation Procedure} induced by $Q_x$ and ${\op{w}(\abs{x})}$.
  By the properties of $Q_x$,
  Proposition~\ref{Proposition: acceptance probability of One-Clean-Qubit Simulation Procedure}
  ensures that
  the acceptance probability~${p_\acc(R_x, 1)}$ of the one-clean-qubit computation induced by $R_x$
  is at least
  \[
    1 - 2^{- \op{w}(\abs{x})} \Bigl[ 1 - \bigl( 1 - 2^{- \op{p}(\abs{x}) - 2} \bigr)^2 \Bigr]
    >
    1 - 2^{- \op{w}(\abs{x})} \bigl[ 1 - \bigl( 1 - 2^{- \op{p}(\abs{x}) - 1} \bigr) \bigr]
    =
    1 - 2^{- \op{w}(\abs{x}) - \op{p}(\abs{x}) - 1},
  \]
  if $x$ is in $A_\yes$,
  while it is at most
  \[
    1 - 2^{- \op{w}(\abs{x})} \bigl[ 1 - \bigl( 1 - 2^{- \op{p}(\abs{x})} \bigr) \bigr]
    =
    1 - 2^{- \op{w}(\abs{x}) - \op{p}(\abs{x})},
  \]
 if $x$ is in $A_\no$,
 which ensures that $A$ is in ${\co\SBQoneP}$.
\end{proof}

In fact, the argument used to show that ${\SBQP = \SBQoneP}$
in the proof of Theorem~\ref{Theorem: NQP = NQ[1]P and SBQP = SBQ[1]P}
can be extended to prove the following amplification property of $\SBQoneP$,
which is analogous to the cases of $\SBQP$ and $\SBP$.

\begin{theorem}
  For any polynomially bounded function~$\function{p}{\Nonnegative}{\Natural}$,
  there exists a polynomially bounded function~$\function{q}{\Nonnegative}{\Natural}$
  such that
  \[
    \SBQoneP \subseteq \QoneP \bigl( 2^{-q} \cdot (1 - 2^{-p}), \, 2^{-q} \cdot 2^{-p} \bigr).
  \]
  \label{Theorem: amplification in SBQ[1]P}
\end{theorem}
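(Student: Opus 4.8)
The plan is to extend, by two complementations, the argument that proved $\SBQP = \SBQoneP$ in Theorem~\ref{Theorem: NQP = NQ[1]P and SBQP = SBQ[1]P}. Fix $A = (A_\yes, A_\no)$ in $\SBQoneP$ and a polynomially bounded function $p$ (we may assume $p(n) \geq 2$, so that the class $\QoneP(2^{-p}(1-2^{-p}),2^{-p}2^{-p})$ in question has completeness strictly above soundness). Since $\SBQP = \SBQoneP$ by Theorem~\ref{Theorem: NQP = NQ[1]P and SBQP = SBQ[1]P}, the problem $A$ is in $\SBQP$; hence, by the amplification property of $\SBQP$ from Ref.~\cite{Kup15ToC} (the analogue, with $\SBQP$ in place of $\SBQoneP$, of the property recorded in the remark after Definition~\ref{Definition: SBQoneP}), applied with parameter $p+1$, there are polynomially bounded functions $w$ and $q'$ and a polynomial-time uniformly generated family $\{P_x\}_{x \in \Sigma^\ast}$ of quantum circuits such that $P_x$ acts over $w(\abs{x})$ qubits and, writing $w,q',p$ for their values at $\abs{x}$, one has $\op{p_\acc}(P_x, w) \geq 2^{-q'}(1-2^{-p-1})$ if $x \in A_\yes$ and $\op{p_\acc}(P_x, w) \leq 2^{-q'}2^{-p-1}$ if $x \in A_\no$. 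From here I would fix an input $x$ and proceed in three moves.

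First, let $\overline{P_x}$ be $P_x$ with an $X$ gate appended on the output qubit, so $\op{p_\acc}(\overline{P_x}, w) = 1 - \op{p_\acc}(P_x, w)$; thus $\op{p_\acc}(\overline{P_x}, w) \geq 1 - 2^{-q'-p-1}$ if $x \in A_\no$ and $\op{p_\acc}(\overline{P_x}, w) \leq 1 - 2^{-q'}(1-2^{-p-1})$ if $x \in A_\yes$. Next, apply the \textsc{One-Clean-Qubit Simulation Procedure} to $\overline{P_x}$ with $k = w$, obtaining a circuit $R_x$ whose input contains a single clean qubit; by Proposition~\ref{Proposition: acceptance probability of One-Clean-Qubit Simulation Procedure}, using $(1-\varepsilon)^2 \geq 1 - 2\varepsilon$ in the case $x \in A_\no$, this gives $\op{p_\acc}(R_x, 1) \geq 1 - 2^{-w-q'-p}$ if $x \in A_\no$ and $\op{p_\acc}(R_x, 1) \leq 1 - 2^{-w-q'}(1-2^{-p-1})$ if $x \in A_\yes$. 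Finally, let $T_x$ be $R_x$ with an $X$ gate appended on its output qubit; then $\op{p_\acc}(T_x, 1) = 1 - \op{p_\acc}(R_x, 1)$, so $\op{p_\acc}(T_x, 1) \geq 2^{-(w+q')}(1-2^{-p-1}) \geq 2^{-(w+q')}(1-2^{-p})$ if $x \in A_\yes$, and $\op{p_\acc}(T_x, 1) \leq 2^{-(w+q')-p} = 2^{-(w+q')}2^{-p}$ if $x \in A_\no$. Taking $q := w + q'$, which is polynomially bounded, the family $\{T_x\}_{x \in \Sigma^\ast}$ witnesses $A \in \QoneP(2^{-q}(1-2^{-p}),2^{-q}2^{-p})$.

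The one genuinely delicate point, and the reason for the two complementations, is the passage to one clean qubit. Proposition~\ref{Proposition: acceptance probability of One-Clean-Qubit Simulation Procedure} only pins down $\op{p_\acc}(R_x,1)$ between $1 - 2^{-k}(1-a^2)$ and $1 - 2^{-k}(1-a)$, where $a$ is the acceptance probability of the simulated circuit; applied directly to the small-probability circuit $P_x$ this slack would replace the completeness $2^{-q'}$ by essentially its square $2^{-2q'}$, which swamps the $2^{-(p+1)}$-gap because $\SBQP$-amplification cannot make $q'$ smaller than $p+1$. Complementing first moves $a$ to within an inverse-exponential quantity of $1$, where the quadratic slack is harmless, costing only the factor $(1-\varepsilon)^2 \geq 1-2\varepsilon$, which is absorbed by amplifying with parameter $p+1$ instead of $p$. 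I expect everything else to be routine bookkeeping. (One can also bypass Ref.~\cite{Kup15ToC}: from a defining family for $A \in \QoneP(2^{-m},2^{-m-1})$, run $p+1$ independent copies in parallel and take their conjunction by a generalized Toffoli gate to obtain a $(p+2)$-clean-qubit computation with completeness $\geq 2^{-m(p+1)}$ and soundness $\leq 2^{-(m+1)(p+1)}$, then complement, apply the \textsc{One-Clean-Qubit Simulation Procedure}, and complement again; the estimates have the same shape.) Finally, the converse direction of the equivalence in the remark after Definition~\ref{Definition: SBQoneP} is immediate, since for $p \geq 2$ the pair $\bigl(2^{-q}(1-2^{-p}),\,2^{-q}2^{-p}\bigr)$ can be weakened to $\bigl(2^{-m},\,2^{-m-1}\bigr)$ with $m = q+p-1$.
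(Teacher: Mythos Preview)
Your proof is correct and follows essentially the same approach as the paper. The only difference is cosmetic: the paper phrases the two complementations at the level of complexity classes (it reduces the claim to showing ${\co\SBQP \subseteq \QoneP\bigl(1 - 2^{-q}\cdot 2^{-p},\, 1 - 2^{-q}\cdot(1-2^{-p})\bigr)}$ and then applies the \textsc{One-Clean-Qubit Simulation Procedure} directly to the high-acceptance circuits furnished by Kuperberg's amplification for $\co\SBQP$), whereas you carry out the same complementations explicitly by appending $X$ gates before and after the simulation; the numerical estimates and the choice ${q = w + q'}$ (the paper writes ${q = w + r}$) coincide.
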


\begin{proof}
  It suffices to show that
  for any polynomially bounded function~$\function{p}{\Nonnegative}{\Natural}$,
  there exists a polynomially bounded function~$\function{q}{\Nonnegative}{\Natural}$
  such that
  ${\co\SBQP \subseteq \QoneP \bigl(1 - 2^{-q} \cdot 2^{-p}, \, 1 - 2^{-q} \cdot (1 - 2^{-p}) \bigr)}$
  holds,
  due to Theorem~\ref{Theorem: NQP = NQ[1]P and SBQP = SBQ[1]P}.

  Fix a polynomially bounded function~$\function{p}{\Nonnegative}{\Natural}$,
  and consider any problem ${A = (A_\yes, A_\no)}$ is in ${\co\SBQP}$.
  By the amplification property of $\SBQP$ presented in Ref.~\cite{Kup15ToC},
  there exists a polynomial-time uniformly generated family~${\{Q_x\}_{x \in \Sigma^\ast}}$
  of quantum circuits
  such that, for every input~$x$,
  $Q_x$ acts over ${\op{w}(\abs{x})}$~qubits,
  for some polynomially bounded function~$\function{w}{\Nonnegative}{\Natural}$,
  and the acceptance probability~${p_\acc \bigl( Q_x, \op{w}(\abs{x}) \bigr)}$ of the circuit~$Q_x$
  is at least ${1 - 2^{-r} \cdot 2^{- p - 1}}$ if $x$ is in $A_\yes$,
  while it is at most ${1 - 2^{-r} \cdot (1 - 2^{- p - 1})}$ if $x$ is in $A_\no$,
  for some polynomially bounded function~$\function{r}{\Nonnegative}{\Natural}$.
  
  Fix an input~$x$,
  and consider the quantum circuit~$R_x$
  corresponding to the \textsc{One-Clean-Qubit Simulation Procedure} induced by $Q_x$ and ${\op{w}(\abs{x})}$.
  By the properties of $Q_x$,
  Proposition~\ref{Proposition: acceptance probability of One-Clean-Qubit Simulation Procedure}
  ensures that
  the acceptance probability~${p_\acc(R_x, 1)}$ of the one-clean-qubit computation induced by $R_x$
  is at least
  \[
    \begin{split}
      \hspace{5mm}
      &
      \hspace{-5mm}
      1
      -
      2^{- \op{w}(\abs{x})}
      \Bigl[ 1 - \bigl( 1 - 2^{- \op{r}(\abs{x})} \cdot 2^{- \op{p}(\abs{x}) - 1} \bigr)^2 \Bigr]
      \\
      &
      >
      1
      -
      2^{- \op{w}(\abs{x})}
      \bigl[ 1 - \bigl( 1 - 2^{- \op{r}(\abs{x})} \cdot 2^{- \op{p}(\abs{x})} \bigr) \bigr]
      =
      1 - 2^{- \op{w}(\abs{x}) - \op{r}(\abs{x})} \cdot 2^{- \op{p}(\abs{x})}
    \end{split}
  \]
  if $x$ is in $A_\yes$,
  while it is at most
  \[
    \begin{split}
      \hspace{5mm}
      &
      \hspace{-5mm}
      1
      -
      2^{- \op{w}(\abs{x})}
      \Bigl[ 1 - \Bigl( 1 - 2^{- \op{r}(\abs{x})} \cdot \bigl( 1 - 2^{- \op{p}(\abs{x}) - 1} \bigr) \Bigr) \Bigr]
      \\
      &
      =
      1 - 2^{- \op{w}(\abs{x}) - \op{r}(\abs{x})} \cdot \bigl( 1 - 2^{- \op{p}(\abs{x}) - 1} \bigr)
      <
      1 - 2^{- \op{w}(\abs{x}) - \op{r}(\abs{x})} \cdot \bigl( 1 - 2^{- \op{p}(\abs{x})} \bigr)
    \end{split}
  \]
 if $x$ is in $A_\no$,
 which implies that $A$ is in
 ${\QoneP \bigl(1 - 2^{-q} \cdot 2^{-p}, \, 1 - 2^{-q} \cdot (1 - 2^{-p}) \bigr)}$
 for ${q = w + r}$,
 and the claim follows.
\end{proof}

Now we are ready to prove Theorem~\ref{Theorem: hardness of classically simulating DQC1, informal}.
More formally, we prove the following statement.

\begin{theorem}
  Suppose that any polynomial-time uniformly generated family of quantum circuits,
  when used in $\DQC{\mathit{1}}$ computations,
  is weakly simulatable with multiplicative error~${c \geq 1}$
  or exponentially small additive error.
  Then ${\PH = \AM}$.
  \label{Theorem: hardness of classically simulating DQC1}
\end{theorem}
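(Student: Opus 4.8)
The plan is to convert each simulatability hypothesis into an upper bound on the classes~$\NQoneP$ and $\SBQoneP$, and then invoke the identities ${\NQoneP = \NQP}$ and ${\SBQoneP = \SBQP}$ of Theorem~\ref{Theorem: NQP = NQ[1]P and SBQP = SBQ[1]P} together with the (known) fact that each of the inclusions~${\NQP \subseteq \NP}$, ${\SBQP \subseteq \SBP}$, and ${\NQP \subseteq \SBP}$ already implies ${\PH = \AM}$. Thus it suffices to establish: if the $\DQC{1}$ model is weakly simulatable with multiplicative error~${c \geq 1}$, then ${\NQoneP \subseteq \NP}$ and ${\SBQoneP \subseteq \SBP}$; and if it is weakly simulatable with exponentially small additive error, then ${\NQoneP \subseteq \SBP}$ and ${\SBQoneP \subseteq \SBP}$.

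First I would treat the multiplicative-error hypothesis together with the class~$\NQoneP$, which is the most transparent case. Let ${A = (A_\yes, A_\no)}$ be in $\NQoneP$, so ${A \in \QoneP(c', 0)}$ for some positive-valued $c'$, witnessed by a polynomial-time uniformly generated family~${\{Q_x\}_{x \in \Sigma^\ast}}$ of one-clean-qubit circuits with ${p_\acc(Q_x, 1) > 0}$ for ${x \in A_\yes}$ and ${p_\acc(Q_x, 1) = 0}$ for ${x \in A_\no}$. Measuring the output qubit yields an output distribution~$P_x$ with ${P_x(0) = p_\acc(Q_x, 1)}$, and a multiplicative-$c$ simulator supplies a polynomial-time samplable family~${\{P'_x\}_{x \in \Sigma^\ast}}$ with ${\frac{1}{c} P_x(0) \leq P'_x(0) \leq c\, P_x(0)}$; hence ${P'_x(0) > 0}$ exactly when ${x \in A_\yes}$. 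Since a polynomial-time classical randomized machine samples $P'_x$, guessing its random string and checking that the output is $0$ witnesses ${A \in \NP}$. For the remaining three inclusions the relevant acceptance-probability gaps are only inverse-exponential, so the error tolerance must be matched to them: I would use the amplification property of $\SBQoneP$ (Theorem~\ref{Theorem: amplification in SBQ[1]P}) to first pass to a witnessing family with completeness~${2^{-q}(1 - 2^{-p})}$ and soundness~${2^{-q}\, 2^{-p}}$ for $p$ large enough that multiplying by $c$ (or perturbing by an additive~${2^{-q'}}$ with ${q' \gg q}$) cannot close the gap, and then read the resulting classical sampler as an $\SBP$-witness.

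The one point requiring a new twist is the inclusion ${\NQoneP \subseteq \SBP}$ in the additive-error case: exponentially small additive error cannot distinguish ${p_\acc = 0}$ from ${p_\acc > 0}$ when the yes-side probability is itself doubly-exponentially small. I would route around this by observing ${\NQP \subseteq \SBQP}$ --- realizing a GapP quantity that vanishes exactly on no-instances as $2^{-M}\langle 0^n| U_x |0^n\rangle$ turns an $\NQP$ computation into one whose acceptance probability is $0$ on no-instances and at least~$2^{-2M}$ on yes-instances, i.e., an $\SBQP$ computation --- so that ${\NQoneP = \NQP \subseteq \SBQP = \SBQoneP}$ by Theorem~\ref{Theorem: NQP = NQ[1]P and SBQP = SBQ[1]P}, and applying the additive-error argument for $\SBQoneP$ then yields ${\NQoneP \subseteq \SBP}$. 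I expect the main (though essentially routine) obstacle to be exactly this parameter bookkeeping --- choosing $p$, $q$, $q'$, $M$ so that the $\SBQoneP$-amplification and the simulation-error bound are simultaneously in force --- since the conceptual content is already carried by the excerpt's $\DQC{1}$-simulation theorem and by the cited classical consequences of ${\NQP \subseteq \NP}$, ${\SBQP \subseteq \SBP}$, and ${\NQP \subseteq \SBP}$.
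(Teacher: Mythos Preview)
Your proposal is correct and lands on the alternative route that the paper itself sketches in the Remark following its main proof. The paper's primary argument is leaner: it proves only two inclusions, ${\NQoneP \subseteq \NP}$ under multiplicative simulatability (exactly as you do) and ${\NQP \subseteq \SBP}$ under additive simulatability, the latter by observing that an $\NQP$ witness may be taken over the Hadamard/Toffoli/$\NOT$ gate set so that any nonzero acceptance probability is at least~$2^{-\mathrm{poly}}$, and then checking that this lower bound survives the \textsc{One-Clean-Qubit Simulation Procedure}. Your additive-error treatment instead passes through ${\NQP \subseteq \SBQP = \SBQoneP}$ and then invokes the $\SBQoneP$ amplification (Theorem~\ref{Theorem: amplification in SBQ[1]P}) to create a gap that the additive simulator preserves; this is precisely the alternative the paper's Remark outlines. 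The paper's route is slightly more self-contained (it does not need Theorem~\ref{Theorem: amplification in SBQ[1]P}), while yours is more uniform across the cases and yields the bonus inclusion~${\SBQoneP \subseteq \SBP}$ explicitly. Note also that two of your four inclusions are redundant: either one per hypothesis already gives ${\PH = \AM}$ via Lemmas~\ref{Lemma: PH is in AM if NQP is in NP}~and~\ref{Lemma: PH is in AM if NQP is in SBP}.
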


Theorem~\ref{Theorem: hardness of classically simulating DQC1} follows directly
from Lemmas~\ref{Lemma: consequences of weak simulation of DQC1},~\ref{Lemma: PH is in AM if NQP is in NP},~and~\ref{Lemma: PH is in AM if NQP is in SBP} below,
combined with the fact that ${\AM \subseteq \PH}$.

\begin{lemma}
  Suppose that any polynomial-time uniformly generated family of quantum circuits,
  when used in $\DQC{\mathit{1}}$ computations,
  is weakly simulatable with multiplicative error~${c \geq 1}$
  (resp., exponentially small additive error).
  Then ${\NQP \subseteq \NP}$
  (resp., ${\NQP \subseteq \SBP}$).
  \label{Lemma: consequences of weak simulation of DQC1}
\end{lemma}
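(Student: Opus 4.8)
The plan is to use the equality $\NQP = \NQoneP$ from Theorem~\ref{Theorem: NQP = NQ[1]P and SBQP = SBQ[1]P} to turn an arbitrary $\NQP$ computation into a genuine $\DQC{1}$ computation, and then feed that computation to the weak-simulation hypothesis. Concretely, given a problem ${A = (A_\yes, A_\no)}$ in $\NQP$, I would fix a polynomial-time uniformly generated $\DQC{1}$ family~${\{R_x\}_{x \in \Sigma^\ast}}$ whose one-output-qubit distribution~$P_x$ over $\Sigma$ satisfies ${P_x(0) = \op{p_\acc}(R_x, 1)}$, which is strictly positive if $x$ is in $A_\yes$ and equal to $0$ if $x$ is in $A_\no$ (this is exactly the definition of $\NQoneP$). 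I would also record the standard fact that, because every gate in the assumed gate set has amplitudes in ${\Integers[1/\sqrt{2}, e^{i \pi / 4}]}$, the probability~${P_x(0)}$ has the form ${(a + b\sqrt{2})/2^m}$ with $a,b,m$ of polynomially bounded size, so that ${P_x(0) > 0}$ already forces ${P_x(0) \geq 2^{-\op{\ell}(\abs{x})}}$ for a fixed polynomially bounded function~$\function{\ell}{\Nonnegative}{\Natural}$ determined by the family~${\{R_x\}_{x \in \Sigma^\ast}}$.

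For the multiplicative-error case I would apply the hypothesis to the family~${\{R_x\}_{x \in \Sigma^\ast}}$, obtaining a family~${\{P'_x\}_{x \in \Sigma^\ast}}$ of probability distributions sampled in classical polynomial time (by some algorithm using polynomially many random bits) with ${c^{-1} P_x(0) \leq P'_x(0) \leq c\,P_x(0)}$ for all $x$. On no-instances ${P_x(0) = 0}$ forces ${P'_x(0) = 0}$, while on yes-instances ${P'_x(0) \geq c^{-1} P_x(0) > 0}$. Hence $x$ is in $A_\yes$ if and only if the sampler for $P'_x$ outputs $0$ for \emph{some} setting of its random bits, and guessing those bits and simulating the sampler is an $\NP$ predicate; this gives ${\NQP \subseteq \NP}$.

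For the additive-error case I would invoke the hypothesis with the error parameter set to~${\ell + 2}$, obtaining a classically polynomial-time samplable family~${\{P'_x\}_{x \in \Sigma^\ast}}$ with ${\bigabs{P_x(0) - P'_x(0)} \leq 2^{-\op{\ell}(\abs{x}) - 2}}$. Then ${P'_x(0) \leq 2^{-\op{\ell}(\abs{x}) - 2}}$ on no-instances, whereas ${P'_x(0) \geq 2^{-\op{\ell}(\abs{x})} - 2^{-\op{\ell}(\abs{x}) - 2} = 3 \cdot 2^{-\op{\ell}(\abs{x}) - 2}}$ on yes-instances. The probabilistic polynomial-time algorithm that samples from $P'_x$ and accepts exactly when it obtains $0$ therefore accepts with probability at least ${2 \cdot 2^{-\op{\ell}(\abs{x}) - 2}}$ on yes-instances and at most ${2^{-\op{\ell}(\abs{x}) - 2}}$ on no-instances, which with the polynomially bounded gap function~${2^{\ell + 2}}$ places $A$ in $\SBP$; this gives ${\NQP \subseteq \SBP}$.

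The only nonroutine ingredient I anticipate is the inverse-exponential lower bound on ${P_x(0)}$ for yes-instances: it is indispensable in the additive-error case, since an $\SBP$ computation must have a polynomially bounded gap function, but it is irrelevant in the multiplicative-error case, where any strictly positive acceptance probability already yields the $\NP$ characterization. Everything else is bookkeeping, and the reduction to a genuine $\DQC{1}$ computation is entirely outsourced to Theorem~\ref{Theorem: NQP = NQ[1]P and SBQP = SBQ[1]P}.
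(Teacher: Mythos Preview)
Your proposal is correct and follows the same overall structure as the paper's proof: reduce to a $\DQC{1}$ family via Theorem~\ref{Theorem: NQP = NQ[1]P and SBQP = SBQ[1]P}, then feed the resulting one-output-qubit distribution to the simulation hypothesis to land in $\NP$ (resp.\ $\SBP$). The one substantive difference is in how the inverse-exponential lower bound on ${P_x(0)}$ is obtained for the additive-error case. The paper does not argue about the ring generated by the entries of $H$, $T$, and $\CNOT$; instead it first observes that $\NQP$ may be witnessed by circuits using only Hadamard, Toffoli, and $\NOT$ gates (so the original acceptance probability~${p_\acc(Q_x, \op{w}(\abs{x}))}$ is a dyadic rational, hence at least~$2^{-p}$ when nonzero), and then tracks this bound through the explicit construction inside Theorem~\ref{Theorem: NQP = NQ[1]P and SBQP = SBQ[1]P} (via Proposition~\ref{Proposition: acceptance probability of One-Clean-Qubit Simulation Procedure}) to conclude that ${p_\acc(R_x, 1) \geq 2^{-q}}$ when nonzero. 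Your route via the ring~${\Integers[1/\sqrt{2}, e^{i\pi/4}]}$ and the norm argument on~${\Integers[\sqrt{2}]}$ is equally valid and more self-contained (it does not require opening up the proof of Theorem~\ref{Theorem: NQP = NQ[1]P and SBQP = SBQ[1]P}), at the cost of a short algebraic-number-theory detour and the implicit assumption that the chosen $\DQC{1}$ family~${\{R_x\}}$ uses only $H$, $T$, and $\CNOT$ gates; the paper's route is more elementary but leans on the specific construction.
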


\begin{proof}
  Fix any problem~${A = (A_\yes, A_\no)}$ in $\NQP$.
  By Theorem~\ref{Theorem: NQP = NQ[1]P and SBQP = SBQ[1]P},
  there exists a polynomial-time uniformly generated family~${\{R_x\}_{x \in \Sigma^\ast}}$
  of quantum circuits
  such that, for every ${x \in \Sigma^\ast}$,
  the acceptance probability~${p_\acc(R_x, 1)}$ of the $\DQC{1}$ computation induced by $R_x$
  is nonzero if $x$ is in $A_\yes$,
  while it is zero if $x$ is in $A_\no$.
  From the assumption of this lemma,
  there exists a polynomial-time uniformly generated family~${\{R'_x\}_{x \in \Sigma^\ast}}$
  of randomized circuits
  that weakly simulates ${\{R_x\}_{x \in \Sigma^\ast}}$ with multiplicative error~${c \geq 1}$.
  Now the definition of weak simulatability with multiplicative error ensures that
  the probability that the circuit~$R'_x$ outputs $0$ (corresponding to acceptance) is nonzero
  if and only if ${\op{p}_\acc(R_x, 1)}$ is nonzero,
  which happens only when $x$ is in $A_\yes$.
  This implies that $A$ is in $\NP$.

  In the case where ${\{R'_x\}_{x \in \Sigma^\ast}}$ weakly simulates ${\{R_x\}_{x \in \Sigma^\ast}}$
  with exponentially small additive error,
  the proof uses the fact that the membership of $A$ in $\NQP$
  is witnessed by a polynomial-time uniformly generated family~${\{Q_x\}_{x \in \Sigma^\ast}}$
  of quantum circuits,
  where each $Q_x$ is composed only of Hadamard, Toffoli, and $\NOT$ gates
  (more precisely, each $Q_x$ is composed only of Hadamard, $T$, and $\CNOT$ gates,
  satisfying the definition of $\NQP$ in the present paper,
  but may be assumed to be composed in such a way that
  $T$ and $\CNOT$ gates are used only for applying Toffoli and $\NOT$ transformations).
  Notice that, for such $Q_x$,
  the acceptance probability, if it is nonzero, must be at least $2^{- \op{p}(\abs{x})}$
  for some polynomially bounded function~$\function{p}{\Nonnegative}{\Natural}$.
  Hence, from Theorem~\ref{Theorem: NQP = NQ[1]P and SBQP = SBQ[1]P} and its proof,
  the family~${\{R_x\}_{x \in \Sigma^\ast}}$ of quantum circuits
  that witnesses the membership of $A$ in ${\NQoneP = \NQP}$
  may be assumed to be such that, for each $R_x$,
  the acceptance probability~${p_\acc(R_x, 1)}$, if it is nonzero, must be at least $2^{- \op{q}(\abs{x})}$
  for some polynomially bounded function~$\function{q}{\Nonnegative}{\Natural}$.
  Furthermore, from the definition of weak simulatability with exponentially small additive error,
  the family~${\{R'_x\}_{x \in \Sigma^\ast}}$
  may be assumed to simulate ${\{R_x\}_{x \in \Sigma^\ast}}$
  with additive error at most~$2^{-\op{q}(\abs{x}) - 2}$.
  This implies that the probability that $R'_x$ outputs $0$ is at least
  ${3 \cdot 2^{- \op{q}(\abs{x}) - 2}}$
  if $x$ is in $A_\yes$,
  while it is at most
  $2^{-\op{q}(\abs{x}) - 2}$
  if $x$ is in $A_\no$,
  which ensures that $A$ is in $\SBP$.
\end{proof}

\begin{lemma}
  If ${\NQP \subseteq \NP}$, then ${\PH \subseteq \AM}$.
  \label{Lemma: PH is in AM if NQP is in NP}
\end{lemma}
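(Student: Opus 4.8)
The plan is to translate the hypothesis into counting‑complexity language and then quote the Toda‑style fact that exact counting is hard for the polynomial hierarchy under randomized reductions. For the standard (Hadamard, $T$, $\CNOT$) gate set used here it is known that $\NQP = \coCequalP$, so the hypothesis $\NQP \subseteq \NP$ says precisely $\coCequalP \subseteq \NP$, equivalently $\CequalP \subseteq \coNP$. I note in passing that $\coNP \subseteq \CequalP$ holds unconditionally — membership of $x$ in a $\coNP$ language is the exact‑counting predicate ``the number of accepting computations of the complementary $\NP$ machine is $0$'' — so the hypothesis actually forces $\CequalP = \coNP$; but only $\CequalP \subseteq \coNP$ is needed below.

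Next I would invoke the theorem of Toda and Ogiwara in its single‑query, two‑sided‑error form: $\PH \subseteq \BP \cdot \coCequalP$; that is, every language in $\PH$ reduces to some $\coCequalP$ language by a polynomial‑time randomized many‑one reduction with bounded two‑sided error. Via $\NQP = \coCequalP$ this is exactly the statement, due to Fenner, Green, Homer and Pruim, that the quantum ``nonzero acceptance probability'' problem is hard for $\PH$. Substituting $\coCequalP = \NQP \subseteq \NP$ and using that the $\BP\cdot$ operator is monotone under class inclusion gives $\PH \subseteq \BP \cdot \NP$, which by Babai's theorem equals $\AM$. Hence $\PH \subseteq \AM$.

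Since the argument is essentially a composition of two cited theorems, the \emph{obstacle} is bookkeeping rather than a new idea. The point that must not be slipped is that one needs the \emph{many‑one} (one randomized query) version of the Toda--Ogiwara hardness; the weaker Turing version $\PH \subseteq \BPP^{\coCequalP}$ would, under the hypothesis, only yield $\PH \subseteq \BPP^{\NP}$, which is not known to lie inside $\AM$. A secondary subtlety is that $\CequalP$ and $\coCequalP$ are not closed under complement: if one prefers to state the Toda--Ogiwara input as $\PH \subseteq \BP\cdot\CequalP$, one must then pass through $\BP\cdot\coNP = \co(\BP\cdot\NP) = \co\AM$ and close with $\PH = \co\PH \subseteq \co(\co\AM) = \AM$; phrasing everything with $\coCequalP$ from the start avoids this detour. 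Finally I would confirm that the identity $\NQP = \coCequalP$ does hold for the amplitude set induced by the gate set fixed in this paper.
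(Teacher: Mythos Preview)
Your proof is correct and follows essentially the same route as the paper: the chain $\PH \subseteq \BP \cdot \coCequalP = \BP \cdot \NQP \subseteq \BP \cdot \NP = \AM$, citing Toda--Ogiwara for the first step, Fenner--Green--Homer--Pruim for $\NQP = \coCequalP$, and the standard identification $\AM = \BP \cdot \NP$. Your additional remarks on needing the many-one (rather than Turing) form of Toda--Ogiwara and on the $\CequalP$/$\coCequalP$ distinction are accurate and useful commentary, though not strictly required for the argument.
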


\begin{lemma}
  If ${\NQP \subseteq \SBP}$, then ${\PH \subseteq \AM}$.
  \label{Lemma: PH is in AM if NQP is in SBP}
\end{lemma}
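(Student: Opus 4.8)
The plan is to deduce the collapse entirely from known classical facts, feeding the hypothesis $\NQP \subseteq \SBP$ in at a single point, so that the argument runs exactly parallel to the one for Lemma~\ref{Lemma: PH is in AM if NQP is in NP} with the identity $\BP \cdot \NP = \AM$ replaced by the inclusion $\BP \cdot \SBP \subseteq \AM$. First I would recall the classical characterisation $\NQP = \coCequalP$ of Adleman, DeMarrais, and Huang~\cite{AdlDeMHua97SIComp}, so that $\co\NQP = \CequalP$, and then invoke the counting-class strengthening of Toda's theorem~\cite{Tod91SIComp}, which places the whole polynomial-time hierarchy inside $\CequalP$ under a two-sided-error, single-round, single-query randomized many-one reduction, i.e. $\PH \subseteq \BP \cdot \CequalP$. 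Since the $\BP \cdot$ operator commutes with complementation, $\BP \cdot \co\mathsf{C} = \co(\BP \cdot \mathsf{C})$, and $\PH$ is closed under complement, this is the same as $\PH \subseteq \BP \cdot \coCequalP = \BP \cdot \NQP$.

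Next I would apply the hypothesis together with monotonicity of $\BP \cdot$ to obtain $\PH \subseteq \BP \cdot \NQP \subseteq \BP \cdot \SBP$, and then use $\SBP \subseteq \AM$~\cite{BohGlaMei06JCSS} together with the fact that $\AM$ absorbs an outer $\BP \cdot$: since $\AM = \BP \cdot \NP$ (Babai) and the $\BP$-operator is idempotent (two rounds of public randomness can be amplified and merged into one), $\BP \cdot \AM = \BP \cdot \BP \cdot \NP = \BP \cdot \NP = \AM$. Hence $\BP \cdot \SBP \subseteq \BP \cdot \AM = \AM$, and chaining the inclusions yields $\PH \subseteq \AM$, as claimed. (Running the identical chain with $\NP$ in place of $\SBP$, invoking $\BP \cdot \NP = \AM$ directly, re-proves Lemma~\ref{Lemma: PH is in AM if NQP is in NP}.)

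The step I expect to demand the most care --- the main obstacle --- is securing the Toda-type input in precisely the right shape: one genuinely needs $\PH \subseteq \BP \cdot \CequalP$ via a \emph{single-query, many-one} randomized reduction, not the reduction-theoretically stronger statements $\PH \subseteq \classP^{\CequalP}$ or $\PH \subseteq \BPP^{\CequalP}$, because only a single non-adaptive query survives composition with the hypothesis $\NQP \subseteq \SBP$, whereas an adaptive oracle reduction would land merely in $\classP^{\SBP}$, which is not known to lie in $\AM$. Pinning down the exact operator in the counting-class version of Toda's theorem, and tracking the complementation carefully through the one-sided classes $\NQP$, $\CequalP$, and $\SBP$, is therefore the delicate part; the remaining ingredients ($\NQP = \coCequalP$, $\SBP \subseteq \AM$, and the idempotence and monotonicity of $\BP \cdot$) are standard and can be cited directly.
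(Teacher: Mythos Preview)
Your argument is correct and follows essentially the same chain as the paper's own proof, namely
\(\PH \subseteq \BP \cdot \coCequalP = \BP \cdot \NQP \subseteq \BP \cdot \SBP \subseteq \BP \cdot \AM = \AM\);
the only difference is that the paper runs this chain with the \(\widehat{\BP}\) operator throughout so that it can quote Lemma~2.8 of Toda--Ogiwara~\cite{TodOgi92SIComp} verbatim for the idempotence step, whereas you use the plain \(\BP\) operator and argue \(\BP \cdot \AM = \AM\) directly via amplification --- both routes are fine. Two small citation fixes: the identity \(\NQP = \coCequalP\) is due to Fenner--Green--Homer--Pruim~\cite{FenGreHomPru99RSPA} (Adleman--DeMarrais--Huang introduced \(\NQP\) but did not prove this characterisation), and the inclusion \(\PH \subseteq \BP \cdot \CequalP\) should be attributed to Toda--Ogiwara~\cite{TodOgi92SIComp} or Tarui~\cite{Tar93TCS} rather than to Toda's original paper~\cite{Tod91SIComp}, which gives \(\PH \subseteq \BP \cdot \oplus\classP\).
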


The proof of Lemma~\ref{Lemma: PH is in AM if NQP is in NP}
requires the notion of the $\BP$~operator,
whereas the proof of Lemma~\ref{Lemma: PH is in AM if NQP is in SBP}
requires the notion of the $\widehat{\BP}$~operator,
a variant of the $\BP$~operator introduced in Ref.~\cite{TodOgi92SIComp}.

For any complexity class~$\calC$ of promise problems,
a promise problem~${A = (A_\yes, A_\no)}$ is in ${\BP \cdot \calC}$
iff there exist a promise problem~${B = (B_\yes, B_\no)}$ in $\calC$
and a polynomially bounded function~$\function{r}{\Nonnegative}{\Natural}$
such that, for every $x$ in $\Sigma^\ast$, it holds that
\begin{alignat*}{3}
  &
  x \in A_\yes
  &
  &
  \Longrightarrow
  \bigabs{
    \set{z \in \Sigma^{\op{r}(\abs{x})}}{\lrangle{x, z} \in B_\yes}
  }
  &
  &
  \geq
  \frac{2}{3} \cdot 2^{\op{r}(\abs{x})}
  \quad
  \text{and}
  \\
  &
  x \in A_\no
  &
  &
  \Longrightarrow
  \bigabs{
    \set{z \in \Sigma^{\op{r}(\abs{x})}}{\lrangle{x, z} \in B_\no}
  }
  &
  &
  \geq
  \frac{2}{3} \cdot 2^{\op{r}(\abs{x})}.
\end{alignat*}

Similarly,
for any complexity class~$\calC$ of promise problems,
a promise problem~${A = (A_\yes, A_\no)}$ is in ${\widehat{\BP} \cdot \calC}$
iff for any polynomially bounded function~$\function{q}{\Nonnegative}{\Natural}$,
there exist a promise problem~${B = (B_\yes, B_\no)}$ in $\calC$
and a polynomially bounded function~$\function{r}{\Nonnegative}{\Natural}$
such that, for every $x$ in $\Sigma^\ast$, it holds that
\begin{alignat*}{3}
  &
  x \in A_\yes
  &
  &
  \Longrightarrow
  \bigabs{
    \set{z \in \Sigma^{\op{r}(\abs{x})}}{\lrangle{x, z} \in B_\yes}
  }
  &
  &
  \geq
  \bigl( 1 - 2^{- \op{q}(\abs{x})} \bigr) \cdot 2^{\op{r}(\abs{x})}
  \quad
  \text{and}
  \\
  &
  x \in A_\no
  &
  &
  \Longrightarrow
  \bigabs{
    \set{z \in \Sigma^{\op{r}(\abs{x})}}{\lrangle{x, z} \in B_\no}
  }
  &
  &
  \geq
  \bigl( 1 - 2^{- \op{q}(\abs{x})} \bigr) \cdot 2^{\op{r}(\abs{x})}.
\end{alignat*}

It is easy to see that ${\AM = \BP \cdot \NP}$.
By the standard error reduction of $\AM$,
one can also see that ${\AM = \BP \cdot \NP = \widehat{\BP} \cdot \NP}$.

Now Lemma~\ref{Lemma: PH is in AM if NQP is in NP} is proved as follows.

\begin{proof}[Proof of Lemma~\ref{Lemma: PH is in AM if NQP is in NP}]
  The claim follows from the following sequence of containments: 
  \[
    \PH \subseteq \BP \cdot \coCequalP = \BP \cdot \NQP \subseteq \BP \cdot \NP = \AM,
  \]
  where the first inclusion is by Corollary~2.5 of Ref.~\cite{TodOgi92SIComp} or Corollary~5.2 of Ref.~\cite{Tar93TCS},
  and we have used the fact that ${\NQP = \coCequalP}$~\cite{FenGreHomPru99RSPA},
  the assumption ${\NQP \subseteq \NP}$ of this lemma,
  and the fact that ${\AM = \BP \cdot \NP}$, respectively.
\end{proof}

Similarly, Lemma~\ref{Lemma: PH is in AM if NQP is in SBP} is proved as follows.

\begin{proof}[Proof of Lemma~\ref{Lemma: PH is in AM if NQP is in SBP}]
  The claim follows from the following sequence of containments:
  \[
    \PH
    \subseteq
    \widehat{\BP} \cdot \coCequalP
    =
    \widehat{\BP} \cdot \NQP
    \subseteq
    \widehat{\BP} \cdot \SBP
    \subseteq
    \widehat{\BP} \cdot \AM
    =
    \widehat{\BP} \cdot \widehat{\BP} \cdot \NP
    =
    \widehat{\BP} \cdot \NP
    =
    \AM,
  \]
  where the first inclusion is by Corollary 2.5 of Ref.~\cite{TodOgi92SIComp},
  the next equality uses the fact that ${\NQP = \coCequalP}$~\cite{FenGreHomPru99RSPA},
  the next two inclusions are by the assumption~${\NQP \subseteq \SBP}$ of this lemma
  and by the fact that ${\SBP \subseteq \AM}$~\cite{BohGlaMei06JCSS},
  and the last three equalities use the characterization of $\AM$ by ${\widehat{\BP} \cdot \NP}$
  as well as Lemma~2.8 of Ref.~\cite{TodOgi92SIComp} on the removability of a duplicate $\widehat{\BP}$ operator.
\end{proof}

\begin{remark}
  If any polynomial-time uniformly generated family of quantum circuits,
  when used in $\DQC{1}$ computations,
  is weakly simulatable either with multiplicative error~${c \geq 1}$
  or with exponentially small additive error,
  then the inclusion~${\SBQP \subseteq \SBP}$ can also be proved,
  by using the fact that ${\SBQP = \SBQoneP}$ proved in Theorem~\ref{Theorem: NQP = NQ[1]P and SBQP = SBQ[1]P}
  as well as the amplification property of $\SBQP$
  (or the amplification property of $\SBQoneP$ proved in Theorem~\ref{Theorem: amplification in SBQ[1]P}).
  This can also prove Theorem~\ref{Theorem: hardness of classically simulating DQC1, informal}
  (more formally, Theorem~\ref{Theorem: hardness of classically simulating DQC1}),
  the hardness result on such weak simulatability of $\DQC{1}$ computations,
  as the collapse of $\PH$ to $\AM$ also follows from the assumption~${\SBQP \subseteq \SBP}$,
  due to the inclusion~${\NQP \subseteq \SBQP}$.
\end{remark}

Finally,
the argument based on $\NQP$ and $\SBQP$
used to prove Theorem~\ref{Theorem: hardness of classically simulating DQC1, informal}
(more formally, Theorem~\ref{Theorem: hardness of classically simulating DQC1})
in this section
can also be used to show the hardness of weak classical simulations of other quantum computing models.
In particular, it can replace the existing argument based on $\postBQP$,
which was developed in Ref.~\cite{BreJozShe11RSPA}
and has appeared frequently in the literature~\cite{AarArk13ToC, JozVan14QIC, MorFujFit14PRL, TakYamTan14QIC, Bro15PRA, TakTanYamTan15COCOON}.
This also weakens
the complexity assumption necessary to prove the hardness results for such models,
including the IQP model~\cite{BreJozShe11RSPA} and the Boson sampling~\cite{AarArk13ToC}
(the polynomial-time hierarchy now collapses to the second level,
rather than the third level when using $\postBQP$).
Moreover, the hardness results for such models now hold for any constant multiplicative error~${c \geq 1}$,
rather than only for $c$ satisfying~${1 \leq c < \sqrt{2}}$
as in Refs.~\cite{BreJozShe11RSPA, MorFujFit14PRL}.


\section{Conclusion}
\label{Section: conclusion}

This paper has developed several error-reduction methods for quantum computation with few clean qubits,
which simultaneously reduce the number of necessary clean qubits to just one or two.
Using such possibilities of error-reduction,
this paper has shown that the \problemfont{Trace Estimation} problem is complete
for $\BQlogP$ and $\BQoneP$.
One of the technical tools has also been used to show
the hardness of weak classical simulations of $\DQC{1}$ computations.
A few open problems are listed below concerning the power of quantum computation with few clean qubits:

\begin{itemize}
\item
  In the case of one-sided bounded error,
  can any quantum computation with logarithmically many clean qubits
  be made to have exponentially small one-sided error
  by just using one clean qubit, rather than two?
  A similar question may be asked even in the case of two-sided bounded error
  whether both completeness and soundness errors can be made exponentially small simultaneously
  by just using one clean qubit.
\item
  In the two-sided error case,
  is error-reduction possible
  even when a starting quantum computation with few clean qubits
  has only an inverse-polynomial gap between completeness and soundness?
\item
  Are $\DQC{1}$ computations provable to be hard
  to classically simulate under a more desirable notion of simulatability,
  like those discussed in Refs.~\cite{BreJozShe11RSPA, AarArk13ToC}?
  Such results are known for other computation models like Boson sampling
  by assuming hardness of some computational problems~\cite{AarArk13ToC, BreMonShe15arXiv, FefUma15arXiv}.
\end{itemize}


\subsection*{Acknowledgements}

The authors are grateful to Richard Cleve, Fran\c{c}ois Le Gall, Keiji Matsumoto, 
and Yasuhiro Takahashi for very useful discussions.
Keisuke Fujii is supported by
the Grant-in-Aid for Research Activity Start-up~No.~25887034
of the Japan Society for the Promotion of Science. 
Hirotada Kobayashi and Harumichi Nishimura are supported by
the Grant-in-Aid for Scientific Research~(A)~No.~24240001
of the Japan Society for the Promotion of Science.
Tomoyuki Morimae is supported by
the Program to Disseminate Tenure Tracking System
of the Ministry of Education, Culture, Sports, Science and Technology in Japan,
the Grant-in-Aid for Scientific Research on Innovative Areas~No.~15H00850
of the Ministry of Education, Culture, Sports, Science and Technology in Japan,
and
the Grant-in-Aid for Young Scientists~(B)~No.~26730003
of the Japan Society for the Promotion of Science.
Harumichi Nishimura is also supported by
the Grant-in-Aid for Scientific Research on Innovative Areas~No.~24106009
of the Ministry of Education, Culture, Sports, Science and Technology in Japan,
which Hirotada Kobayashi and Seiichiro Tani are also grateful to.
Harumichi Nishimura further acknowledges support from
the Grant-in-Aid for Scientific Research~(C)~No.~25330012
of the Japan Society for the Promotion of Science.
Part of the work of Shuhei Tamate was done
while this author was at the RIKEN Center for Emergent Matter Science, Wako, Saitama, Japan.


\newcommand{\etalchar}[1]{$^{#1}$}


\begin{thebibliography}{PBKLO04}

\bibitem[AA13]{AarArk13ToC}
\textsc{Scott Aaronson and Alex Arkhipov}.
\newblock The computational complexity of linear optics.
\newblock \emph{Theory of Computing},
  9:\href{http://dx.doi.org/10.4086/toc.2013.v009a004}{143--252~(article~4)},
  2013.

\bibitem[Aar05]{Aar05RSPA}
\textsc{Scott Aaronson}.
\newblock Quantum computing, postselection, and probabilistic polynomial-time.
\newblock \emph{Proceedings of the Royal Society A},
  461(2063):\href{http://dx.doi.org/10.1098/rspa.2005.1546}{3473--3482}, 2005.

\bibitem[ADH97]{AdlDeMHua97SIComp}
\textsc{Leonard~M. Adleman, Jonathan DeMarrais, and Ming-Deh~A. Huang}.
\newblock Quantum computability.
\newblock \emph{SIAM Journal on Computing},
  26(5):\href{http://dx.doi.org/10.1137/S0097539795293639}{1524--1540}, 1997.

\bibitem[AF98]{AmbFre98FOCS}
\textsc{Andris Ambainis and R\=usi\c{n}\v{s} Freivalds}.
\newblock 1-way quantum finite automata: strengths, weaknesses and
  generalizations.
\newblock In \emph{39th Annual Symposium on Foundations of Computer Science},
  \href{http://dx.doi.org/10.1109/SFCS.1998.743469}{pages~332--341}, 1998.

\bibitem[ASV06]{AmbSchVaz06JACM}
\textsc{Andris Ambainis, Leonard~J. Schulman, and Umesh Vazirani}.
\newblock Computing with highly mixed states.
\newblock \emph{Journal of the ACM},
  53(3):\href{http://dx.doi.org/10.1145/1147954.1147962}{507--531}, 2006.

\bibitem[BBC{\etalchar{+}}95]{Bar+95PRA}
\textsc{Adriano Barenco, Charles~H. Bennett, Richard Cleve, David~P.
  DiVincenzo, Norman Margolus, Peter Shor, Tycho Sleator, John~A. Smolin, and
  Harald Weinfurter}.
\newblock Elementary gates for quantum computation.
\newblock \emph{Physical Review A},
  52(5):\href{http://dx.doi.org/10.1103/PhysRevA.52.3457}{3457--3467}, 1995.

\bibitem[BGM06]{BohGlaMei06JCSS}
\textsc{Elmar B\"ohler, Christian Gla{\ss}er, and Daniel Meister}.
\newblock Error-bounded probabilistic computations between {$\MA$} and {$\AM$}.
\newblock \emph{Journal of Computer and System Sciences},
  72(6):\href{http://dx.doi.org/10.1016/j.jcss.2006.05.001}{1043--1076}, 2006.

\bibitem[BJS11]{BreJozShe11RSPA}
\textsc{Michael~J. Bremner, Richard Jozsa, and Dan~J. Shepherd}.
\newblock Classical simulation of commuting quantum computations implies
  collapse of the polynomial hierarchy.
\newblock \emph{Proceedings of the Royal Society A},
  467(2126):\href{http://dx.doi.org/10.1098/rspa.2010.0301}{459--472}, 2011.

\bibitem[BMP{\etalchar{+}}00]{BoyMorPulRoyVat00IPL}
\textsc{P.~Oscar Boykin, Tal Mor, Matthew Pulver, Vwani Roychowdhury, and
  Farrokh Vatan}.
\newblock A new universal and fault-tolerant quantum basis.
\newblock \emph{Information Processing Letters},
  75(3):\href{http://dx.doi.org/10.1016/S0020-0190(00)00084-3}{101--107}, 2000.

\bibitem[BMS15]{BreMonShe15arXiv}
\textsc{Michael~J. Bremner, Ashley Montanaro, and Dan~J. Shepherd}.
\newblock Average-case complexity versus approximate simulation of commuting
  quantum computations.
\newblock arXiv.org e-Print archive,
  \href{http://arxiv.org/abs/1504.07999}{arXiv:1504.07999~[quant-ph]}, 2015.

\bibitem[Bro15]{Bro15PRA}
\textsc{Daniel~J. Brod}.
\newblock The complexity of simulating constant-depth {BosonSampling}.
\newblock \emph{Physical Review A},
  91(4):\href{http://dx.doi.org/10.1103/PhysRevA.91.042316}{article~042316},
  2015.

\bibitem[FGHP99]{FenGreHomPru99RSPA}
\textsc{Stephen Fenner, Frederic Green, Steven Homer, and Randall Pruim}.
\newblock Determining acceptance possibility for a quantum computation is hard
  for the polynomial hierarchy.
\newblock \emph{Proceedings of the Royal Society A},
  455(1991):\href{http://dx.doi.org/10.1098/rspa.1999.0485}{3953--3966}, 1999.

\bibitem[FU15]{FefUma15arXiv}
\textsc{Bill Fefferman and Chris Umans}.
\newblock On the power of quantum {Fourier} sampling.
\newblock arXiv.org e-Print archive,
  \href{http://arxiv.org/abs/1507.05592}{arXiv:1507.05592~[cs.CC]}, 2015.

\bibitem[Gut09]{Gut09PhD}
\textsc{Gustav Gutoski}.
\newblock \emph{Quantum Strategies and Local Operations}.
\newblock PhD thesis, David R. Cheriton School of Computer Science, Faculty of
  Mathematics, University of Waterloo, 2009.
\newblock arXiv.org e-Print archive,
  \href{http://arxiv.org/abs/1003.0038}{arXiv:1003.0038~[quant-ph]}.

\bibitem[JA14]{JorAla11TQC}
\textsc{Stephen~P. Jordan and Gorjan Alagic}.
\newblock Approximating the {Turaev-Viro} invariant of mapping tori is complete
  for one clean qubit.
\newblock In Dave Bacon, Miguel Martin-Delgado, and Martin Roetteler, editors,
  \emph{Theory of Quantum Computation, Communication, and Cryptography, 6th
  Conference, TQC 2011, Madrid, Spain, May 24--26, 2011, Revised Selected
  Papers}, volume~6745 of \emph{Lecture Notes in Computer Science},
  \href{http://dx.doi.org/10.1007/978-3-642-54429-3_5}{pages~53--72}.
  Springer-Verlag, 2014.

\bibitem[JVdN14]{JozVan14QIC}
\textsc{Richard Jozsa and Maarten Van~den Nest}.
\newblock Classical simulation complexity of extended {Clifford} circuits.
\newblock \emph{Quantum Information and Computation}, 14(7--8):0633--0648,
  2014.

\bibitem[JW09]{JorWoc09QIC}
\textsc{Stephen~P. Jordan and Pawel Wocjan}.
\newblock Estimating {Jones} and {HOMFLY} polynomials with one clean qubit.
\newblock \emph{Quantum Information and Computation}, 9(3--4):0264--0289, 2009.

\bibitem[KL98]{KniLaf98PRL}
\textsc{E.~Knill and R.~Laflamme}.
\newblock Power of one bit of quantum information.
\newblock \emph{Physical Review Letters},
  81(25):\href{http://dx.doi.org/10.1103/PhysRevLett.81.5672}{5672--5675},
  1998.

\bibitem[KLGN15]{KobLeGNis15SIComp}
\textsc{Hirotada Kobayashi, Fran\c{c}ois Le~Gall, and Harumichi Nishimura}.
\newblock Stronger methods of making quantum interactive proofs perfectly
  complete.
\newblock \emph{SIAM Journal on Computing},
  44(2):\href{http://dx.doi.org/10.1137/140971944}{243--289}, 2015.

\bibitem[KSV02]{KitSheVya02Book}
\textsc{Alexei~{\relax{Yu}}. Kitaev, Alexander~H. Shen, and Mikhail~N. Vyalyi}.
\newblock \href{http://dx.doi.org/10.1090/gsm/047}{\emph{Classical and Quantum}}
  \href{http://dx.doi.org/10.1090/gsm/047}{\emph{Computation}, volume~47 of \emph{Graduate Studies in Mathematics}}.
\newblock American Mathematical Society, 2002.

\bibitem[Kup15]{Kup15ToC}
\textsc{Greg Kuperberg}.
\newblock How hard is it to approximate the {Jones} polynomial?
\newblock \emph{Theory of Computing},
  11:\href{http://dx.doi.org/10.4086/toc.2015.v011a006}{183--219~(article~6)},
  2015.

\bibitem[KW00]{KitWat00STOC}
\textsc{Alexei Kitaev and John Watrous}.
\newblock Parallelization, amplification, and exponential time simulation of
  quantum interactive proof systems.
\newblock In \emph{Proceedings of the Thirty-Second Annual {ACM} Symposium on
  Theory of Computing},
  \href{http://dx.doi.org/10.1145/335305.335387}{pages~608--617}, 2000.

\bibitem[MFF14]{MorFujFit14PRL}
\textsc{Tomoyuki Morimae, Keisuke Fujii, and Joseph~F. Fitzsimons}.
\newblock Hardness of classically simulating the one-clean-qubit model.
\newblock \emph{Physical Review Letters},
  112(13):\href{http://dx.doi.org/10.1103/PhysRevLett.112.130502}{article~130502},
  2014.

\bibitem[MW12]{MolWat12RSPA}
\textsc{Abel Molina and John Watrous}.
\newblock Hedging bets with correlated quantum strategies.
\newblock \emph{Proceedings of the Royal Society A},
  468(2145):\href{http://dx.doi.org/10.1098/rspa.2011.0621}{2614--2629}, 2012.

\bibitem[NC00]{NieChu00Book}
\textsc{Michael~A. Nielsen and Isaac~L. Chuang}.
\newblock \emph{Quantum Computation and Quantum Information}.
\newblock Cambridge University Press, 2000.

\bibitem[NVdN13]{NiVan13QIC}
\textsc{Xiaotong Ni and Maarten Van~den Nest}.
\newblock Commuting quantum circuits: Efficient classical simulations versus
  hardness results.
\newblock \emph{Quantum Information and Computation}, 13(1--2):0054--0072,
  2013.

\bibitem[PBKLO04]{PouBluLafOll04PRL}
\textsc{David Poulin, Robin Blume-Kohout, Raymond Laflamme, and Harold
  Ollivier}.
\newblock Exponential speedup with a single bit of quantum information:
  Measuring the average fidelity decay.
\newblock \emph{Physical Review Letters},
  92(17):\href{http://dx.doi.org/10.1103/PhysRevLett.92.177906}{article~177906},
  2004.

\bibitem[PLMP03]{PouLafMilPaz03PRA}
\textsc{David Poulin, Raymond Laflamme, G.~J. Milburn, and Juan~Pablo Paz}.
\newblock Testing integrability with a single bit of quantum information.
\newblock \emph{Physical Review A},
  68(2):\href{http://dx.doi.org/10.1103/PhysRevA.68.022302}{article~022302},
  2003.

\bibitem[She06]{She06arXiv}
\textsc{D.~J. Shepherd}.
\newblock Computation with unitaries and one pure qubit.
\newblock arXiv.org e-Print archive,
  \href{http://arxiv.org/abs/quant-ph/0608132}{arXiv:quant-ph/0608132}, 2006.

\bibitem[She09]{She09PhD}
\textsc{Daniel~James Shepherd}.
\newblock \emph{Quantum Complexity : {restrictions} on algorithms and
  architectures}.
\newblock PhD thesis, Department of Computer Science, Faculty of Engineering,
  University of Bristol, 2009.
\newblock arXiv.org e-Print archive,
  \href{http://arxiv.org/abs/1005.1425}{arXiv:1005.1425~[cs.CC]}.

\bibitem[SJ08]{ShoJor08QIC}
\textsc{Peter~W. Shor and Stephen~P. Jordan}.
\newblock Estimating {Jones} polynomials is a complete problem for one clean
  qubit.
\newblock \emph{Quantum Information and Computation}, 8(8--9):0681--0714, 2008.

\bibitem[Tar93]{Tar93TCS}
\textsc{Jun Tarui}.
\newblock Probabilistic polynomials, {$\AC^0$} functions and the
  polynomial-time hierarchy.
\newblock \emph{Theoretical Computer Science},
  113(1):\href{http://dx.doi.org/10.1016/0304-3975(93)90214-E}{167--183}, 1993.

\bibitem[TD04]{TerDiV04QIC}
\textsc{Barbara~M. Terhal and David~P. DiVincenzo}.
\newblock Adaptive quantum computation, constant depth quantum circuits and
  {Arthur-Merlin} games.
\newblock \emph{Quantum Information and Computation}, 4(2):134--145, 2004.

\bibitem[TO92]{TodOgi92SIComp}
\textsc{Seinosuke Toda and Mitsunori Ogiwara}.
\newblock Counting classes are at least as hard as the polynomial-time
  hierarchy.
\newblock \emph{SIAM Journal on Computing},
  21(2):\href{http://dx.doi.org/10.1137/0221023}{316--328}, 1992.

\bibitem[Tod91]{Tod91SIComp}
\textsc{Seinosuke Toda}.
\newblock {$\PP$} is as hard as the polynomial-time hierarchy.
\newblock \emph{SIAM Journal on Computing},
  20(5):\href{http://dx.doi.org/10.1137/0220053}{865--877}, 1991.

\bibitem[TTYT15]{TakTanYamTan15COCOON}
\textsc{Yasuhiro Takahashi, Seiichiro Tani, Takeshi Yamazaki, and Kazuyuki
  Tanaka}.
\newblock Commuting quantum circuits with few outputs are unlikely to be
  classically simulatable.
\newblock In \emph{Computing and Combinatorics, 21st International Conference,
  {COCOON} 2015}, volume~9198 of \emph{Lecture Notes in Computer Science},
  \href{http://dx.doi.org/10.1007/978-3-319-21398-9_18}{pages~223--234}, 2015.

\bibitem[TYT14]{TakYamTan14QIC}
\textsc{Yasuhiro Takahashi, Takeshi Yamazaki, and Kazuyuki Tanaka}.
\newblock Hardness of classically simulating quantum circuits with unbounded
  {Toffoli} and fan-out gates.
\newblock \emph{Quantum Information and Computation}, 14(13--14):1149--1164,
  2014.

\bibitem[Wat01]{Wat01JCSS}
\textsc{John Watrous}.
\newblock Quantum simulations of classical random walks and undirected graph
  connectivity.
\newblock \emph{Journal of Computer and System Sciences},
  62(2):\href{http://dx.doi.org/10.1006/jcss.2000.1732}{376--391}, 2001.

\bibitem[Wil13]{Wil13Book}
\textsc{Mark~M. Wilde}.
\newblock \href{http://dx.doi.org/10.1017/CBO9781139525343}{\emph{Quantum
  Information Theory}}.
\newblock Cambridge University Press, 2013.

\end{thebibliography}
\end{document}